\documentclass[11pt,reqno]{amsart}

\usepackage[utf8]{inputenc}
\usepackage[T1]{fontenc}
\usepackage{amsmath}
\usepackage{amsfonts}
\usepackage{amssymb}
\usepackage{amsthm}
\usepackage{mathrsfs}
\usepackage{graphicx}
\usepackage{xcolor}
\colorlet{PURPLE}{purple}
\usepackage{slashed}
\usepackage{tikz}
\usetikzlibrary{shadings,intersections}
\usepackage{enumitem}
\usepackage{hyperref}
\usepackage[final,expansion=false]{microtype}
\usepackage[a4paper,top=2.5cm,bottom=2.5cm,left=2.5cm,right=2.5cm]{geometry}

\numberwithin{equation}{section}

\theoremstyle{plain}
\newtheorem{theorem}{Theorem}[section]
\newtheorem{proposition}[theorem]{Proposition}
\newtheorem{lemma}[theorem]{Lemma}
\newtheorem{corollary}[theorem]{Corollary}
\newtheorem{conjecture}[theorem]{Conjecture}

\theoremstyle{definition}
\newtheorem{definition}[theorem]{Definition}

\theoremstyle{remark}
\newtheorem{remark}[theorem]{Remark}

\DeclareMathOperator{\Ric}{Ric}

\DeclareMathOperator{\tr}{tr}

\newcommand{\R}{\mathcal{R}}

\newcommand{\chic}{\check{\chi}}
\newcommand{\chibc}{\underline{\check{\chi}}}
\newcommand{\Hb}{\underline{H}}
\newcommand{\Lb}{\underline{L}}
\newcommand{\chib}{\underline{\chi}}
\newcommand{\psib}{\underline{\psi}}
\newcommand{\chih}{\hat{\chi}}
\newcommand{\chibh}{\hat{\underline{\chi}}}
\newcommand{\etab}{\underline{\eta}}
\newcommand{\omegab}{\underline{\omega}}
\newcommand{\betab}{\underline{\beta}}
\newcommand{\alphab}{\underline{\alpha}}
\newcommand{\ot}{\overset{\circ}}

\newcommand{\wt}{\widetilde}
\newcommand{\LS}{\mathcal{L}_S}
\newcommand{\LH}{\mathcal{L}_H}
\newcommand{\LHb}{\mathcal{L}_{\Hb}}
\newcommand{\LR}{\mathcal{L}_R}
\newcommand{\lnm}{\left\|}
\newcommand{\rnm}{\right\|}

\newcommand{\snab}{\nabla\mkern-13mu /\,\,}
\newcommand{\sdiv}{\mbox{div}\mkern-19mu /\,\,\,\,}

\newcommand{\strc}{\mbox{tr}\mkern-13mu /\,\,}
\newcommand{\slap}{\mbox{$\triangle  \mkern-13mu / \,$}}

\newcommand{\dv}{{\rm div}}
\newcommand{\cl}{{\rm curl}}

\def\Rc{{\rm Ric}}
\def\k{\kappa}
\def\Lie{\mathcal{L}}
\def\SS{\mathbb{S}}

\def\LS{\mathcal{L}_S}
\def\LH{\mathcal{L}_{H}}
\def\LHb{\mathcal{L}_{\Hb}}
\def\LR{\mathcal{L}_{R}}

\title[Instability of $\kappa$-Self-Similar Naked Singularities]{Naked Singularities beyond Spherical Symmetry: Instability of $\kappa$-Self-Similar Solutions\\ via an Iteration Scheme}

\author{Xinliang An}
\address{Department of Mathematics, National University of Singapore, Singapore}
\email{matax@nus.edu.sg}

\author{Shengrong Wu}
\address{Department of Mathematics, National University of Singapore, Singapore}
\email{shengrong\_wu@u.nus.edu}

\date{}

\subjclass[2020]{Primary 35Q76; Secondary 83C57, 83C75}
\keywords{Einstein-scalar field equations, $\kappa$-self-similar solutions, naked singularities, trapped surface formation, }

\begin{document}

\begin{abstract}

This paper provides the instability counterpart to our recent construction of nonspherically symmetric approximating $\kappa$-self-similar naked-singularity solutions for the Einstein--scalar field system. These singular solutions contain pervasive nonspherical borderline terms, and to prove instability the delicate renormalization procedure developed in \cite{An2025} does not extend to {the more singular setting considered here}. To overcome these difficulties, we introduce a new iteration scheme adapted to singular backgrounds whose leading-order geometry depends on the angular variables. At each step, the nonlinear coefficients are frozen using the preceding double-null geometry, and the resulting equations are solved in a triangular order. In this way, the nonspherical borderline terms are incorporated into the approximate geometry rather than treated as perturbative errors, yielding successively sharper estimates. After sufficiently many iterations, the scheme controls the singular angular structure and produces an approximate spacetime whose difference from the exact solution satisfies the required bounds. In particular, these bounds provide an existence region large enough to carry out the instability argument. We further prove that anisotropic perturbations of the outgoing data, arbitrarily small in a scale-critical norm, lead to the formation of a trapped surface. We also formulate and verify a matter-focusing condition under which a parabolic flow argument guarantees the existence of a corresponding marginally outer trapped surface (MOTS). Together, these results establish the nonlinear instability of $\kappa$-self-similar naked singularities beyond spherical symmetry in the Einstein--scalar field system and introduce a framework for applications across Einstein systems.

\end{abstract}

\maketitle

\tableofcontents

\section{Introduction}\label{Section_Introduction}

\subsection{Background}

\textcolor{black}{{During gravitational collapse, naked singularities may arise as the Einstein field equations evolve. Information from such a singularity may not remain confined and may be perceived by distant observers. This poses a crisis for classical general relativity.} In 1965, Penrose {introduced} the concept of cosmic censorship, and in 1993 it was further formulated by Christodoulou as follows.}

\begin{conjecture}[Weak Cosmic Censorship]
\textit{For generic, regular, asymptotically flat initial data for the Einstein field equations, the maximal future development has complete future null infinity. In particular, singularities formed in gravitational collapse should not be visible from infinity.}
\end{conjecture}

\textcolor{black}{This is a vast and fundamental conjecture in the field of general relativity. Christodoulou initiated the rigorous mathematical exploration of this conjecture {in the 1990s}. He first considered the {following} $3+1$-dimensional Einstein-scalar field system:}

\begin{equation}\label{eq:esf-intro}
\begin{split}
    R_{\mu\nu}-\frac{1}{2}R g_{\mu\nu}&=T_{\mu\nu},\\
    T_{\mu\nu}&=D_\mu\phi D_\nu\phi
    -\frac{1}{2}g_{\mu\nu}D^\lambda\phi D_\lambda\phi.
\end{split}
\end{equation}
Here $(\mathcal{M},g)$ is a $3+1$-dimensional Lorentzian spacetime, $\phi:\mathcal{M}\to\mathbb{R}$ is a real scalar field, $R_{\mu\nu}$ and $R$ denote the Ricci and scalar curvatures of $g$, respectively, and $T_{\mu\nu}$ is the stress--energy tensor of $\phi$.

 {In} spherical symmetry, Christodoulou designed an ingenious four-step argument {\color{black}\cite{Christodoulou1991,Christodoulou1993,Christodoulou1994,Christodoulou1999}} to attack this problem. {An--Tan later generalized these four steps in \cite{An-Tan} to the $3+1$-dimensional Einstein--Maxwell--charged-scalar-field system.} For the circularly symmetric $2+1$-dimensional Einstein-scalar field system, {Cicortas employed a two-step argument in \cite{Cicortas2026}}. {In} spherical symmetry, we also mention the related works {\color{black} \cite{An-Zhang,CicortasKehle2026,Guo--Hadzic--Jang,Li-Liu,Liu-Li2018,Singh-Zheng,Zheng}} and the references therein.
  
 \vspace{1mm}
{\color{black}Outside spherical symmetry, the road toward this conjecture is much harder. In this paper, we study the above system \eqref{eq:esf-intro} {using double-null coordinates} $(u,v,\theta^A)$, {in which the metric takes the form}
\begin{equation}\label{metric_expression}
    g=-2\Omega^2\left(du\otimes dv+dv\otimes du\right)+g_{AB}\left(d\theta^A-b^Adu\right)\otimes \left(d\theta^B-b^Bdu\right).
\end{equation}
We write
\[
    e_3=\Omega^{-1}\left(\partial_u+b^A\partial_{\theta^A}\right),\qquad e_4=\Omega^{-1}\partial_v,\qquad e_A=\partial_{\theta^A},
\]
and denote by $S_{u,v}$ the level spheres of $u$ and $v$. The Ricci coefficients used below are
\begin{equation}\label{Ricci_coefficients_definition}
    \begin{aligned}
    \chi_{A B}=g\left(D_A e_4, e_B\right), &\qquad \underline{\chi}_{A B}=g\left(D_A e_3, e_B\right), \\
    \eta_A=-\frac{1}{2} g\left(D_3 e_A, e_4\right), &\qquad \underline{\eta}_A=-\frac{1}{2} g\left(D_4 e_A, e_3\right), \\
    {\omega}=-\frac{1}{4} g\left(D_4 e_3, e_4\right), &\qquad \underline{\omega}=-\frac{1}{4} g\left(D_3 e_4, e_3\right),\\
    \zeta_A=\frac{1}{2}g\left(D_A e_4,e_3\right).
    \end{aligned}
\end{equation}
The trace and trace-free parts of $\chi$ are denoted by $\operatorname{tr}\chi$ and $\hat{\chi}$, and similarly for $\underline{\chi}$.

{Building on} Christodoulou's spherically symmetric naked singularity, the first author proved {a naked-singularity-censoring result with an anisotropic apparent horizon in \cite{An2025}}, {\color{black}showing that anisotropic outgoing perturbations {that} are arbitrarily small in {a} scale-critical norm can generate a spacelike anisotropic apparent horizon that encloses the singularity. Moreover, for every $N\in\mathbb{N}$, the construction exhibits $2N$ independent unstable directions and yields codimension-$2N$ nonlinear instability}. {See also the work of Li--Liu \cite{Li-Liu}, in which} {\color{black} singularities in their spherically symmetric class are shown to be unstable under sufficiently close-to-spherically-symmetric gravitational perturbations.} To construct naked-singularity solutions beyond spherical symmetry, Rodnianski and Shlapentokh-Rothman introduced $\kappa$-self-similarity for the Einstein vacuum equations. {An and Wu generalized this notion in \cite{AnWu2026}} to the Einstein-scalar field system. The nonspherical character of these self-similar solutions is already visible in the constraint identities on $v=0$. In particular, the incoming expansion and shear identities in \cite[Lemma~3.1]{AnWu2026} give
\begin{equation*}
    \Omega\operatorname{tr}\underline{\chi}
    =\frac{2}{u}+\operatorname{div} b,
    \qquad
    \Omega\underline{\hat{\chi}}_{AB}
    =\frac12(\nabla\hat{\otimes} b)_{AB}{,}
\end{equation*}
with $\nabla$ being the angular derivative on $S_{u,v}$, so the angular variation of $b$ affects both the incoming null expansion and its trace-free part. In particular, any shift $b$ with a nonvanishing trace-free symmetrized angular derivative generates nonzero incoming shear, whereas this quantity vanishes in spherical symmetry. Moreover, the constraint equation \cite[(3.2)]{AnWu2026} reads
\begin{equation*}
\begin{aligned}
    \frac{2\kappa}{(-u)^2}
    &+\frac{1}{-u}\operatorname{div}b
    -\mathcal L_b\operatorname{div}b
    -\frac12(\operatorname{div}b)^2
    -\frac14\lvert\nabla\hat{\otimes} b\rvert^2 \\
    &-\frac{4}{-u}\mathcal L_b\log\Omega
    +2\operatorname{div}b\,\mathcal L_b\log\Omega
    -(\Omega D_3\phi)^2=0.
\end{aligned}
\end{equation*}
Thus the angularly dependent shift is coupled nonlinearly to the lapse and the scalar field. Within spherical symmetry, one has $b=0$ and all of these angular terms disappear. {Returning} to the Einstein vacuum equations, Rodnianski and Shlapentokh-Rothman constructed {naked-singularity solutions without symmetry assumptions} in \cite{RodnianskiShlapentokhRothman2019}. {Controlling} the Einstein vacuum equations in this regime is quite delicate. The regularity of the inner Cauchy horizon remains open for these spacetimes. {In \cite{AnWu2026}, An and Wu} {\color{black}constructed global nonspherically symmetric naked singularity solutions to the Einstein--scalar field system near Christodoulou's spherically symmetric $\kappa$-self-similar solution, proved that these spacetimes retain an incomplete future null infinity, and derived precise asymptotics at their singular inner Cauchy horizons. In particular, they established the desired $C^{1,\frac{\kappa}{1-\kappa}+}$-inextendibility across the inner Cauchy horizon}.}

{Extending} the naked-singularity censoring results in \cite{An2025} to these {nonspherically symmetric} $\kappa$-{self-similar} naked singularities is quite challenging. For the $\kappa$-self-similar naked singularities, the coefficients of the borderline singular terms depend on angular variables. The energy-estimate framework and the renormalization procedures {used} in the works of Christodoulou \cite{Christodoulou2009}, An and Luk \cite{AnLuk2017}, and An \cite{An2025} are not robust enough. 

In this paper, we introduce a {robust new} framework: we design an iteration scheme and construct a sequence of controlled approximating solutions to the Einstein field equations, which absorb the angularly dependent borderline singular terms into the approximate geometry instead of treating them as perturbative errors. At each stage, the nonlinear coefficients are frozen {using} the preceding double-null geometry and the equations are solved in a triangular order, so that the Einstein--scalar field residuals gain an additional small factor with every iteration. \textcolor{black}{The iteration procedure then improves the estimates for the difference between the exact {solution} and the iterated approximate solutions, and {yields} an existence region for the full Einstein field equations. After {several} iterations, the {resulting} existence region is large enough to carry out the trapped-surface-formation argument.}

\begin{sloppypar}
\textit{Within the resulting \textcolor{black}{existence} region, we {prove that anisotropic perturbations of the outgoing data that are arbitrarily small in a scale-critical norm lead to trapped-surface formation.}} The point is not merely that trapped surfaces form, but that the instability mechanism is genuinely nonspherical and is measured in scale-invariant quantities natural for the self-similar geometry. In the anisotropic case, the trapped surface is obtained by {a} carefully designed deformation of the sphere along an incoming null hypersurface. {\color{black}In the gravitational-collapse setting, An introduced an elliptic method for {constructing MOTSs} in \cite{An2020}; An--Han extended it to fully anisotropic apparent horizons in \cite{AnHan2021}; An--He introduced the null comparison principle in \cite{AnHe2024}; and, more recently, An--He used new elliptic arguments to determine the complete apparent-horizon dynamics in Kerr black-hole-formation spacetimes \cite{AnHe2025}.} {In these works, the existence and uniqueness of MOTSs and the achronality of the apparent horizon are established in the same framework.}
\end{sloppypar}

In the highly singular regime governing the instability of $\kappa$-self-similar naked singularities, the proofs {of} these properties are separate. In this paper, we {prove the existence of a MOTS} via a mean-curvature-flow approach. Key calculations and estimates are derived in Section~5. With these estimates{,} we show that the {positive contribution from the scalar field} along the incoming null direction dominates the shear and trace-free curvature contributions. This quantitative dominance is the matter-focusing condition required to control the flow. With these estimates in hand, the remaining mean-curvature-flow arguments are relatively standard{; we refer the reader to Roesch--Scheuer \cite{RoeschScheuer2021} for details}. {\color{black}In Section~5{,} we also include a new second-variation formula for the area functional. As demonstrated in Section~5.3 {for} less singular regimes, our derived formula yields the coercive positivity underlying {the} uniqueness of the MOTS and explains how the flow construction can be combined with the null comparison principle of An--He \cite{AnHe2024} to prove the achronality of the resulting apparent horizon.}

\subsection{Main results}

{\color{black}
We now state our main results in the characteristic setting. As {indicated by} \eqref{metric_expression},
the double-null foliation is used both to prescribe the initial data and to evolve them. The incoming null hypersurface $\underline{H}_0$ carries the $\kappa$-self-similar background geometry, whereas the outgoing null hypersurface $H_{-1}$ carries the perturbation responsible for the subsequent focusing. 

\vspace{2mm}
We begin by fixing the terminology for the null expansions and then define the scale-invariant class of initial data appearing in the theorem statements.

\begin{definition}
Let $S$ be a closed spacelike sphere contained in an incoming null hypersurface $\underline{H}_v$. As given in \eqref{Ricci_coefficients_definition}{,} we denote its future-directed outgoing and incoming null expansions by $\tr\chi$ and $\tr\chib$, respectively. We call $S$ a trapped surface if
\[
    \tr\chi<0,\quad \tr\chib<0{,}
\]
everywhere on $S$. We call $S$ a marginally outer trapped surface, or a MOTS, if
\[
    \tr\chi=0,\quad \tr\chib<0{,}
\]
everywhere on $S$.
\end{definition}

Since $u=0$ is a singular boundary, the appropriate notion of smallness on $\underline{H}_0$ is introduced in a scale-invariant way. Namely, after the leading self-similar powers of $-u$ have been factored out, the weighted angular and incoming derivatives are required to remain small. The following definition makes this notion precise:

\begin{definition}
We say that an initial data set $(g_{AB},\Omega,\zeta_A,b^A,\phi)$ on $\underline{H}_0$ satisfies an \textbf{$(\tilde\epsilon,\tilde N)$-small scale-invariant bound} if the following conditions hold:
\begin{enumerate}
\item {Each} of the quantities
\[
(-u)^{-2}g_{AB}(u,0)-{g^{\mathbb{S}}}_{AB},\qquad
(-u)b^A(u,0),\qquad \zeta_A(u,0),
\]
denoted generically by $\psi$, satisfies
\[
\sum_{i+j\leq \tilde N+1}
\left\|(-u)^j(\partial_u+b)^j(\nabla^\mathbb{S})^i\psi\right\|_{L^2(\mathbb{S})}
\leq\tilde\epsilon.
\]
\item We require
\[
\sum_{1\leq i+j\leq \tilde N+1}
\left\|(-u)^j(\partial_u+b)^j(\nabla^\mathbb{S})^i
\left(\log\Omega(u,0),\phi(u,0)\right)\right\|_{L^2(\mathbb{S})}
\leq\tilde\epsilon.
\]
\item Along $\underline{H}_0${,} the {characteristic} initial data satisfy the corresponding constraint equations for the Einstein--scalar field system, namely, the $\nabla_3\operatorname{tr}\underline{\chi}$ and $\nabla_3(\zeta-\nabla\log\Omega)$ equations.
\end{enumerate}
\end{definition}

{\color{black}Using a new iteration scheme, we first establish the existence result needed for trapped-surface formation. The solution is constructed in a scale-invariant region that narrows toward the self-similar boundary. The statement involves two large parameters: $N_1$ measures the regularity of the initial data, whereas $N$ determines the depth of the iteration scheme. More precisely, the proof carries out $4N$ iteration steps and uses the $4N$-th iterate as the reference spacetime, yielding the exponent $\kappa/N$ in the description of the existence region.

\begin{theorem}
Fix $\kappa>0$. Choose integers $1\ll N\ll N_1$, with $N_1$ sufficiently large in terms of $N$, and then choose $\delta,\epsilon>0$ so that $\epsilon\ll{\delta}/{N^2}\ll{\kappa}/{N^3}.$
Suppose that $\Omega^2(u,0)\lesssim(-u)^\kappa$, and consider the characteristic initial value problem whose data $(g_{AB},\Omega,\zeta_A,b^A,\phi)$ on $\underline{H}_0$ obey an $(\epsilon,N_1)$-small scale-invariant bound. Prescribe continuous data $\Omega\chi$, $\Omega\omega$, and $\partial_v\phi$ on $H_{-1}$ for $0\leq v\leq v_*$, with
\[
A:=\sup_{0\leq v\leq v_*}
\left\|\left(\Omega\chi,\Omega\omega,\partial_v\phi\right)(-1,v)\right\|_{H^{N_1}(\mathbb{S})}<\infty.
\]

Then there exists a sufficiently small $\epsilon_1=\epsilon_1(N,\kappa,A)\in(0,v_*]$ such that the Einstein--scalar field equations admit a solution in
\[
\left\{-1\leq u<0,\quad 0\leq v<\epsilon_1(-u)^{1+\max\{\delta,\kappa/N\}}\right\}.
\]
\end{theorem}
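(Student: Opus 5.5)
We build the solution as a perturbation of an approximate spacetime $g^{(4N)}$, obtained by running the iteration scheme described in the introduction. The difference between the exact solution and $g^{(4N)}$ is then controlled by a bootstrap in the scale-invariant region
\[
\mathcal D_{\epsilon_1}:=\{-1\le u<0,\ 0\le v<\epsilon_1(-u)^{1+\max\{\delta,\kappa/N\}}\}.
\]

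\textbf{Step 1: the zeroth iterate.} We take $g^{(0)}$ to be the $\kappa$-self-similar background, that is, the $v$-independent extension of the data on $\Hb_0$. In the self-similar variable $s:=v/(-u)$, this background solves the equations up to residuals of size $O(s)$. The angularly dependent borderline terms (those involving $\nabla\hat\otimes b$, $\operatorname{div}b$ and $\mathcal L_b\log\Omega$) appear with coefficients of order $(-u)^{-1}$ and are not small. This is exactly why a direct perturbative argument around $g^{(0)}$ fails: it would only give an existence region of the form $v\lesssim \epsilon_1(-u)^{1+C\kappa}$ with a large $C$.

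\textbf{Step 2: the iteration.} Given $g^{(k)}$, we define $g^{(k+1)}$ by freezing every nonlinear coefficient (lapse, shift, $\chi$, $\chib$, $\eta$, $\phi$-products) at $g^{(k)}$ and solving the resulting linear transport system in triangular order:
\begin{enumerate}
\item first the $\nabla_4$ equations for $\Omega\chi$, $\Omega\omega$ and $\partial_v\phi$, using the data on $H_{-1}$;
\item then the metric quantities $g_{AB}$, $b$ and $\Omega$ by integrating in $v$;
\item then $\eta$, $\etab$ and $\chib$ from their $\nabla_4$ equations;
\item finally the $\nabla_3$ quantities from the constraint data on $\Hb_0$.
\end{enumerate}
Because each equation in this ordering involves only quantities already computed at level $k+1$ or frozen at level $k$, each step is an ODE integration along the null generators. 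Each $v$-integration over $\mathcal D_{\epsilon_1}$ gains a factor $s\le\epsilon_1(-u)^{\max\{\delta,\kappa/N\}}$. The loss comes from the $(-u)^{-1}$ borderline coefficients and from $\Omega^{-2}\lesssim(-u)^{-\kappa}$; on this region that loss is at most $(-u)^{-\kappa/(4N)}$ per step, after distributing $\kappa$ across the $4N$ steps.

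Inductively, we aim to show that the Einstein--scalar residual $\mathcal E^{(k)}$ of $g^{(k)}$ satisfies
\[
\|(-u)^{j}\nabla^i\mathcal E^{(k)}\|_{L^2(S_{u,v})}\lesssim_{k} A\,\epsilon_1^{k}(-u)^{k\kappa/(4N)}
\]
in $H^{N_1-Ck}$. This is where $N_1\gg N$ is needed, to absorb the derivative loss at each level. The key point is that the borderline terms enter the frozen coefficients, not the residual, so the residual improves by one power of $s$ per step.

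\textbf{Step 3: the difference estimate.} We write the exact solution as $g=g^{(4N)}+\delta g$ and set up energy estimates for the renormalized Bianchi and scalar-wave system for the differences, together with transport estimates for the difference Ricci coefficients. These estimates use scale-invariant norms weighted by powers of $(-u)$. They are closed by a bootstrap in which the linear borderline terms are now coefficients of the known geometry $g^{(4N)}$. Their contribution is handled by a Grönwall argument in $v$ with integrating factor $\exp(C\int \Omega^2(-u)^{-1}dv)$, which is bounded on $\mathcal D_{\epsilon_1}$ because $v\le\epsilon_1(-u)^{1+\kappa/N}$. The source $\mathcal E^{(4N)}\sim(-u)^{\kappa}$ beats the worst weight loss $(-u)^{-\kappa}$. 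The constraint $\epsilon\ll\delta/N^2\ll\kappa/N^3$ ensures that the data perturbation and the $\delta$-exponent do not spoil the accumulated gains. Finally, local existence together with a continuation argument, as in standard double-null characteristic problems, extends the solution to all of $\mathcal D_{\epsilon_1}$ once $\epsilon_1=\epsilon_1(N,\kappa,A)$ is small.

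\textbf{Main obstacle.} The hardest step is the inductive residual gain in Step 2. We must verify that freezing coefficients in triangular order never produces a borderline term whose coefficient depends on a quantity at the same iterate level that has not yet been computed. Otherwise the $(-u)^{-1}$ singularity would reappear at full strength. We would first check this in the $\nabla_3$ equations for $\chibh$ and $\omegab$, where $\nabla\hat\otimes b$ couples back to $\Omega$, possibly reordering the solution of $b$ and $\Omega$ together as a coupled linear system at each level.
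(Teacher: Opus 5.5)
Your overall architecture matches the paper's: a frozen-coefficient iteration in triangular order, one factor of $v/(-u)$ gained per step at a fixed derivative cost (hence $N_1\gg N$), $g^{(4N)}$ as the reference spacetime, and a bootstrap on the differences. Two load-bearing steps, however, fail as written.

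\textbf{Step 2 runs the characteristics backwards.} Data on $H_{-1}$ are given at $u=-1$ for every $v$, so they can only seed integration in $u$, i.e.\ the $\nabla_3$ equations. Data on $\Hb_0$ can only seed integration in $v$. Your item (1) integrates $\nabla_4$ equations from $H_{-1}$, and item (4) integrates $\nabla_3$ equations from $\Hb_0$. There is also no $\nabla_4$ equation for $\chih$ closed in the Ricci coefficients (it involves $\alpha$), and the wave equation determines $\nabla_3(e_4\phi)$, not $\nabla_4(e_4\phi)$.

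In the paper, the scheme at step $i+1$ runs as follows.
\begin{enumerate}
\item $(\Omega\chih)^{(i+1/2)}$ and $\partial_v\phi^{(i+1)}$ are transported from $H_{-1}$ by the $\nabla_3\chih$ and wave equations. Every coefficient, including the angularly dependent $\frac12(\Omega\tr\chib)^{(i)}$, is frozen at step $i$.
\item $(\Omega\omegab)^{(i+1/2)}$ comes from the $\nabla_4\omegab$ equation, and $\log\Omega^{(i+1)}$ from $e_3\log\Omega=-2\omegab$ with data on $H_{-1}$.
\item $\zeta$, $b$ and the coupled $(g,\Omega^{-1}\tr\chi)$ system are integrated in $v$ from $\Hb_0$. $\chib$ is read off from the new metric, and $\chih$ is re-projected to be $g^{(i+1)}$-trace-free and to match the $H_{-1}$ data to $O(v^{i+1})$.
\end{enumerate}
The per-step gain then comes from identities such as $\bigl((\Omega\nabla_3)^{(i)}+\frac12(\Omega\tr\chib)^{(i)}\bigr)\bigl((\Omega\chih)^{(i+1/2)}-(\Omega\chih)^{(i)}\bigr)=-\wt{\Omega^2\widehat{\Rc}}^{(i)}_{AB}$. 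Transporting a step-$i$ residual in $u$ loses no power of $v$, and the later $v$-integrations supply the extra factor. The obstacle you anticipate ($\nabla_3$ equations for $\chibh,\omegab$ coupled through $\nabla\hat\otimes b$) is an artifact of the inverted ordering and never occurs.

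\textbf{Step 3 misplaces the mechanism that closes the difference estimates.} In the $\nabla_4$ equations the borderline coefficients are already harmless: they integrate to $O(v/(-u)^{1+\delta})\le\epsilon_1$ on the region, so a Gr\"onwall factor in $v$ buys nothing new. The dangerous terms are the linear ones with $O((-u)^{-1})$ coefficients in the $\nabla_3$ transport equations and in the spacetime error integrals of the energy identities. Examples are $\Omega\omegab\approx\kappa/(4(-u))$, $\Lie_{\wt b}\ot{\psi}$ and $\wt{\Omega\chib}\,\ot{\Omega\chih}$. Integrating these in $u$ costs $(-u)^{-C}$, with $C$ set by structure constants and derivative counts, and nothing in your sketch shows $C\le\kappa$. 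Such a loss is affordable only because the residual carries a high power of $v/(-u)$. That is exactly the information you discard when you bound $\mathcal E^{(4N)}$ by $(-u)^{\kappa}$ after substituting the region.

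The paper makes this systematic by measuring differences in $\LS(u,v;N)$, $\LH$, $\LHb$, $\LR$ with weight $((-u)/v)^{2N}$. Differentiating the weight produces coercive terms $-N/(-u)$ and $-N/v$. These dominate every such linear term once $N\gg C$, and they also absorb the top-order $\nabla^5(\wt{\Omega\chic},\wt{\Omega e_4\phi})$, which has no $\LHb$ control. The price is that the differences must vanish to order about $(v/(-u))^{N}$, which is why the reference must be $g^{(4N)}$. Its residual, $\lesssim v^{4N}/(-u)^{4N+\delta+\kappa}$, falls below the required $v^{2N}/(-u)^{2N-\kappa}$ only for $v\lesssim(-u)^{1+O(\kappa/N)}$. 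That comparison, not a per-step distribution of $\kappa$, produces the exponent in the existence region.
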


\vspace{2mm}

Note that in our preceding construction \cite{AnWu2026} the self-similar parameter satisfies $\kappa\in(0,\frac13)$ and $\Omega^2(u,0)\lesssim(-u)^\kappa$; hence the present results apply to the initial data constructed there. In this paper the above theorem yields an existence region large enough to carry out the trapped-surface formation argument, whereas the method of \cite{An2025} does not produce such a region in the present nonspherical setting. The proof here performs $4N$ iteration steps. At each step, we freeze the nonlinear coefficients at the preceding approximate double-null geometry and solve the resulting equations in a triangular order, thereby improving the Einstein--scalar field residual by an additional factor of $v/(-u)^{1+\delta}$. We then take the $4N$-th iterate as the reference spacetime and close weighted transport and curvature estimates for its difference from the exact solution. The exponent $\kappa/N$ records the gain from the iteration, while $\delta$ accounts for the small loss needed to close the exact estimates.

}

\vspace{2mm}

The second theorem describes the instability mechanism. Within the established existence region, in the spirit of An--Han \cite{AnHan2021} and An \cite{An2025}, we proceed to prove that an anisotropic perturbation {of} the outgoing data, which is arbitrarily small in {a} scale-critical norm, {triggers} the formation of {a} trapped surface and {a} corresponding marginally outer trapped surface (MOTS). Here and below, the subscript $0$ denotes the corresponding quantity for the unperturbed $\kappa$-self-similar background. {For} a universal positive constant $C$, the anisotropic perturbation {need only be imposed} in an angular sector $B_{S_{-1,v}}(\theta_0,r)\subset S_{-1,v}$ {centered at} $\theta_0\in \mathbb{S}^2$ {and having disk} radius $r>0$. 
\begin{theorem}
Suppose that there exist $\theta_0\in \mathbb{S}^2$ and {a disk} radius $r>0$ such that, for every $v\in(0,\epsilon_1)$, the outgoing data along $H_{-1}$ satisfies {\color{black}
\begin{equation*}
        \left|\Omega\chih\right|^2-\left|(\Omega\chih)_0\right|^2+\left|\Omega e_4\phi\right|^2-\left|(\Omega e_4\phi)_0\right|^2 > \frac{1}{C}v^{2\lambda},\quad \text{for}\  \theta\in B_{S_{-1,v}}(\theta_0,r),
    \end{equation*}
    \begin{equation*}
        \left|\Omega\chih-(\Omega\chih)_0\right|+\left|\Omega e_4\phi-(\Omega e_4\phi)_0\right|\leq Cv^\lambda,\quad \text{for all}\  \theta\in\SS.
\end{equation*}    }
{for a} parameter $\lambda$ satisfying
\[
0<\lambda<\frac{\kappa(2N-3)}{2(2N+3\kappa)}. 
\]
Then, for $\epsilon_1$ sufficiently small, every incoming null hypersurface $\underline{H}_v$ with $v\in(0,\epsilon_1)$ contains a trapped surface and a corresponding MOTS.
\end{theorem}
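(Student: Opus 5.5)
We work inside the existence region from the first theorem, $\{-1\le u<0,\ 0\le v<\epsilon_1(-u)^{1+\max\{\delta,\kappa/N\}}\}$, and compare the perturbed solution with the background $\kappa$-self-similar solution (subscript $0$). We use the $4N$-th iterate as the reference geometry. The plan has three stages: a lower bound for the focusing along $\Hb_v$; a deformed sphere on which both expansions are negative; and a MOTS produced by a flow.

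\textbf{Step 1 (focusing from the outgoing data).} Integrate the Raychaudhuri equation in the $e_4$ direction,
\[
\nabla_4\tr\chi+\tfrac12(\tr\chi)^2=-|\chih|^2-2\omega\tr\chi-|e_4\phi|^2 ,
\]
from $v=0$ along $H_{-1}$. Write $\tr\chi-(\tr\chi)_0$ and use the first hypothesis on $B_{S_{-1,v}}(\theta_0,r)$. This gives $\Omega\tr\chi(-1,v)\le(\Omega\tr\chi)_0-c\,v^{1+2\lambda}$ in the sector. The error terms from $\omega$, $(\tr\chi)^2$ and the metric are controlled by the second hypothesis together with the bounds of the first theorem.

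Next transport this deficit in the $e_3$ direction along $\Hb_v$ using the $\nabla_3\tr\chi$ equation,
\[
\nabla_3\tr\chi+\tfrac12\tr\chib\tr\chi=2\omega\tr\chi+2\sdiv\eta+2|\eta|^2-\chibh\cdot\chih+\dots,
\]
down to $u=u_*(v)$. We choose $u_*$ near the edge of the existence region, so that $v\approx\epsilon_1(-u_*)^{1+\kappa/N}$. The background satisfies $(\Omega\tr\chi)_0\approx \frac{2}{-u}\cdot(\text{scale})$ with an $\Omega^2\lesssim(-u)^\kappa$ weight. The deficit grows along $\Hb_v$ like $(-u)^{-2}$ from the $\tr\chib$ factor, while the borderline nonspherical terms $\chibh\cdot\chih$ and $\sdiv\eta$ are small by the iteration estimates. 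Comparing powers of $v$ and $-u$ should give $\tr\chi<0$ on the sector of $S_{u_*,v}$. This is exactly where the restriction $\lambda<\kappa(2N-3)/(2(2N+3\kappa))$ enters: it is the balance between the $v^{2\lambda}$ gain, the $(-u)^{\kappa}$ lapse weight and the $\kappa/N$ loss in the region.

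\textbf{Step 2 (trapped surface).} The deficit is only angularly localized, so a sphere $S_{u,v}$ itself need not be trapped. Following An--Han, we look for a deformed sphere $S=\{u=u_*(v)\,e^{-f(\theta)}\}\subset\Hb_v$, or equivalently a graph $u=R(\theta)$. Its outgoing expansion is
\[
\tr\chi'=\tr\chi+2\Omega^{-1}\slap R\cdot(\dots)+\text{quadratic terms in }\nabla R,\ \eta.
\]
We choose $R$ to be large, i.e.\ close to the singularity, where the focusing hypothesis holds, and to interpolate outside the sector. We then verify $\tr\chi'<0$ using the Step~1 estimates and the bounds on $\eta,\chibh$. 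The inequality $\tr\chib'<0$ is automatic from $\Omega\tr\chib\approx 2/u+\dv b$.

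\textbf{Step 3 (MOTS).} Run the Roesch--Scheuer flow inside $\Hb_v$: evolve the graph by $\partial_t R=-\tr\chi'$ (in the null-mean-curvature sense), between the trapped barrier from Step~2 and an untrapped sphere $S_{u',v}$ with $u'$ near $-1$, where $\tr\chi>0$. The required a priori estimates reduce to the matter-focusing condition: along $e_3$, the term $|\Omega e_3\phi|^2$ in the $\nabla_3\tr\chib$ equation dominates $|\chibh|^2$ and the trace-free curvature terms. This domination follows from the constraint on $\Hb_0$ together with the closeness to the background, and it gives uniform gradient and curvature bounds for the flow. Convergence then yields a MOTS.

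\textbf{Main obstacle.} The hard step is Step~1's transport along $\Hb_v$ all the way to the edge of the existence region. The nonspherical borderline terms, with coefficients of size $(-u)^{-1}$ depending on the angle, must be shown not to destroy the $v^{1+2\lambda}$ deficit. Closing the exponent bookkeeping against $(-u)^{1+\kappa/N}$ is what fixes the range of $\lambda$.
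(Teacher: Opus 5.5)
\textbf{The gap is in Step~1: the order of integration.} You create a deficit $\Omega\tr\chi-(\Omega\tr\chi)_0\approx -c\,v^{1+2\lambda}$ on $H_{-1}$, then try to carry it to small $-u$ with the linear $\nabla_3\tr\chi$ equation, treating the sources as small.

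First, the amplification is wrong. In the equation you wrote, insert $\Omega\tr\chib\approx -2/(-u)$ and pass to $\Omega\tr\chi$, which absorbs the $\omegab$ term. The homogeneous solution then grows only like $(-u)^{-1}$. So the transported deficit in $\Omega^{-1}\tr\chi$ is $\sim v^{1+2\lambda}(-u)^{-1-\kappa}$, against a background $\Omega^{-1}\tr\chi_0\sim(-u)^{-1}$. Beating it needs $v^{1+2\lambda}\gtrsim(-u)^{\kappa}$, which never holds when $v\lesssim(-u)^{1+\kappa/N}$. The $(-u)^{-2}$ you assert appears only in the Gauss-curvature form $\Omega^{-1}\nabla_3(\Omega\tr\chi)+\tr\chi\tr\chib=2\sdiv\eta+2|\eta|^2-2K+\dots$. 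So the missing growth sits in the curvature term you hid in ``$\dots$''.

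Second, and more seriously, in either form the signal is quadratic in the perturbation, while the sources contain unsigned terms that are linear in it or of order $v$. Examples:
\begin{itemize}
\item $\Omega\chibh\cdot\bigl(\Omega\chih-(\Omega\chih)_0\bigr)\sim\epsilon v^{\lambda}(-u)^{-2}$, which is exactly the borderline term you call small;
\item the $\eta_0\cdot(\eta-\eta_0)$ and $\sdiv(\chih\cdot\zeta)$ contributions to $\sdiv\eta-K$;
\item the $O(v)$ variation of $\Omega\tr\chib$, $b$ and $K$ between $v=0$ and $v$.
\end{itemize}
The comparison estimates of Sections~3--4 control these in size but not in sign. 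Each is far larger than $v^{1+2\lambda}$ for small $v$, so no lower bound on the deficit survives the $e_3$-transport.

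\textbf{What the paper does instead.} It reverses the order. It first transports the \emph{linear} quantities $\psi_2-(\psi_2)_0$, $\psi_2\in\{\Omega\chih,\Omega e_4\phi\}$, along $e_3$ using their own $\nabla_3$ equations. These have size $v^\lambda\gg v$, so the $V/(-U)^{3-\kappa}$ errors and the coupling to $\Omega^{-1}\tr\chi-(\Omega^{-1}\tr\chi)_0$ are lower order; the coupling is closed by a bootstrap on $f$. They grow like $(-U)^{-1\pm\kappa/(2N)}$.

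Only then, at each fixed $u$ near $0$, does it integrate in $v$ the Raychaudhuri equation
$\partial_v(\Omega^{-1}\tr\chi)=-\frac12\Omega^2(\Omega^{-1}\tr\chi)^2-\Omega^{-2}\bigl(|\Omega\chih|^2+|\Omega e_4\phi|^2\bigr)$.
Its right side is a negative sum of squares, so a pointwise bound $|\Omega\chih|^2+|\Omega e_4\phi|^2\gtrsim v^{2\lambda}(-u)^{-2}$ suffices. This gives $(-u)\Omega^{-1}\tr\chi\le C-\frac1C v^{1+2\lambda}(-u)^{-1-\kappa}$. The extra $(-u)^{-1}$ comes from squaring the amplified shear, and the $(-u)^{-\kappa}$ from $\Omega^{-2}$. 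At such $u$ the amplified perturbation dominates the background shear $(\Omega\chih)_0\sim\epsilon(-u)^{\kappa-1}$, so cross terms are harmless.

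The threshold on $\lambda$ comes from evaluating on $v_0(u)=\epsilon_1(-u)^{1+3\kappa/(2N)}$. This curve lies strictly inside the existence region because the bootstrap needs $v/(-u)^{1+3\kappa/(2N)}\ll1$. Your edge curve $v\approx\epsilon_1(-u)^{1+\kappa/N}$ violates this and would not give the exponent $\frac{\kappa(2N-3)}{2(2N+3\kappa)}$.

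\textbf{Step~2} omits the genuinely anisotropic difficulty. Under transport the focused sector shrinks to normalized radius $(-u)^{\delta}$. The paper writes $W=-Ve^{-\Phi}$, and the term $\tr\chib+4\omegab$ in $\ell^+$ produces $-\frac{\kappa}{2}|\nabla\Phi|^2_\gamma$. This is what lets a barrier $h$ with $\Delta_\gamma h-\frac{\kappa}{2}|\nabla h|_\gamma^2<-\frac12$ close the surface. At scale $(-U_1)^\delta$, however, that barrier costs $(-U_1)^{-\delta}$ and $(-U_1)^{-2\delta}$ in its derivatives. These losses are absorbed only because the strict gap $\lambda<\frac{\kappa(2N-3)}{2(2N+3\kappa)}-\delta_1$ with $\delta_1>2\delta$ gives quantitative focusing, $(-u)\Omega^{-1}\tr\chi\lesssim-(-u)^{-2\delta_1}$ on the sector, not merely $\tr\chi<0$.

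Your Step~3 agrees with the paper.
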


The anisotropic perturbation first produces strong focusing only in a small angular sector, and a deformation argument of the sphere is then used to turn this localized effect into a closed trapped surface. In Section~5{,} we further {show} that, for $v/(-u)^{1+C\delta}\leq\epsilon$, we have
\begin{equation}
    |\Omega D_3\phi|^2\sim(-u)^{-2}\gg\epsilon(-u)^{-2}
    \gtrsim |\Omega\chib|\,|\Omega\chibh|+|\Omega^2\alphab|,
\end{equation}
where $\alphab=W_{3A3B}=R_{3A3B}-\frac12\operatorname{Ric}_{33}g_{AB}$. {We refer to this as a matter-focusing condition; it states} that the matter focusing dominates the shear and curvature terms. {When this condition holds,} a relatively standard mean-curvature-flow argument guarantees the existence of the corresponding MOTS. This setting differs from {that of} An \cite{An2025}, where the incoming naked-singularity background is spherically symmetric and its angular geometry is therefore better adapted to the deformation argument. Here the background itself is nonspherical and more singular. In this paper{,} the focused angular sector is distorted and shrinks as it is transported toward $u=0$, {while} angularly dependent borderline terms create additional derivative losses. Overcoming these new difficulties requires the iteration estimates developed in this paper.

\vspace{3mm}


The $\kappa$-self-similar ansatz on $\underline{H}_0$ reduces the null constraint equations to equations on the sphere $S_{-1,0}$. More precisely, after prescribing the small free data, including the conformal metric, the lapse, the scalar field, and the curl part of the shift, the remaining part of the shift is written as a gradient component plus the prescribed curl component. The $\nabla_3\operatorname{tr}\underline{\chi}$ constraint then becomes a nonlinear equation on $S_{-1,0}$ for this gradient component, together with a scalar normalization parameter chosen so that the right-hand side has zero mean. {This equation is solved iteratively} using elliptic estimates and the invertibility of model operators of the form
\[
    u+\mathcal{L}_X u+h\cdot u=F
\]
for small vector fields $X$ and coefficients $h$ on the sphere. Once the shift is determined, the remaining torsion variable is recovered from the $\nabla_3\eta$ equation together with the Codazzi relation and the identity $\underline{\eta}=-\eta+2\nabla\log\Omega$. This strategy is close in spirit to the constraint construction used in the self-similar vacuum analysis of Rodnianski and Shlapentokh-Rothman \cite{RodnianskiShlapentokhRothman2019}, and here it is carried out for the Einstein{--}scalar field system.

\begin{remark}[The focusing threshold]
The exponent in the trapped-surface criterion is explicit. We have
\[
    \frac{\kappa(2N-3)}{2(2N+3\kappa)}
    =\frac{\kappa}{2}-\frac{3\kappa(\kappa+1)}{2(2N+3\kappa)},
\]
so the admissible range approaches $\lambda<\frac{\kappa}{2}$ as {\color{black}the iteration parameter $N$ tends to infinity}. After the perturbation has been transported in the incoming direction and the Raychaudhuri equation has been integrated, the leading focusing term along the curve $v_0(u)=\epsilon_1(-u)^{1+\frac{3\kappa}{2N}}$ is
\[
    -\frac{v_0(u)^{1+2\lambda}}{(-u)^{1+\kappa}}.
\]
Substituting $v_0(u)$ into the leading term gives
\[
-\epsilon_1^{1+2\lambda}(-u)^{-q},
\qquad
q=\kappa-2\lambda-\frac{3\kappa}{2N}-\frac{3\kappa\lambda}{N}.
\]
The condition $\lambda<\frac{\kappa(2N-3)}{2(2N+3\kappa)}$ is equivalent to $q>0$, so the estimates for the singular terms {remain} precise. Consequently, the theorem proves instability arising from perturbations of the size $v^{\frac{\kappa}{2}-o(1)}$ while still permitting angular localization.
\end{remark}
}

\subsection{New ingredients}

To achieve the above results, we introduce several novel methodological ingredients that significantly streamline the analysis and yield sharper geometric control compared with the prior literature.

\vspace{2mm}
\paragraph{\textbf{Construction of approximate spacetimes.}}
The most important technical step is the construction of a high-order approximate spacetime. Starting from the data on $\underline{H}_0$, we construct a sequence of Lorentzian metrics and scalar fields $(g^{(i)},\phi^{(i)})$. {\color{black}{At step $i+1$,} the nonlinear coefficients in the null structure equations are frozen {using} the $i$-th approximate geometry, and the equations for the new quantities are solved in a triangular order. Thus, each step produces an actual double-null geometry that satisfies the Einstein--scalar field equations only up to an error that decreases as $i$ increases.}

One representative part of the iteration is the recovery of $g^{(i+1)}$ from the outgoing shear. First, using the $i$-th spacetime as background, we solve the $e_3$-transport equation
\begin{equation}\label{eq:intro-approx-chih}
    \left\{
    \begin{aligned}
        & (\Omega\nabla_3)^{(i)}(\Omega\chih^{(i+1/2)})_{AB}
        +\frac{1}{2}(\Omega\tr\chib)^{(i)}(\Omega\chih^{(i+1/2)})_{AB} \\
        &\qquad =(\Omega^{(i)})^2\left(\left(\nabla^{(i)}{\hat\otimes}^{(i)}\eta^{(i)}
        +\eta^{(i)}{\hat\otimes}^{(i)}\eta^{(i)}
        +\nabla\phi^{(i)}{\hat\otimes}^{(i)}\nabla\phi^{(i)}\right)_{AB}
        -\frac{1}{2}(\tr\chi)^{(i)}{\chibh^{(i)}}_{AB}\right),\\
        &{\Omega\chih^{(i+1/2)}}_{AB}(-1,v)
        =\frac{1}{2}\left(\Omega\chih_{AC}g^{CD}{g^{(i)}}_{DB}
        +\Omega\chih_{BC}g^{CD}{g^{(i)}}_{DA}\right)(-1,v).
    \end{aligned}
    \right.
\end{equation}
The output is then inserted into the $e_4$-evolution equations for the metric and the outgoing expansion:
\begin{equation}\label{eq:intro-approx-metric}
    \left\{
    \begin{aligned}
        &\mathcal{L}_{\partial_v}(g^{(i+1)})_{AB}
        =(\Omega\tr\chi)^{(i+1)}(g^{(i+1)})_{AB}
        +2{(\Omega\chih)^{(i+1)}}_{AB},\\
        &\partial_v\left(\Omega\tr\chi\right)^{(i+1)}
        =-\frac{1}{2}\left((\Omega\tr\chi)^{(i+1)}\right)^2 - (\partial_v\phi^{(i+1)})^2-4(\Omega\omega)^{(i+1)}(\Omega\tr\chi)^{(i+1)}
        \\
        &\qquad\qquad\qquad\qquad-{(\Omega\chih)^{(i+1)}}_{AB}{(\Omega\chih)^{(i+1)}}_{CD}\cdot {g^{(i+1)}}^{AC}{g^{(i+1)}}^{BD},\\
        &(\Omega\chih)^{(i+1)}_{AB}
        =\frac{1}{2}\left((\Omega\chih)^{(i+1/2)}_{AC}{g^{(i+1)}}_{DB}
        +(\Omega\chih)^{(i+1/2)}_{CB}{g^{(i+1)}}_{DA}\right){(g^{(i)})}^{CD},\\
        &(g^{(i+1)})_{AB}(u,0)=g_{AB}(u,0),\qquad
        (\Omega^{-1}\tr\chi)^{(i+1)}(u,0)=\Omega^{-1}\tr\chi(u,0).
    \end{aligned}
    \right.
\end{equation}
The intermediate notation $(\Omega\chih)^{(i+1/2)}$ is essential here. The tensor obtained from the $e_3$-transport equation is defined using the old metric $g^{(i)}$, whereas the final outgoing shear at the next step must be trace-free with respect to the new metric $g^{(i+1)}$. Moreover, along $H_{-1}$ the prescribed shear is measured with respect to the exact induced metric, not with respect to $g^{(i)}$. The algebraic passage from $(\Omega\chih)^{(i+1/2)}$ to $(\Omega\chih)^{(i+1)}$ is therefore designed both to preserve the trace-free condition relative to $g^{(i+1)}$ and to improve the matching of the outgoing data on $H_{-1}$ as $i$ increases.

This {example} shows the basic mechanism of the construction. {\color{black}The $e_3$-equations propagate the quantities that are sensitive to the incoming self-similar geometry, whereas the $e_4$-equations recover the metric quantities along the outgoing direction.} The ordering avoids solving the fully nonlinear system at once and isolates the error made by freezing the coefficients. {Set}
\[
    \wt{\Rc}^{(i)}_{\mu\nu}
    =\Ric(g^{(i)})_{\mu\nu}-\partial_\mu\phi^{(i)}\partial_\nu\phi^{(i)},
    \qquad
    \wt{\square\phi}^{(i)}=\Box_{g^{(i)}}\phi^{(i)}{.}
\]
{The} estimates in the construction section show, schematically, that the scale-invariant norms of $\wt{\Rc}^{(i)}$ and $\wt{\square\phi}^{(i)}$ gain one additional factor of $v/(-u)^{1+\delta}$ at each step. {\color{black}After choosing a sufficiently high iterate as the reference spacetime, we {solve} the exact Einstein--scalar field system by estimating the difference between the exact solution and this approximate geometry. This is how the approximation scheme enters the proof of the existence region.}

{\color{black} We also remark the microlocal property of the approximation. If we assume that the initial data satisfy $g, \phi\in C^\infty(S_{u,v}),$
then along each $u$-slice, we can construct $(g^{(\infty)},\phi^{(\infty)})$ from the spacetime sequence $(g^{(i)},\phi^{(i)})$ such that the differences with genuine solution satisfy
$$|g^{(\infty)}-g^{\text{genuine}}|+|\phi^{(\infty)}-\phi^{\text{genuine}}| \lesssim O(v^\infty).$$
}

\vspace{3mm}
\paragraph{\textbf{Scale-invariant comparison estimates.}}
{\color{black}We next choose a sufficiently high iterate as the reference spacetime and denote it by $(\ot{g},\ot{\phi})$.} For every Ricci coefficient, scalar field derivative, and curvature component $\Phi$, we write
\[
    \wt{\Phi}=\Phi-\ot{\Phi}.
\]
The estimates are then carried out in scale-invariant weighted norms on spheres and null hypersurfaces, with norms denoted by $\LS$, $\LH$, $\LHb$, $\LR$, respectively. Their precise definitions are provided in later sections. With these norms, the important point is the form of the target estimates. For a typical curvature pair $(\Psi_1,\Psi_2)$, the desired estimate is
\begin{equation}\label{eq:intro-target-energy}
    \sum_{i\leq 4}
    \lnm \nabla^i\wt{\Psi_1}\rnm^2_{\LH(u,v;N)}
    +\lnm \nabla^i\wt{\Psi_2}\rnm^2_{\LHb(u,v;N)}
    \lesssim \left(\frac{v}{-u}\right)^{3/2}.
\end{equation}
This is one of the estimates that must be tracked and proved separately in the bootstrap argument.

The role of the approximate spacetime is to remove the large self-similar background from these estimates. Instead of deriving energy estimates directly for the exact solution, we derive equations for the differences $\wt{\Phi}$. The coefficients in these equations are controlled by the approximate geometry, while the source terms are either perturbative products involving the differences or the residual errors
\[
    \Ric(\ot{g})_{\mu\nu}-\partial_\mu\ot{\phi}\partial_\nu\ot{\phi},
    \qquad
    \Box_{\ot{g}}\ot{\phi}.
\]
By the construction of the approximate spacetimes, these residuals are already higher order in the scale-invariant region. This is what allows the transport estimates and the curvature energy estimates to close systematically.

Another new feature of our approach is that{,} in the paired energy estimates{,} we no longer {treat} the Weyl curvature components as the primary energy variables. Instead, we employ the following four pairs{:}
\[
\begin{gathered}
\bigl(\dv(\Omega\chic),(\Omega d\eta,\Omega K)\bigr),\qquad
\bigl((\Omega d\etab,\Omega K),\dv(\Omega\chibc)\bigr),\\
\bigl(\nabla(\Omega e_4\phi),\Omega\nabla^2\phi\bigr),\qquad
\bigl(\Omega\nabla^2\phi,\nabla(\Omega e_3\phi)\bigr).
\end{gathered}
\]
{The paired Hodge structures still present, and these four pairs make the energy estimates controlling the differences after iteration more systematic.}

\vspace{3mm}
\paragraph{\textbf{Focusing estimates and trapped surfaces.}}
The trapped-surface argument begins by propagating the perturbation of the outgoing data into the spacetime region close to the center. In this singular regime, {in the isotropic case,} the perturbation of $\Omega\chih$ and $\Omega e_4\phi$ can be {shown to remain} large enough along the incoming direction. Along the curve
$v_0(u)=\epsilon_1(-u)^{1+\frac{3\kappa}{2N}}$, inserting this lower bound into the Raychaudhuri equation for $\Omega^{-1}\tr\chi$ gives
the focusing estimate
\[
    (-u)\Omega^{-1}\tr\chi(u,v_0(u),\theta)
    \leq C-\frac{1}{C}\cdot\frac{v_0(u)^{1+2\lambda}}{(-u)^{1+\kappa}}
    +\text{lower order terms}.
\]
The lower-order terms are dominated precisely under the condition $\lambda<\frac{\kappa(2N-3)}{2(2N+3\kappa)}$.
This gives $\tr\chi<0$ for $-u$ sufficiently small.

{The anisotropic case presents an additional difficulty absent from the first author's work \cite{An2025}, in which the incoming naked-singularity initial data are spherically symmetric. In the present setting,} let $F_{u,v}:S_{-1,v}\to S_{u,v}$ be the transport map along the incoming null direction. If the perturbation is initially supported in a ball $B_{S_{-1,v}}(x,r)$, then the corresponding transported region only satisfies
\[
    B_{S_{u,v}}\bigl(F_{u,v}(x),(-u)^{1+\delta}r\bigr)\subset F_{u,v}\left(B_{S_{-1,v}}(x,r)\right).
\]
Thus, the angular region where strong focusing is available shrinks as $u\to 0$. This {degeneration} explains why the arguments in \cite{An2025} for the spherical naked singularity cannot simply be applied pointwise on the whole sphere in the setting of this paper. {\color{black}The new point, beyond the anisotropic deformation and barrier constructions in An--Han \cite{AnHan2021} and An \cite{An2025}, is that, in the present nonspherical setting, the barrier must be adapted to an angular scale that degenerates under transport. Indeed, the transport estimates show that the focused region has normalized angular radius of order $(-u)^\delta$, so choosing the barrier at this $u$-dependent scale produces derivative losses of order $(-u)^{-\delta}$ and $(-u)^{-2\delta}$ for its first and second derivatives, respectively.The mechanism used to absorb these losses has two ingredients. The $4N$-step construction underlying Theorem~1.4 yields an existence region containing the curve $v_0(u)=\epsilon_1(-u)^{1+\frac{3\kappa}{2N}}$, together with sharp comparison estimates between the exact solution and the approximate reference spacetime. These estimates keep the transport and Raychaudhuri error terms lower order. Writing
\[
\lambda_*(N):=\frac{\kappa(2N-3)}{2(2N+3\kappa)},
\]
the focusing condition is $\lambda<\lambda_*(N)$. We refer to the resulting positive difference $\lambda_*(N)-\lambda$ as the strict threshold gap.

We now introduce an exponent $\delta_1>0$ chosen strictly within this gap, so that
\[
0<\delta_1<\lambda_*(N)-\lambda,
\qquad\text{or equivalently}\qquad
\lambda<\lambda_*(N)-\delta_1.
\]
Set
\[
q:=\kappa-2\lambda-\frac{3\kappa}{2N}-\frac{3\kappa\lambda}{N}
=\left(2+\frac{3\kappa}{N}\right)\bigl(\lambda_*(N)-\lambda\bigr).
\]
Then $q>2\delta_1$, and along $v_0(u)$ the leading Raychaudhuri term
\[
-\frac{v_0(u)^{1+2\lambda}}{(-u)^{1+\kappa}}
=-\epsilon_1^{1+2\lambda}(-u)^{-q}
\]
is stronger than order $-(-u)^{-2\delta_1}$. Thus, the strict threshold gap supplies the $\delta_1$-gain, while the iteration estimates ensure that this gain survives the passage from the approximate geometry to the exact solution. For the subsequent sphere-deformation argument, we retain the weaker localized bound of order $-(-u)^{-\delta_1}$, leaving sufficient margin to absorb the derivative losses. The trapped surface itself is constructed in the exact spacetime; the approximate spacetime serves only as the reference geometry for the estimates.}

{More precisely,} to pass from localized focusing to a closed trapped surface, we deform the sphere inside the incoming null hypersurface. In geodesic coordinates $(U,V,\Theta)$, for a graph $S_{W,V}=\{U=W(\Theta)\}$, the outgoing expansion has the form
\[
    \ell^+(W)=\tr\chi^{(g)}
    +2\Delta W+4\zeta^{(g)}\cdot\nabla W
    -|\nabla W|^2\bigl(\tr\chib^{(g)}+4\omegab^{(g)}\bigr).
\]
{Here} $(\cdot)^{(g)}$ denotes the corresponding quantity in geodesic coordinates. Writing $W=-Ve^{-\Phi}$, we have
\[
    \frac12(-W)\ell^+(W)
    \leq
    \frac12(-W)\Omega^{-1}\tr\chi
    +o(1)+\Delta_{\gamma}\Phi-\frac{\kappa}{2}|\nabla\Phi|_{\gamma}^{2}.
\]
On the transported support of the perturbation, the first term is already very negative. Away from this support, we choose a barrier function $h$ satisfying
\[
    \Delta_{\gamma(h)}h-\frac{\kappa}{2}|\nabla h|_{\gamma(h)}^2<-\frac12,
    \qquad
    |\nabla h|\lesssim \epsilon_2^{-1},\quad
    |\nabla^2h|\lesssim \epsilon_2^{-2}.
\]
Note that taking $\epsilon_2\sim(-U_1)^\delta$ would produce derivative losses of size $(-U_1)^{-\delta}$ and $(-U_1)^{-2\delta}$. {\color{black}Since $\delta_1>2\delta$, the retained focusing term $-(-U_1)^{-\delta_1}$ dominates both losses. Consequently, the $u$-dependent barrier converts the localized anisotropic focusing into a closed trapped surface.}

\vspace{3mm}
\paragraph{\textbf{MOTS and the null comparison principle.}}
Earlier elliptic approaches to MOTSs and apparent horizons treat existence, uniqueness, and achronality within a unified analysis; see \cite{An2020,AnHan2021,AnHe2024,AnHe2025} by the first author and coauthors. Our aim in {the} present singular setting is limited to the existence of {a MOTS} arising from anisotropic instability. {Using the constructed anisotropic trapped surface as an inner barrier and the concrete estimates obtained above, we achieve this goal} by adapting a relatively standard mean-curvature-flow argument. We refer {the reader} to \cite{RoeschScheuer2021} for more details of the flow {argument}.

\vspace{2mm}

{In Section~5, we go} one step further and derive the second variation formula for the outgoing expansion. Let $(s,\theta)$ be a geodesic foliation on an incoming null hypersurface $\Hb$, so that the null generator is tangent to the $s$-curves. We then consider graph spheres
\[
    S_{w(t)}=\{s=w(t,\theta)\}\subset \Hb .
\]
Here $s$ is an affine parameter along each incoming null generator, so $e_3=\partial_s$ satisfies $D_{e_3}e_3=0$. 

 {We show that} the variation of $\ell^+$ in the incoming null direction takes the form
\begin{equation}\label{second variation formula in introduction}
    \frac{d}{dt}\ell^+(w)
    =-w_t\tr\chib\,\ell^+
    +2\Delta w_t
    -w_t\bigl(2K'-\tr_{S_w}\Ric\bigr)
    +2w_t\operatorname{div}\zeta'
    +4\nabla w_t\cdot\zeta'
    +2w_t|\zeta'|^2 .
\end{equation}
This formula is the key to {proving uniqueness} and to {verifying} the null comparison principle {introduced} by An and He in \cite{AnHe2024}. For instance, if two graphs $S_{w_0}$ and $S_{w_1}$ are MOTSs and {we can prove}
\begin{equation}\label{2nd variation condition 1}
    2K'+\frac12\tr\chib\,\ell^+(w)
    -\tr_{S_w}\Ric-2|\zeta'|^2>0,
\end{equation}

then $w_1-w_0$ cannot have a nonzero sign-changing extremum{,} and {uniqueness} follows.

With \eqref{second variation formula in introduction}, we can also see that if the {following} condition is satisfied{:}
 \begin{equation}\label{2nd variation condition 2}
    \tr\chib\,\ell^+(w(t))
    +2K'_{w(t)}-\tr_{S_{w(t)}}\Ric
    -2\operatorname{div}\zeta'_{w(t)}
    -2|\zeta'_{w(t)}|^2>0,
\end{equation}
then the null comparison principle holds. Indeed, let $S_{w_0}$ be a MOTS and $S_{w_1}$ a graph with $\ell^+(w_1)\leq0$. Set $h=w_1-w_0$ and consider $w(t)=w_0+th$. Integrating \eqref{second variation formula in introduction} gives $0\geq\ell^+(w_1)-\ell^+(w_0)=\mathcal{L}h$. The operator $\mathcal{L}$ is uniformly elliptic, while \eqref{2nd variation condition 2} makes its zeroth-order term $-c(\theta)h$ with $c(\theta)>0$. At a negative minimum of $h$, the maximum principle would give $\mathcal{L}h>0$, contradicting $\mathcal{L}h\leq0$. Hence{,} $h\geq0$, or equivalently{,} $w_1\geq w_0$.

\vspace{2mm}
{As an application of \eqref{second variation formula in introduction}, we further prove in Section~5 that, in the first author's naked-singularity-censoring setting \cite{An2025}, conditions \eqref{2nd variation condition 1} and \eqref{2nd variation condition 2} can be verified. This gives an alternative proof of the uniqueness of the MOTS and the achronality of the apparent horizon.}

\subsection{Outline of the paper}

The remainder of the paper is organized as follows. In Section~2, we collect the double-null formalism used throughout the paper, including the null structure equations, curvature decompositions, basic estimates on the spheres, commutation formulae, and the construction of the self-similar incoming initial data. In Section~3, we construct the sequence of approximating spacetimes and prove the estimates for the corresponding error terms. This is the technical device {that} separates the self-similar background from the perturbative part of the Einstein--scalar field system. In Section~4, we establish the scale-invariant a priori estimates for the exact solution by comparing it with a sufficiently high-order approximate spacetime; these estimates yield the existence region stated in the first theorem. Finally, in Section~5, we prove the trapped-surface formation theorem. We first treat the focusing estimates and the construction of trapped surfaces, and then discuss the associated marginally outer trapped surfaces, the flow argument, and the null comparison principle.

\subsection{Acknowledgments}
XA is supported by MOE Tier 1 grant A-8002933-00-00. SW is supported by the NUS President Graduate Fellowship.

\section{Preliminaries}\label{Section_Preliminaries}
In this section, we first collect the fundamental equations governing a general Lorentzian manifold in the double null foliation \eqref{metric_expression}. Rather than restricting to an Einstein system, we present the equations for a general Lorentzian manifold $\left(\mathcal{M}^{3+1},g\right)$. We then present the initial data construction for the characteristic initial value problem for the Einstein--scalar field equations.

\subsection{Equations}
We introduce the notation
\[\chic_{AB}=\chi_{AB}-\tr\chi g_{AB},\ \chibc_{AB}=\chib_{AB}-\tr\chib g_{AB}.\] 
For the Ricci coefficients \eqref{Ricci_coefficients_definition} and the Gauss curvature $K$, we list the null propagation equations:
\begin{equation}\label{eq:chi}
    \begin{aligned}
        &\Omega\nabla_4 (\Omega^{-1}\tr\chi)=-\operatorname{Ric}_{44}-|\chic|^2, \\
     &\Omega^{-1}\nabla_3 (\Omega\tr\chi)+\tr\chi\tr\chib=2\sdiv\eta+2\left|\eta\right|^2-2K+R+\Rc_{34},\\
     &\Omega^{-1}\nabla_3(\Omega\chic)_{AB}+\frac{1}{2}\tr\chib\chic_{AB}
        =\nabla_A\eta_B+\nabla_B\eta_A-2(\dv\eta)g_{AB}+Kg_{AB}\\
        &\qquad\qquad
        +2\eta_A\eta_B-2|\eta|^2g_{AB}
        -\frac{1}{2}\tr\chi\chibc_{AB}
        -\frac{1}{2}(R+\Rc_{34})g_{AB}
        +\widehat{\operatorname{Ric}}_{AB},  
    \end{aligned}
\end{equation}
\begin{equation}\label{eq:chib}
    \begin{aligned}
        &\Omega\nabla_3(\Omega^{-1}\tr\chib)=-\Rc_{33}-|\chibc|^2,\\
        &\Omega^{-1}\nabla_4(\Omega\tr\chib)+\tr\chi\tr\chib=2\sdiv\etab+2\left|\etab\right|^2-2K+R+\Rc_{34},\\
        &\Omega^{-1}\nabla_4(\Omega\chibc)_{AB}+\frac{1}{2}\tr\chi\chibc_{AB}
        =\nabla_A\etab_B+\nabla_B\etab_A-2(\dv\etab)g_{AB}+Kg_{AB}\\
        &\qquad\qquad
        +2\etab_A\etab_B-2|\etab|^2g_{AB}
        -\frac{1}{2}\tr\chib\chic_{AB}
        -\frac{1}{2}(R+\Rc_{34})g_{AB}
        +\widehat{\operatorname{Ric}}_{AB}, 
    \end{aligned}
\end{equation}
\begin{equation}\label{eq:eta-etab}
    \nabla_4\eta=-2\tr\chi\zeta-\chic\cdot\zeta+\dv\chic-\Rc_4,\quad
    \nabla_3\etab=2\tr\chib\zeta+\chibc\cdot\zeta+\dv\chibc-\Rc_3,
\end{equation}
\begin{equation}\label{eq:K}
    \begin{aligned}
        &\Omega \nabla_4 K+\Omega\tr\chi K=\dv\dv(\Omega\chic),\quad \Omega \nabla_3 K+\Omega\tr\chib K=\dv\dv(\Omega\chibc).
    \end{aligned}
\end{equation}
\begin{equation*}
    \begin{aligned}
        &\Omega^{-1}e_3(\Omega e_4\log\Omega)-\Delta \log\Omega=\frac{1}{2}(K-\dv\eta)+\frac{1}{2}(K-\dv\etab)\\
        &\qquad\qquad-\Rc_{34}-\frac{1}{2}R-\frac{1}{2}\chic\cdot\chibc+\frac{1}{2}\tr\chi\tr\chib-|\etab|^2+2\eta\cdot\etab.
    \end{aligned}
\end{equation*}
For a scalar function $\phi$, we have the wave operator
\begin{equation*}
    \begin{aligned}
        \square \phi=&\Delta\phi-\Omega^{-1}e_3(\Omega e_4\phi)+2\eta\nabla\phi-\frac{1}{2}\tr\chib e_4\phi-\frac{1}{2}\tr\chi e_3\phi\\
        =&\Delta\phi-\Omega^{-1}e_4(\Omega e_3\phi)+2\etab\nabla\phi-\frac{1}{2}\tr\chib e_4\phi-\frac{1}{2}\tr\chi e_3\phi.
    \end{aligned}
\end{equation*}
For a $1$-tensor $f_A$, we write
\[df_{AB}=\nabla_A f_B-\nabla_Bf_A=2\nabla_{[A}f_{B]}.\]
Using the commutators
\begin{equation}\label{eq:commutators}
         \begin{aligned}
        \left[\Omega \nabla_4,  \nabla_A\right] \phi_{B_1 \cdots B_k}
        =&\sum_{i=1}^k\left(-\nabla_{B_i}(\Omega\chi)_A^C+\nabla^C(\Omega\chi)_{AB_i}\right) \phi_{B_1 \cdots \hat{B}_i C \cdots B_k}-\Omega \chi_A{ }^C  \nabla_C \phi_{B_1 \cdots B_k}, \\
        \left[\Omega \nabla_3, \nabla_A\right] \phi_{B_1 \cdots B_k}
        = &\sum_{i=1}^k\left(-\nabla_{B_i}(\Omega\chib)_A^C+\nabla^C(\Omega\chib)_{AB_i}\right) \phi_{B_1 \cdots \hat{B}_i C \cdots B_k}-\Omega \underline{\chi}_A{ }^C \nabla_C\phi_{B_1 \cdots B_k} ,
        \end{aligned}
\end{equation}
we derive equations for first-order derivatives of the Ricci coefficients:
\begin{equation}\label{eq:dv-chic}
    \begin{aligned}
        &\Omega^{-1}\nabla_3\dv(\Omega\chic)_A
        -\nabla^B(d\eta)_{BA}-\nabla_AK = -\frac{1}{2}\nabla_A(R+\Rc_{34})
        +\nabla^B\widehat{\operatorname{Ric}}_{AB}\\
        &\qquad +2K\eta_A
        +\nabla^B\left(
        2\eta_A\eta_B-2|\eta|^2g_{AB}
        -\frac{1}{2}\tr\chi\chibc_{AB}
        -\frac{1}{2}\tr\chib\chic_{AB}\right)\\ 
        &\qquad+\Omega^{-2}\Big\{
        \left[2\nabla_{[D}(\Omega\chibc)^B{}_{A]}\right]\Omega\chic^D{}_{B}-\Omega\chib^{BD}\nabla_D(\Omega\chic_{AB})\\
        &\qquad\qquad\qquad
        +\Omega\chi_A{}^D\nabla_D(\Omega\tr\chib)
        -\nabla^B(\Omega\chibc_B{}^D)(\Omega\chic_{AD}) \Big\}\\
        &\qquad+(\eta^B+\etab^B)\Big[
        \nabla_A\eta_B+\nabla_B\eta_A
        -2(\dv\eta)g_{AB}+Kg_{AB}
        +2\eta_A\eta_B-2|\eta|^2g_{AB}\\
        &\qquad\qquad\qquad
        -\frac{1}{2}\tr\chi\chibc_{AB}
        -\frac{1}{2}\tr\chib\chic_{AB}
        -\frac{1}{2}(R+\Rc_{34})g_{AB}
        +\widehat{\operatorname{Ric}}_{AB}\Big],
    \end{aligned}
\end{equation}
\begin{equation}\label{eq:dv-chibc}
    \begin{aligned}
        &\Omega^{-1}\nabla_4\dv(\Omega\chibc)_A
        -\nabla^B(d\etab)_{BA}-\nabla_AK=-\frac{1}{2}\nabla_A(R+\Rc_{34})
        +\nabla^B\widehat{\operatorname{Ric}}_{AB}\\
        &\qquad+2K\etab_A
        +\nabla^B\left(
        2\etab_A\etab_B-2|\etab|^2g_{AB}
        -\frac{1}{2}\tr\chib\chic_{AB}
        -\frac{1}{2}\tr\chi\chibc_{AB}\right)\\ 
        &\qquad\quad+\Omega^{-2}\Big\{
        2\nabla_{[D}(\Omega\chic)^B{}_{A]}\Omega\chibc_{DB}-\Omega\chi^{BD}\nabla_D(\Omega\chibc_{AB})\\
        &\qquad\quad\qquad\qquad
        +\Omega\chib_A{}^D\nabla_D(\Omega\tr\chi)
        -\nabla^B(\Omega\chic_B{}^D)(\Omega\chibc_{AD})
        \Big\}\\
        &\qquad\quad+(\eta^B+\etab^B)\Big[
        \nabla_A\etab_B+\nabla_B\etab_A
        -2(\dv\etab)g_{AB}+Kg_{AB}
        +2\etab_A\etab_B-2|\etab|^2g_{AB}\\
        &\qquad\quad\qquad\qquad
        -\frac{1}{2}\tr\chib\chic_{AB}
        -\frac{1}{2}\tr\chi\chibc_{AB}
        -\frac{1}{2}(R+\Rc_{34})g_{AB}
        +\widehat{\operatorname{Ric}}_{AB}\Big],
    \end{aligned}
\end{equation}
\begin{equation}\label{eq:d-eta}
    \begin{aligned}
        &\Omega\nabla_3(d\etab)_{AB}
        -\left(d\,\dv(\Omega\chibc)\right)_{AB}={}\Omega\chibc_A{}^C(d\etab)_{BC}
        +\Omega\chibc_B{}^C(d\etab)_{CA}\\
        &\qquad\qquad\qquad\qquad\qquad+\nabla_A(\Omega\tr\chib)\etab_B
        -\nabla_B(\Omega\tr\chib)\etab_A
        -\Omega\tr\chib(d\etab)_{AB}\\
        &\qquad\qquad\qquad\qquad\qquad+2\left(d(\Omega\tr\chib\,\zeta)\right)_{AB}
        -\left(d(\Omega\Rc_3)\right)_{AB},
    \end{aligned}
\end{equation}
\begin{equation}\label{eq:d-etab}
    \begin{aligned}
        &\Omega\nabla_4(d\eta)_{AB}
        -\left(d\,\dv(\Omega\chic)\right)_{AB}
        ={}\Omega\chic_A{}^C(d\eta)_{BC}
        +\Omega\chic_B{}^C(d\eta)_{CA}\\
        &\qquad\qquad\qquad\qquad\qquad+\nabla_A(\Omega\tr\chi)\eta_B
        -\nabla_B(\Omega\tr\chi)\eta_A
        -\Omega\tr\chi(d\eta)_{AB}\\
        &\qquad\qquad\qquad\qquad\qquad-2\left(d(\Omega\tr\chi\,\zeta)\right)_{AB}
        -\left(d(\Omega\Rc_4)\right)_{AB},
    \end{aligned}
\end{equation}
\begin{equation*}
    \begin{aligned}
        &\Omega\nabla_3\left(\dv \etab-K\right)+\Omega\tr\chib\left(\dv \etab-K\right)=4\dv(\zeta\cdot\Omega\chibh)-\dv(\Omega\Rc_3),\\
        &\Omega\nabla_4\left(\dv \eta-K\right)+\Omega\tr\chi\left(\dv \eta-K\right)=-4\dv(\Omega\chih\cdot\zeta)-\dv(\Omega\Rc_4).
    \end{aligned}
\end{equation*}
Expanding $D^\mu\left(\Rc_{\nu\mu}-\frac{1}{2}Rg_{\nu\mu}\right)$, one derives the following identities for the Ricci curvature components $\Rc_{\mu\nu}$:
\begin{equation}\label{eq:dv-Ric4}
    \begin{aligned}
        &D^\mu \left(\Rc_{4\mu}-\frac{1}{2}Rg_{4\mu}\right)=-\frac{1}{2} \nabla_4\left(\operatorname{Ric}_{34}+R\right)+2 \underline{\omega} \operatorname{Ric}_{44}+2 \eta^A \operatorname{Ric}_{4 A}-\frac{1}{2} \nabla_3 \operatorname{Ric}_{44}+\underline{\eta}^A \operatorname{Ric}_{A 4} \\
    &\qquad+\snab^A \operatorname{Ric}_{A 4}-\frac{1}{2} \operatorname{tr} \underline{\chi} \operatorname{Ric}_{44}-\frac{1}{2} \operatorname{tr} \chi\left(\operatorname{Ric}_{34}+R\right) +\zeta^A \operatorname{Ric}_{A 4}-\frac{1}{2} \operatorname{tr} \chi \operatorname{Ric}_{34}-\hat{\chi}^{A B} \widehat{\operatorname{Ric}}_{A B},
    \end{aligned}
\end{equation}
\begin{equation}\label{eq:dv-Ric3}
    \begin{aligned}
        &D^\mu \left(\Rc_{3\mu}-\frac{1}{2}Rg_{3\mu}\right)=-\frac{1}{2} \nabla_4 \operatorname{Ric}_{33}+2 \omega \operatorname{Ric}_{33}+2 \underline{\eta}^A \operatorname{Ric}_{3 A}-\frac{1}{2} \nabla_3\left(\operatorname{Ric}_{34}+R\right)+\eta^A \operatorname{Ric}_{A 3} \\
    &\qquad+\snab^A \operatorname{Ric}_{A 3}-\frac{1}{2} \operatorname{tr} \chi \operatorname{Ric}_{33}-\frac{1}{2} \operatorname{tr} \underline{\chi}\left(\operatorname{Ric}_{34}+R\right)-\zeta^A \operatorname{Ric}_{A 3}-\frac{1}{2} \operatorname{tr} \underline{\chi} \operatorname{Ric}_{34}-\hat{\chi}^{A B} \widehat{\operatorname{Ric}}_{A B}, 
    \end{aligned}
\end{equation}
\begin{equation}\label{eq:dv-RicA}
    \begin{aligned}
        &D^\mu \left(\Rc_{A\mu}-\frac{1}{2}Rg_{A\mu}\right)=-\frac{1}{2} \nabla_4 \operatorname{Ric}_{3 A}+\underline{\omega} \operatorname{Ric}_{4 A}+\eta^B \operatorname{Ric}_{B A}+\frac{1}{2} \eta_A \operatorname{Ric}_{34}-\frac{1}{2} \nabla_3 \operatorname{Ric}_{4 A}+\omega \operatorname{Ric}_{3 A}\\
        &\qquad+\underline{\eta}^B \operatorname{Ric}_{B A}+\frac{1}{2} \underline{\eta}_A \operatorname{Ric}_{34}+\snab^B \widehat{\operatorname{Ric}}_{B A}+\frac{1}{2} \snab_A \operatorname{Ric}_{34} \\
        &\qquad-\frac{1}{2} \operatorname{tr} \underline{\chi} \operatorname{Ric}_{4 A}-\frac{1}{2} \operatorname{tr} \chi \operatorname{Ric}_{3 A}-\frac{1}{2} \underline{\chi}_A{ }^B \operatorname{Ric}_{B 4}-\frac{1}{2} \chi_A{ }^B \operatorname{Ric}_{B 3}.
    \end{aligned}
\end{equation}
These equations hold for any symmetric tensor $T_{\mu\nu}$, and their left-hand sides vanish for $\Rc_{\mu\nu}$ by the contracted Bianchi identity $D^\mu\left(\Rc_{\nu\mu}-\frac{1}{2}Rg_{\nu\mu}\right)\equiv 0.$

\subsection{$\kappa$-self-similarity and initial data on an ingoing cone}

In this part, we prescribe the initial data along $\Hb_0$. 

\begin{definition}\label{self-similar initial value}
    We say that the 5-tuple $(g_{AB},\Omega,\zeta_A,b^A, \phi)$ on $\Hb_0$ is {an} \textbf{$(\epsilon,N)$-small $\k$-self-similar initial data set} if 
    \begin{enumerate}
        \item The data are $\k$-self-similar, meaning that 
        $$g_{AB}(u,0)=(-u)^2g_{AB}(-1,0),\Omega^2(u,0)=(-u)^\k\Omega^2(-1,0),$$
        $$b^A(u,0)=(-u)^{-1}b^A(-1,0),\zeta_A(u,0)=\zeta_A(-1,0),$$
         $$\left((\partial_u+b)\phi\right)(u,0)=(-u)^{-1}\left((\partial_u+b)\phi\right)(-1,0),\left(\partial_{\theta^A}\phi\right)(u,0)=\left(\partial_{\theta^A}\phi\right)(-1,0).$$
    \item The following inequality holds: $\lnm {g}(-1,0)-g^{\SS}, b(-1,0), \log\Omega(-1,0),\phi(-1,0)\rnm_{H^{N+1}(\SS)}\leq \epsilon.$
    \item These quantities satisfy the constraint equations for the Einstein--scalar field equations along $\Hb_0$, namely, the $\nabla_3\zeta$ and $\nabla_3\chib$ equations.
    \end{enumerate}

\end{definition}

To construct such initial data sets, we first study the $\Lie$ propagation equation on the sphere, which governs the solvability of the constraint equations in the self-similar setting. The argument mirrors the construction in \cite{RodnianskiShlapentokhRothman2019}. A proof of Lemma \ref{lemma:Lie_propagation_eq} can be found there, and Proposition \ref{prop:initial_data_construction} generalizes that argument \cite{RodnianskiShlapentokhRothman2019} to the Einstein--scalar field system.
\begin{lemma}\label{lemma:Lie_propagation_eq}
    Let $M\geq 2$ be an integer. Fix $k$, and let $F_{A_1\cdots A_k}\in H^M(\mathcal{T}^{(0,k)}(\SS))$. Consider the equation 
    \begin{equation}\label{Lie_propagation_equation}
        u+\Lie_X u+h\cdot u=F,
    \end{equation}
    where $(h\cdot u)_{A_1\cdots A_k}=\sum_{l=0}^k {h^{(i)}}_{A_1\cdots A_i}^{B_1\cdots B_i}u_{B_1\cdots B_i A_{i+1}\cdots A_k}$.
    {If} the vector field $X$ and the tensors $h^{(i)}$ satisfy 
    \begin{equation*}
        \lnm X\rnm_{H^{M+1}(\SS)}+\sum_{i=0}^k\lnm h^{(i)} \rnm_{H^{M}(\SS)}\ll 1,
    \end{equation*}
    then equation \eqref{Lie_propagation_equation} admits a unique solution $u$ satisfying 
    \begin{equation*}
        \lnm u\rnm_{H^{M}(\SS)}+\lnm \Lie_X u\rnm_{H^M(\SS)}\leq C(k,M)\lnm F\rnm_{H^{M}(\SS)}.
    \end{equation*}
\end{lemma}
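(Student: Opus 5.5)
We treat \eqref{Lie_propagation_equation} as a small perturbation of the identity on $H^M(\mathcal{T}^{(0,k)}(\SS))$, with one twist: $\Lie_X$ loses a derivative, so a naive Neumann series cannot be used. We therefore argue in two stages. First come a priori energy estimates, proved for smooth solutions. Then existence follows from a regularization (or continuity) argument, and uniqueness from the a priori bound at level $M=0$ (or $M$) applied to the difference of two solutions.

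\emph{A priori estimate.} Let $u$ be a smooth solution and fix a multi-index of angular derivatives $\nabla^j$, $j\le M$, taken with respect to the round metric. Commuting with $\Lie_X$ gives
\[
\nabla^j u+\Lie_X\nabla^j u+[\nabla^j,\Lie_X]u+\nabla^j(h\cdot u)=\nabla^jF .
\]
The commutator $[\nabla^j,\Lie_X]u$ involves at most $j$ derivatives of $u$ multiplied by up to $j+1$ derivatives of $X$. This is exactly where the assumption $X\in H^{M+1}$ is used: by the Sobolev product estimates on $\SS$ (here $M\ge2$ gives an algebra and $L^\infty$ control), $\|[\nabla^j,\Lie_X]u\|_{L^2}\lesssim\|X\|_{H^{M+1}}\|u\|_{H^M}$. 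Similarly, $\|\nabla^j(h\cdot u)\|_{L^2}\lesssim\|h\|_{H^M}\|u\|_{H^M}$.

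Pair the equation with $\nabla^j u$ in $L^2(\SS)$. The key term is handled by integration by parts:
\[
\int\langle \Lie_X v,v\rangle=\tfrac12\int X(|v|^2)+O(\|\nabla X\|_{L^\infty}\|v\|_{L^2}^2)=O(\|X\|_{C^1}\|v\|^2_{L^2}),
\]
since $\Lie_X$ differs from $\nabla_X$ by terms of the form $\nabla X\cdot v$, and $\int X(|v|^2)=-\int(\dv X)|v|^2$. Summing over $j\le M$ yields $(1-C\eta)\|u\|_{H^M}^2\le\|F\|_{H^M}\|u\|_{H^M}$, where $\eta$ is the smallness parameter. Hence $\|u\|_{H^M}\le C\|F\|_{H^M}$. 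The equation then gives $\Lie_Xu=F-u-h\cdot u$, so $\|\Lie_X u\|_{H^M}\le C\|F\|_{H^M}$ directly.

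\emph{Existence.} This is the main obstacle, because the operator $u\mapsto u+\Lie_Xu$ is not bounded on $H^M$. We add a vanishing viscosity: solve
\[
u_\varepsilon+\Lie_Xu_\varepsilon+h\cdot u_\varepsilon-\varepsilon\Delta^{2}_{\SS}u_\varepsilon=F
\]
(any sufficiently high-order elliptic regularization will do). For fixed $\varepsilon$ this is an elliptic system on a compact manifold. The energy identity above, together with the positive term $\varepsilon\|\Delta u\|^2$, gives injectivity. The operator is Fredholm of index zero, since it is a compact perturbation of $1-\varepsilon\Delta^2$, so injectivity yields solvability with $u_\varepsilon$ smooth if $F$ is smooth.

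The commuted energy estimates carry over uniformly in $\varepsilon$. The viscous term contributes $\varepsilon\|\Delta\nabla^j u\|^2\ge0$ plus commutator terms of size $\varepsilon\|u\|_{H^{j+2}}^2$ coming from curvature of $\SS$; these are absorbed by the good term after interpolation. This gives $\|u_\varepsilon\|_{H^M}\le C\|F\|_{H^M}$ uniformly. A weak limit $u\in H^M$ then solves the equation in the sense of distributions. We then upgrade to the stated bound by approximating $F$ with smooth data and using the a priori estimate at level $M$. The estimate is derived for smooth solutions; this is justified either by mollification with a Friedrichs commutator lemma, or by noting that $u_\varepsilon$ converges strongly in $H^{M-1}$ and applying weak lower semicontinuity.

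\emph{Uniqueness.} Apply the $L^2$ energy estimate to the difference of two solutions.

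The delicate points are twofold: bounding the commutator $[\nabla^j,\Lie_X]$ using exactly one extra derivative on $X$, and checking that the regularization does not destroy the uniform bound.
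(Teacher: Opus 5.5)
The paper gives no proof of this lemma; it cites Rodnianski--Shlapentokh-Rothman. Your argument therefore has to stand on its own, and structurally it does.

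\textbf{What works.} The commutator count is right. $[\nabla^j,\Lie_X]u$ is a sum of terms $\nabla^{l}X*\nabla^{j+1-l}u$ with $1\le l\le j$ and $\nabla^{l+1}X*\nabla^{j-l}u$ with $0\le l\le j$, plus constant-curvature terms. So at most $j$ derivatives fall on $u$ and at most $j+1$ on $X$. In each product the two Sobolev indices are nonnegative and sum to at least $M\ge 2>1$, which is what two dimensions requires for the product to lie in $L^2$. Together with $\int\langle\Lie_X v,v\rangle=O(\|X\|_{C^1})\|v\|_{L^2}^2$, this closes the $H^M$ bound. The equation then gives the bound on $\Lie_Xu$. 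Uniqueness via the $L^2$ identity is legitimate because solutions lie in $H^M\subset H^1$.

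\textbf{The error.} In the existence step the viscous term has the wrong sign. Pairing $u_\varepsilon+\Lie_Xu_\varepsilon+h\cdot u_\varepsilon-\varepsilon\Delta^2u_\varepsilon=F$ with $u_\varepsilon$ puts $-\varepsilon\|\Delta u_\varepsilon\|_{L^2}^2$ on the left, not a positive term. This term dominates at high frequencies, so neither injectivity nor the uniform-in-$\varepsilon$ bound follows. Already for $X=0$, $h=0$, the operator $1-\varepsilon\Delta^2$ has nontrivial kernel whenever $\varepsilon\lambda^2=1$ for an eigenvalue $\lambda$ of $-\Delta$. Near such $\varepsilon$ there is no uniform bound either.

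\textbf{The fix.} Use $+\varepsilon\Delta^2$, or $-\varepsilon\Delta$, instead. Then $1+\varepsilon\Delta^2\colon H^{M+4}\to H^M$ is an isomorphism. Also $\Lie_X+h\colon H^{M+4}\to H^M$ is compact even when $X\in H^{M+1}$ and $h\in H^M$, by Rellich plus the product estimates. Injectivity and the interpolation absorbing the curvature commutators of $[\nabla^j,\Delta^2]$ then work as you describe.

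\textbf{Two smaller corrections.}
\begin{enumerate}
\item $u_\varepsilon$ is not smooth when $X$ and $h$ have only finite regularity. However, $u_\varepsilon\in H^{M+4}$ is all that the level-$M$ integrations by parts require.
\item The final ``upgrade'' through smooth data or Friedrichs mollification is unnecessary. $\Lie_X\colon H^M\to H^{M-1}$ is bounded, hence weakly continuous, and $\varepsilon\Delta^2u_\varepsilon\to0$ in $H^{M-4}$. So the weak limit solves the equation. Weak lower semicontinuity gives $\|u\|_{H^M}\le C\|F\|_{H^M}$, and $\Lie_Xu=F-u-h\cdot u$ gives the bound on $\Lie_Xu$.
\end{enumerate}
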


\begin{proposition}\label{prop:initial_data_construction}
    Fix constants $0<\k<1$ and $\epsilon\ll \k$, and consider a tuple $(g,\Omega,\phi,\Pi_{\cl}b)$ at $S_{-1,0}$ satisfying 
    $$\lnm {g}(-1,0)-g^{\SS},\log\Omega(-1,0),\phi(-1,0),\Pi_{\cl}b\rnm_{H^{N+1}(\SS)}\leq \epsilon,$$
    where $X=\Pi_{\cl}X+\Pi_{\dv}X$ is a decomposition with $\dv X=\dv\Pi_{\dv}X$ and $\cl X=\cl\Pi_{\cl}X$.
    Then there exists an $(\epsilon,N)$-small $\k$-self-similar initial value set generated by $(g,\Omega,\phi,\Pi_{\cl}b)$.
\end{proposition}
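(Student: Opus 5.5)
Given the free data $(g,\Omega,\phi,\Pi_{\cl}b)$ on $S_{-1,0}$, we extend everything to $\Hb_0$ by the $\k$-self-similar scaling laws of Definition~\ref{self-similar initial value}. We then have to determine the unknowns: the gradient part $\Pi_{\dv}b$, the normalization of $(\partial_u+b)\phi$, and $\zeta$. They are fixed by requiring the two constraint equations (the $\nabla_3\tr\chib$ and $\nabla_3\zeta$ equations) on $\Hb_0$. By self-similarity every $u$-derivative reduces to a multiple of $(-u)^{-1}$ plus a Lie derivative along $b(-1,0)$. Hence both constraints become equations on the single sphere $S_{-1,0}$. The plan is to put each into the form $u+\Lie_X u+h\cdot u=F$ with small $X,h$, and apply Lemma~\ref{lemma:Lie_propagation_eq}.

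\textbf{Step 1: the $\tr\chib$ constraint.} Using $\Omega\tr\chib=\frac{2}{-u}\cdot(-1)+\dv b$ and $\Omega\chibh=\frac12\nabla\hat\otimes b$, the Raychaudhuri equation $\Omega\nabla_3(\Omega^{-1}\tr\chib)=-\Rc_{33}-|\chibc|^2$ reduces, at $u=-1$, to the identity displayed in the introduction:
\[
2\k+\dv b-\Lie_b\dv b-\tfrac12(\dv b)^2-\tfrac14|\nabla\hat\otimes b|^2-4\Lie_b\log\Omega+2\dv b\,\Lie_b\log\Omega-(\Omega D_3\phi)^2=0 .
\]
Here $\Omega D_3\phi=(\partial_u+b)\phi$. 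We write $\Pi_{\dv}b=\nabla f$ and $(\partial_u+b)\phi(-1,0)=\sqrt{2\k}+\psi$, where $\psi$ is small. We set $\k$ aside for now and instead introduce a scalar constant $c$ to absorb the mean. Then $w:=\Delta f$ satisfies
\[
w-\Lie_b w+(\text{quadratic terms})=F(g,\Omega,\phi,\Pi_{\cl}b,\psi)+c .
\]
This is exactly of the form \eqref{Lie_propagation_equation} with $X=-b$, which is small in $H^{N+1}$ once we assume $f$ is small. The freedom in the constant $c$ (equivalently a rescaling of the leading constant in $(\Omega D_3\phi)^2$) is chosen so that the right-hand side, and hence $w$, has zero mean. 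This is required because $w=\Delta f$. We then recover $f=\Delta^{-1}w$ with two derivatives gained.

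\textbf{Step 2: fixed point.} We solve Step 1 by iteration in a ball of radius $C\epsilon$ in $H^{N+1}$ for $(f,c)$, using the tame estimates of Lemma~\ref{lemma:Lie_propagation_eq} at level $M=N$ or $N+1$. The nonlinear terms are quadratic in $\epsilon$, so the iteration map is a contraction.

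\textbf{Main obstacle: derivative loss.} The terms $\Lie_b\dv b$ and $|\nabla\hat\otimes b|^2$ lose a derivative relative to $w$. The first is handled by keeping it on the left as the transport part $\Lie_X w$, frozen at the previous iterate. The lemma controls $\Lie_X u$ in $H^M$, and the remaining terms are genuinely lower order after the elliptic gain of $\Delta^{-1}$. I expect that closing this at the regularity $\|X\|_{H^{M+1}}$ demanded by the lemma is the main technical point. One must check that $X$ built from $\nabla f$ with $f\in H^{N+3}$ is available, which requires bookkeeping of one extra derivative between $w$ and $b$.

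\textbf{Step 3: the torsion.} With $b,\Omega,\phi$ fixed, the $\nabla_3\eta$ equation (equivalently the $\nabla_3(\zeta-\nabla\log\Omega)$ constraint), combined with the Codazzi relation expressing $\dv\chibc$ and the identity $\etab=-\eta+2\nabla\log\Omega$, becomes a linear equation for $\zeta(-1,0)$ of the form $\zeta+\Lie_b\zeta+h\cdot\zeta=F$, using $\zeta(u,0)=\zeta(-1,0)$. Here $h$ is built from $\dv b$ and $\nabla\hat\otimes b$, and $F$ is built from $\dv\chibh$, $\nabla\tr\chib$ and the $\phi$ source $\Rc_{3A}=e_3\phi\,\nabla_A\phi$. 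Lemma~\ref{lemma:Lie_propagation_eq} with $k=1$ gives $\zeta$ with $\|\zeta\|_{H^N}\lesssim\epsilon$.

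\textbf{Conclusion.} Finally, we verify that the scaled quantities satisfy items (1)--(3) of Definition~\ref{self-similar initial value}, which follows directly from the construction.
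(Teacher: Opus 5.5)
Your route matches the paper's. You reduce the $\nabla_3\tr\chib$ constraint to $S_{-1,0}$ by self-similarity, write $b=\nabla f+\Pi_{\cl}b$, and solve a Lie-propagation equation for $w=\Delta f$ with the transport field frozen at the previous iterate. The free constant comes from the constant part of $\partial_u\phi$, matching the paper's $\phi(u,0)=-\sqrt{2\wt\k+2\k}\log(-u)+\phi(-1,0)$. You then invert the Laplacian, iterate, and get the torsion from a linear equation of the same type.

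\textbf{The step fixing the constant fails as written.} Making the right-hand side mean-zero does not make $w$ mean-zero. Put $w-\Lie_b w$ on the left and everything else into $G$, and integrate against $dV_g$. This gives $\int(1+\dv b)\,w=\int G$. Since $\dv\Pi_{\cl}b=0$, you have $\dv b=\Delta f=w$. So $\int G=0$ forces $\int w=-\int w^2$, and $\Delta f=w$ is unsolvable unless $w\equiv0$. Moving other terms to the left only replaces $-\int w^2$ by another quadratic quantity that does not vanish in general.

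The constant must instead be chosen so that the \emph{solution} has zero mean, which is what the paper does. Constants are annihilated by $\Lie_{b_i}$, so the solution is affine in the constant: $\wt D_{i+1}(\wt\k)=\wt D_{i+1}(0)+2\wt\k$. Taking $\wt\k_{i+1}=-\frac{1}{2{\rm Area}(S_{-1,0})}\int_{S_{-1,0}}\wt D_{i+1}(0)$ then makes $D_{i+1}$ mean-zero.

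\textbf{Smaller points.}
\begin{itemize}
\item Work at $M=N$. Then $w\in H^N$ gives $f\in H^{N+2}$ and $b_i\in H^{N+1}$, exactly what Lemma~\ref{lemma:Lie_propagation_eq} requires, so no $H^{N+3}$ bookkeeping arises. The level $M=N+1$ is unavailable anyway: $\Pi_{\cl}b$ is only in $H^{N+1}$, and the source contains $b\cdot\nabla\log\Omega$ with $\nabla\log\Omega\in H^N$ only.
\item The derivative loss you anticipate does occur, but in the difference of iterates rather than in the linear step. Its source contains $\Lie_{\nabla(f_i-f_{i-1})}w_i$, which needs $\nabla w_i\in H^N$. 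So run the contraction in $H^{N-1}$ with uniform $H^N$ bounds. The paper asserts an $H^N$ contraction without addressing this.
\item The torsion source contains $\nabla\dv b$ and $\dv(\nabla\hat\otimes b)$, so you get $\zeta\in H^{N-1}$, not $H^N$. This is harmless, since the definition of $(\epsilon,N)$-small data imposes no bound on $\zeta$.
\end{itemize}
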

\begin{proof}
    Imposing the self-similarity condition, we can reduce the $\nabla_3\tr\chib$ equation to {the following equation} on $S_{-1,0}$:
    \begin{equation*}
        \begin{aligned}
            \dv b-\Lie_b\dv b-\frac{1}{2}(\dv b)^2=&\frac{1}{4}\left|\nabla\hat\otimes b\right|^2-2\k+4\Lie_b\log\Omega-2\dv b\Lie_b\log\Omega\\
            &+(\partial_u\phi)^2+2\partial_u\phi \Lie_b\phi+(\Lie_b\phi)^2{.}
        \end{aligned}
    \end{equation*}
    If $\phi(u,0)=-\sqrt{2\wt\k+2\k}\log(-u)+\phi(-1,0)$ and $b=\nabla f+\Pi_{\cl}b$, then we can write the equation as 
\begin{equation*}
    \Delta f-\Lie_b\Delta f=\frac{1}{4}|\nabla\hat\otimes \nabla f|^2+\wt\k+O(\epsilon^2)\wt\k+O(\epsilon)\nabla f+O(\epsilon^2)(\nabla f)^2.
\end{equation*}  
We solve this equation via an iteration scheme. Consider sequences of functions $\{D_i\},\{f_i\}$ and constants $\{\wt\k_i\}$, initialized by $D_0=f_0=\wt\k_0=0$.
Writing $b_i=\nabla f_i+\Pi_{\cl}b$, we define $\wt{D}_{i+1}(\wt{\k},\theta)$ as the solution of
\begin{equation}\label{dv_b_constraint_iteration_D_i}
   \begin{aligned}
     \wt{D}_{i+1}-\Lie_{b_i}\wt{D}_{i+1}=&\frac{1}{2}(\dv b_i)^2+\frac{1}{4}\left|\nabla\hat\otimes b_i\right|^2+4\Lie_{b_i}\log\Omega-2\dv b_i\Lie_{b_i}\log\Omega\\
            &+2\wt\k+2\left({2\wt{\k}_i+2\k}\right)^{1/2} \Lie_{b_i}\phi+(\Lie_{b_i}\phi)^2.
   \end{aligned}
\end{equation}
We then choose $\wt{\k}=\wt{\k}_{i+1}$ so that $\int_{S_{-1,0}}\wt{D}_{i+1}(\wt{\k})dV_g=0$ and set $D_{i+1}=\wt{D}_{i+1}\left(\wt{\k}_{i+1}\right)$. The function $f_{i+1}$ is then determined by
\begin{equation*}
    \Delta f_{i+1}=D_{i+1},\ \int_{S_{-1,0}}f dV_g=0.
\end{equation*} 
Assuming $\lnm D_{i}\rnm_{H^{N}}\leq A_i\epsilon$ and $\left|\wt\k_i\right|\leq B_i\epsilon$,
standard elliptic theory gives $\lnm f_i\rnm_{H^{N+2}}\leq C_0 A_i\epsilon$. To estimate $D_{i+1}$ and $\wt\k_{i+1}$, we rewrite \eqref{dv_b_constraint_iteration_D_i} as
\begin{equation}\label{dv_b_constraint_iteration_D_i_11111}
    \left(\wt{D}_{i+1}-2\wt{\k}\right)-\Lie_{b_i}\left(\wt{D}_{i+1}-2\wt{\k}\right)\sim C(A_i,B_i)O_{H^N}(\epsilon^2).
\end{equation}
Since the right-hand side of \eqref{dv_b_constraint_iteration_D_i_11111} is independent of $\wt{\k}$, it follows that
\begin{equation*}
    \wt{D}_{i+1}(\wt{\k})=\wt{D}_{i+1}(0)+2\wt{\k},
\end{equation*}
with $\lnm \wt{D}_{i+1}(0)\rnm_{H^{N}}\leq C(A_i,B_i)\epsilon^2.$ Therefore{,} we have $\wt{\k}_{i+1}=-\frac{1}{2{\rm Area}(S_{-1,0})}\int_{S_{-1,0}}\wt{D}_{i+1}(0)$ with $$\left|\wt{\k}_{i+1}\right|\leq C(A_i,B_i)\epsilon^2,$$
and hence $\lnm D_{i+1}\rnm_{H^N}\leq C(A_i,B_i)\epsilon^2,$ $\lnm f_{i+1}\rnm_{H^{N+2}}\leq C(A_i,B_i)\epsilon^2$. Consequently, for all $i$ we obtain the uniform bounds
\begin{equation}\label{dv_b_constraint_iteration_D_i_uni_bound}
    \lnm f_i\rnm_{H^{N+2}}+\lnm D_{i}\rnm_{H^{N}}+\left|\wt\k_i\right|\leq C_0\epsilon^2.
\end{equation}
To demonstrate convergence as $i\to\infty$, we introduce the error quantity $$E_{i,j}=\wt{D}_{i+j+1}(\wt{\k}^{(i+j+1)})-2\wt{\k}^{(i+j+1)}-\wt{D}_{i+1}(\wt{\k}^{(i+1)})+2\wt{\k}^{(i+1)},$$ which satisfies
\begin{equation}\label{dv_b_constraint_iteration_D_i_22222}
    \begin{aligned}
        E_{i,j}-&\Lie_{b_i}E_{i,j}\sim\Lie_{\nabla f_{i+j}-\nabla f_i} D_{i+j}+\left(\nabla^2 (f_{i+j}-f_i)\right)\left(\nabla^2 (f_{i+j}+f_i)+2\nabla\Pi_\cl b\right)\\
        &+\nabla(f_{i+j}-f_i)\nabla\log\Omega-\Delta (f_{i+j}-f_i)\Lie_{b_i}\log\Omega-\Delta f_{i+1}\Lie_{\nabla(f_{i+j}-f_i)}\log\Omega\\
        &+\left(\sqrt{\wt\k_{i+j}+\k}-\sqrt{\wt\k_i+\k}\right)\Lie_{b_i}\phi+\sqrt{\wt\k_{i+j}+\k}\Lie_{\nabla(f_{i+j}-f_i)}\phi+\Lie_{\nabla(f_{i+j}-f_i)}\phi\Lie_{b_i+b_{i+j}}\phi.
    \end{aligned}
\end{equation}
From the definition of $\wt{\k}_{i+j}$ and the mean-value property, we deduce
\begin{equation*}
    \left|\wt{\k}_{i+j}-\wt\k_{i}\right|\leq \frac{1}{2{\rm Area}(S_{-1,0})}\int_{S_{-1,0}}\left|\wt{D}_{i+j}(0)-\wt{D}_{i}(0)\right|\leq C_0\lnm E_{i-1,j}\rnm_{H^{N}},
\end{equation*}
\begin{equation*}
    \lnm D_{i+j}-D_i\rnm_{H^{N}}\leq \lnm E_{i-1,j}\rnm_{H^{N}}+2\left|\wt\k_{i+j}-\wt\k_i\right|\leq  C_0\lnm E_{i-1,j}\rnm_{H^{N}},
\end{equation*}
\begin{equation*}
        \lnm f_{i+j}-f_i\rnm_{H^{N+2}}\leq C_0\lnm D_{i+j}-D_i\rnm_{H^{N}}\leq C_0\lnm E_{i-1,j}\rnm_{H^{N}}.
\end{equation*}
Combining \eqref{dv_b_constraint_iteration_D_i_22222} with the uniform bound \eqref{dv_b_constraint_iteration_D_i_uni_bound}, we arrive at
\begin{equation*}
    \lnm E_{i,j}\rnm_{H^{N}}\leq C(\k^{-1})\epsilon \lnm E_{i-1,j}\rnm_{H^{N}}\leq \frac{1}{2}\lnm E_{i-1,j}\rnm_{H^{N}}\leq \frac{1}{2^i}C_0\epsilon^2
\end{equation*}
for any $i,j$. Hence all three sequences are Cauchy in their respective norms and converge to the desired limit.

It remains to determine the torsion. Combining the $\nabla_3\etab$ equation, the $\betab$-Codazzi equation, and the relation $\etab=\eta-2\nabla\log\Omega$, we obtain under self-similarity
\begin{equation*}
    \Lie_b \eta_A+\left(-2+\dv b\right)\eta_A=2\nabla_A(\Lie_b\log\Omega)-\dv(\nabla\hat\otimes b)_A+\frac{1}{2}\nabla_A\dv b-\Omega e_3\phi \nabla_A\phi,
\end{equation*}
and therefore $\zeta=\eta-\nabla\log\Omega$ can be determined.

\end{proof}

Next, we prescribe $\Omega\chi$ along $\Hb_0$. To this end, we rewrite the $\nabla_3(\Omega^{-1}\tr\chi)$ equation as
\begin{equation*}
    \Omega\nabla_3(\Omega^{-1}\tr\chi)+(\Omega\tr\chib-4\Omega\omegab)\Omega^{-1}\tr\chi= -2K+2\dv\eta+2|\eta|^2+|\nabla\phi|^2,
\end{equation*}
which, in the context of self-similarity, reduces to the following $\Lie_b$-equation at $S_{-1,0}$:
\begin{equation}\label{eq:Lie_eq_trchi}
    \Lie_b (\Omega^{-1}\tr\chi)-\left(1-\dv b+\k+2\Lie_b\log\Omega\right) (\Omega^{-1}\tr\chi)=\left(-2K+2\dv\eta+2|\eta|^2+|\nabla\phi|^2\right).
\end{equation}
For $\Omega^{-1}\chih$, self-similarity reduces {the} $\nabla_3$ equation to 
\begin{equation}\label{eq:Lie_eq_chih}
    \begin{aligned}
        &\Lie_b(\Omega^{-1}\chih_{AB})-\left(\frac{1}{2}\dv b+\k-2\Lie_b\log\Omega\right)\Omega^{-1}\chih_{AB}\\
        =&\frac{1}{2}(\nabla\hat\otimes b)^C_{\  A}(\Omega^{-1}\chih)_{BC}+ \frac{1}{2}(\nabla\hat\otimes b)^C_{\  B}(\Omega^{-1}\chih)_{AC}  +(\nabla\hat\otimes\eta)_{AB}+(\eta\hat\otimes\eta)_{AB}\\
        &-\frac{1}{2}(\nabla\hat\otimes b)_{AB}(\Omega^{-1}\tr\chi)+D_A\phi D_B\phi-\frac{1}{2}g_{AB}D_C\phi D^C\phi.
    \end{aligned}
\end{equation}
By Lemma \ref{lemma:Lie_propagation_eq}, the equations \eqref{eq:Lie_eq_trchi} and \eqref{eq:Lie_eq_chih} are solvable. {This completes the construction of the incoming initial data.}
\begin{proposition}\label{prop:self-similar-initial-data}
    Fix constants $0<\k<1$ and $\epsilon\ll \k$. Let $(g,\Omega,\zeta,b,\phi)$ be the $(\epsilon, N)$-small $\k$-self-similar initial data along $\Hb_0$ generated by the tuple $(g,\Omega,\phi,\Pi_{\cl}b)$ satisfying
    $$\lnm {g}(-1,0)-g^{\SS},\log\Omega(-1,0),\phi(-1,0),\Pi_{\cl}b\rnm_{H^{N+1}(\SS)}\leq \epsilon.$$ 
    There exists a unique $\Omega^{-1}\chi$ along $\Hb_0$ satisfying the self-similarity condition
    $$\Omega^{-1}\chi_{AB}(u,0)=(-u)\Omega^{-1}\chi_{AB}(-1,0).$$ 
    Specifically, its trace and trace-free parts are obtained from \eqref{eq:Lie_eq_trchi} and \eqref{eq:Lie_eq_chih}, respectively.

    For the Ricci coefficients $\psi\in\{\Omega\chib,\Omega^{-1}\chi,\eta,\etab\}$, we have the estimates
    \[\sum_{1\leq i\leq N-1}\lnm (-u)^{i}\nabla^i\psi\rnm_{L^2(S_{u,0})}\lesssim\epsilon.\]
\end{proposition}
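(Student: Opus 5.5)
We reduce both parts of Proposition~\ref{prop:self-similar-initial-data} to equations on the single sphere $S_{-1,0}$. For existence and uniqueness, we make the ansatz $\Omega^{-1}\chi_{AB}(u,0)=(-u)\Omega^{-1}\chi_{AB}(-1,0)$. Since the data $(g,\Omega,b,\eta,\phi)$ along $\Hb_0$ are $\k$-self-similar, we have
\[
\Omega\nabla_3=\partial_u+\Lie_b+(\text{algebraic terms in }\Omega\chib),
\qquad
\partial_u\mapsto(-u)^{-1}\times(\text{homogeneity weight}).
\]
Hence the transport equations for $\Omega^{-1}\tr\chi$ and $\Omega^{-1}\chih$ along $\Hb_0$ are equivalent, after multiplying by the appropriate power of $(-u)$, to the sphere equations \eqref{eq:Lie_eq_trchi} and \eqref{eq:Lie_eq_chih} at $u=-1$. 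Every term has matching homogeneity:
\begin{itemize}
\item $K$ scales like $(-u)^{-2}$;
\item $\eta$ is $u$-independent as a covector, so $|\eta|^2$ and $\dv\eta$ scale like $(-u)^{-2}$;
\item $\Omega\tr\chib-4\Omega\omegab$ scales like $(-u)^{-1}$.
\end{itemize}
Checking these scalings is routine, and it confirms that the ansatz is consistent. Conversely, any solution of the sphere equations extends self-similarly to a solution along $\Hb_0$.

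The equations are then solved with Lemma~\ref{lemma:Lie_propagation_eq}. Dividing \eqref{eq:Lie_eq_trchi} by $-(1+\k)$ puts it in the form $u+\Lie_Xu+h\cdot u=F$, with
\[
X=-(1+\k)^{-1}b,\qquad h=(1+\k)^{-1}\bigl(\dv b-2\Lie_b\log\Omega\bigr).
\]
Both are $O(\epsilon)$ in $H^{M+1}$ and $H^M$, since $\|b\|_{H^{N+1}}\lesssim\epsilon$ by Proposition~\ref{prop:initial_data_construction}. We take $M\approx N-1$.

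The only subtle point is that the zeroth-order coefficient must stay bounded away from $0$. For the trace this coefficient is $-(1+\k)$. For \eqref{eq:Lie_eq_chih} it is $-\k$ plus $O(\epsilon)$, so we divide by $-\k$ and need $\epsilon\ll\k$; this is exactly the hypothesis. The coupling term $(\nabla\hat\otimes b)\cdot\Omega^{-1}\chih$ is absorbed into $h^{(1)}$. The term involving $\Omega^{-1}\tr\chi$ is placed in $F$, since the trace equation is solved first.

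For \eqref{eq:Lie_eq_chih} we must also check that the solution is trace-free and symmetric. Both the operator and the right-hand side preserve these properties up to the $h$ terms, so the following step works: solve within the space of symmetric traceless tensors by projecting the equation. Uniqueness within that class then gives a genuine solution. Uniqueness in the self-similar class follows from the uniqueness in Lemma~\ref{lemma:Lie_propagation_eq}.

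The source terms are bounded by
\[
\|K-1\|_{H^{N-1}}+\|\eta\|_{H^N}+\|\nabla\phi\|_{H^N}\lesssim\epsilon .
\]
Here $\eta$ comes from the torsion equation in Proposition~\ref{prop:initial_data_construction}, itself solved by the same lemma. The constant part $-2K\approx-2$ produces the background value $\Omega^{-1}\tr\chi\approx 2/(1+\k)$ plus $O(\epsilon)$ in $H^{N-1}$. Derivative-counting shows at least $N-1$ derivatives survive, which is what the statement uses. The estimate for $\Omega^{-1}\chi$ therefore holds at $u=-1$, and consequently
\[
\|\nabla^i(\Omega^{-1}\chi)\|_{H^{N-1}}\lesssim\epsilon\quad\text{for }i\ge1,
\]
because the constant mode is killed by $\nabla$.

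The remaining bounds follow by scaling. Covariant derivatives on $S_{u,0}$ with respect to $g(u,0)=(-u)^2g(-1,0)$ satisfy $|\nabla^i\psi|_{g(u)}=(-u)^{-i-s}|\nabla^i\psi|_{g(-1)}$, where $s$ is the tensor weight, and $dV_{g(u)}=(-u)^2dV_{g(-1)}$. Combining these with the self-similar homogeneity of each $\psi\in\{\Omega\chib,\Omega^{-1}\chi,\eta,\etab\}$ gives
\[
\|(-u)^i\nabla^i\psi\|_{L^2(S_{u,0})}=\|\nabla^i\psi\|_{L^2(S_{-1,0})},
\]
possibly up to a fixed power of $(-u)$ that we normalise into the definition.

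For $\Omega\chib$ we use $\Omega\tr\chib=2/u+\dv b$ and $\Omega\chibh=\tfrac12\nabla\hat\otimes b$. Because $i\ge1$, the constant $2/u$ disappears, and the bound reduces to $\|b\|_{H^{N}}\lesssim\epsilon$. For $\etab$ we use $\etab=-\eta+2\nabla\log\Omega$. The main obstacle is bookkeeping: verifying the homogeneity of the $\chih$ equation terms, including $\Omega$-powers such as $(-u)^\k$, and confirming that the zeroth-order coefficient is $-\k+O(\epsilon)$ rather than something degenerate. If a term has the wrong scaling, we would first recheck the $\omegab=-\tfrac12\nabla_3\log\Omega$ contributions, since these generate the $\k$.
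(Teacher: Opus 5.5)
Your route is the paper's own. The paper gives no proof beyond noting that self-similarity turns the $\nabla_3$ equations for $\Omega^{-1}\tr\chi$ and $\Omega^{-1}\chih$ into \eqref{eq:Lie_eq_trchi}--\eqref{eq:Lie_eq_chih} on $S_{-1,0}$, and then citing Lemma~\ref{lemma:Lie_propagation_eq}. What you add is correct and useful:
\begin{itemize}
\item the zeroth-order coefficients are $-(1+\k)$ and $-\k+O(\epsilon)$, and the second is exactly where $\epsilon\ll\k$ is needed;
\item the trace equation has to be solved first;
\item the scaling step is exact, so nothing has to be ``normalised away''. A $k$-covariant tensor with coordinate homogeneity $(-u)^p$ satisfies $\lnm(-u)^i\nabla^i\psi\rnm_{L^2(S_{u,0})}=(-u)^{p-k+1}\lnm\nabla^i\psi\rnm_{L^2(S_{-1,0})}$. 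Here $(p,k)=(1,2)$ for $\Omega\chib$ and $\Omega^{-1}\chi$, and $(0,1)$ for $\eta,\etab$, so the power is zero.
\end{itemize}

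The step that fails is the derivative count. You assert $\lnm\eta\rnm_{H^N}\lesssim\epsilon$, but this does not follow from the torsion equation in the proof of Proposition~\ref{prop:initial_data_construction}. Its source, $2\nabla(\Lie_b\log\Omega)-\dv(\nabla\hat\otimes b)+\tfrac12\nabla\dv b-\Omega e_3\phi\,\nabla\phi$, contains two derivatives of $b=\nabla f+\Pi_{\cl}b$, and $b$ is only in $H^{N+1}$. Lemma~\ref{lemma:Lie_propagation_eq} gains regularity only along $X$; when $X=0$ it is just algebraic inversion. So $\eta,\etab\in H^{N-1}$ and no better.

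As a result, $\dv\eta$ and $\nabla\hat\otimes\eta$ in the sources of \eqref{eq:Lie_eq_trchi}--\eqref{eq:Lie_eq_chih} lie only in $H^{N-2}$. The lemma then gives $\Omega^{-1}\chi\in H^{N-2}$, so the bound for $\Omega^{-1}\chi$ holds only for $1\leq i\leq N-2$. The range $i\leq N-1$ is fine for $\Omega\chib$, which is in $H^N$, and for $\eta,\etab$.

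The missing top term cannot be recovered by this method. At linear order in $\epsilon$, the top-order part of $\eta$ is $\tfrac12\Delta\Pi_{\cl}b$. Hence $\Omega^{-1}\chih$ contains a nonzero multiple of $\nabla\hat\otimes\Delta\Pi_{\cl}b$, which is in general no better than $H^{N-2}$ when $\Pi_{\cl}b$ is only in $H^{N+1}$.

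The paper's sketch does not track this either. The clean fix is either to take the generating tuple $(g,\Omega,\phi,\Pi_{\cl}b)$ in $H^{N+2}$, or to state the $\Omega^{-1}\chi$ estimate only for $i\leq N-2$. With that adjustment your argument is complete.
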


\subsection{Short-pulse initial data on an outgoing cone}
Suppose that $\Omega$ is given on $H_{-1}$ with $\Omega\omega(-1,v)$ uniformly bounded. We need to prescribe $(g,\phi)$ along $H_{-1}$. The shift $b$ then follows from the transport equations:
\begin{equation*}
    \nabla_4\zeta=-\nabla_4\nabla\log\Omega-2\tr\chi\zeta-\chic\cdot\zeta+\dv\chic-\nabla\phi\nabla_4\phi,\quad \Lie_v b^A=-4\Omega^2\zeta^A.
\end{equation*}
To generate a trapped surface, the short pulse should be sufficiently strong. In this article, we require the perturbation to satisfy
\begin{equation}\label{cond:initial-outgoing-data}
    |\nabla_4\phi(-1,v,\theta)|+|\chih(-1,v,\theta)|\geq v^\delta,\ \delta<1,
\end{equation}
for $\theta\in U\subset \mathbb{S}^2$ {and} $v\ll 1${.}
We {choose an arbitrary} function $f(v,\theta)$ and a $g(-1,0)$-trace-free symmetric tensor $X_{AB}(v,\theta)$ such that 
\[|f(v,\theta)|+|X(v,\theta)|\geq v^\delta\]
for $\theta\in U\subset \mathbb{S}^2.$
Set $\phi(-1,v,\theta)=\phi(-1,0,\theta)+\int_0^v\Omega f(-1,v^\prime,\theta)dv^\prime$.
We then define $g$ via ODEs:
\begin{equation*}
    \left\{\begin{aligned}
        &\Lie_v g_{AB}=\Omega\tr\chi g_{AB}+2\Omega\chih_{AB},\\
        &\Lie_v (\Omega\tr\chi)=-\frac{1}{2}\left(\Omega\tr\chi\right)^2-\left|\Omega\chih\right|^2-(\partial_v\phi)^2-4\Omega\omega\Omega\tr\chi,\\
        &\Omega\chih_{AB}=\frac{1}{2}\left(X_{AC}g^{CD}(-1,0)g_{DB}+X_{BC}g(-1,0)^{CD}g_{DA}\right).
    \end{aligned}\right.
\end{equation*}
It follows directly that 
\[\left|g-g|_{v=0}\right|\lesssim v,\]
and thus 
\begin{equation*}
    \left|\Omega\chih-X\right|\leq \left|X\cdot g^{-1}\cdot (g-g|_{v=0})\right|\lesssim v.
\end{equation*}
The requirement \eqref{cond:initial-outgoing-data} is satisfied.

 \section{Construction of Approximating Spacetimes}\label{Section_Approximating_metric}
\subsection{Construction}
Suppose that we have $(\epsilon,N)$-small $\k$-self-similar initial data as in Proposition \ref{prop:self-similar-initial-data} along $\Hb_0$ and $(\Omega,\chi,e_4\phi)$ along $H_{-1}$. Assume $1\ll N_0\ll N$. Our goal is to construct an approximating Lorentzian spacetime $(\mathcal{M},\ot{g},\ot\phi)$ with the assigned initial data and a rapidly decaying curvature residue
\[\Rc\left(\ot{g}\right)-d\ot{\phi}\otimes d\ot{\phi}=o_{v\rightarrow 0}(v^\infty).\]

We construct a sequence of approximating Lorentzian metrics $g^{(i)}$ and scalar fields $\phi^{(i)}$ by an iterative procedure. The zeroth-order data $g^{(0)},b^{(0)}$, $\Omega^{(0)}$, $\chib^{(0)},\zeta^{(0)},(\nabla_A\phi)^{(0)},(\Omega D_3\phi)^{(0)}$ are taken to be their values at $v=0$, while all remaining connection quantities are initialized as $\psi^{(0)}=0$.

In \textbf{STEP i+1}, all tensor contractions are performed with respect to $g^{(i)}$.
The components of $g^{(i+1)}$ are obtained by solving the following sequence of equations.
First, from the $\nabla_3\chih$ equation, we define $(\Omega\chih)^{(i+1/2)}_{AB}$ by
\begin{equation}\label{Construction_Eq_chih}
    \left\{
    \begin{aligned}
        & (\Omega\nabla_3)^{(i)}(\Omega\chih^{(i+1/2)})_{AB}+\frac{1}{2}(\Omega\tr\chib)^{(i)}(\Omega\chih^{(i+1/2)})_{AB}\\
        &\qquad =(\Omega^{(i)})^2\left(\left(\nabla^{(i)}{\hat\otimes}^{(i)}\eta^{(i)}+\eta^{(i)}{\hat\otimes}^{(i)}\eta^{(i)}+\nabla\phi^{(i)}{\hat\otimes}^{(i)}\nabla\phi^{(i)}\right)_{AB}-\frac{1}{2}(\tr\chi)^{(i)}{\chibh^{(i)}}_{AB}\right),\\
        &{\Omega\chih^{(i+1/2)}}_{AB}(-1,v)=\frac{1}{2}\left(\Omega\chih_{AC}g^{CD}{g^{(i)}}_{DB}+\Omega\chih_{BC}g^{CD}{g^{(i)}}_{DA}\right)(-1,v).
    \end{aligned}
    \right.
\end{equation}
Second, the scalar field derivative $\partial_v\phi^{(i+1)}$ is determined from the $\nabla_3\nabla_4\phi$ equation:
\begin{equation}\label{Construction_Eq_D4phi}
    \left\{\begin{aligned}
        &(\Omega e_3)^{(i)}(\partial_v\phi^{(i+1)})+\frac{1}{2}(\Omega\tr\chib)^{(i)}(\partial_v\phi^{(i+1)})\\
        &\qquad =(\Omega^{(i)})^2\Delta^{(i)}\phi^{(i)}-\frac{1}{2}(\Omega\tr\chi)^{(i)}(\Omega e_3)^{(i)}\phi^{(i)}+2(\Omega^{(i)})^2\eta^{(i)}\nabla\phi^{(i)},\\
        &\partial_v\phi^{(i+1)}(-1,v)=\partial_v\phi(-1,v),\ \ \phi^{(i+1)}(u,0)=\phi(u,0).
    \end{aligned}\right.
\end{equation}
Third, from the $\nabla_4\omegab$ equation, we define $\left(\Omega\omegab\right)^{(i+1/2)}$ via
\begin{equation}\label{Construction_Eq_omegab}
    \left\{\begin{aligned}
        &\partial_v(\Omega\omegab)^{(i+1/2)}=(\Omega^{(i)})^2\left(\left|\eta^{(i)}\right|^2-\eta^{(i)}\etab^{(i)}\right)+\frac{1}{4}(\Omega e_3 \phi)^{(i)}\partial_v\phi^{(i+1)}\\
        &\quad -\frac{1}{2}(\Omega^{(i)})^2\left(K^{(i)}-\frac{1}{2}\left|\nabla\phi^{(i)}\right|^2\right)+\frac{1}{4}\left((\Omega\chih)^{(i+1/2)}(\Omega\chibh)^{(i)}-\frac{1}{2}(\Omega\tr\chi)^{(i)}(\Omega\tr\chib)^{(i)}\right)\\
        &(\Omega\omegab)^{(i+1/2)}(u,0)=\Omega\omegab(u,0).
    \end{aligned}\right.
\end{equation}
Since $e_3\log\Omega=-2\omegab$, we can recover $\Omega^{(i+1)}$ from $(\Omega\omegab)^{(i+1/2)}$ as follows:
\begin{equation}\label{Construction_Eq_logOmega}
    \left\{
    \begin{aligned}
        &(b^{(i+1/2)})^A=(b_0)^A-\int_0^v4\left(\Omega^{(i)}\right)^2{\zeta^{(i)}}^A,\\
        &\left(\partial_u+b^{(i+1/2)}\right)\log \Omega^{(i+1)}=-2(\Omega\omegab)^{(i+1/2)},\\
        &\Omega^{(i+1)}(-1,v)=\Omega(-1,v).
    \end{aligned}
    \right.
\end{equation}
With $\Omega^{(i+1)}$, we write 
\begin{equation*}
    (\Omega\omega)^{(i+1)}=-\frac12\partial_v\log\Omega^{(i+1)}.
\end{equation*}
Using the $\nabla_4\eta$ equation together with the relation $\zeta=\eta-\nabla\log\Omega$, we have the equation for $\nabla_4\zeta$:
\begin{equation*}
    \begin{aligned}
        \Omega\nabla_4\zeta+\frac{3}{2}\Omega\tr\chi\,\zeta+\Omega\chih\cdot\zeta=2\nabla(\Omega\omega)+\dv(\Omega\chih)-\frac{1}{2}\nabla(\Omega\tr\chi)+\Omega\tr\chi\nabla\log\Omega-\Omega\Rc_{4\cdot}.
    \end{aligned}
\end{equation*}
We thus define $\left(\zeta^{(i+1)}\right)^A$ as the solution of
\begin{equation}\label{Construction_Eq_zeta}
    \left\{\begin{aligned}
        &\Lie_v(\zeta^{(i+1)})^A=-2(\Omega\tr\chi)^{(i)}(\zeta^{(i)})^A-2(g^{(i)})^{AC}\left((\Omega\chih)^{(i+1/2)}\right)_{CB}(\zeta^{(i)})^B\\
        &\qquad+2(g^{(i)})^{AB}\partial_B(\Omega\omega)^{(i+1)} +(\dv)^{(i)}(\Omega\chih)^{(i+1/2)}_B(g^{(i)})^{BA} \\
        &\qquad-\frac{1}{2}(g^{(i)})^{AB}\partial_B(\Omega\tr\chi)^{(i)}+(\Omega\tr\chi)^{(i)}(g^{(i)})^{AB}\partial_B\log\Omega^{(i+1)}\\
        &\qquad-\partial_v\phi^{(i+1)}\partial_B\phi^{(i)}(g^{(i)})^{AB},\\
        &(\zeta^{(i+1)})^A(u,0)=\zeta^A(u,0).
    \end{aligned}\right.
\end{equation}
We remark that it is necessary to define $\zeta^A$ with an upper index, since the shift equation $\Lie_v b^A=-4\Omega^2\zeta^A$ involves the upper-index quantity.
\begin{equation*}
    \left\{\begin{aligned}
        &\Lie_v (b^{(i+1)})^A=-4(\Omega^{(i+1)})^2(\zeta^{(i+1)})^A,\\
        &(b^{(i+1)})^A(u,0)=b^A(u,0).
    \end{aligned}\right.
\end{equation*}
Finally, the induced metric and the outgoing expansion are determined by the coupled system
\begin{equation}\label{Construction_Eq_g_chi}
    \left\{
    \begin{aligned}
        &\Lie_v(g^{(i+1)})_{AB}=(\Omega^{(i+1)})^2(\Omega^{-1}\tr\chi)^{(i+1)}(g^{(i+1)})_{AB}+2{(\Omega\chih)^{(i+1)}}_{AB},\\
        &\partial_v\left(\Omega^{-1}\tr\chi\right)^{(i+1)}=-\frac{1}{2}(\Omega^{(i+1)})^2\left((\Omega^{-1}\tr\chi)^{(i+1)}\right)^2-(\Omega^{(i+1)})^{-2}\left(\partial_v\phi^{(i+1)}\right)^2\\
        &\qquad\qquad-(\Omega^{(i+1)})^{-2}{(\Omega\chih)^{(i+1)}}_{AB}{(\Omega\chih)^{(i+1)}}_{CD}{g^{(i+1)}}^{AC}{g^{(i+1)}}^{BD},\\
        &(\Omega\chih)^{(i+1)}_{AB}=\frac{1}{2}\left((\Omega\chih)^{(i+1/2)}_{AC}{g^{(i+1)}}_{DB} +(\Omega\chih)^{(i+1)}_{CB}{g^{(i+1)}}_{DA}\right){(g^{(i)})}^{CD},\\
        &(g^{(i+1)})_{AB}(u,0)=g_{AB}(u,0),\ \ (\Omega^{-1}\tr\chi)^{(i+1)}(u,0)=\Omega^{-1}\tr\chi(u,0).
    \end{aligned}
    \right.
\end{equation}
Assembling the above, we obtain the full spacetime metric $g^{(i+1)}$:
\begin{equation*}
    \begin{aligned}
        {g^{(i+1)}}=&-2(\Omega^{(i+1)})^2(du\otimes dv+\dv\otimes du)\\
        &\qquad+{g^{(i+1)}}_{AB}\left(d\theta^A-{(b^{(i+1)})}^Adu\right)\otimes \left(d\theta^B-{(b^{(i+1)})}^Bdu\right),
    \end{aligned}
\end{equation*}
with {the} null frame ${e_3^{(i+1)}}=(\Omega^{(i+1)})^{-1}\left(\partial_u+(b^{(i+1)})^A\partial_{\theta^A}\right)$, $e_4^{(i+1)}=(\Omega^{(i+1)})^{-1}\partial_v$, $e_A=\partial_{\theta^A}$.
The connection and curvature components with respect to $g^{(i+1)}$ are denoted by $\cdot^{(i+1)}$.
We write $$\wt{\Rc}^{(i)}_{\mu\nu}=\Rc^{(i)}_{\mu\nu}-e_\mu^{(i)}\phi^{(i)}\cdot e_{\nu}^{(i)}\phi^{(i)},\quad \wt{\square\phi}^{(i)}=\square_{g^{(i)}}\phi^{(i)}.$$
The solvability of these equations is straightforward, since each equation is an ODE with known quantities on the right-hand side.

To verify that these approximating metrics converge to a solution of the {Einstein--scalar field} system as $i\to\infty$, we require not only estimates for $\wt{\Rc}$ but also control of the initial values along $H_{-1}$.
\begin{proposition}\label{prop:sec-3-H-initial-g-chi-estimate}
    Along $u=-1$, suppose we {are given an} initial data set $(g_{AB},\Omega,\phi)$ satisfying
    \begin{equation*}
        \partial_v\left(\Omega\tr\chi\right)=-\frac{1}{2}(\Omega\tr\chi)^2-\left|\Omega\chih\right|^2-4\Omega\omega\Omega\tr\chi-(\partial_v\phi)^2.
    \end{equation*}
    Let $g^{(0)}=g(-1,0)$, $(\Omega\chih)_{AB}^{(i+1/2)}=\frac{1}{2}\left(\Omega\chih_{AC}g^{CD}{g^{(i)}}_{DB}+\Omega\chih_{BC}g^{CD}{g^{(i)}}_{DA}\right)$, and let $g^{(i+1)}$, $(\Omega\chi)^{(i+1)}$ be defined by \ref{Construction_Eq_g_chi} with $\Omega^{(i+1)}=\Omega$ {and} $\partial_v\phi^{(i)}=\partial_v\phi$. Suppose that the bound $$\sup_v\lnm \Omega\chi,\Omega\omega,\partial_v\phi\rnm_{W^{N_0,\infty}(S_{-1,v})}<\infty,$$ holds. Then
    \begin{equation*}
        \lnm g^{(i)}-g, (\Omega\tr\chi)^{(i)}-\Omega\tr\chi, (\Omega\chih)^{(i+1)}-\Omega\chih\rnm_{W^{N_0,\infty}(S_{-1,v})}\lesssim v^{i+1}.
    \end{equation*}
\end{proposition}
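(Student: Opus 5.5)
We argue by induction on $i$, working along the single outgoing cone $u=-1$, where every equation in \eqref{Construction_Eq_g_chi} is a system of ODEs in $v$ with the angular variables as parameters. The exact data satisfy the same system: $\Lie_v g_{AB}=\Omega\tr\chi\, g_{AB}+2\Omega\chih_{AB}$ holds by the definition of $\chi$, and the Raychaudhuri equation is assumed. Since $\Omega^{(i+1)}=\Omega$ and $\partial_v\phi^{(i)}=\partial_v\phi$, the only source of discrepancy is the shear input $(\Omega\chih)^{(i+1/2)}$. This input is the exact shear with one index lowered by $g^{(i)}$ instead of $g$. So we subtract the exact system from the iterated one and run a Grönwall argument for the differences
\[
\delta g^{(i+1)}:=g^{(i+1)}-g,\qquad \delta\tau^{(i+1)}:=(\Omega\tr\chi)^{(i+1)}-\Omega\tr\chi,
\]
both of which vanish at $v=0$. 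The same argument then gives the bound for $(\Omega^{-1}\tr\chi)^{(i+1)}$.

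\textbf{Base case.} For $i=0$ we have $g^{(0)}=g(-1,0)$. Since $\partial_v g$ is bounded in $W^{N_0,\infty}$ by the hypothesis on $\Omega\chi$, we get $|g^{(0)}-g|\lesssim v$. The trace $(\Omega\tr\chi)^{(0)}$ is handled analogously (the index bookkeeping is harmless).

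\textbf{Inductive step.} Assume $\|g^{(i)}-g\|_{W^{N_0,\infty}}\lesssim v^{i+1}$. Then
\[
(\Omega\chih)^{(i+1/2)}-\Omega\chih=\tfrac12\bigl(\Omega\chih\cdot g^{-1}\cdot(g^{(i)}-g)+\text{sym}\bigr)=O(v^{i+1}).
\]
Next, $(\Omega\chih)^{(i+1)}$ is obtained from $(\Omega\chih)^{(i+1/2)}$ by the algebraic map involving $g^{(i+1)}$ and $(g^{(i)})^{-1}$. Its difference from $\Omega\chih$ is therefore bounded by $C\bigl(|\delta g^{(i+1)}|+|g^{(i)}-g|+|(\Omega\chih)^{(i+1/2)}-\Omega\chih|\bigr)$. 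Here we use that for $g^{(i+1)}=g^{(i)}=g$ the map returns $\Omega\chih$ exactly, because $\chih$ is symmetric.

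Insert this into the difference of the two ODEs. We obtain
\[
|\partial_v\delta g^{(i+1)}|+|\partial_v\delta\tau^{(i+1)}|\le C\bigl(|\delta g^{(i+1)}|+|\delta\tau^{(i+1)}|\bigr)+Cv^{i+1}.
\]
The quadratic terms are Lipschitz thanks to uniform bounds on the iterates, which we also propagate in the induction with constants depending on $i$. Grönwall with zero initial data gives $|\delta g^{(i+1)}|+|\delta\tau^{(i+1)}|\lesssim v^{i+2}$. Feeding this back shows that $(\Omega\chih)^{(i+2)}-\Omega\chih$, which is built from $g^{(i+1)}$ and $g^{(i+2)}$, is $O(v^{i+2})$. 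This matches the stated indexing: $g^{(i)}$ and $(\Omega\tr\chi)^{(i)}$ are $O(v^{i+1})$ once the shift between $i$ and $i+1$ is accounted for.

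\textbf{Main obstacle.} The delicate point is the implicit nature of the third line of \eqref{Construction_Eq_g_chi}. There $(\Omega\chih)^{(i+1)}$ depends on $g^{(i+1)}$, which in turn solves an ODE driven by $(\Omega\chih)^{(i+1)}$. We plan to treat this as one closed nonlinear ODE system for $(g^{(i+1)},(\Omega\tr\chi)^{(i+1)})$ with smooth right-hand side, valid while $g^{(i+1)}$ stays uniformly positive definite. Local existence on $[0,v_*]$ and a priori $W^{N_0,\infty}$ bounds then follow by a continuity argument for small $v$. Angular derivatives up to order $N_0$ are handled by commuting $\partial_\theta^\alpha$ with the ODE. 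This produces linear equations for the differentiated differences, whose sources are products of lower-order differences (already $O(v^{i+2})$) with bounded coefficients from the $W^{N_0,\infty}$ hypothesis. The same Grönwall argument therefore closes at each derivative level, giving the claimed $W^{N_0,\infty}$ bound with $i$-dependent constants.
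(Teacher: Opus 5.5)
Your proposal is correct and follows the paper's proof. Both argue by induction on $i$. Both bound $(\Omega\chih)^{(i+1/2)}-\Omega\chih$ and $(\Omega\chih)^{(i+1)}-(\Omega\chih)^{(i+1/2)}$ algebraically by $g^{(i)}-g$ and $g^{(i+1)}-g^{(i)}$. Both then integrate the differenced $\Lie_v$ equations for the metric and the expansion from zero data at $v=0$ to gain one power of $v$. Your explicit Gr\"onwall step, your treatment of the coupling of $(\Omega\chih)^{(i+1)}$ to $g^{(i+1)}$ as one closed ODE system, and your commutation with $\partial_\theta^\alpha$ make precise what the paper's ``$e_4$-transport estimate'' step leaves schematic.
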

\begin{proof}
    The base case is immediate:
    $$\lnm g^{(0)}-g, (\Omega\tr\chi)^{(0)}-\Omega\tr\chi\rnm_{W^{N_0,\infty}(S_{-1,v})}\lesssim v$$
    follows from the finiteness of $\lnm\Omega\chi,\Omega\omega,\Omega\chih\rnm_{L^\infty_vH^{N_0}}$. Proceeding by induction, suppose that we already have
    $$\lnm g^{(i)}-g, (\Omega\tr\chi)^{(i)}-\Omega\tr\chi\rnm_{W^{N_0,\infty}(S_{-1,v})}\lesssim v^{i+1}.$$ 
    We then estimate $g^{(i+1)}-g$ and $(\Omega\chih)^{(i+1)}-\Omega\chih$.
    By the definition of the iteration scheme, we observe that
    \begin{equation}\label{3_11}
        \Omega\chih^{(i+1)}-\Omega\chih^{(i+1/2)}\sim \Omega\chih^{(i+1/2)}\left(g^{(i+1)}-g^{(i)}\right),\ \ \Omega\chih^{(i+1/2)}-\Omega\chih\sim \Omega\chih\left(g^{(i)}-{g}\right).
    \end{equation}
    It follows that
    \[
        \lnm \Omega\chih^{(i+1/2)}-\Omega\chih \rnm_{W^{N_0,\infty}(S_{-1,v})}\lesssim v^{i+1},
    \]
    \[\lnm \Omega\chih^{(i+1)}-\Omega\chih^{(i+1/2)} \rnm_{W^{N_0,\infty}(S_{-1,v})}\lesssim v^{i+1}+\lnm g^{(i+1)}-g^{(i)}\rnm_{W^{N_0,\infty}(S_{-1,v})}.\]
    Taking differences in the defining equations, we obtain
    \begin{equation*}
        \begin{aligned}
            &\partial_v\left(\Omega\tr\chi^{(i+1)}-\Omega\tr\chi\right)\\
            &\quad\sim (\Omega\tr\chi,\Omega\omega)\left(\Omega\tr\chi^{(i+1)}-\Omega\tr\chi\right)+\Omega\chih\left(\Omega\chih^{(i+1)}-\Omega\chih\right)+(\Omega\chih)^2\left(g^{(i+1)}-g\right),\\
            & \Lie_v \left(g^{(i+1)}_{AB}-g_{AB}\right)\sim\Omega\chi\left(g^{(i+1)}-g^{(i)}\right)+g\left(\Omega\chi^{(i+1)}-\Omega\chi\right).
        \end{aligned}
    \end{equation*}
    Applying the $e_4$-transport estimate 
    \[\lnm f(v)-f(0)\rnm_{L^\infty(\mathbb{S})}\leq \int_0^v\lnm \Lie_v f(v^\prime)\rnm_{L^\infty(\mathbb{S})}dv^\prime\]
    yields the desired bounds for $g^{(i+1)}-g$ and $\Omega\tr\chi^{(i+1)}-\Omega\tr\chi$, and the estimate for $\Omega\chih^{(i+1)}-\Omega\chih$ then follows from \ref{3_11}.
\end{proof}

\subsection{First-step estimates} In this section, we establish useful estimates for the first approximate spacetime $\{(\mathcal{M},g^{(1)},\phi^{(1)})\}$.

\vspace{2mm}
Instead of estimating the $L^2(S)$ norm, we use the $L^\infty$ norm directly for convenience.
\begin{lemma}\label{lemma:sec3-transport-estimate}
    For a Lorentzian metric $g$ of the form \eqref{metric_expression}, consider the transport equation $\Omega \nabla_3 f-\frac{\lambda}{(-u)}f=F$ for tensors $f$ and $F$. Then we have the estimate
    \begin{equation*}
        \lnm f(u,v)\rnm_{L^\infty(S_{u,v})}\leq \frac{1}{(-u)^\lambda}\lnm f\rnm_{L^\infty(S_{-1,v})}+\frac{1}{(-u)^\lambda}\int_{-1}^u (-u^\prime)^\lambda\lnm F\rnm_{L^\infty(S_{u^\prime,v})}du^\prime.
    \end{equation*}
\end{lemma}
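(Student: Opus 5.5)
The plan is to reduce the estimate to a scalar ODE inequality along each integral curve of $\Omega e_3=\partial_u+b^A\partial_{\theta^A}$ on the hypersurface $\Hb_v$. Fix $v$ and a point $\theta$ on $S_{-1,v}$, and let $u\mapsto\Theta(u)$ be the integral curve of $\partial_u+b$ with $\Theta(-1)=\theta$. Along this curve, $\Omega\nabla_3 f$ differs from the Lie derivative $\Lie_{\partial_u+b}f$ only by terms of the form $\Omega\chib\cdot f$. The cleanest route is to work with the pointwise norm $|f|_g$ directly. For tensors one has
\[
\Omega e_3\bigl(|f|^2_g\bigr)=2\langle f,\Omega\nabla_3 f\rangle_g ,
\]
since $\nabla_3$ is the projected Levi-Civita connection and is metric-compatible on $S_{u,v}$-tangent tensors. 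It follows that
\[
(\partial_u+b)|f|^2=\frac{2\lambda}{-u}|f|^2+2\langle f,F\rangle .
\]

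The next step is to remove the singular weight with an integrating factor. Set $w(u)=(-u)^{\lambda}|f|(u,\Theta(u))$. Because $\partial_u(-u)^{2\lambda}=-\frac{2\lambda}{-u}(-u)^{2\lambda}$, the weight exactly cancels the linear term, and we obtain
\[
\partial_u\bigl((-u)^{2\lambda}|f|^2\bigr)=2(-u)^{2\lambda}\langle f,F\rangle\le 2(-u)^{\lambda}w\,|F| .
\]
To differentiate $w$ where $f$ vanishes, we regularize with $w_\varepsilon=\sqrt{w^2+\varepsilon^2}$. This gives $\partial_u w_\varepsilon\le(-u)^{\lambda}|F|$. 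Integrating from $-1$ to $u$ and letting $\varepsilon\to0$ yields
\[
(-u)^\lambda|f|(u,\Theta(u))\le|f|(-1,\theta)+\int_{-1}^u(-u')^\lambda|F|(u',\Theta(u'))\,du' .
\]

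Finally, we bound each pointwise quantity by its $L^\infty$ norm on the corresponding sphere. Then we use that the flow of $\partial_u+b$ maps $S_{-1,v}$ onto $S_{u,v}$, which requires $b$ regular enough that the flow exists on $[-1,u]$, as is implicit in the double-null setup. Taking the supremum over $\theta$ and dividing by $(-u)^\lambda$ then gives the stated inequality.

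The only delicate point is the metric-compatibility identity $\Omega e_3|f|^2=2\langle f,\Omega\nabla_3 f\rangle$. This must be checked with the paper's convention for $\nabla_3$, namely projection to $S_{u,v}$ of $D_3$, where the $\chib$ terms are already included. If instead $\nabla_3$ were interpreted as the Lie derivative, extra $\Omega\chib$ terms would appear. These would have to be moved into $F$ or absorbed by a Gronwall factor, and that factor is not present in the stated bound. I would therefore make sure the covariant convention is the one in use.
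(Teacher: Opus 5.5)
Your proposal is correct and follows essentially the same route as the paper: integrate $(\Omega e_3)|f|^2=\frac{2\lambda}{-u}|f|^2+2\langle f,F\rangle$, which rests on the metric compatibility of the projected $\nabla_3$, along the null generators with the weight $(-u)^{\lambda}$, then take suprema. The paper parametrizes these generators by geodesic coordinates reparametrized so that $\partial_U=\Omega e_3$, which are exactly your integral curves of $\partial_u+b$. Your $\varepsilon$-regularization and the factor $2$ in front of $\langle f,F\rangle$ are details the paper's write-up glosses over.
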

\begin{proof}
Along $\Hb_v$, $\Omega^{-1}e_3$ is the tangent vector field {to} affinely parametrized geodesics, which means that $D_{\Omega^{-1}e_3}(\Omega^{-1}e_3)$ vanishes. It is natural to introduce the geodesic coordinates $(u,v,\theta)\rightarrow (S(u,v,\theta),V,\Theta)$ defined by
\begin{equation*}
    (u,v,\theta)=\exp_{(-1,V,\Theta)}\left(S\cdot (\Omega^{-1}e_3)|_{(-1,V,\Theta)}\right),
\end{equation*}
where $(-1,V,\Theta)$ is the point on $H_{-1}$ in double-null coordinates. Therefore, we have $\partial_S=\Omega^{-1}e_3.$ Since $u(0,V,\Theta)=-1$ and $\partial_Su=\Omega^{-2}$, we then have
$$u(S,V,\Theta)=\int_0^S\Omega^{-2}(S^\prime,V,\Theta)dS^\prime.$$
We further write $U=u(S,V,\Theta)$ and consider the coordinates $(U,V,\Theta)$. It then follows that $\partial_U=\left(\frac{\partial U}{\partial S}\right)^{-1}\partial_S=\Omega e_3.$

Because $u=U$ and $v=V$ in both coordinate systems, reparametrization and the identity $\partial_U |f|^2=\langle f, F \rangle +\frac{2\lambda}{-U}|f|^2$ give
\begin{equation*}
    \begin{aligned}
        \lnm f\rnm_{L^\infty(S_{u,v})}\leq & (-U)^{-\lambda} \sup_{\Theta}\left(|f(-1,V,\Theta)|+\int_{-1}^U(-U^\prime)^\lambda|F(U^\prime,V,\Theta)|dU^\prime\right)\\
        \leq & \frac{1}{(-u)^\lambda}\lnm f\rnm_{L^\infty(S_{-1,v})}+\frac{1}{(-u)^\lambda}\int_{-1}^u (-u^\prime)^\lambda\lnm F\rnm_{L^\infty(S_{u^\prime,v})}du^\prime.
    \end{aligned}
\end{equation*}
\end{proof}

Throughout this section, we fix a constant $\delta$ satisfying $\epsilon\ll \frac{1}{N_0}\delta\ll \frac{\k}{N_0^2}$. For notational convenience, we adopt the convention $\delta+\delta\sim \delta$, so that $\frac{1}{(-u)^{\delta}}\times \frac{1}{(-u)^{\delta}}\sim \frac{1}{(-u)^{\delta}}$; equivalently, one may replace $\delta$ by $o(\k/N_0)$. This convention is harmless since replacing $\delta$ by $\delta/2$ leaves all arguments unchanged.

We start with the quantities defined by the $\nabla_3$ equations.
\begin{proposition}\label{prop:sec-3-first-estimate-chih-D4phi}
    For $j,l\leq N_0-1$, the following estimate holds:
    \begin{equation*}
        \lnm (-u)^{j+l+1}\left((\Omega \nabla_3)^{(0)}\right)^l\left(\nabla^{(0)}\right)^j\left((\Omega\chih)^{(1/2)},(\partial_v\phi)^{(1)} \right)\rnm_{L^\infty(S_{u,v})}\lesssim \frac{1}{(-u)^{\delta}}.
    \end{equation*}
\end{proposition}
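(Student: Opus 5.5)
The plan is to treat both $(\Omega\chih)^{(1/2)}$ and $(\partial_v\phi)^{(1)}$ as solutions of $e_3$-transport equations on the zeroth-order background. In that background all coefficients are the $v=0$ self-similar values, extended trivially in $v$. We apply Lemma~\ref{lemma:sec3-transport-estimate}, first without derivatives and then after commuting with $(\nabla^{(0)})^j$ and $((\Omega\nabla_3)^{(0)})^l$.

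\textbf{Step 1: recasting the equations.} By self-similarity and Proposition~\ref{prop:self-similar-initial-data}, $(\Omega\tr\chib)^{(0)}=-\frac{2}{-u}+O(\epsilon)(-u)^{-1}$. Indeed $\Omega\tr\chib=\frac{2}{u}+\dv b$ with $|\dv b|\lesssim\epsilon(-u)^{-1}$. So \eqref{Construction_Eq_chih} becomes
\[
\Omega\nabla_3 f-\frac{1}{-u}f=F+O(\epsilon)(-u)^{-1}f .
\]
Since $\psi^{(0)}=0$ for $\chi$ at step $0$, $F$ consists only of the terms $(\Omega^{(0)})^2(\nabla\hat\otimes\eta+\eta\hat\otimes\eta+\nabla\phi\hat\otimes\nabla\phi)$ evaluated at $v=0$. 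By the self-similar scalings ($\Omega^2\sim(-u)^\kappa$, $|\eta|,|\nabla\phi|\lesssim\epsilon(-u)^{-1}$ in orthonormal frame, angular derivatives costing $(-u)^{-1}$), we get $|F|\lesssim\epsilon(-u)^{\kappa-2}$. The same structure holds for \eqref{Construction_Eq_D4phi}: the right-hand side there is $\Omega^2\Delta\phi-\frac12\Omega\tr\chi\,\Omega e_3\phi+\dots$ with $(\Omega\tr\chi)^{(0)}=0$, hence also $O(\epsilon(-u)^{\kappa-2})$.

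\textbf{Step 2: zeroth-order bound.} We absorb the $O(\epsilon)(-u)^{-1}f$ term by putting it into the exponent, using $\lambda=1-C\epsilon$ instead of $\lambda=1$. Lemma~\ref{lemma:sec3-transport-estimate} then gives
\[
|f(u,v)|\lesssim(-u)^{-1+C\epsilon}\Bigl(\|f\|_{L^\infty(S_{-1,v})}+\int_{-1}^u(-u')^{1-C\epsilon}\epsilon(-u')^{\kappa-2}du'\Bigr)\lesssim(-u)^{-1-\delta}.
\]
The data on $H_{-1}$ are bounded, by hypothesis and Proposition~\ref{prop:sec-3-H-initial-g-chi-estimate}, and the integral is bounded since $\kappa>0$. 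Here $C\epsilon\le\delta$ uses $\epsilon\ll\delta/N_0$. This yields the case $j=l=0$.

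\textbf{Step 3: derivatives, by induction on $j$ and then $l$.} We commute the equation with $(-u)\nabla^{(0)}$ using \eqref{eq:commutators}. The commutator $[\Omega\nabla_3,\nabla]$ produces $\Omega\chib\cdot\nabla f$ and $\nabla(\Omega\chib)\cdot f$. The trace part $-\frac1{-u}\nabla f$ combines with the weight $(-u)^{j}$ to reproduce exactly the same $\lambda=1$ structure for $(-u)^j\nabla^j f$. The trace-free and $\nabla(\Omega\chib)$ parts are $O(\epsilon)(-u)^{-1}$ times lower-order quantities already bounded by induction, each with a loss of at most $(-u)^{-C\epsilon}$. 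The source derivatives $(-u)^j\nabla^jF$ remain $O(\epsilon(-u)^{\kappa-2})$ by Proposition~\ref{prop:self-similar-initial-data}, since $N\gg N_0$ gives enough derivatives.

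For the $(\Omega\nabla_3)^l$ derivatives we do not integrate again. We instead read them off from the equation itself: $(-u)\Omega\nabla_3 f=f+(-u)F$, and we iterate, differentiating the coefficients, each of which is self-similar and so behaves like $(-u)^{-1}$ per $\Omega\nabla_3$. On $H_{-1}$ the angular derivatives of the data are controlled by the $W^{N_0,\infty}$ hypothesis.

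\textbf{Main obstacle.} The delicate point is keeping the losses uniform in $j+l\le 2N_0$. Each commutation contributes an $\epsilon$-size correction to the effective exponent $\lambda$, so after $N_0$ steps the loss is $(-u)^{-CN_0\epsilon}$. This is acceptable only because $\epsilon\ll\delta/N_0$, and it is here that the convention $\delta+\delta\sim\delta$ is invoked. A further check is that no borderline term of the form $(-u)^{-1}\log(-u)$ appears. This holds because the sources carry the extra factor $(-u)^\kappa$ from $\Omega^2$, so every $\int(-u')^{\kappa-1}$ converges.
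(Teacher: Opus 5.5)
Your proposal is correct. For the angular derivatives it is the paper's argument. The paper commutes \eqref{Construction_Eq_chih} with $(\nabla^{(0)})^j$ and uses $|(-u)(\Omega\tr\chib)^{(0)}+2|+|(-u)(\Omega\chibh)^{(0)}|=o(\delta)$. This gives a scalar inequality with growth exponent $1+j+o(\delta)$ and lower-order sources $(\nabla^{(0)})^{i_1+1}(\Omega\chib)^{(0)}\cdot(\nabla^{(0)})^{i_2}f$, and the paper then inducts on $j$ via Lemma~\ref{lemma:sec3-transport-estimate}. Your weight $(-u)^j$, which brings every level back to $\lambda=1$, is the same computation.

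The difference is in the $(\Omega\nabla_3)^{(0)}$-derivatives. The paper sets $t_l=((-u)(\Omega\nabla_3)^{(0)})^l(\Omega\chih)^{(1/2)}$ and derives a transport equation for $t_l$ with sources $((-u)(\Omega\nabla_3)^{(0)})^{i_1+1}(\Omega\tr\chib)\,t_{i_2}$ and $((-u)(\Omega\nabla_3)^{(0)})^l\R_0(\chih)$. It integrates this again by induction on $l$, and only then applies $(\nabla^{(0)})^j$. You instead solve the commuted equation algebraically for $(\Omega\nabla_3)^{(0)}(\nabla^{(0)})^jf$ and iterate. The input is the same: self-similar bounds on the $\Omega\nabla_3$- and angular derivatives of $(\Omega\chib)^{(0)}$ and of the source. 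Your route has three advantages:
\begin{itemize}
\item it needs no second round of integrations;
\item it gives the derivatives directly in the order $(\Omega\nabla_3)^l\nabla^j$ of the statement;
\item it makes explicit what the paper leaves implicit, namely that the values of $t_l$ on $S_{-1,v}$ needed to start its integration come from the equation exactly as in your argument.
\end{itemize}
The paper's version merely keeps every step in the transport format reused in Theorem~\ref{thm:sec-3}.

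Two small corrections.

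First, in Step~2 the absorbed term $O(\epsilon)(-u)^{-1}|f|$ adds to the growth rate. So the lemma must be applied with $\lambda=1+C\epsilon$, giving $(-u)^{-1-C\epsilon}(\cdots)$ rather than $(-u)^{-1+C\epsilon}(\cdots)$. This is exactly where $C\epsilon\le\delta$ is used.

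Second, in Step~3 the trace-free commutator term $(\Omega\chibh)^{(0)}\cdot\nabla^jf$ is of the same order as $\nabla^jf$. It must be absorbed into the exponent as in Step~2, not treated as an already bounded lower-order term. The genuinely lower-order terms $\nabla^{i_1+1}(\Omega\chib)^{(0)}\cdot\nabla^{i_2}f$ carry no factor $(-u)^{\k}$. Contrary to your closing remark, they are therefore borderline: after weighting they integrate like $\epsilon\int(-u')^{-1}du'$. They are harmless only because of their $\epsilon$ coefficient, which turns the would-be logarithm into the $(-u)^{-C\epsilon}$ loss you already budget for. Your final paragraph shows you handle both points correctly in substance.
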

\begin{proof}
    In the construction equation \eqref{Construction_Eq_chih} for $\Omega^{-1}\chih^{(i+1/2)}$, we denote the right-hand side by $\R_i(\chih)$. We derive
    \begin{equation*}
        \begin{aligned}
            &(\Omega\nabla_3)^{(0)}(\nabla^{(0)})^j(\Omega\chih)^{(1/2)} + \frac{1+j}{2}(\Omega\tr\chib)^{(0)}(\nabla^{(0)})^j(\Omega\chih)^{(1/2)} + (\Omega\chibh)^{(0)}(\nabla^{(0)})^j(\Omega\chih)^{(1/2)}\\
            \sim & \sum_{i_1+i_2=j-1}(\nabla^{(0)})^{i_1+1}(\Omega\chib)^{(0)}(\nabla^{(0)})^{i_2}(\Omega\chih)^{(1/2)}+(\nabla^{(0)})^j\R_0(\chih).
        \end{aligned}
    \end{equation*}
    We have
    \[\sum_{i,j\leq N_0}\lnm (-u)^{i+j}\left((\Omega\nabla_3)^{(0)}\right)^j(\nabla^{(0)})^i\R_0(\chih)\rnm_{L^\infty(S_{u,v})}\lesssim (-u)^{\k-2}\leq (-u)^{-2-\delta}.\]
    Together with $\left|(-u)(\Omega\tr\chib)^{(0)}+2\right|+\left|(-u)(\Omega\chibh)^{(0)}\right|=o(\delta)$, this gives
    \begin{equation*}
        \begin{aligned}
            &(\Omega e_3)^{(0)}\left|(\nabla^{(0)})^j(\Omega\chih)^{(1/2)}\right|-\frac{1+j+o(\delta)}{-u}\left|(\nabla^{(0)})^j(\Omega\chih)^{(1/2)}\right|\\
            \lesssim & \sum_{i\leq j-1} \frac{1}{(-u)^{j-i+1}}\left|(\nabla^{(0)})^i(\Omega\chih)^{(1/2)}\right|+\frac{1}{(-u)^{j+2+\delta}}.
        \end{aligned}
    \end{equation*}
    Using Lemma \ref{lemma:sec3-transport-estimate}, we conclude inductively that 
    \begin{equation*}
        \lnm (\nabla^{(0)})^j(\Omega\chih)^{(1/2)}\rnm_{L^\infty(S_{u,v})}\lesssim \frac{1}{(-u)^{j+2+\delta}},\,\text{for}\, j\leq N_0.
    \end{equation*}
    Moreover, for $t_l=\left((-u)(\Omega\nabla_3)^{(0)}\right)^{l}(\Omega\chih)^{(1/2)}$, we have 
    \begin{equation*}
        \begin{aligned}
            &(\Omega\nabla_3)^{(0)}t_l + \frac{1}{2}(\Omega\tr\chib)^{(0)}t_l\\
            \sim & \sum_{i_1+i_2=l-1}\left((-u)(\Omega\nabla_3)^{(0)}\right)^{i_1+1}(\Omega\tr\chib)t_{i_2}+\left((-u)(\Omega\nabla_3)^{(0)}\right)^{l}\R_0(\chih).
        \end{aligned}
    \end{equation*}
    By induction, we can estimate $t_l$ for $l=0,1,\cdots,N_0$, and then $(\nabla^{(0)})^jt_l$ for $j=0,1,\cdots,N_0$.
    
    For $\partial_v\phi^{(1)}$, the same argument applies because \eqref{Construction_Eq_D4phi} has the same structure. This completes the proof.
\end{proof}

Before estimating the remaining quantities, we make a bootstrap assumption. For $\psi$ equal to any of $\zeta$, $\Omega\tr\chi$, $\Omega D_3\phi$, and $\nabla\phi$, and $\gamma$ equal to any of $g, b,\log\Omega$, we assume
\begin{equation*}
    \begin{aligned}
        &\sum_{j,l\leq N_0-2}\lnm (-u)^{j+l+1}\left((\Omega\nabla_3)^{(0)}\right)^{j}(\nabla^{(0)})^l\left(\psi^{(1)}-\psi^{(0)}\right)\rnm_{L^\infty(S_{u,v})}\leq B\frac{v}{(-u)^{1+\delta}},\\
        &\sum_{j,l\leq N_0-2}\lnm (-u)^{j+l}\left((\Omega\nabla_3)^{(0)}\right)^{j}(\nabla^{(0)})^l\left(\gamma^{(1)}-
        \gamma^{(0)}\right)\rnm_{L^\infty(S_{u,v})}\leq B\frac{v}{(-u)^{1+\delta}}.
    \end{aligned}
\end{equation*}
Assume further that
\begin{equation*}
    \begin{aligned}
        &\sum_{j,l\leq N_0-2}\lnm (-u)^{j+l}\left((\Omega\nabla_3)^{(0)}\right)^j(\nabla^{(0)})^l\left((\Omega\chih)^{(1)}-(\Omega\chih)^{(1/2)} \right)\rnm_{L^\infty(S_{u,v})} \leq B\frac{v}{(-u)^{1+\delta}}.
    \end{aligned}
\end{equation*}
Because we work in the region $v/(-u)^{1+\delta}\ll 1$, the assumption implies
\[\lnm (-u)^{j+l+1}\left((\Omega\nabla_3)^{(0)}\right)^{j}(\nabla^{(0)})^l \psi^{(1)},(-u)^{-1}\gamma^{(1)}\rnm_{L^\infty(S_{u,v})}\lesssim 1,\]
and
\[\lnm (-u)^{j+l+1}\left((\Omega\nabla_3)^{(0)}\right)^{j}(\nabla^{(0)})^l (\Omega\chih)^{(1)} \rnm_{L^\infty(S_{u,v})}\lesssim \frac{1}{(-u)^\delta}.\]
We can improve the bootstrap assumption via the $\Lie_v$ equations.
\begin{proposition}\label{Estimate_for_first_step}
    Suppose $v/(-u)^{1+\delta}\ll 1$. For $\psi$ equal to any of $\zeta$, $\Omega\tr\chi$, $\Omega D_3\phi$, and $\nabla\phi$, we have the estimate
    \begin{equation}\label{eq:Estimate_for_first_step_eq1}
    \sum_{j,l\leq N_0-2}\lnm (-u)^{j+l}\left((\Omega\nabla_3)^{(0)}\right)^j(\nabla^{(0)})^l\left(\psi^{(1)}-\psi^{(0)}\right)\rnm_{L^\infty(S_{u,v})}\lesssim \frac{v}{(-u)^{2+\delta}}.
\end{equation}
For $\Omega\chih$, $\Omega\chib$, and $\Omega\omegab$, we similarly have
    \begin{equation}\label{eq:Estimate_for_first_step_eq2}
       \begin{aligned}
        &\sum_{j,l\leq N_0-2} \lnm (-u)^{j+l}\left((\Omega\nabla_3)^{(0)}\right)^j(\nabla^{(0)})^l\left((\Omega\chih)^{(1)}-{(\Omega\chih)}^{(1/2)},(\Omega\omegab)^{(1/2)}-(\Omega\omegab)^{(0)}\right)\rnm_{L^\infty(S_{u,0})}\\
        &\qquad\qquad\lesssim \frac{v}{(-u)^{2+\delta}},\\
        &\sum_{j,l\leq N_0-3} \lnm (-u)^{j+l}\left((\Omega\nabla_3)^{(0)}\right)^j(\nabla^{(0)})^l\left( (\Omega\omegab)^{(1 )}-(\Omega\omegab)^{(1/2)},(\Omega\chib)^{(1 )}-(\Omega\chib)^{(0)}\right)\rnm_{L^\infty(S_{u,0})} \\
        &\qquad\qquad\lesssim \frac{v}{(-u)^{2+\delta}}.
       \end{aligned}
    \end{equation}
    For the metric components $\gamma=g,b,\log\Omega$, we also have
    \begin{equation}\label{eq:Estimate_for_first_step_eq3}
        \sum_{j,l\leq N_0-2}\lnm (-u)^{j+l}\left((\Omega\nabla_3)^{(0)}\right)^j(\nabla^{(0)})^l\left(\gamma^{(1)}-\gamma^{(0)}\right)\rnm_{L^\infty(S_{u,0})}\lesssim\frac{v}{(-u)^{1+\delta}}.
    \end{equation}
\end{proposition}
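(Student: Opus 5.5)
The plan is a continuity (bootstrap) argument in $v$. All quantities at step $1$ are produced by $\Lie_v$-transport (ODEs in $v$) starting from their values on $\Hb_0$. At $v=0$ these values coincide with the zeroth-order quantities. The estimates of Proposition \ref{prop:sec-3-first-estimate-chih-D4phi} together with the bootstrap assumptions (constant $B$) give pointwise bounds on every right-hand side. Integrating in $v$ will then produce the gain $v$, and the task is to show that the resulting constants do not depend on $B$ once $v/(-u)^{1+\delta}\ll1$. This improves $B$ to a universal constant and closes the bootstrap.

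\textbf{Order of the steps.}
\begin{enumerate}
\item Start with $\Omega^{(1)}$. Take the difference of \eqref{Construction_Eq_logOmega} at step $1$ and at $v=0$. The right-hand side is controlled by $(\Omega\omegab)^{(1/2)}-(\Omega\omegab)^{(0)}$ and by $b^{(1/2)}-b_0=O(v(-u)^{\kappa-2})$.
\item Estimate $(\Omega\omegab)^{(1/2)}-(\Omega\omegab)^{(0)}$ by integrating \eqref{Construction_Eq_omegab} in $v$. Its right-hand side is bounded by $(-u)^{-2-\delta}$, using Proposition \ref{prop:sec-3-first-estimate-chih-D4phi} for $\Omega\chih^{(1/2)}$ and $\partial_v\phi^{(1)}$, and using the background self-similar bounds of Proposition \ref{prop:self-similar-initial-data} for the remaining factors. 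This gives $v(-u)^{-2-\delta}$.
\item Obtain $\log\Omega^{(1)}-\log\Omega^{(0)}$ by $u$-transport from $H_{-1}$. There $\Omega=\Omega(-1,v)$, which differs from $\Omega(-1,0)$ by $O(v)$ since $\Omega\omega$ is bounded. Integrating $(-u')^{-2-\delta}$ in $u'$ gives $v(-u)^{-1-\delta}$.
\item Next estimate $\zeta^{(1)}$ via \eqref{Construction_Eq_zeta}. Its right-hand side involves $\zeta^{(0)}$, $\Omega\chih^{(1/2)}$, $\nabla(\Omega\omega)^{(1)}$, $\nabla\log\Omega^{(1)}$ and $\partial_v\phi^{(1)}$. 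Each is $O((-u)^{-1-\delta})$ after the weights $(-u)$ per derivative. The $v$-integration gives $\zeta^{(1)}-\zeta^{(0)}=O(v(-u)^{-2-\delta})$ in the weighted norm.
\item Then estimate $b^{(1)}$ from $\Lie_v b=-4\Omega^2\zeta$, which gives a gain of $v$.
\item Estimate $\nabla\phi^{(1)}$ and $\Omega D_3\phi^{(1)}$ by integrating $\partial_v\phi^{(1)}$ and commuting with $\nabla$ and $\Omega\nabla_3$.
\item Finally treat the coupled system \eqref{Construction_Eq_g_chi} for $(g^{(1)},\Omega^{-1}\tr\chi^{(1)},\Omega\chih^{(1)})$. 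The algebraic relation gives $\Omega\chih^{(1)}-\Omega\chih^{(1/2)}\sim\Omega\chih^{(1/2)}(g^{(1)}-g^{(0)})$. The bootstrap on $g^{(1)}-g^{(0)}$ then yields $v(-u)^{-2-\delta}$ after the weight, and this closes a Gronwall argument in $v$ for $g$ and $\tr\chi$.
\item Obtain $\Omega\chib^{(1)}$ and $(\Omega\omegab)^{(1)}$ algebraically from the new metric: $\Omega\chib$ from $\Lie_{\partial_u+b}g$, and $\Omega\omegab$ from $\partial_u\log\Omega^{(1)}$. These cost one derivative, which explains the range $N_0-3$.
\end{enumerate}

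\textbf{Commuting derivatives.} The derivatives $(\Omega\nabla_3)^{(0)}$ and $\nabla^{(0)}$ are taken with respect to the zeroth-order (self-similar, $v$-independent) geometry. Hence they commute with $\partial_v$ up to terms involving $b^{(0)}$, and $b^{(0)}$ is $v$-independent. So derivatives of $\Lie_v$ equations are again $\Lie_v$ equations with differentiated sources, and each $\nabla^{(0)}$ or $\Omega\nabla_3^{(0)}$ costs exactly one power of $(-u)^{-1}$, by self-similar scaling of the background. For the $u$-transport step for $\log\Omega$, I will commute as in Proposition \ref{prop:sec-3-first-estimate-chih-D4phi} and use Lemma \ref{lemma:sec3-transport-estimate}.

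\textbf{Main obstacle.} The hardest step is likely the bookkeeping of $u$-weights in the quadratic terms. The product $\Omega\chih\cdot\Omega\chih$ in the Raychaudhuri equation and $\tr\chi\,\chibh$ contain factors like $(-u)^{-1-\delta}\cdot(-u)^{-1-\delta}$. These must still produce only $(-u)^{-2-\delta}$, and this uses the convention $\delta+\delta\sim\delta$ together with the factor $(-u)^\kappa$ from $\Omega^2$, which absorbs the extra loss since $\delta\ll\kappa$. The second difficulty is the apparent circularity between $g^{(1)}$ and $\Omega\chih^{(1)}$. I plan to handle it by noting that every bootstrap-dependent term enters multiplied by $v/(-u)^{1+\delta}\ll1$, so $B$ appears only as $B\cdot(\text{small})$. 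This improves $B$ to a universal constant and closes the bootstrap.
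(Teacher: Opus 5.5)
Your proposal is correct and follows essentially the same route as the paper's proof. Both arguments use a bootstrap with constant $B$ on the step-one differences. Both integrate in $v$ the $\Lie_v$ equations \eqref{Construction_Eq_omegab}, \eqref{Construction_Eq_zeta}, \eqref{Construction_Eq_g_chi} and the shift equation, and both use $u$-transport via Lemma~\ref{lemma:sec3-transport-estimate} for $\log\Omega^{(1)}$. Both then recover $(\Omega\chih)^{(1)}$, $(\Omega\omegab)^{(1)}$ and $(\Omega\chib)^{(1)}$ algebraically, with the one-derivative loss.

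One ingredient is used without saying how to obtain it: the bound on $\nabla(\Omega\omega)^{(1)}$ in your $\zeta$ step. Since $(\Omega\omega)^{(1)}=-\tfrac12\partial_v\log\Omega^{(1)}$ and $\log\Omega^{(1)}$ comes from $u$-transport, you must differentiate that transport equation in $v$. You then integrate $(\partial_u+b^{(1/2)})(\Omega\omega)^{(1)}=\partial_v(\Omega\omegab)^{(1/2)}+\tfrac12\,\partial_v b^{(1/2)}\cdot\nabla\log\Omega^{(1)}$ from $H_{-1}$, which is the argument the paper gives in the corollary right after this proposition.
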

\begin{proof}
    First, for $\zeta$ {and} $\Omega\tr\chi$, we recall equations \eqref{Construction_Eq_zeta} and \eqref{Construction_Eq_g_chi}. The right-hand sides of the equations, denoted by $\R_i(\zeta),\R_i(\tr\chi)$, satisfy
    \begin{equation*}
        \sum_{j,l\leq N_0-2}\lnm (-u)^{j+l }\left((\Omega\nabla_3)^{(0)}\right)^j(\nabla^{(0)})^l\left(\R_0(\zeta),\R_0(\tr\chi)\right)\rnm_{L^\infty(S_{u,v})}\lesssim \frac{1}{(-u)^{2+\delta}}.
    \end{equation*}
    Integrating in $v$ yields \eqref{eq:Estimate_for_first_step_eq1} for $\zeta,\Omega\tr\chi$. The same argument applies to $(\Omega\omegab)^{(1/2)}-(\Omega\omegab)^{(0)}$ using equation \eqref{Construction_Eq_omegab}.

    We remark that, since $(\Omega\chih)^{(1/2)}$ is in $W^{N_0-1,\infty}$ and $\dv^{(0)}(\Omega\chih)^{(1/2)}$ appears in $\R_0(\zeta)$, we obtain only $\zeta^{(1)}\in W^{N_0-2,\infty}$. Such {a} derivative loss is allowed because we are solving a system of ODEs.

    For $(\nabla\phi)^{(1)}$, we note that 
    \[\Lie_v(\nabla_A\phi)^{(1)}=\nabla_A\partial_v\phi^{(1)},\]
    and hence the estimate follows from Proposition \ref{prop:sec-3-first-estimate-chih-D4phi}.

    Because $\Lie_v\left(b^{(1)}-b^{(0)}\right)=-\frac{1}{4}\left(\Omega^{(1)}\right)^2\zeta^{(1)}$ and the bootstrap assumption provides bounds for $\Omega^{(1)}$ and $\zeta^{(1)}$, direct integration yields \eqref{eq:Estimate_for_first_step_eq3} for $b$. The construction of $g^{(1)}$ from \eqref{Construction_Eq_g_chi} also implies \eqref{eq:Estimate_for_first_step_eq3} for $g$.

    For $\log\Omega^{(1)}$, the $e_3$-propagation equation reads
    \begin{equation*}
        \left(\partial_u+b^{(1/2)}\right)\left(\log\Omega^{(1)}-\log\Omega^{(0)}\right)=-\left(b^{(1/2)}-b^{(0)}\right)\log\Omega^{(0)}-2\left((\Omega\omegab)^{(1/2)}-(\Omega\omegab)^{(0)}\right).
    \end{equation*}
    By the definition of $b^{(1/2)}$ in \eqref{Construction_Eq_logOmega} and the estimate {for} $(\Omega\omegab)^{(1/2)}-(\Omega\omegab)^{(0)}$, we can use Lemma \ref{lemma:sec3-transport-estimate} to obtain \eqref{eq:Estimate_for_first_step_eq3} for $\log\Omega$.

    For $\Omega\omegab$, we observe the identity
    \begin{equation*}
        -2\left(\left(\Omega\omegab\right)^{(1)}-\left(\Omega\omegab\right)^{(1/2)}\right)=\left(b^{(1)}-b^{(1/2)}\right)\nabla\log\Omega^{(1)},
    \end{equation*}
    which yields the desired estimate \eqref{eq:Estimate_for_first_step_eq2} for $\left(\Omega\omegab\right)^{(1)}-\left(\Omega\omegab\right)^{(1/2)}$. Because $\log\Omega^{(1)}$ is in $W^{N_0-2,\infty}$, {the resulting regularity is weaker}.
    
    From the identity
    \[\partial_v(\Omega D_3\phi)^{(1)}=(\Omega e_3)^{(0)}\partial_v\phi^{(1)}+\left(b^{(1)}-b^{(0)}\right)\nabla(\partial_v\phi)^{(1)}-4(\Omega^{(1)})^2\zeta^{(1)}\cdot\nabla\phi^{(1)},\]
    Proposition \ref{prop:sec-3-first-estimate-chih-D4phi} and the preceding estimates give the desired estimate \eqref{eq:Estimate_for_first_step_eq1} for $(\Omega D_3\phi)$.
    
    Similarly, for $\Omega\chib$, a direct computation yields
    \begin{equation*}
        \begin{aligned}
            &\Lie_v\left(\left(\Omega\chib\right)^{(1)}-\left(\Omega\chib\right)^{(0)}\right)\\
            =& \frac{1}{2}\Lie_v\left(\Lie_{b^{(1)}-b^{(0)}}g^{(1)}+(\Omega e_3)^{(0)}\left(g^{(1)}-g^{(0)}\right)\right)\\
            =& -2(\Omega^{(1)})^2\Lie_{\zeta^{(1)}} g^{(1)}+\Lie_{b^{(1)}-b^{(0)}}(\Omega\chi)^{(1)}+(\Omega e_3)^{(0)}\left((\Omega\chi)^{(1)}-(\Omega\chi)^{(0)}\right)\\
            &+\frac{1}{2}\Lie_{\Lie_v b^{(0)}}\left(g^{(1)}-g^{(0)}\right),
        \end{aligned}
    \end{equation*}
    and hence {the desired estimate follows from the transport estimate}. We also remark that $\Lie_{\zeta^{(1)}} g^{(1)}$ contributes to $\nabla\zeta^{(1)}$, and thus $(\Omega\chib)^{(1)}$ has only $W^{N_0-3,\infty}$ regularity.
\end{proof}

With the preceding estimates, we can bound $(\Omega\omega)^{(1)}$.
\begin{corollary}
    The quantity $\Omega\omega$ satisfies the bound
    \begin{equation*}
        \sum_{j,l\leq N_0-3}\lnm (-u)^{j+l}\left((\Omega e_3)^{(0)}\right)^l\left(\nabla^{(0)}\right)^j(\Omega\omega)^{(1)}\rnm_{L^2(S_{u,0})}\lesssim \frac{1}{(-u)^{\delta}}.
    \end{equation*}
\end{corollary}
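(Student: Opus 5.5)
We will read off $(\Omega\omega)^{(1)}$ from its definition $(\Omega\omega)^{(1)}=-\frac12\partial_v\log\Omega^{(1)}$ and control $\partial_v\log\Omega^{(1)}$ through the $e_3$-transport equation \eqref{Construction_Eq_logOmega} that defines $\log\Omega^{(1)}$. Since $\log\Omega^{(1)}(-1,v)=\log\Omega(-1,v)$, the initial value of $(\Omega\omega)^{(1)}$ on $H_{-1}$ is $\Omega\omega(-1,v)$. This is bounded in $W^{N_0,\infty}$ by the assumption on the outgoing data.

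The first step is to differentiate \eqref{Construction_Eq_logOmega} in $v$. Using $[\partial_v,\partial_u+b^{(1/2)}]=\Lie_{\partial_v b^{(1/2)}}$ and $\partial_v b^{(1/2)}=-4(\Omega^{(0)})^2\zeta^{(0)}$, we obtain
\begin{equation*}
(\partial_u+b^{(1/2)})(\Omega\omega)^{(1)}=\partial_v(\Omega\omegab)^{(1/2)}-2(\Omega^{(0)})^2\zeta^{(0)}\cdot\nabla\log\Omega^{(1)}.
\end{equation*}
The first term on the right is given explicitly by \eqref{Construction_Eq_omegab} with $i=0$. Its ingredients are $(\Omega^{(0)})^2|\eta^{(0)}|^2$-type terms, $(\Omega^{(0)})^2K^{(0)}$, $(\Omega^{(0)})^2|\nabla\phi^{(0)}|^2$, $(\Omega e_3\phi)^{(0)}\partial_v\phi^{(1)}$, and $(\Omega\chih)^{(1/2)}(\Omega\chibh)^{(0)}$. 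The term $(\Omega\tr\chi)^{(0)}$ vanishes by the initialization $\psi^{(0)}=0$.

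The second step is to bound this right-hand side with weights. By Proposition \ref{prop:self-similar-initial-data}, together with $\Omega^2\lesssim(-u)^\k$ and $K^{(0)}\sim(-u)^{-2}$, the geometric terms are $O((-u)^{\k-2})\lesssim(-u)^{-2-\delta}$. The remaining two terms are handled by Proposition \ref{prop:sec-3-first-estimate-chih-D4phi}. It gives $(-u)^{j+l+1}|\cdots(\partial_v\phi^{(1)},(\Omega\chih)^{(1/2)})|\lesssim(-u)^{-\delta}$, and combined with $|(\Omega e_3\phi)^{(0)}|+|(\Omega\chibh)^{(0)}|\lesssim(-u)^{-1}$ this yields bounds of order $(-u)^{-2-\delta}$. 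For the term $\zeta^{(0)}\cdot\nabla\log\Omega^{(1)}$ we use \eqref{eq:Estimate_for_first_step_eq3} and the self-similar bounds, which give $O((-u)^{\k-1})$. Applying $(-u)^{j+l}((\Omega e_3)^{(0)})^l(\nabla^{(0)})^j$ and using the commutators \eqref{eq:commutators} (the borderline factors $(\Omega\chib)^{(0)}\sim(-u)^{-1}$ produce only lower-order terms), we obtain the following. For $j,l\le N_0-3$, the weighted right-hand side is $\lesssim(-u)^{-2-\delta}$.

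The third step is integration along $e_3$. Since $\partial_u+b^{(1/2)}$ differs from $(\Omega e_3)^{(0)}$ by $b^{(1/2)}-b^{(0)}=O(v(-u)^{-1-\delta})$, this difference can be absorbed into the right-hand side. Lemma \ref{lemma:sec3-transport-estimate} with $\lambda=0$ then gives
\begin{equation*}
|(\Omega\omega)^{(1)}|\lesssim 1+\int_{-1}^u(-u')^{-2-\delta}\,du'\lesssim(-u)^{-1-\delta}.
\end{equation*}
This is one power of $(-u)$ worse than the stated bound, and we expect recovering that power to be the main obstacle. What we would try first is to subtract the self-similar background part. The leading $(-u)^{-2}$ contributions to $\partial_v(\Omega\omegab)^{(1/2)}$ come from $K$, $\eta$, and $\nabla\phi$ and carry $\Omega^2\sim(-u)^{\k}$, so they integrate to $(-u)^{\k-1}$. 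The dangerous pieces, of size $(-u)^{-2-\delta}$, involve $\partial_v\phi^{(1)}$ and $(\Omega\chih)^{(1/2)}$.

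To handle these pieces we would use their transport structure. These quantities satisfy $(\Omega\nabla_3)(\cdot)+\frac12\Omega\tr\chib(\cdot)=\ldots$, and since $\Omega\tr\chib\approx-2/(-u)$, we have approximately $(\Omega e_3\phi)^{(0)}\partial_v\phi^{(1)}\approx c\,\Omega e_3((-u)\partial_v\phi^{(1)})$ up to terms of order $(-u)^{\k-2}$. Writing the products as exact $e_3$-derivatives in this way, we would integrate them to $O((-u)^{-\delta})$. Higher derivatives follow the same way, applying the weighted operators $(-u)^{j+l}$ before integrating.

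One further point needs checking. If the corollary is instead to be read in the same weighted convention as Proposition \ref{Estimate_for_first_step}, with one extra $(-u)$ weight, then the direct estimate $(-u)^{-1-\delta}$ already suffices. In that case the only care required is the regularity count: there is a loss from $\nabla\log\Omega^{(1)}\in W^{N_0-2,\infty}$ and from $(\Omega\omegab)^{(1/2)}$, which explains the restriction $j,l\le N_0-3$.
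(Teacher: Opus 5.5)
Your Steps 1--3 are exactly the paper's argument: differentiate \eqref{Construction_Eq_logOmega} in $v$ to get $(\partial_u+b^{(1/2)})(\Omega\omega)^{(1)}=\partial_v(\Omega\omegab)^{(1/2)}+\frac12\partial_v b^{(1/2)}\cdot\nabla\log\Omega^{(1)}$, bound the right side by $(-u)^{-2-\delta}$, and integrate along $e_3$. These steps already prove the corollary.

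The power of $(-u)$ you think is missing is supplied by the norm in the statement, which is $L^2(S_{u,0})$ and not $L^\infty$. The area of the spheres is comparable to $(-u)^2$ (on $S_{u,0}$ as written, or on $S_{u,v}$ in the region $v\ll(-u)^{1+\delta}$). Hence $\|f\|_{L^2(S_{u,v})}\lesssim(-u)\|f\|_{L^\infty(S_{u,v})}$. Your pointwise bound $(-u)^{j+l}\bigl|((\Omega e_3)^{(0)})^l(\nabla^{(0)})^j(\Omega\omega)^{(1)}\bigr|\lesssim(-u)^{-1-\delta}$ therefore gives exactly $(-u)^{-\delta}$.

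This is the scale-invariant normalization used throughout the paper. It matches \eqref{Construction_Estimate_psi_e4} at $i=0$, which is an $L^\infty$ bound $(-u)^{-1-\delta}$ since $(\Omega\omega)^{(0)}=0$. It also matches the $L^2(S_{u,v})$ bound $(-u)^{-\delta}$ for the approximate $\Omega\omega$ in the first lemma of Section~4. Your closing remark points in this direction, but the extra factor is not an alternative convention; it is already in the statement.

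The route you propose for recovering the power would fail, because the improved pointwise bound $|(\Omega\omega)^{(1)}|\lesssim(-u)^{-\delta}$ is false in general. In the source $\frac14(\Omega e_3\phi)^{(0)}\partial_v\phi^{(1)}$, the factor $(\Omega e_3\phi)^{(0)}=c_0/(-u)$ has $c_0\approx\sqrt{2\kappa}$, which is not small. So this is a genuine $(-u)^{-2}$ source.

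Your approximate identity $(\Omega e_3\phi)^{(0)}\partial_v\phi^{(1)}\approx c\,\Omega e_3\bigl((-u)\partial_v\phi^{(1)}\bigr)$ does not hold. By \eqref{Construction_Eq_D4phi}, using $(\Omega\tr\chi)^{(0)}=0$ and $(\Omega e_3)^{(0)}(-u)=-1$, one finds $(\Omega e_3)^{(0)}\bigl((-u)\partial_v\phi^{(1)}\bigr)=O(\epsilon)\,|\partial_v\phi^{(1)}|+O((-u)^{\kappa-1})$. That is, $(-u)\partial_v\phi^{(1)}$ is essentially conserved along $e_3$, and its derivative is a full power of $(-u)$ smaller than the product.

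The correct exact-derivative structure is $(\Omega e_3\phi)^{(0)}\partial_v\phi^{(1)}\approx c_0\,(\Omega e_3)^{(0)}\partial_v\phi^{(1)}$. This integrates to roughly $\frac{c_0}{4}\partial_v\phi^{(1)}$, which has size $(-u)^{-1}$ wherever $\partial_v\phi(-1,v,\cdot)\neq0$. So $(\Omega\omega)^{(1)}$ is genuinely of order $(-u)^{-1}$ pointwise. Drop the background-subtraction step and simply convert your $L^\infty$ bound into the $L^2(S)$ norm.

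A minor point: in the commuted equations, the trace part of $(\Omega\chib)^{(0)}$ is not lower order. It produces $\frac{j}{-u}\nabla^j$, which is exactly cancelled by the weight $(-u)^j$, as in Proposition~\ref{prop:sec-3-first-estimate-chih-D4phi}. Only the $O(\epsilon)$ deviation is perturbative.
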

\begin{proof}
    From the definition of $\Omega\omega$, we compute
    \begin{equation*}
        \begin{aligned}
            (\Omega e_3)^{(1/2)}(\Omega\omega)^{(1)}=& \Lie_v(\Omega\omegab)^{(1/2)}+\frac{1}{2}{\Lie_v b^{(1/2)}}\cdot\nabla\log\Omega^{(1)}.
        \end{aligned}
    \end{equation*}
    This gives $\left|\left(\partial_u+b^{(1/2)}\right)(\Omega\omega)^{(1)}\right|\lesssim \frac{1}{(-u)^{2+\delta}}$, and the claimed estimate follows by integrating along $e_3$.
\end{proof}

\subsection{Iteration estimates}
Having established the base case, we now prove the following general estimates by induction. We first state the theorem.
\begin{theorem}\label{thm:sec-3}
    Fix $N_0\in\mathbb{N}$ with $1\ll N_0$. Let $0\ll \epsilon \ll \frac{1}{N_0}\delta\ll \k$. Assume $v\ll {(-u)^{1+\delta}}$. For the constructed metric $g^{(i)}_{\mu\nu}$, the following estimates hold for the Einstein residues:

\begin{equation}\label{Construction_Estimate_Rc}
    \begin{aligned}
        &\sum_{j,l\leq N_0-4i }\lnm (-u)^{l+j }\left((\Omega \nabla_3)^{(i)}\right)^{l}\left(\nabla^{(i)}\right)^j\left(\wt{\Omega^2\widehat{\Rc}}^{(i)}_{AB},\wt{\Omega^2\Rc}^{(i)}_{34},\wt{\Omega^2R}^{(i)},\wt{\Omega\Rc}^{(i)}_{4A}\right)\rnm_{L^\infty(S_{u,v})}\\
        &\qquad\qquad\qquad\lesssim \frac{v^{i}}{(-u)^{2+i+\delta}},
    \end{aligned}
\end{equation}
\begin{equation}\label{Construction_Estimate_Rc_3}
    \begin{aligned}
        &\sum_{j,l\leq N_0-4i }\lnm (-u)^{l+j }\left((\Omega \nabla_3)^{(i)}\right)^{l}\left(\nabla^{(i)}\right)^j\left(\left(\wt{\Omega^2\Rc}^{(i)}_{34}+\wt{\Omega^2R}^{(i)}\right),\wt{\Omega\Rc}^{(i)}_{3A},\wt{\Omega^2\Rc}^{(i)}_{33}\right)\rnm_{L^\infty(S_{u,v})}\\
        &\qquad+\sum_{j,l\leq N_0-4i-1 }\lnm (-u)^{l+j }\left((\Omega \nabla_3)^{(i)}\right)^{l}\left(\nabla^{(i)}\right)^j\wt{\Omega^2\Rc}^{(i)}_{33}\rnm_{L^\infty(S_{u,v})} \lesssim \frac{v^{i+1}}{(-u)^{i+3+\delta}}.
    \end{aligned}
\end{equation}

The wave operator applied to the scalar field satisfies
    
\begin{equation}\label{Construction_Estimate_square_phi}
        \sum_{j,l\leq N_0-4i }\lnm (-u)^{l+j }\left((\Omega \nabla_3)^{(i)}\right)^{l}\left(\nabla^{(i)}\right)^j\left({\left(\Omega^2\square\phi\right)}^{(i)}\right)\rnm_{L^\infty(S_{u,v})}\lesssim \frac{v^{i }}{(-u)^{i+2+\delta}}.
    \end{equation}
For the outgoing Ricci coefficients $\psib=\Omega\chih,\partial_v\phi,\Omega\omega$, the differences satisfy

\begin{equation}\label{Construction_Estimate_psi_e4}
    \sum_{j,l\leq N_0-4i-2}\lnm (-u)^{l+j}\left((\Omega \nabla_3)^{(i)}\right)^{l}\left(\nabla^{(i)}\right)^j\left(\psib^{(i+1)}-\psib^{(i)}\right)\rnm_{L^\infty(S_{u,v})}\lesssim \frac{v^{i}}{(-u)^{1+i+\delta}},
\end{equation}

and for $\Omega\chib$ and $\psi=\Omega\tr\chi,\Omega\chib,\Omega \nabla_3\phi,\nabla\phi,\eta,\etab,\zeta$,

\begin{equation}\label{Construction_Estimate_psi_e3}
    \begin{aligned}
        &\sum_{j,l\leq N_0-4i-3}\lnm (-u)^{l+j}\left((\Omega \nabla_3)^{(i)}\right)^{l}\left(\nabla^{(i)}\right)^j\left((\Omega\chib)^{(i+1)}-(\Omega\chib)^{(i)}\right)\rnm_{L^\infty(S_{u,v})}\\
        &+\sum_{j,l\leq N_0-4i-2}\lnm (-u)^{l+j}\left((\Omega \nabla_3)^{(i)}\right)^{l}\left(\nabla^{(i)}\right)^j\left(\psi^{(i+1)}-\psi^{(i)}\right)\rnm_{L^\infty(S_{u,v})}\lesssim \frac{v^{i+1}}{(-u)^{i+2+\delta}}.
    \end{aligned}
\end{equation}

The metric components $\gamma=g,b,\log\Omega$ satisfy 
\begin{equation}\label{Construction_Estimate_metric}
    \sum_{j,l\leq N_0-4i-2}\lnm (-u)^{l+j}\left((\Omega \nabla_3)^{(i)}\right)^{l}\left(\nabla^{(i)}\right)^j\left(\gamma^{(i+1)}-\gamma^{(i)}\right)\rnm_{L^\infty(S_{u,v})}\lesssim \frac{v^{i+1}}{(-u)^{i+1+\delta}}.
\end{equation}

Furthermore, we have the following auxiliary estimates:
\begin{equation}\label{Construction_Estimate_chih}
    \begin{aligned}
        &\sum_{j,l\leq N_0-4i-1 }\lnm (-u)^{l+j}\left((\Omega \nabla_3)^{(i)}\right)^{l}\left(\nabla^{(i)}\right)^j\left((\Omega\chih)^{(i+1)}-(\Omega\chih)^{(i+1/2)}\right)\rnm_{L^\infty(S_{u,v})} \lesssim \frac{v^{i+1}}{(-u)^{2+i+\delta}},
    \end{aligned}
\end{equation} 
\begin{equation}\label{Construction_Estimate_omegab}
    \begin{aligned}
        &\sum_{j,l\leq N_0-4i }\lnm (-u)^{l+j}\left((\Omega \nabla_3)^{(i)}\right)^{l}\left(\nabla^{(i)}\right)^j\left((\Omega\omegab)^{(i+1/2)}-(\Omega\omegab)^{(i)}\right)\rnm_{L^\infty(S_{u,v})} \lesssim \frac{v^{i+1}}{(-u)^{2+i+\delta}}.
    \end{aligned}
\end{equation} 

\end{theorem}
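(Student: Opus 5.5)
The plan is induction on $i$; the base case $i=1$ comes from Proposition \ref{prop:sec-3-first-estimate-chih-D4phi}, Proposition \ref{Estimate_for_first_step} and the corollary for $\Omega\omega$. We assume \eqref{Construction_Estimate_Rc}--\eqref{Construction_Estimate_omegab} at level $i$ and prove them at level $i+1$. The key structural observation is that the construction equations at step $i+2$ are exactly the null structure equations of $g^{(i+1)}$, except that the coefficients are frozen at level $i+1$ and the Ricci components are set equal to the scalar-field stress. Taking the difference between the step-$(i+2)$ system and the identities satisfied by the level-$(i+1)$ connection coefficients, read off from the general equations \eqref{eq:chi}--\eqref{eq:K}, we obtain the following. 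Each difference $\psi^{(i+2)}-\psi^{(i+1)}$ satisfies a linear transport equation. Its coefficients are bounded by the level-$(i+1)$ geometry, which is uniformly controlled because the accumulated differences are summable in the region $v\ll(-u)^{1+\delta}$. Its source is a residual component $\wt{\Rc}^{(i+1)}$ or $\wt{\square\phi}^{(i+1)}$.

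First, I would estimate the residuals of $g^{(i+1)}$ in terms of the previous differences. The residual at level $i+1$ equals the mismatch between quantities frozen at level $i$ and the same quantities at level $i+1$. Schematically,
\[
\wt{\Rc}^{(i+1)}\sim \sum(\psi^{(i+1)}-\psi^{(i)})\cdot\psi^{(i)}+\nabla(\psi^{(i+1)}-\psi^{(i)})+(\gamma^{(i+1)}-\gamma^{(i)})\cdot\partial^2\gamma,
\]
together with the frame-change terms coming from $b^{(i+1/2)}$ versus $b^{(i+1)}$ and $(\Omega\chih)^{(i+1/2)}$ versus $(\Omega\chih)^{(i+1)}$. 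Inserting \eqref{Construction_Estimate_psi_e4}--\eqref{Construction_Estimate_omegab} at level $i$ gives \eqref{Construction_Estimate_Rc}, \eqref{Construction_Estimate_Rc_3} and \eqref{Construction_Estimate_square_phi} at level $i+1$, at the cost of a few derivatives. The bookkeeping must respect the two gain rates. Components involving $e_4$ (those produced by $\Lie_v$-integration of an already small quantity) carry an extra factor $v/(-u)$. The $4A$, $AB$ and $34$ components only inherit the size $v^{i}/(-u)^{2+i+\delta}$ from the $\psib$ differences.

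Second, I would propagate these bounds to the level-$(i+1)$ differences. For $\psib=\Omega\chih,\partial_v\phi$ the differences obey $e_3$-transport equations with coefficient $\frac12\Omega\tr\chib\approx -1/(-u)$ and zero data on $H_{-1}$, up to the $O(v^{i+2})$ mismatch of Proposition \ref{prop:sec-3-H-initial-g-chi-estimate}. Lemma \ref{lemma:sec3-transport-estimate} then gives \eqref{Construction_Estimate_psi_e4}. Higher derivatives are handled by commuting with $\nabla^{(i+1)}$ and $(-u)\Omega\nabla_3$ using \eqref{eq:commutators}, and inducting on the number of derivatives exactly as in Proposition \ref{prop:sec-3-first-estimate-chih-D4phi}. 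For $\zeta,\Omega\tr\chi,g,b,(\Omega\omegab)^{(i+3/2)}$ and $\nabla\phi$ the equations are $\Lie_v$-ODEs with zero data on $\Hb_0$. Integrating in $v$ gains the factor $v$, which yields \eqref{Construction_Estimate_psi_e3}, \eqref{Construction_Estimate_metric} and \eqref{Construction_Estimate_omegab}. The quantities $\log\Omega$, $\Omega\omegab$, $\Omega D_3\phi$ and $\Omega\chib$ are then recovered from the same identities as in the proof of Proposition \ref{Estimate_for_first_step}. Finally, \eqref{Construction_Estimate_chih} follows algebraically from the definition of $(\Omega\chih)^{(i+2)}$ together with the bound on $g^{(i+2)}-g^{(i+1)}$. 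The loss of four derivatives per step, recorded as $N_0-4i$, comes from $\dv\chih$ in the $\zeta$ equation, $\nabla\zeta$ in $\chib$, $\nabla\log\Omega$ in $\omegab$, and curvature being second order in the metric.

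The main obstacle is the borderline $e_3$-transport. The coefficients $\Omega\tr\chib$ and $\Omega\chibh$ are of size $1/(-u)$, and their angular dependence is only known up to $o(\delta)$. With exponent $\lambda=1+j+o(\delta)$, Lemma \ref{lemma:sec3-transport-estimate} loses a factor $(-u)^{-o(\delta)}$ at every step, and the weights must absorb it without it compounding. I would absorb this loss into the convention $\delta+\delta\sim\delta$, which is permissible since there are only $O(N_0)$ steps and $\delta\ll\kappa/N_0$. I would check that $\kappa$ provides the margin: the sources contain $\Omega^2\lesssim(-u)^\kappa$, so for instance $(-u)^{\kappa-2}\le(-u)^{-2-\delta}$. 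I also need to ensure the factor $v/(-u)^{1+\delta}$ is genuinely gained, not merely $v/(-u)$, so that the nonlinear coefficient bounds stay uniform across iterations.
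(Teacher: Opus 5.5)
Your overall scheme (induction, $e_3$-transport for the $\Omega\chih$ and $\partial_v\phi$ differences with a residual as source, $\Lie_v$-integration for the rest) matches the paper. The first step, however, has a genuine gap.

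\textbf{The residual bookkeeping fails for the $e_3$-components.} You assert that every residual component of $g^{(i+1)}$ is a mismatch between level-$i$ frozen coefficients and level-$(i+1)$ quantities, hence bounded by the differences. This holds only for components whose structure equation is actually imposed in the construction:
\begin{itemize}
\item $\wt{\Omega^2\widehat{\Rc}}_{AB}$ via \eqref{Construction_Eq_chih};
\item $\wt{\Omega\Rc}_{4A}$ via \eqref{Construction_Eq_zeta};
\item $\wt{\Rc}_{44}\equiv 0$ via \eqref{Construction_Eq_g_chi};
\item $\Omega^2\square\phi$ via \eqref{Construction_Eq_D4phi};
\item only partially, $\wt{\Omega^2\Rc}_{34}$ via \eqref{Construction_Eq_omegab}.
\end{itemize}
The equations governing $\Rc_{33}$ (the $\nabla_3\tr\chib$ equation), $\Rc_{3A}$ (the $\nabla_3\etab$/Codazzi equation) and the trace $g^{AB}\Rc_{AB}=\Rc_{34}+R$ are never imposed for $v>0$. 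Indeed, $(\Omega\chib)^{(i+1)}$ is simply read off from $g^{(i+1)},b^{(i+1)},\Omega^{(i+1)}$. For these components, differencing only gives $\wt{\Rc}^{(i+1)}_{33}=\wt{\Rc}^{(i)}_{33}+O\bigl(v^{i+1}(-u)^{-i-3-\delta}\bigr)$. That is the level-$i$ size, not the required $v^{i+2}(-u)^{-i-4-\delta}$, so \eqref{Construction_Estimate_Rc_3} at level $i+1$ does not follow.

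The defect also spreads. Because the $\omegab$ equation contains the trace part, the identity \eqref{eq:residue-Rc34} for $\wt{\Omega^2\Rc_{34}}^{(i+1)}$ contains the undifferenced term $-\frac14\bigl(\wt{\Omega^2\Rc_{34}}^{(i)}+\wt{\Omega^2R}^{(i)}\bigr)$. So even the $\Rc_{34}$ part of \eqref{Construction_Estimate_Rc} at level $i+1$ needs the improved trace bound at level $i$.

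\textbf{The missing ingredient is the contracted Bianchi identity.} We have $D^\mu(\Rc_{\mu\nu}-\frac12Rg_{\mu\nu})\equiv0$, and for $T=d\phi\otimes d\phi$ the divergence of $T-\frac12(\tr T)g$ equals $D_\nu\phi\,\square\phi$. Hence the residual $\wt{\Rc}$ satisfies \eqref{Equation_Lorentz_wt_Rc_1}--\eqref{Equation_Lorentz_wt_Rc_3}, which are $\nabla_4$-transport equations for $\wt{\Rc}_{34}+\wt R$, $\wt{\Rc}_{3A}$ and $\wt{\Rc}_{33}$.
\begin{itemize}
\item Their sources involve only $\wt{\Rc}_{4A}$, $\wt{\widehat{\Rc}}_{AB}$, $\wt{\Rc}_{34}$, $\square\phi$ and their $\nabla$, $\nabla_3$ derivatives. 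All of these are already of size $v^{i+1}$ at level $i+1$ by the direct computation.
\item The remaining sources are $\wt{\Rc}_{44}$ and $\nabla_3\wt{\Rc}_{44}$, which vanish identically because the Raychaudhuri equation is imposed exactly.
\item The three components vanish on $\Hb_0$, since the data satisfy the constraint equations.
\end{itemize}
Integrating in $v$ therefore produces exactly the extra factor $v/(-u)$. The order is: first the trace, then $\wt{\Rc}_{3A}$ (whose source contains $\eta^B\wt{\Rc}_{BA}$), then $\wt{\Rc}_{33}$ (which needs $\nabla^A\wt{\Rc}_{A3}$ and $\nabla_3$ of the trace). This costs the extra one or two derivatives recorded in \eqref{Construction_Estimate_Rc_3}. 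It is also why $(\Omega\nabla_3)^l$-derivatives must be carried throughout, since $\nabla_3\wt{\Rc}_{4A}$ appears as a source.

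Finally, your description of which components gain $v/(-u)$ (``components involving $e_4$'') is reversed. The gaining components are the $e_3$-components, and they gain precisely because they obey $e_4$-equations with vanishing data on $\Hb_0$.
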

The derivative count $N_0-4i$ is not optimal, but these estimates suffice to solve the system of ODEs. In general, the initial data on $S_{-1,v}$ and $S_{u,0}$ can be taken sufficiently smooth as a perturbation of a spherically symmetric solution.

\vspace{2mm}

The proof proceeds by induction: assuming \ref{Construction_Estimate_psi_e4} and \ref{Construction_Estimate_psi_e3} for $\psi^{(i)}-\psi^{(i-1)}$, together with \ref{Construction_Estimate_Rc} and \ref{Construction_Estimate_Rc_3} for $\wt{\Rc}^{(i)}$, we establish \ref{Construction_Estimate_psi_e4} and \ref{Construction_Estimate_psi_e3} for $\psi^{(i+1)}-\psi^{(i)}$ and $\psib^{(i+1)}-\psib^{(i)}$. We then derive \ref{Construction_Estimate_Rc}, \ref{Construction_Estimate_Rc_3}, and \ref{Construction_Estimate_square_phi} for the {Einstein--scalar field} residues from these improved connection estimates. The base case is provided by Proposition \ref{Estimate_for_first_step}. Because we derive $\wt{\Rc}^{(i+1)}$ and $\wt{\square\phi}^{(i+1)}$ from the estimates for $\psi^{(i+1)}$ and $\psib^{(i+1)}$, we do not need to compute $\wt{\Rc}^{(1)}$ and $\wt{\square\phi}^{(1)}$ explicitly.

In the rest of this section, we give the proof in detail.

\subsubsection{Estimates for $\gamma, \psi,\psib$}
Assuming the estimates hold at step $i$, we proceed to estimate $(\cdot)^{(i+1)}-(\cdot)^{(i)}$.
Recall the equation

\begin{equation*}
    \left((\Omega\nabla_3)^{(i)}+\frac{1}{2}(\Omega\tr\chib)^{(i)}\right)\left(\Omega\chih^{(i+1/2)}-\Omega\chih^{(i)}\right)=- \wt{\Omega^2\widehat{\Rc}}^{(i)}_{AB}{{.}}
\end{equation*} 

{A}rguments similar to those in Proposition \ref{prop:sec-3-first-estimate-chih-D4phi} yield
\begin{equation}\label{eq:sec-3-3-chih-est-1}
    \begin{aligned}
        &\sum_{j,l\leq N_0-4i }\lnm (-u)^{l+j}\left((\Omega \nabla_3)^{(i)}\right)^{l}\left(\nabla^{(i)}\right)^j\left((\Omega\chih)^{(i+1/2)}-(\Omega\chih)^{(i)}\right)\rnm_{L^\infty(S_{u,v})}\\
        \lesssim & \frac{1}{(-u)^{1+o(\delta)}}\left(v^{i}+\int_{-1}^u(-u^\prime)^{1+o(\delta)}\frac{v^{i}}{(-u^\prime)^{2+i+\delta}}\right)\lesssim\frac{v^i}{(-u)^{1+i+\delta}}.
    \end{aligned}
\end{equation}
Here we have used the initial data estimates in Proposition \ref{prop:sec-3-H-initial-g-chi-estimate} to obtain
\[|(\Omega\chih)^{(i+1/2)}-(\Omega\chih)^{(i)}|\leq |(\Omega\chih)^{(i+1/2)}-\Omega\chih|+|\Omega\chih-(\Omega\chih)^{(i)}|\lesssim v^i.\]
For $\partial_v\phi$, by construction we have

\begin{equation*}
    \begin{aligned}
        &\left((\Omega\nabla_3)^{(i)}+\frac{1}{2}(\Omega\tr\chib)^{(i)}\right)\left(\partial_v\phi^{(i+1)}-\partial_v\phi^{(i)}\right)
        =-(\Omega^2\square\phi)^{(i)}{{.}}
    \end{aligned}
\end{equation*}
    {The same argument as for $\Omega\chih$ yields the inequality}
\begin{equation}\label{eq:sec-3-3-D4phi-est}
    \begin{aligned}
        &\sum_{j,l\leq N_0-4i }\lnm (-u)^{l+j}\left((\Omega \nabla_3)^{(i)}\right)^{l}\left(\nabla^{(i)}\right)^j\left(\partial_v\phi^{(i+1)}-\partial_v\phi^{(i)}\right)\rnm_{L^\infty(S_{u,v})} \lesssim\frac{v^i}{(-u)^{1+i+\delta}}.
    \end{aligned}
\end{equation}
Because of the identity
\[\Lie_v\left(\nabla_A\phi^{(i+1)}-\nabla_A\phi^{(i)}\right)=\nabla_A\left(\partial_v\phi^{(i+1)}-\partial_v\phi^{(i)}\right),\]
we further obtain 
\begin{equation}\label{eq:sec-3-3-nablaphi-est}
    \begin{aligned}
        &\sum_{j,l\leq N_0-4i-1 }\lnm (-u)^{l+j}\left((\Omega \nabla_3)^{(i)}\right)^{l}\left(\nabla^{(i)}\right)^j\left(\nabla\phi^{(i+1)}-\nabla\phi^{(i)}\right)\rnm_{L^\infty(S_{u,v})}\lesssim\frac{v^{i+1}}{(-u)^{2+i+\delta}}.
    \end{aligned}
\end{equation}

For $\Omega\omegab$, we write the equation 
\begin{equation*}
    \begin{aligned}
        &\partial_v\left((\Omega\omegab)^{(i+1/2)}-(\Omega\omegab)^{(i)}\right)\\
        =&\frac{1}{4}\left((\Omega\chih)^{(i+1/2)}-(\Omega\chih)^{(i)}\right)(\Omega\chibh)^{(i)}-\frac{1}{4}\wt{\Omega^2\Rc}^{(i)}_{AB}(g^{(i)})^{AB}\\
        &+\frac{1}{4}(\Omega e_3\phi)^{(i)}\left(\partial_v\phi^{(i+1)}-\partial_v\phi^{(i)}\right)-\frac{1}{4}\wt{\Omega^2\Rc_{34}}^{(i)}.
    \end{aligned}
\end{equation*}
The preceding estimates for $(\Omega\chih)^{(i+1/2)}-(\Omega\chih)^{(i)}$ and $\partial_v\phi^{(i+1)}-\partial_v\phi^{(i)}$, together with the step-$i$ estimate \eqref{Construction_Estimate_Rc}, give

\begin{equation}\label{eq:sec-3-3-omegab-est-1}
    \begin{aligned}
        &\sum_{j,l\leq N_0-4i }\lnm (-u)^{l+j}\left((\Omega \nabla_3)^{(i)}\right)^{l}\left(\nabla^{(i)}\right)^j\left((\Omega\omegab)^{(i+1/2)}-(\Omega\omegab)^{(i)}\right)\rnm_{L^\infty(S_{u,v})}\\
        &\qquad\qquad\qquad\lesssim \frac{v^{i+1}}{(-u)^{2+i+\delta}}.
    \end{aligned}
\end{equation}  

For $i\geq 1$, we have $b^{(i+1/2)}=b^{(i)}$ by \eqref{Construction_Eq_logOmega}. This intermediate quantity is defined because we assigned $b^{(0)}=b_0$. We therefore do not distinguish between $b^{(i+1/2)}$ and $b^{(i)}$ in the subsequent proof.
The construction of $\log\Omega^{(i+1)}$ gives
\[(\Omega e_3)^{(i)}\left(\log\Omega^{(i+1)}-\log\Omega^{(i)}\right)=-2\left((\Omega\omegab)^{(i+1/2)}-(\Omega\omegab)^{(i)}\right).\]
The preceding result \eqref{eq:sec-3-3-omegab-est-1} yields
 
\begin{equation}\label{eq:sec-3-3-logOmega-est}
    \begin{aligned}
        &\sum_{j,l\leq N_0-4i }\lnm (-u)^{l+j}\left((\Omega \nabla_3)^{(i)}\right)^{l}\left(\nabla^{(i)}\right)^j\left(\log\Omega^{(i+1)}-\log\Omega^{(i)}\right)\rnm_{L^\infty(S_{u,v})}\lesssim \frac{v^{i+1}}{(-u)^{2+i+\delta}}.
    \end{aligned}
\end{equation}

For $\Omega\omega$, we have the identity 
 \begin{equation*}
    \begin{aligned}
        &(\Omega e_3)^{(i)}\left((\Omega\omega)^{(i+1)}-(\Omega\omega)^{(i)}\right)\\
        =&\partial_v\left((\Omega\omegab)^{(i+1/2)}-(\Omega\omegab)^{(i)}\right)+\frac{1}{2}\Lie_v b^{(i)}\cdot\nabla(\log\Omega^{(i+1)}-\log\Omega^{(i)})\\
        =&\frac{1}{4}\left((\Omega\chih)^{(i+1/2)}-(\Omega\chih)^{(i)}\right)(\Omega\chibh)^{(i)}-\frac{1}{4}\wt{\Omega^2\Rc}^{(i)}_{AB}(g^{(i)})^{AB}-\wt{\Omega^2\Rc_{34}}^{(i)}\\
        &+\frac{1}{4}(\Omega e_3\phi)^{(i)}\left(\partial_v\phi^{(i+1)}-\partial_v\phi^{(i)}\right)-\frac{1}{8}(\Omega^{(i)})^2\zeta^{(i)}\cdot\nabla\left(\log\Omega^{(i+1)}-\log\Omega^{(i)}\right).
    \end{aligned}
\end{equation*}
We note that the power of $v$ on the right-hand side is $v^{i}$, even though the last term is of $O(v^{i+1})$. We also note that there is a derivative loss in $\nabla\log\Omega^{(i+1)}$, and therefore arrive at

\begin{equation}\label{eq:sec-3-3-omega-est}
    \begin{aligned}
        &\sum_{j,l\leq N_0-4i-1 }\lnm (-u)^{l+j}\left((\Omega \nabla_3)^{(i)}\right)^{l}\left(\nabla^{(i)}\right)^j\left((\Omega\omega)^{(i+1)}-(\Omega\omega)^{(i)}\right)\rnm_{L^\infty(S_{u,v})} \lesssim \frac{v^{i}}{(-u)^{1+i+\delta}}.
    \end{aligned}
\end{equation}

We next estimate $\zeta$. From \eqref{Construction_Eq_zeta}, we have 

\begin{equation*}
    \begin{aligned}
        &\Lie_v\left(\zeta^{(i+1)}-\zeta^{(i)}\right)=-3\left((\Omega\chih)^{(i+1/2)}-(\Omega\chih)^{(i)}\right)\zeta^{(i)}+\frac{1}{2}\nabla\left((\Omega\omega)^{(i+1)}-(\Omega\omega)^{(i)}\right)\\
        &\qquad-\frac{1}{2}\left((\Omega\omega)^{(i+1)}\nabla\log\Omega^{(i+1)}-(\Omega\omega)^{(i)}\nabla\log\Omega^{(i)}\right)+(\dv)^{(i)}\left((\Omega\chih)^{(i+1/2)}-(\Omega\chih)^{(i)}\right)\\
        &\qquad-(\Omega\chih)^{(i+1/2)}\nabla\log\Omega^{(i+1)}+(\Omega\chih)^{(i)}\nabla\log\Omega^{(i)}+\frac{1}{2}(\Omega\tr\chi)^{(i)}\nabla(\log\Omega^{(i+1)}-\log\Omega^{(i)})\\
        &\qquad-(\partial_v\phi^{(i+1)}-\partial_v\phi^{(i)})\nabla\phi^{(i)}+\wt{\Omega\Rc_{4\cdot}}^{(i)}.
    \end{aligned}
\end{equation*} 
The induction hypothesis and the preceding estimates, {together with} integration in $v$, imply

\begin{equation}\label{eq:sec-3-3-zeta-est}
    \begin{aligned}
        &\sum_{j,l\leq N_0-4i-2 }\lnm (-u)^{l+j}\left((\Omega \nabla_3)^{(i)}\right)^{l}\left(\nabla^{(i)}\right)^j\left(\zeta^{(i+1)}-\zeta^{(i)}\right)\rnm_{L^\infty(S_{u,v})} \lesssim \frac{v^{i+1}}{(-u)^{2+i+\delta}}.
    \end{aligned}
\end{equation}  
Since 
$$\Lie_v\left(b^{(i+1)}-b^{(i)}\right)=-4\left((\Omega^{(i+1)})^2\zeta^{(i+1)}-(\Omega^{(i)})^2\zeta^{(i)}\right),$$
we conclude that 

\begin{equation}\label{eq:sec-3-3-b-est}
    \begin{aligned}
        &\sum_{j,l\leq N_0-4i-2 }\lnm (-u)^{l+j}\left((\Omega \nabla_3)^{(i)}\right)^{l}\left(\nabla^{(i)}\right)^j\left(b^{(i+1)}-b^{(i)}\right)\rnm_{L^\infty(S_{u,v})} \lesssim \frac{v^{i+2}}{(-u)^{2+i+\delta}},
    \end{aligned}
\end{equation}  
which is stronger than \eqref{Construction_Estimate_metric}.

Together with the preceding inequality, the identity 
$$(\Omega\omegab)^{(i+1)}-(\Omega\omegab)^{(i+1/2)}=-\frac{1}{2}\left(b^{(i+1)}-b^{(i)}\right)\cdot\nabla\log\Omega^{(i)}$$ 
gives the final estimate for $\Omega\omegab$:

\begin{equation}\label{eq:sec-3-3-omegab-est-2}
    \begin{aligned}
        &\sum_{j,l\leq N_0-4i-2 }\lnm (-u)^{l+j}\left((\Omega \nabla_3)^{(i)}\right)^{l}\left(\nabla^{(i)}\right)^j\left((\Omega\omegab)^{(i+1)}-(\Omega\omegab)^{(i+1/2)}\right)\rnm_{L^\infty(S_{u,v})}\lesssim \frac{v^{i+2}}{(-u)^{3+i+\delta}}.
    \end{aligned}
\end{equation}  
{This completes the proof of \eqref{Construction_Estimate_psi_e4} for $\Omega\omegab$.}

We proceed to estimate $g,\chi$ using their defining equations \eqref{Construction_Eq_g_chi}. Schematically, we write
\begin{equation*}
    \begin{aligned}
        &(\Omega\chih)^{(i+1)}-(\Omega\chih)^{(i+1/2)}\sim (\Omega\chih)^{(i+1)}\left(g^{(i+1)}-g^{(i)}\right),\\
        &\Lie_v \left(g^{(i+1)}-g^{(i)}\right)\sim (\Omega\chih)^{(i+1)}-(\Omega\chih)^{(i)}\\
        &\qquad+(\Omega\tr\chi)^{(i+1)}g^{(i+1)}-(\Omega\tr\chi)^{(i)}g^{(i)},\\
        &\partial_v\left((\Omega\tr\chi)^{(i+1)}-(\Omega\tr\chi)^{(i)}\right)\sim \left|(\Omega\chih)^{(i+1)}\right|^2-\left|(\Omega\chih)^{(i)}\right|^2+(\partial_v\phi^{(i+1)})^2-(\partial_v\phi^{(i)})^2\\
        &\qquad+(\Omega\tr\chi)^{(i+1)}(\Omega\omega)^{(i+1)}-(\Omega\tr\chi)^{(i)}(\Omega\omega)^{(i)} + \left((\Omega\tr\chi)^{(i+1)}\right)^2-\left((\Omega\tr\chi)^{(i)}\right)^2.
    \end{aligned}
\end{equation*}
Using \eqref{eq:sec-3-3-chih-est-1}, \eqref{eq:sec-3-3-D4phi-est}, and \eqref{eq:sec-3-3-omega-est}, the standard transport estimate implies

\begin{equation}\label{eq:sec-3-3-trchi-est}
    \begin{aligned}
        &\sum_{j,l\leq N_0-4i-1 }\lnm (-u)^{l+j}\left((\Omega \nabla_3)^{(i)}\right)^{l}\left(\nabla^{(i)}\right)^j\left((\Omega\tr\chi)^{(i+1)}-(\Omega\tr\chi)^{(i)}\right)\rnm_{L^\infty(S_{u,v})} \lesssim \frac{v^{i+1}}{(-u)^{2+i+\delta}},
    \end{aligned}
\end{equation} 
\begin{equation}\label{eq:sec-3-3-g-est}
    \begin{aligned}
        &\sum_{j,l\leq N_0-4i-1 }\lnm (-u)^{l+j}\left((\Omega \nabla_3)^{(i)}\right)^{l}\left(\nabla^{(i)}\right)^j\left(g^{(i+1)}-g^{(i)}\right)\rnm_{L^\infty(S_{u,v})} \lesssim \frac{v^{i+1}}{(-u)^{1+i+\delta}},
    \end{aligned}
\end{equation} 
\begin{equation}\label{eq:sec-3-3-chih-est-2}
    \begin{aligned}
        &\sum_{j,l\leq N_0-4i-1 }\lnm (-u)^{l+j}\left((\Omega \nabla_3)^{(i)}\right)^{l}\left(\nabla^{(i)}\right)^j\left((\Omega\chih)^{(i+1)}-(\Omega\chih)^{(i+1/2)}\right)\rnm_{L^\infty(S_{u,v})} \lesssim \frac{v^{i+1}}{(-u)^{2+i+\delta}}.
    \end{aligned}
\end{equation} 

Together with \eqref{eq:sec-3-3-chih-est-1}, this proves \eqref{Construction_Estimate_psi_e3} for $\Omega\chih$.

\vspace{2mm}
Finally, we estimate $\Omega D_3\phi$ and $\Omega\chib$.
Because 

\begin{equation*}
    \begin{aligned}
        &\partial_v\left((\Omega D_3\phi)^{(i+1)}-(\Omega D_3\phi)^{(i)}\right)=(\Omega e_3)^{(i)}\left(\partial_v\phi^{(i+1)}-\partial_v\phi^{(i)}\right)+\left(b^{(i+1)}-b^{(i)}\right)\cdot\nabla\partial_v\phi^{(i+1)}\\
        &\qquad\qquad-4\left(\Omega^{(i+1)}\right)^2\zeta^{(i+1)}\cdot\nabla\phi^{(i+1)}+4\left(\Omega^{(i)}\right)^2\zeta^{(i)}\cdot\nabla\phi^{(i)},
    \end{aligned}
\end{equation*}

the preceding estimates \eqref{eq:sec-3-3-D4phi-est}, \eqref{eq:sec-3-3-nablaphi-est}, \eqref{eq:sec-3-3-zeta-est}, \eqref{eq:sec-3-3-logOmega-est}, and \eqref{eq:sec-3-3-b-est} together give

\begin{equation}\label{eq:sec-3-3-D3phi-est}
    \begin{aligned}
        &\sum_{j,l\leq N_0-4i-2 }\lnm (-u)^{l+j}\left((\Omega \nabla_3)^{(i)}\right)^{l}\left(\nabla^{(i)}\right)^j\left((\Omega D_3\phi)^{(i+1)}-(\Omega D_3\phi)^{(i)}\right)\rnm_{L^\infty(S_{u,v})} \lesssim \frac{v^{i+1}}{(-u)^{2+i+\delta}}.
    \end{aligned}
\end{equation} 

For $\Omega\chib$, we similarly have

\begin{equation*}
    \begin{aligned}
        &\Lie_v\left((\Omega\chib)^{(i+1)}-(\Omega\chib)^{(i)}\right) = \Lie_{(\Omega e_3)^{(i)}}\left((\Omega\chi)^{(i+1)}-(\Omega\chi)^{(i)}\right)\\
        &\qquad +\Lie_{(\Omega e_3)^{(i+1)}-(\Omega e_3)^{(i)}}(\Omega\chi)^{(i+1)}+\frac{1}{2}\Lie_{\left(\Lie_v b^{(i+1)}-\Lie_v b^{(i)}\right)}g^{(i)}+\frac{1}{2}\Lie_{\Lie_v b^{(i+1)}}\left(g^{(i+1)}-g^{(i)}\right)\\
        \sim &(\Omega\nabla_3)^{(i)}\left((\Omega\chi)^{(i+1)}-(\Omega\chi)^{(i)}\right)+(\Omega\chib)^{(i)}\left((\Omega\chi)^{(i+1)}-(\Omega\chi)^{(i)}\right)+\left({b^{(i+1)}-b^{(i)}}\right)\nabla(\Omega\chi)^{(i+1)}\\
        &\qquad +\Omega\chi^{(i+1)}\left((\Omega\chib)^{(i+1)}-(\Omega\chib)^{(i)}\right)+\nabla^{(i)}\otimes \left((\Omega^{(i+1)})^2\zeta^{(i+1)}-(\Omega^{(i)})^2\zeta^{(i)}\right)\\
        &\qquad +(\Omega^{(i+1)})^2\left(\nabla^{(i+1)}\otimes\zeta^{(i+1)}-\nabla^{(i)}\otimes\zeta^{(i)}\right).
    \end{aligned}
\end{equation*}

Here we use the notation $(\nabla\otimes X)_{AB}=\nabla_AX_B+\nabla_BX_A$ and {write} $\sim$ to suppress constant coefficients.
Employing \eqref{eq:sec-3-3-b-est}, \eqref{eq:sec-3-3-chih-est-1}, \eqref{eq:sec-3-3-chih-est-2}, \eqref{eq:sec-3-3-zeta-est}, and \eqref{eq:sec-3-3-logOmega-est}, we obtain

\begin{equation}\label{eq:sec-3-3-chib-est}
    \begin{aligned}
        &\sum_{j,l\leq N_0-4i-3 }\lnm (-u)^{l+j}\left((\Omega \nabla_3)^{(i)}\right)^{l}\left(\nabla^{(i)}\right)^j\left((\Omega \chib)^{(i+1)}-(\Omega \chib)^{(i)}\right)\rnm_{L^\infty(S_{u,v})} \lesssim \frac{v^{i+1}}{(-u)^{2+i+\delta}}.
    \end{aligned}
\end{equation} 

We thus finish proving \eqref{Construction_Estimate_psi_e3}, \eqref{Construction_Estimate_psi_e4}, \eqref{Construction_Estimate_metric}, \eqref{Construction_Estimate_chih}, and \eqref{Construction_Estimate_omegab}.

\subsubsection{Estimates for $\wt{\Rc}$ and $\square\phi$} In this {subsection}, we complete the proof of Theorem \ref{thm:sec-3}.
Using the $\nabla_3\chih$ equation and the construction of $(\Omega\chih)^{(i+1/2)}$, we have

\begin{equation}\label{eq:sec-3-3-hatRc}
    \begin{aligned}
        &\wt{\Omega^2\widehat{\Rc}}_{AB}^{(i+1)}
        = (\Omega\nabla_3)^{(i+1)}(\Omega\chih)^{(i+1)}+\frac{1}{2}\left(\Omega\tr\chib\right)^{(i+1)}\left(\Omega\chih\right)^{(i+1)}-\left(\Omega^2\nabla\phi\hat\otimes\nabla\phi\right)^{(i+1)}\\
        &-(\Omega\nabla_3)^{(i)}(\Omega\chih)^{(i+1/2)}-\frac{1}{2}\left(\Omega\tr\chib\right)^{(i)}\left(\Omega\chih\right)^{(i+1/2)}+\left(\Omega^2\nabla\phi\hat\otimes\nabla\phi\right)^{(i)}\\
        &-\left(\Omega^2(\nabla\hat\otimes\eta+\eta\hat\otimes\eta-\frac{1}{2}\tr\chi\chibh)\right)^{(i+1)}+\left(\Omega^2(\nabla\hat\otimes\eta+\eta\hat\otimes\eta-\frac{1}{2}\tr\chi\chibh)\right)^{(i)}.
    \end{aligned}
\end{equation}
The induction hypotheses \eqref{Construction_Estimate_psi_e3}, \eqref{Construction_Estimate_psi_e4}, \eqref{Construction_Estimate_metric}, and \eqref{Construction_Estimate_chih} imply 

\begin{equation}\label{eq:sec-3-3-hatRc-est}
    \begin{aligned}
        \sum_{j,l\leq N_0-4i-3 }\lnm (-u)^{l+j }\left((\Omega \nabla_3)^{(i)}\right)^{l}\left(\nabla^{(i)}\right)^j\left(\wt{\Omega^2\widehat{\Rc}}^{(i+1)}_{AB}\right)\rnm_{L^\infty(S_{u,v})}\lesssim \frac{v^{i+1}}{(-u)^{3+i+\delta}}.
    \end{aligned}
\end{equation}

Recall the formula
\begin{equation*}
    \begin{aligned}
        \frac{1}{4}\Omega^2\Rc_{34}&=\partial_v\left(\Omega\omegab\right)-\frac{1}{4}\Omega\chibh\cdot\Omega\chih-\frac{1}{2}\Omega^2\eta^2+\Omega^2\eta\etab \\
        &-\frac{1}{4}\left(\Omega e_3(\Omega\tr\chi)+\frac{1}{2}\Omega\tr\chi\Omega\tr\chib-2\Omega^2\dv\eta+2\Omega^2\eta^2\right),
    \end{aligned}
\end{equation*}
for a Lorentzian manifold. We derive
\begin{equation*}
    \begin{aligned}
        &\frac{1}{4}\left((\Omega^2\Rc_{34})^{(i+1)}-(\Omega^2\Rc_{34})^{(i)}\right)\\
        =&\partial_v\left((\Omega\omegab)^{(i+1)}-(\Omega\omegab)^{(i)}\right)-\frac{1}{4}\left((\Omega\chibh)^{(i+1)}(\Omega\chih)^{(i+1)}-(\Omega\chibh)^{(i)}(\Omega\chih)^{(i)}\right)\\
        &\qquad+\left(-\frac{1}{2}\Omega^2\eta^2+\Omega^2\eta\etab -\frac{1}{4}\Omega e_3(\Omega\tr\chi)-\frac{1}{8}\Omega\tr\chi\Omega\tr\chib+\frac{1}{2}\Omega^2\dv\eta-\frac{1}{2}\Omega^2\eta^2\right)^{(i+1)}\\
        &\qquad-\left(-\frac{1}{2}\Omega^2\eta^2+\Omega^2\eta\etab -\frac{1}{4}\Omega e_3(\Omega\tr\chi)-\frac{1}{8}\Omega\tr\chi\Omega\tr\chib+\frac{1}{2}\Omega^2\dv\eta-\frac{1}{2}\Omega^2\eta^2\right)^{(i)}.
    \end{aligned}
\end{equation*}
Moreover, the construction of $(\Omega\omegab)^{(i+1/2)}$ implies
\begin{equation*}
    \begin{aligned}
        &\frac{1}{4}{(\Omega^2\Rc_{34})}^{(i)}-\frac{1}{4}(\Omega e_3\phi)^{(i+1)}\partial_v\phi^{(i+1)}=\frac{1}{4}\left((\Omega\chih)^{(i+1/2)}-(\Omega\chih)^{(i)}\right)(\Omega\chibh)^{(i)}-\frac{1}{4}\wt{\Omega^2\Rc}^{(i)}_{AB}(g^{(i)})^{AB}\\
        &\qquad+\frac{1}{4}\left((\Omega e_3\phi)^{(i)}-(\Omega e_3\phi)^{(i+1)}\right)\partial_v\phi^{(i+1)}-\partial_v\left((\Omega\omegab)^{(i+1/2)}-(\Omega\omegab)^{(i)}\right).
    \end{aligned}
\end{equation*}
Adding {these identities} gives the expression for $\wt{\Omega^2\Rc_{34}}^{(i+1)}${:}
\begin{equation}\label{eq:residue-Rc34}
    \begin{aligned}
        &\frac{1}{4}\wt{\Omega^2\Rc_{34}}^{(i+1)}=\partial_v\left((\Omega\omegab)^{(i+1)}-(\Omega\omegab)^{(i+1/2)}\right)-\frac{1}{4}\left((\Omega\chibh)^{(i+1)}(\Omega\chih)^{(i+1)}-(\Omega\chibh)^{(i)}(\Omega\chih)^{(i+1/2)}\right)\\
        &\qquad-\frac{1}{4}\left(\wt{\Omega^2\Rc_{34}}^{(i)}+\wt{\Omega^2 R}^{(i)}\right)+\frac{1}{4}\left((\Omega e_3\phi)^{(i)}-(\Omega e_3\phi)^{(i+1)}\right)\partial_v\phi^{(i+1)}\\
        &\qquad+\left(-\frac{1}{2}\Omega^2\eta^2+\Omega^2\eta\etab -\frac{1}{4}\Omega e_3(\Omega\tr\chi)-\frac{1}{8}\Omega\tr\chi\Omega\tr\chib+\frac{1}{2}\Omega^2\dv\eta-\frac{1}{2}\Omega^2\eta^2\right)^{(i+1)}\\
        &\qquad-\left(-\frac{1}{2}\Omega^2\eta^2+\Omega^2\eta\etab -\frac{1}{4}\Omega e_3(\Omega\tr\chi)-\frac{1}{8}\Omega\tr\chi\Omega\tr\chib+\frac{1}{2}\Omega^2\dv\eta-\frac{1}{2}\Omega^2\eta^2\right)^{(i)}.
    \end{aligned}
\end{equation}
From the identity
  
$$(\Omega\omegab)^{(i+1)}-(\Omega\omegab)^{(i+1/2)}=-\frac{1}{2}\left(b^{(i+1)}-b^{(i)}\right)\cdot\nabla\log\Omega^{(i)},$$
we have 
\begin{equation*}
    \begin{aligned}
        &\partial_v\left((\Omega\omegab)^{(i+1)}-(\Omega\omegab)^{(i+1/2)}\right)\\
        &\qquad=\frac{1}{8}\left((\Omega^{(i+1)})^2\zeta^{(i+1)}-(\Omega^{(i)})^2\zeta^{(i)}\right)\cdot\nabla\log\Omega^{(i)}+\left(b^{(i+1)}-b^{(i)}\right)\cdot\nabla(\Omega\omega)^{(i)}.
    \end{aligned}
\end{equation*}
Hence, we obtain 
\begin{equation*}
    \sum_{j+l\leq N_0-4i-3} \lnm (-u)^{j+l}\left((\Omega\nabla_3)^{(i)}\right)^l(\nabla^{(i)})^j\wt{\Omega^2\Rc_{34}}^{(i+1)}\rnm_{L^\infty(S_{u,v})}\lesssim \frac{v^{i+1}}{(-u)^{i+3+\delta}}.
\end{equation*}

Turning to $\Omega\Rc_{4A}$, for a Lorentzian manifold{,} we recall the identity
\begin{equation*}
    \Omega\Rc_{4}^{\ A}=-\Lie_v\zeta^A-2\Omega\tr\chi\zeta^A-2\Omega\chih^{A}_{\ B}\zeta^B 2\nabla^A(\Omega\omega)+\dv(\Omega\chih)^{A}-\frac{1}{2}\nabla^A(\Omega\tr\chi)+\Omega\tr\chi\nabla^A\log\Omega,
\end{equation*}
from which we deduce that 
\begin{equation*}
    \begin{aligned}
        &(\Omega\Rc_{4\cdot})^{(i+1)}-\partial_v\phi^{(i+1)}\nabla\phi^{(i+1)}\\
        =&-\partial_v\phi^{(i+1)}\left(\nabla\phi^{(i+1)}-\nabla\phi^{(i)}\right)-\left(\zeta^{(i+1)}(\Omega\chih)^{(i+1)}-\zeta^{(i)}(\Omega\chih)^{(i+1/2)}\right)\\
        &-\frac{1}{2}(\Omega\omega)^{(i+1)}\nabla\log\Omega^{(i+1)}\left(g^{(i+1)}-g^{(i)}\right)+\frac{1}{2}(g^{(i+1)}-g^{(i)})\nabla(\Omega\omega)^{(i+1)}\\
        &+\left(\dv^{(i+1)}(\Omega\chih)^{(i+1)}-\dv^{(i)}(\Omega\chih)^{(i+1/2)}\right)-\left((\Omega\chih)^{(i+1)}-(\Omega\chih)^{(i+1/2)}\right)\nabla\log\Omega^{(i+1)}\\
        &-\frac{1}{2}\left(\nabla(\Omega\tr\chi)^{(i+1)}-\nabla(\Omega\tr\chi)^{(i)}\right)+\frac{1}{2}((\Omega\tr\chi)^{(i+1)}-(\Omega\tr\chi)^{(i)})\nabla\log\Omega^{(i+1)},
    \end{aligned}
\end{equation*}
and therefore obtain
\begin{equation*}
    \sum_{j+l\leq N_0-4i-3} \lnm (-u)^{j+l}\left((\Omega\nabla_3)^{(i)}\right)^l(\nabla^{(i)})^j\wt{\Omega\Rc_{4\cdot}}^{(i+1)}\rnm_{L^\infty(S_{u,v})}\lesssim \frac{v^{i+1}}{(-u)^{i+3+\delta}}.
\end{equation*}

For $\square\phi$, we observe
\begin{equation*}
        \begin{aligned}
            &(\Omega^2\square\phi)^{(i+1)}\\
            =&\left(\Omega^2\Delta_g\phi-\frac{1}{2}\Omega\tr\chi\Omega e_3\phi+2\Omega^2\eta\cdot\nabla\phi\right)^{(i+1)} -\left(\Omega^2\Delta_g\phi-\frac{1}{2}\Omega\tr\chi\Omega e_3\phi+2\Omega^2\eta\cdot\nabla\phi\right)^{(i)}\\
            &+\frac{1}{2}\left(\left(\Omega\tr\chib\right)^{(i)}-\left(\Omega\tr\chib\right)^{(i+1)}\right)\left(\Omega e_4\phi\right)^{(i+1)}-\left(b^{(i+1)}-b^{(i)}\right)\nabla\partial_v\phi^{(i+1)}.
        \end{aligned}
    \end{equation*}
    It then follows that
    \begin{equation*}
    \sum_{j+l\leq N_0-4i-3} \lnm (-u)^{j+l}\left((\Omega\nabla_3)^{(i)}\right)^l(\nabla^{(i)})^j\left(\Omega^2\square\phi\right)^{(i+1)}\rnm_{L^\infty(S_{u,v})}\lesssim \frac{v^{i+1}}{(-u)^{i+3+\delta}}.
\end{equation*}

Let $T_{\mu\nu}=D_\mu\phi D_\nu\phi$. Then $D^\mu \left(T_{\mu\nu}-\frac{1}{2}(g^{\rho\lambda}T_{\rho\lambda})g_{\mu\nu}\right)=D_\lambda\phi\square\phi$. With $\wt{\Rc}=\Rc-T$ and $\wt{R}=g^{\mu\nu}\wt{\Rc}$, equations \eqref{eq:dv-Ric4}, \eqref{eq:dv-Ric3}, and \eqref{eq:dv-RicA} imply
\begin{equation}\label{Equation_Lorentz_wt_Rc_1}
    \begin{aligned}
        &\frac{1}{2} \nabla_4\left(\wt{\Rc}_{34}+\wt{R}\right)=2 \underline{\omega} \wt{\Rc}_{44}+2 \eta^A \wt{\Rc}_{4 A}-\frac{1}{2} \nabla_3 \wt{\Rc}_{44}+\underline{\eta}^A \wt{\Rc}_{A 4} +\nabla^A \wt{\Rc}_{A 4}-\frac{1}{2} \operatorname{tr} \underline{\chi} \wt{\Rc}_{44}\\
    &\qquad\qquad-\frac{1}{2} \operatorname{tr} \chi\left(\wt{\Rc}_{34}+\wt{R}\right) +\zeta^A \wt{\Rc}_{A 4}-\frac{1}{2} \operatorname{tr} \chi \wt{\Rc}_{34}-\hat{\chi}^{A B} \wt{\widehat{\Rc}}_{A B}+D_4\phi\square\phi,
    \end{aligned}
\end{equation}
\begin{equation}\label{Equation_Lorentz_wt_Rc_2}
    \begin{aligned}
        &\frac{1}{2} \nabla_4 \wt{\Rc}_{33}=2 \omega \wt{\Rc}_{33}+2 \underline{\eta}^A \wt{\Rc}_{3 A}-\frac{1}{2} \nabla_3\left(\wt{\Rc}_{34}+\wt{R}\right)+\eta^A \wt{\Rc}_{A 3} +\nabla^A \wt{\Rc}_{A 3}-\frac{1}{2} \operatorname{tr} \chi \wt{\Rc}_{33}\\
    &\qquad\qquad-\frac{1}{2} \operatorname{tr} \underline{\chi}\left(\wt{\Rc}_{34}+R\right) -\zeta^A \wt{\Rc}_{A 3}-\frac{1}{2} \operatorname{tr} \underline{\chi} \wt{\Rc}_{34}-\hat{\chi}^{A B} \wt{\widehat{\Rc}}_{A B}+D_3\phi\square\phi, 
    \end{aligned}
\end{equation}
\begin{equation}\label{Equation_Lorentz_wt_Rc_3}
    \begin{aligned}
        &\frac{1}{2} \nabla_4 \wt{\Rc}_{3 A}= \underline{\omega} \wt{\Rc}_{4 A}+\eta^B \wt{\Rc}_{B A}+\frac{1}{2} \eta_A \wt{\Rc}_{34}-\frac{1}{2} \nabla_3 \wt{\Rc}_{4 A}+\omega \wt{\Rc}_{3 A} +\underline{\eta}^B \wt{\Rc}_{B A}+\frac{1}{2} \underline{\eta}_A \wt{\Rc}_{34}\\
    &\qquad\qquad+\nabla^B \wt{\widehat{\Rc}}_{B A}+\frac{1}{2} \nabla_B \wt{\Rc}_{34} -\frac{1}{2} \operatorname{tr} \underline{\chi} \wt{\Rc}_{4 A}-\frac{1}{2} \operatorname{tr} \chi \wt{\Rc}_{3 A}\\
    &\qquad\qquad-\frac{1}{2} \underline{\chi}_A{ }^B \wt{\Rc}_{B 4}-\frac{1}{2} \chi_A{ }^B \wt{\Rc}_{B 3}+\nabla_A\phi\square\phi.
    \end{aligned}
\end{equation}

We now have estimates for $\widehat{\Rc}_{AB},\Rc_{34},\Rc_{4A},\square\phi$, while $\wt{\Omega^2\Rc_{44}}\equiv 0$ by construction \eqref{Construction_Eq_g_chi}. Using equations \eqref{Equation_Lorentz_wt_Rc_1} and \eqref{Equation_Lorentz_wt_Rc_3}, we obtain
\begin{equation*}
    \begin{aligned}
        &\sum_{j+l\leq N_0-4i-4} \lnm (-u)^{j+l}\left((\Omega\nabla_3)^{(i)}\right)^l(\nabla^{(i)})^j\left(\wt{\Omega\Rc_{3\cdot}}^{(i+1)},\left(\wt{\Omega^2\Rc_{34}}+\wt{\Omega^2 R}\right)^{(i+1)}\right)\rnm_{L^\infty(S_{u,v})}\\
        &\qquad\qquad\qquad\lesssim \frac{v^{i+2}}{(-u)^{i+4+\delta}},
    \end{aligned}
\end{equation*}
and \eqref{eq:dv-Ric3} implies 
\begin{equation*}
    \sum_{j+l\leq N_0-4i-5} \lnm (-u)^{j+l}\left((\Omega\nabla_3)^{(i)}\right)^l(\nabla^{(i)})^j\wt{\Omega^2\Rc_{33}}^{(i+1)}\rnm_{L^\infty(S_{u,v})}\lesssim \frac{v^{i+2}}{(-u)^{i+4+\delta}}.
\end{equation*}

This completes the proof of Theorem {\ref{thm:sec-3}}.
 \section{The A Priori Estimates for Einstein Scalar-Field Equations}\label{Section_Estimates}

\subsection{Main theorem and bootstrap assumption}
In this section, we turn to the a priori estimates for the characteristic initial value problem (CIVP) {for} the {Einstein--scalar field} equations
    \begin{equation*}
        \Rc_{\mu\nu}=D_\mu\phi D_\nu\phi,\quad \square_g \phi=0.
    \end{equation*}
\begin{theorem}
    Fix $N\in\mathbb{N}$ {with} $1\ll N$.
    Consider the CIVP 
    with $(\epsilon,N^2)$-small $\k$-self-similar initial data on $\Hb_0$ and {with} $\chi,\omega,D_4\phi$ along $H_{-1}${.}
    {Assume that}
    \begin{equation*}
        \sup_{v}\lnm \Omega\chi,\Omega\omega,\partial_v\phi\rnm_{H^{N^2}(S_{-1,v})}\leq A{{.}}
    \end{equation*}
    {Then} $\epsilon=\epsilon(N,\k)$ can be chosen sufficiently small so that the CIVP has a solution in $\{-1\leq u<0,\ 0\leq v\leq \epsilon_1(-u)^{1+\frac{\k}{N}}\}$ for a suitably small $\epsilon_1=\epsilon_1(N,\k,A)$.
\end{theorem}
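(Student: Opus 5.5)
We will prove existence by a bootstrap (continuity) argument on the differences between the exact solution and a high-order approximate spacetime. First, by Theorem \ref{thm:sec-3} with $N_0=N^2$, we take the iterate $i_*=4N$ (admissible, since $N_0-4i_*\cdot 4\ge N^2/2$ for $N$ large) as reference spacetime $(\ot g,\ot\phi)$. In the region $v\le \epsilon_1(-u)^{1+\k/N}$ we have
\[
\frac{v}{(-u)^{1+\delta}}\le \epsilon_1(-u)^{\k/N-\delta}\ll 1 ,
\]
because $\delta\ll\k/N$. Hence every factor $v^{i}/(-u)^{i}$ in the residual bounds \eqref{Construction_Estimate_Rc}--\eqref{Construction_Estimate_square_phi} is small. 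With $i_*=4N$, the residuals $\wt{\Rc}^{(i_*)}$ and $\wt{\square\phi}^{(i_*)}$ are bounded by
\[
\frac{v^{i_*}}{(-u)^{2+i_*+\delta}}\lesssim \left(\frac{v}{-u}\right)^{2}(-u)^{-2}\cdot \epsilon_1^{i_*-2}(-u)^{(i_*-2)\k/N-\delta i_*}.
\]
This bound is acceptable: the gain $(i_*-2)\k/N\approx 4\k$ beats the losses $\delta i_*\approx 4N\delta$ and the borderline $(-u)^{-\k}$-type growth coming from $\Omega^{-2}\lesssim(-u)^{-\k}$.

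Next, we set $\wt\Phi=\Phi-\ot\Phi$ for all Ricci coefficients, scalar derivatives and the curvature-type quantities. We make bootstrap assumptions in scale-invariant norms $\LS,\LH,\LHb$ of the form $\|\nabla^i\wt\psi\|\lesssim (v/(-u))^{3/4}$, and the energy bound \eqref{eq:intro-target-energy}, for $i\le 4$ (plus lower-order sup bounds via Sobolev on $S_{u,v}$). The difference equations come from subtracting the null structure equations \eqref{eq:chi}--\eqref{eq:K} for $g$ and for $\ot g$. Their coefficients are controlled by $\ot g$ (which satisfies Theorem \ref{thm:sec-3}), and their sources are quadratic in $\wt\Phi$ or are the residuals. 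We then proceed in two steps.

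(i) \emph{Transport estimates.} $e_4$-equations (for $\wt{\Omega\tr\chi}$, $\wt g$, $\wt\eta$, $\wt{\Omega\chib}$, $\wt b$) are integrated in $v$ from $\Hb_0$, where the differences vanish. The $e_3$-equations (for $\wt{\Omega\chih}$, $\wt{\partial_v\phi}$, $\wt{\etab}$, $\wt{\Omega\tr\chib}$) are integrated from $H_{-1}$ using Lemma \ref{lemma:sec3-transport-estimate}, with data errors controlled by Proposition \ref{prop:sec-3-H-initial-g-chi-estimate} (of size $v^{i_*}$). The linear terms $\frac{\lambda}{-u}\wt f$ with $\lambda\approx j+1$ are handled exactly as in Proposition \ref{prop:sec-3-first-estimate-chih-D4phi}, producing only a $(-u)^{-o(\delta)}$ loss.

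(ii) \emph{Energy estimates.} For top-order derivatives we use the four Hodge pairs
\[
\bigl(\dv(\Omega\chic),(\Omega d\eta,\Omega K)\bigr),\quad
\bigl((\Omega d\etab,\Omega K),\dv(\Omega\chibc)\bigr),\quad
\bigl(\nabla(\Omega e_4\phi),\Omega\nabla^2\phi\bigr),\quad
\bigl(\Omega\nabla^2\phi,\nabla(\Omega e_3\phi)\bigr).
\]
Here \eqref{eq:dv-chic}, \eqref{eq:d-eta}, \eqref{eq:d-etab}, \eqref{eq:K} and the wave equation give symmetric-hyperbolic pairs. We integrate their difference versions over the region bounded by $H_u$ and $\Hb_v$, with weights $(-u)^{p}$ chosen so that the $\tr\chib\sim -2/(-u)$ terms enter with a favorable sign. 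The angular derivatives are recovered by elliptic Hodge estimates on $S_{u,v}$.

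The main obstacle is the borderline singular terms: the coefficients $\Omega\chibh$, $\Omega\tr\chib+2/(-u)$ and $\zeta$ of $\ot g$ are only $O(\epsilon)/(-u)$ and depend on angles, so terms like $\ot{\Omega\chibh}\cdot\wt\Psi$ are scale-critical. To handle them, we would absorb them by exploiting the $\epsilon$-smallness, taking $\epsilon\ll\delta/N^2$. This turns these terms into a factor $(-u)^{-C\epsilon}$ in a Grönwall argument along $e_3$, which is dominated by the $(-u)^{\delta}$ margin. The nonsmall terms, such as $\Omega e_3\phi\sim (-u)^{-1}$ multiplying $\wt{\partial_v\phi}$ and $\wt{\nabla\phi}$, are treated by the triangular structure. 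Error terms coupling $e_4$-quantities into $e_3$-equations carry an extra $v/(-u)$ and are therefore small in the region.

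Closing the bootstrap requires $\epsilon_1$ small depending on $A$, because the $H_{-1}$ data enter the $e_3$ estimates linearly. A standard local existence and continuation argument then extends the solution to the whole region $\{v\le\epsilon_1(-u)^{1+\k/N}\}$. If the $\delta$-losses prevent reaching exponent exactly $\k/N$, we would rename $N$ (for example, use $8N$ iterates) so that the final exponent is $\max\{\delta,\k/N\}$.
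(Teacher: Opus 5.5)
There is a genuine gap: your mechanisms for closing the $e_3$-directed estimates do not control the diagonal linear terms with non-small coefficients in the difference system. The overall architecture matches the paper's: the $4N$-th iterate as reference, scale-invariant norms, transport estimates, energy estimates for the four Hodge pairs, and elliptic recovery.

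You rely on three devices:
\begin{itemize}
\item Gr\"onwall with $O(\epsilon)/(-u)$ coefficients, costing $(-u)^{-C\epsilon}$.
\item A triangular ordering for the non-small couplings.
\item Scaling exponents ``exactly as in Proposition~\ref{prop:sec-3-first-estimate-chih-D4phi}'' for the remaining linear terms.
\end{itemize}
The difference equations contain diagonal terms whose coefficients are $O(1)/(-u)$ and exceed the scaling, and none of these devices handles them.

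\begin{itemize}
\item Linearizing the Raychaudhuri equation \eqref{eq:chib} about $\ot{\Omega\tr\chib}\approx-2/(-u)$ and $\ot{\Omega\omegab}\approx\k/(4(-u))$ gives
\[
\Omega\nabla_3\wt{\Omega\tr\chib}=\frac{2-\k+O(\epsilon)}{-u}\,\wt{\Omega\tr\chib}+\dots
\]
Its homogeneous solutions grow like $(-u)^{-(1-\k)}$ relative to the scale-invariant size.
\item In the top-order $\nabla_3$ equations for $\Omega^2\nabla(\Omega^{-1}\tr\chib)$ and $\Omega\dv\eta-\Omega K$, the coefficient $\ot{\Omega\omegab}$ acts diagonally with size $\k/(-u)$. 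Since $\k\gg\delta$, Gr\"onwall costs $(-u)^{-C\k}$, which no $(-u)^{\delta}$ margin absorbs.
\item \eqref{eq:dv-chic} contains $\chib\cdot\nabla\chic$. Hence $\nabla^{5}\wt{\Omega\chic}$, and likewise $\nabla^5\wt{\Omega e_4\phi}$, enters the energy identity with an $O(1)/(-u)$ coefficient and no triangular ordering.
\end{itemize}

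The triangular order is what the paper uses in Section~3 to build the iterates: $\chib$ is recovered from the metric, and $\etab$ from $\zeta$ and $\log\Omega$. This costs derivatives at every step, hence the count $N_0-4i$. The exact problem cannot afford that loss, so $\wt{\Omega\tr\chib}$ and $\wt\etab$ must be estimated from their $\nabla_3$ equations, where the growth is present.

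Your remark about weights $(-u)^p$ with a favorable sign points the right way, but it is inconsistent with a bootstrap of fixed size $(v/(-u))^{3/4}$. A weight $(-u)^{2p}$ only yields $\lnm\wt f(u)\rnm\lesssim(-u)^{-p}(\cdots)$. You would recover your bound only if the sources, including the bootstrapped quantities, were already $O((v/(-u))^{p+3/4})$.

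The missing idea is built into the paper's norms. You quote \eqref{eq:intro-target-energy}, which is stated in the $N$-weighted norms, but you never use what that weight does. $\LS(u,v;a)$, $\LH$, $\LHb$ and $\LR$ carry the factor $((-u)/v)^{2a}$ with $a=N$. Differentiating it produces coercive terms $-\frac{2N}{-u}\lnm\cdot\rnm^2$, and $-\frac{2N}{v}\lnm\cdot\rnm^2$ in the energy identity. These absorb every linear term with coefficient $C/(-u)$, provided $C\ll N$. This is the step ``$C(\lambda_1,\mu_1,n)\ll N$'' in Lemma~\ref{Transport_Estimate_LS}. In Lemma~\ref{Energy_Estimate_LH_LHb}, the retained term $N\int_{-1}^u(-u')^{-1}\lnm\nabla^n\wt{\Psi}_1\rnm^2_{\LH(u',v;N)}$ absorbs $\int\int(-u')^{-2}\lnm\nabla^{n+1}(\wt{\Omega\chic},\wt{\Omega e_4\phi})\rnm^2_{\LS}$.

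The bootstrap $(v/(-u))^{3/2}$ is imposed in these weighted norms. It therefore asserts that the differences are $O((v/(-u))^{N+3/4})$. This is affordable only because the reference is the $4N$-th iterate, whose residual satisfies $v^{4N}/(-u)^{4N+\delta+\k}\ll v^{2N}/(-u)^{2N-\k}$. That comparison ties the iteration depth to the weight exponent and yields the region $v\le\epsilon_1(-u)^{1+O(\k)/N}$. Your account of why $4N$ iterates are needed misses this link, and without the $N$-weight your bootstrap does not close.
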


Choosing $\epsilon\ll \frac{\delta}{N^2}\ll \frac{\k}{N^3}$ and setting $\left(\ot{g},\ot{\phi}\right)=\left(g^{(4N)},\phi^{(4N)}\right)$, we find that
\begin{equation*}
    \sum_{j\leq 6}\lnm (-u)^{j+2}\nabla^k\left(\Rc\left(\ot{g}\right)_{\mu\nu}-\partial_\mu \ot\phi\partial_\nu\ot{\phi}\right)\rnm_{L^2(S_{u,v})}\lesssim \frac{v^{4N}}{(-u)^{4N+\delta+\k}}\ll \frac{v^{2N}}{(-u)^{2N-\k}}.
\end{equation*}
This holds for $v\ll (-u)^{1+\frac{\delta+2\k}{N}}\leq (-u)^{1+\delta}$.
Henceforth, we use ${\ot{\cdot}}$ to denote quantities associated with the approximating solution $\left(\ot{g},\ot{\phi}\right)$. The preceding analysis provides the following bounds for the relevant $\ot{\cdot}$ quantities.
\begin{lemma}
    For $\frac{v}{(-u)^{1+\delta}}\leq\epsilon_1\ll 1$ and $j\leq N/2$, the following bounds hold:
    \begin{equation*}
        \lnm (-u)^{j-1}\nabla^j\left(\ot{g}-g|_{v=0},\ot{b}-b|_{v=0},{\log\left(\ot{\Omega}/\Omega|_{v=0}\right)}\right)\rnm_{L^2(S_{u,v})}\lesssim \epsilon_1\ll 1,
    \end{equation*}
    \begin{equation*}
        \lnm (-u)^j\nabla^j\left(\ot{\Omega\chih},\ot{\Omega\omega},\ot{\Omega e_4\phi}\right)\rnm_{L^2(S_{u,v})}\lesssim \frac{1}{(-u)^\delta},
    \end{equation*}
    and, for $\psi=\Omega\tr\chi,\Omega\omegab,\Omega e_3\phi,\nabla\phi,\Omega\chib,\eta,\etab$ and the Gauss curvature $K$,
    \begin{equation*}
        \lnm (-u)^j\nabla^j\left(\ot\psi-\psi|_{v=0}\right),(-u)^{j+1}\nabla^j\left(\ot{K}-K|_{v=0}\right)\rnm_{L^2(S_{u,v})}\lesssim \epsilon_1\ll 1.
    \end{equation*}
\end{lemma}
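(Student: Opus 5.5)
The plan is to write each approximate quantity as a telescoping sum over the iteration steps,
\[
\ot{\Phi}-\Phi^{(0)}=\sum_{i=0}^{4N-1}\left(\Phi^{(i+1)}-\Phi^{(i)}\right),
\]
and to bound every summand with Theorem \ref{thm:sec-3}. The zeroth iterate is exactly the data at $v=0$: $g^{(0)}=g|_{v=0}$, $b^{(0)}=b|_{v=0}$, $\Omega^{(0)}=\Omega|_{v=0}$, $\chib^{(0)}=\chib|_{v=0}$, and so on. So, up to the issue of which connection is used, each left-hand side in the lemma is a sum of the differences already estimated in \eqref{Construction_Estimate_psi_e4}--\eqref{Construction_Estimate_omegab}.

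For the metric components $\gamma=g,b,\log\Omega$, estimate \eqref{Construction_Estimate_metric} gives
\[
(-u)^{j}\bigl|\nabla^j(\gamma^{(i+1)}-\gamma^{(i)})\bigr|\lesssim \frac{v^{i+1}}{(-u)^{i+1+\delta}}.
\]
Since $v/(-u)^{1+\delta}\le\epsilon_1$, the $i$-th term is at most $(-u)^{i\delta}\epsilon_1^{i+1}(-u)^{-1}$. Multiplying by the extra weight $(-u)^{-1}$ (the lemma has $(-u)^{j-1}$ in place of $(-u)^j$), I read the normalization as $\gamma$ relative to $(-u)^2$ for $g$ and $(-u)^{-1}$ for $b$, matching the self-similar scaling. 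The sum is then geometric in $\epsilon_1$ and is $\lesssim\epsilon_1$. The same summation applies to $\psi=\Omega\tr\chi,\Omega\omegab,\Omega e_3\phi,\nabla\phi,\Omega\chib,\eta,\etab$, using \eqref{Construction_Estimate_psi_e3} with gain $v^{i+1}/(-u)^{i+2+\delta}$, so that after the weight $(-u)^j$ one again gets $\lesssim \epsilon_1$. Here $\eta,\etab$ are recovered from $\zeta$ and $\nabla\log\Omega$, and $\Omega\omegab$ is handled through \eqref{Construction_Estimate_omegab} together with \eqref{eq:sec-3-3-omegab-est-2}.

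For the outgoing quantities $\ot{\Omega\chih},\ot{\Omega\omega},\ot{\Omega e_4\phi}$ there is no zeroth-order value to subtract. Instead I combine the first-step bound of Proposition \ref{prop:sec-3-first-estimate-chih-D4phi} and its corollary for $\Omega\omega$, which give $(-u)^{-\delta}$ after weighting, with the differences \eqref{Construction_Estimate_psi_e4}. These differences are of size $v^i/(-u)^{1+i+\delta}$; after the weight $(-u)^{j}$ and multiplication by $(-u)$, each is $\lesssim \epsilon_1^i(-u)^{-\delta}$, so their sum over $i\ge1$ is also $\lesssim(-u)^{-\delta}$. The Gauss curvature $\ot{K}-K|_{v=0}$ is handled through the Gauss equation, which expresses $K$ in terms of $\tr\chi\tr\chib$, $\chih\cdot\chibh$ and $\Rc$-components. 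Alternatively one can differentiate the metric difference twice: $K$ involves two angular derivatives of $g$, and the weight $(-u)^{j+1}$ compensates for the $(-u)^{-2}$ scaling of $K$. This step costs two derivatives, which is harmless since $j\le N/2$ and $N_0=N^2$ is large.

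The main obstacle is that Theorem \ref{thm:sec-3} measures step $i$ with the connection $\nabla^{(i)}$ and the derivative count $N_0-4i$, while the lemma uses a single connection. I will first use the metric difference bounds, already summed as above, to show that all the $g^{(i)}$ are uniformly equivalent and that $\nabla^{(i)}-\nabla^{(0)}$ is a Christoffel difference of size $\lesssim\epsilon_1(-u)^{-1}$ in weighted norm. Converting $(\nabla^{(i)})^j$ to $\ot{\nabla}^j$ is then a lower-order perturbation that is absorbed inductively in $j$.

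For the derivative count, with $i\le 4N$ we have $N_0-4i-3\ge N^2-16N-3\ge N/2$ once $N$ is large, so $j\le N/2$ is always admissible. Finally, the $L^\infty$ bounds pass to $L^2(S_{u,v})$ at the cost of the area factor $(-u)$. This factor is matched by the paper's normalization of the $L^2(S_{u,v})$ norms, or one can simply use $\mathrm{Area}(S_{u,v})\lesssim(-u)^2$ together with the weights. I expect bookkeeping of these weights to be the only delicate point.
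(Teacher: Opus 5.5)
Your proposal is correct and is essentially the paper's argument: telescope over the iterates, let the first iterate (Propositions~\ref{prop:sec-3-first-estimate-chih-D4phi} and~\ref{Estimate_for_first_step}, together with the corollary for $\Omega\omega$) carry the main size, use Theorem~\ref{thm:sec-3} to see that each later difference contributes only a further power of $v/(-u)^{1+\delta}$, and recover $\ot{K}-K|_{v=0}$ from two angular derivatives of the metric difference. There are two small corrections. First, the Gauss-equation alternative for $K$ does not work as stated: the Gauss equation also contains the non-Ricci component $g^{AB}R_{3A4B}$ (equivalently $\rho$), which the iteration does not control, so use the metric route. Second, the $i$-th metric term is $\epsilon_1^{i+1}(-u)^{i\delta}$ in the weighted $L^\infty$ norm, without the extra factor $(-u)^{-1}$; the weight $(-u)^{j-1}$ in the lemma and the area factor $(-u)$ from passing to $L^2(S_{u,v})$ then cancel, and no reinterpretation of the normalization is needed.
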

\begin{proof}
    The estimates for $\ot\psi$ are dominated by the first iterate $(\cdot)^{(1)}$, since the subsequent differences $(\cdot)^{(i+1)}-(\cdot)^{(i)}$ contribute only higher powers of $\frac{v}{(-u)^{1+\delta}}$. The bound for $K$ follows from the estimates for $g$, because it can be treated in terms of second-order derivatives of the metric.
\end{proof}

For all Ricci coefficients, scalar field derivatives, and Weyl curvature components, we denote the difference between a quantity and its approximating counterpart by $\wt{\Phi}_{A_1\cdots A_r}=\Phi_{A_1\cdots A_r}-\ot{\Phi}_{A_1\cdots A_r}$. The residual terms $\wt{\Rc}_{\mu\nu}=\ot{\Rc}_{\mu\nu}-\partial_\mu\ot{\phi}\partial_\nu\ot{\phi}$ and $\wt{\square\phi}=\square_{\ot{g}}\ot{\phi}$ measure the failure of the approximating solution to satisfy the {Einstein--scalar field} equations.
To streamline the analysis, we introduce the following scale-invariant weighted norms.
\begin{definition}
    We define signatures for metric, connection, and curvature components. For a metric component $\gamma\in\{g_{AB},b^A,\log\Omega\}$, we define $s(\Omega^n\gamma)=1$ for any $n\in\mathbb{Z}$. We also define $s(\Omega^n\psi)=0$ for a connection component and $s(\Omega^n\Psi)=-1$ for a curvature component. The signature of a difference is defined by $s(\wt{\cdot})=s(\cdot)$. Finally, we define $s(D\cdot)=s(\cdot)-1$ for $D=\nabla_A,D_3,D_4$.
    We define the spherical norm 
    \begin{equation*}
        \lnm \Phi\rnm_{\LS(u,v;a)}=\left(\int_{S_{u,v}}(-u)^{-2s(\Phi)}\left(\frac{-u}{v}\right)^{2a}\left|\Phi\right|^2dV_g\right)^{\frac{1}{2}},
    \end{equation*}
    the energy norms on $H$ and $\Hb$,
    \begin{equation*}
        \lnm \Phi\rnm_{\LH(u,v;a)}=\left(\int_{0}^v\int_{S_{u,v^\prime}}(-u)^{-1-2s(\Phi)}\left(\frac{-u}{v^\prime}\right)^{2a}\left|\Phi\right|^2dV_g dv^\prime\right)^{\frac{1}{2}},
    \end{equation*}
    \begin{equation*}
        \lnm \Phi\rnm_{\LHb(u,v;a)}=\left(\int_{-1}^u\int_{S_{u^\prime,v}}(-u^\prime)^{-1-2s(\Phi)}\left(\frac{-u^\prime}{v}\right)^{2a}\left|\Phi\right|^2dV_g du^\prime\right)^{\frac{1}{2}},
    \end{equation*}
    and the energy on $(-1,u)\times(0,v)\times \SS$
    \begin{equation*}
        \lnm \Phi\rnm_{\LR(u,v;a)}=\left(\int_0^v\int_{-1}^u\int_{S_{u^\prime,v^\prime}}\frac{1}{(-u^\prime) v^\prime}(-u^\prime)^{-2s(\Phi)}\left(\frac{-u^\prime}{v^\prime}\right)^{2a}\left|\Phi\right|^2dV_g du^\prime dv^\prime\right)^{\frac{1}{2}}.
    \end{equation*}
\end{definition}

\noindent\textbf{Bootstrap Assumption.} With the above norms in hand, we formulate the following energy bootstrap assumptions for $i\leq 4, j\leq 5$:
\begin{equation}\label{eq:Bootstrap_Assumption}
    \begin{aligned}
        &\lnm \nabla^j\left(\wt{\chic},\wt\eta,\wt{\etab},\wt{\nabla\phi},\wt{e_4\phi}\right),\nabla^i\wt{K}\rnm^2_{\LH(u,v;N)}+\lnm \nabla^j\left(\wt{\eta},\wt\etab,\wt{\chibc},\wt{e_3\phi},\wt{\nabla\phi}\right),\nabla^i\wt{K}\rnm^2_{\LHb(u,v;N)}\\
        &\quad+\lnm \nabla^{j}\left(\wt{\tr\chi},\wt{\tr\chib}\right)\rnm^2_{\LS(u,v;N)}+\lnm \nabla^j\left(\wt{\eta},\wt\etab,\wt{\chibc},\wt{e_3\phi},\wt{\nabla\phi}\right),\nabla^i\wt{K}\rnm^2_{\LR(u,v;N)} \leq B\left(\frac{v}{-u}\right)^{3/2}.
    \end{aligned}
\end{equation}
We remark that the particular bound $(-v/u)^{3/2}$ is not important. The proof also closes with $(-v/u)^\lambda$ for any $0\leq \lambda\ll N$.
The remainder of this section is devoted to improving these bootstrap assumptions. By choosing $\frac{v}{(-u)^{1+\k/N}}\leq \epsilon_1$ sufficiently small, we show that the bootstrap {constant} can be improved to $ B \lesssim \frac{v^{1/2}}{(-u)^{1/2}}\ll 1$.

\subsection{Calculation of differences and useful estimates}
We first introduce several useful lemmas {for} the norm $L^2(S_{u,v})$ and then generalize them to $\LS(u,v;a)$.
To control the $L^\infty$ norms of the quantities of interest, we use the Sobolev inequalities. 
\begin{proposition}[Sobolev Inequalities]
    Consider a tensor field $f_{i_1\cdots i_r}$ on $S_{u,v}$. Suppose that 
    $$\sup_{0<v}\left|(-u)^{-2}g(u,v)-{g^{\SS}}\right|_{g^{\SS}}\ll 1,$$
    where $g^{\SS}$ denotes the standard metric on the unit sphere. Then the following Sobolev inequalities hold:
    \begin{equation}\label{lemma:Sobolev_ineq}
        \begin{aligned}
            \lnm (-u)f\rnm_{L^\infty(S_{u,v})}
            \lesssim & \sum_{i\leq 2}\lnm (-u)^i\nabla^i f\rnm_{L^2(S_{u,v})}.
        \end{aligned}
    \end{equation} 
\end{proposition}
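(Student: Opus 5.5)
The plan is to reduce the estimate to the standard Sobolev embedding on the unit round sphere and then rescale. Write $g=(-u)^2\gamma$, where $\gamma:=(-u)^{-2}g(u,v)$ is a Riemannian metric on $S_{u,v}\cong\SS$. By hypothesis $|\gamma-g^{\SS}|_{g^{\SS}}\ll1$. This makes $\gamma$ uniformly equivalent to $g^{\SS}$, so $L^\infty$ norms, volume forms and pointwise tensor norms for $\gamma$ and $g^{\SS}$ agree up to universal constants.

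On $(\SS,\gamma)$ I will prove $\|f\|_{L^\infty}\lesssim\sum_{i\le2}\|\nabla_\gamma^i f\|_{L^2(\gamma)}$. The first step is the scalar case $|f|$. I apply Kato's inequality $|\nabla|f||\le|\nabla f|$, and use $|\nabla^2_\gamma f|$ together with a $W^{1,2}\cap W^{2,2}$ embedding. A clean route is the two-step estimate
\[
\|f\|_{L^4}\lesssim\|f\|_{L^2}+\|\nabla f\|_{L^2},\qquad \|f\|_{L^\infty}\lesssim\|f\|_{L^4}+\|\nabla f\|_{L^4}.
\]
The second inequality is the Morrey-type bound in dimension two. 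The first is Gagliardo--Nirenberg, applied to $|f|$ and $|\nabla f|$ via Kato. Each inequality is proved in a finite atlas of charts of $\SS$. The Christoffel symbols of $\gamma$ are then controlled by the smallness of $\gamma-g^{\SS}$, so the constants are uniform.

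The main obstacle is exactly here. The hypothesis as stated controls only $\gamma-g^{\SS}$ in $C^0$, not its first derivatives. I would therefore avoid Christoffel symbols entirely by working with the isoperimetric constant instead. The isoperimetric constant of $\gamma$ is comparable to that of $g^{\SS}$ because lengths and areas are $C^0$-equivalent. This gives the $L^1$ Sobolev inequality
\[
\|h\|_{L^2}\lesssim\|h\|_{L^1}+\|\nabla h\|_{L^1}
\]
for scalar $h$. Applying it to $h=|f|^{p}$ and iterating, as in Christodoulou--Klainerman, yields the $L^4$ and $L^\infty$ bounds for tensors through Kato's inequality. This is the only point needing care, and it relies only on the $C^0$ closeness assumed in the statement.

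Finally I rescale. For $g=(-u)^2\gamma$ we have $dV_g=(-u)^2dV_\gamma$, the Levi-Civita connections coincide, and the norms scale as $|\nabla^i f|_g=(-u)^{-i-r}|\nabla^i f|_\gamma$ for a covariant $r$-tensor. Hence
\[
\|(-u)^i\nabla^i f\|_{L^2(S_{u,v},g)}=(-u)^{1-r}\|\nabla^i f\|_{L^2(\gamma)},\qquad \|f\|_{L^\infty(g)}=(-u)^{-r}\|f\|_{L^\infty(\gamma)}.
\]
Inserting these into the $\gamma$-inequality and multiplying by $(-u)^{1-r}$ gives
\[
\|(-u)f\|_{L^\infty(S_{u,v})}\lesssim\sum_{i\le2}\|(-u)^i\nabla^if\|_{L^2(S_{u,v})},
\]
with a constant independent of $u$ and $v$.
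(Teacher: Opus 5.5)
Your proof is correct and has the same skeleton as the paper's. You rescale to unit size; the paper sets $h=(-u)^{r+1}f$ and notes $|\nabla^i h|_{g^{\SS}}\sim(-u)^{i+1}|\nabla^i f|_g$. You then apply the two-step interpolation $L^2\to L^4\to L^\infty$ on the unit sphere. The paper writes this step multiplicatively as $\|h\|_{L^\infty}\lesssim\|h\|_{L^4}^{1/2}\|h\|_{W^{1,4}}^{1/2}\lesssim\|h\|_{L^2}^{1/4}\|h\|_{H^1}^{1/2}\|h\|_{H^2}^{1/4}$, which implies your additive version.

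The genuine difference is how the constants are shown to be uniform. The paper just invokes ``the standard interpolation estimate on $\SS$''. However, the derivatives involved are those of $g$, equivalently of $\gamma=(-u)^{-2}g$, not of the round metric. Read literally on the round sphere, the estimate would require comparing $\nabla_\gamma$ with $\nabla_{g^{\SS}}$, and hence control of $\gamma-g^{\SS}$ up to two derivatives. The $C^0$ hypothesis in the statement does not provide this; in the paper's setting it comes from the higher-order smallness of the metric established elsewhere.

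Your isoperimetric-constant argument closes exactly this point. It uses that the isoperimetric constant is comparable under $C^0$-equivalence, then the $L^1$ Sobolev inequality, Kato's inequality, and iteration with $|f|^p$ as in Christodoulou--Klainerman. This proves the proposition under precisely the stated assumption, at the cost of a longer argument; the paper's version is shorter but relies on context.

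As a side remark, your scaling factor $(-u)^{1-r}$ is the correct one for covariant $f$. The paper's $(-u)^{r+1}$ fits contravariant tensors, which is harmless since the inequality is homogeneous in $f$.
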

\begin{proof}
    Setting $h_{i_1\cdots i_r}(u,v)=(-u)^{r+1}f_{i_1\cdots i_r}(u,v)$, the assumption ensures that $$\left|\nabla^ih\right|_{g^\SS}\sim (-u)^{i+1}\left|\nabla^if\right|_{g}.$$ The desired Sobolev inequalities then follow from the standard interpolation estimate on $\SS$:
    \begin{equation*}
        \lnm h\rnm_{L^\infty(\SS)}\lesssim\lnm h\rnm_{L^4(\SS)}^{1/2}\lnm  h\rnm_{W^{1,4}(\SS)}^{1/2}\lesssim \lnm h\rnm_{L^2(\SS)}^{1/4}\lnm h\rnm^{1/2}_{H^1(\SS)}\lnm  h\rnm_{H^2(\SS)}^{1/4}.
    \end{equation*} 
\end{proof}
The following proposition provides estimates for products of tensors in scale-invariant norms, which are {frequently employed} when handling lower-order nonlinear terms. The proof relies on applying the Sobolev inequalities above to control the $L^\infty$ factors.
\begin{proposition}
    Consider tensor fields $f_i$ on $S_{u,v}$. Suppose that
    \begin{equation}\label{cond:prop-sobolev}
        \sup_{0<v}\left|(-u)^{-2}g(u,v)-{g^{\SS}}\right|_{g^{\SS}}\ll 1,
    \end{equation}
    where $g^{\SS}$ denotes the standard metric on the unit sphere. Let $n\geq 3$. Then the following estimates hold:
\begin{equation}\label{eq:prop-sobolev-1}
    \begin{aligned}
        &\sum_{i\leq n}\lnm (-u)^{l-1+i}\nabla^i(f_{1} \cdots f_l)\rnm_{L^2(S_{u,v})}\\
        &\qquad\lesssim \sum_{s=1}^l\prod_{j\neq s}\left(\sum_{i_j\leq \lfloor(n+1)/2\rfloor}\lnm (-u)^{i_j}\nabla^{i_j}f_{j}\rnm_{L^2(S_{u,v})}\right)\cdot \sum_{i_s\leq n}\lnm(-u)^{i_s}\nabla^{i_s}f_s\rnm_{L^2(H_u^v)},
    \end{aligned}
\end{equation}
and
\begin{equation}\label{eq:prop-sobolev-2}
    \begin{aligned}
        \sum_{i\leq n}&\lnm (-u)^{l-1+i}\nabla^i(f_{1} \cdots f_l)\rnm_{L^2(H_u^v)}\\
        &\qquad\lesssim  \sum_{s=1}^l\prod_{j\neq s}\left(\sum_{i_j\leq \lfloor(n+1)/2\rfloor}\sup_v\lnm (-u)^{i_j}\nabla^{i_j}f_{j}\rnm_{L^2(S_{u,v})}\right)\cdot \sum_{i_s\leq n}\lnm(-u)^{i_s}\nabla^{i_s}f_s\rnm_{L^2(H_u^v)},
    \end{aligned}
\end{equation}
\begin{equation}\label{eq:prop-sobolev-3}
    \begin{aligned}
        \sum_{i\leq n}&\lnm (-u)^{l-1+i}\nabla^i(f_{1} \cdots f_l)\rnm_{L^2(\Hb_v^u)}\\
        &\qquad\lesssim  \sum_{s=1}^l\prod_{j\neq s}\left(\sum_{i_j\leq \lfloor(n+1)/2\rfloor}\sup_u\lnm (-u)^{i_j}\nabla^{i_j}f_{j}\rnm_{L^2(S_{u,v})}\right)\cdot \sum_{i_s\leq n}\lnm(-u)^{i_s}\nabla^{i_s}f_s\rnm_{L^2(\Hb_v^u)}.
    \end{aligned}
\end{equation}
\end{proposition}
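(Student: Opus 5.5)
The plan is to rescale to the unit sphere, apply the standard Sobolev and Gagliardo--Nirenberg estimates there, and then undo the scaling. For a tensor $f$ of rank $r$ on $S_{u,v}$ set $h=(-u)^{r+1}f$. As in the proof of the Sobolev inequality \eqref{lemma:Sobolev_ineq}, hypothesis \eqref{cond:prop-sobolev} gives two comparisons. First, $|\nabla^i h|_{g^{\SS}}\sim(-u)^{i+1}|\nabla^i f|_g$. Second, $dV_g\sim(-u)^2dV_{g^{\SS}}$, and the Christoffel symbols of $(-u)^{-2}g$ are close to those of $g^{\SS}$. Consequently $\|(-u)^i\nabla^i f\|_{L^2(S_{u,v})}\sim\|\nabla^i_{\SS}h\|_{L^2(\SS)}$, up to the harmless discrepancy between $\nabla$ and $\nabla^{\SS}$. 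The weights in \eqref{eq:prop-sobolev-1} are then exactly what this scaling produces. Each factor contributes one power of $(-u)$ through the $L^\infty$ bound $(-u)\|f\|_{L^\infty}\lesssim\sum_{i\le2}\|(-u)^i\nabla^if\|_{L^2}$. This accounts for the weight $(-u)^{l-1}$ on an $l$-fold product.

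For \eqref{eq:prop-sobolev-1}, expand $\nabla^i(f_1\cdots f_l)$ by the Leibniz rule into terms $\nabla^{i_1}f_1\cdots\nabla^{i_l}f_l$ with $\sum i_j=i\le n$. In each term at most one index exceeds $\lfloor n/2\rfloor$; call that factor $f_s$. If some index is large, put $f_s$ in $L^2$ and every other factor in $L^\infty$. By \eqref{lemma:Sobolev_ineq}, each $L^\infty$ factor costs at most $i_j+2\le\lfloor n/2\rfloor+2$ derivatives.

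This is the delicate point: we need $i_j+2\le\lfloor(n+1)/2\rfloor$, which fails for the top-order splits. There are two fixes. One is to use the sharper interpolation $\|h\|_{L^\infty}\lesssim\|h\|_{L^2}^{1/4}\|h\|_{H^1}^{1/2}\|h\|_{H^2}^{1/4}$. The other is to use $L^4\times L^4$ Hölder for the middle terms, where two indices are about $n/2$. Concretely, I would bound $\|\nabla^{a}F\,\nabla^{b}G\|_{L^2}\le\|\nabla^aF\|_{L^4}\|\nabla^bG\|_{L^4}$ and apply $\|k\|_{L^4}\lesssim\|k\|_{L^2}^{1/2}\|k\|_{H^1}^{1/2}$ on $\SS$. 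This costs at most one extra derivative on each factor, so the factor with the smaller index needs at most $\lfloor(n+1)/2\rfloor$ derivatives and the other at most $n$. The condition $n\ge3$ ensures these ranges are consistent. For $l\ge3$ the argument is the same, since at most one factor carries more than $n/2$ derivatives.

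For \eqref{eq:prop-sobolev-2} and \eqref{eq:prop-sobolev-3}, apply the sphere estimate pointwise in $v$ (respectively $u$). Take the supremum over $v$ (resp. $u$) of the low-order factors, and square-integrate the remaining high-order factor. On $H_u^v$ the weight $(-u)$ is constant. On $\Hb_v^u$ it varies with $u'$, but it is matched factor by factor, so no additional loss occurs.

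The main obstacle is purely the derivative bookkeeping at the split $i_j\approx n/2$, where naive $L^\infty$ placement overshoots $\lfloor(n+1)/2\rfloor$. The $L^4$-Hölder route handles it. Everything else reduces to the unit-sphere case through uniform equivalence of norms, which follows from \eqref{cond:prop-sobolev} (together with control of $\nabla g$ to compare connections).
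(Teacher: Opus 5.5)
Your strategy matches the paper's. You rescale to the unit sphere, expand by Leibniz, apply H\"older with $L^\infty$ or $L^4\times L^4$, and integrate the sphere bound in $v$ (resp.\ $u$) with the low-order factors taken in the supremum. Your weight bookkeeping is also correct.

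\textbf{The gap.} The derivative count you claim for the $L^4\times L^4$ step fails when $n$ is even. Let $n=2k$ and take the balanced split in which two factors carry $k$ derivatives each and the rest carry none. Here $\lfloor(n+1)/2\rfloor=k$. But $\|\nabla^kf_1\|_{L^4}\|\nabla^kf_2\|_{L^4}$ costs $k+1$ derivatives on \emph{both} factors, so neither fits the low-order norm on the right-hand side. Your sentence ``the factor with the smaller index needs at most $\lfloor(n+1)/2\rfloor$ derivatives'' is false exactly here.
\begin{itemize}
\item The failure is structural. With only $H^k$ control, $\nabla^kf$ lies merely in $L^2$, so the partner factor must go in $L^\infty$. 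No H\"older pairing with both exponents finite can close this case.
\item Your other suggested fix, the interpolation $\|h\|_{L^\infty}\lesssim\|h\|_{L^2}^{1/4}\|h\|_{H^1}^{1/2}\|h\|_{H^2}^{1/4}$, does not help if applied to the low-budget factor. It still needs $H^2$ of $\nabla^kf_j$.
\item For odd $n=2k-1$ your two-factor argument is complete. The only splits with both indices at least $k-1$ are $(k-1,k-1)$ and $(k-1,k)$, and $L^4\times L^4$ handles these since $k+1\le n$.
\end{itemize}

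\textbf{The missing idea.} Reverse the roles at the balanced split. Put one of the two $k$-derivative factors in $L^\infty$ and charge it to the \emph{top} budget:
\[
(-u)^{k+1}\|\nabla^kf_s\|_{L^\infty(S_{u,v})}\lesssim\sum_{m\le 2}\|(-u)^{k+m}\nabla^{k+m}f_s\|_{L^2(S_{u,v})},\qquad k+2\le 2k=n.
\]
Keep the other factor in $L^2$ with exactly $k$ derivatives.
\begin{itemize}
\item This is where $n\ge 3$ enters: in the even case it means $k\ge 2$.
\item It is exactly the extra term $\|f_p\|_{W^{n-2,\infty}}\prod_{m\neq p}\|f_m\|_{H^k}$ in the paper's even case. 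Your argument has nothing corresponding to it.
\item The same device is what makes your remark ``for $l\ge 3$ the argument is the same'' true in cases such as $n=3$ with split $(1,1,1)$. There the factor placed in $L^\infty$ costs $3>\lfloor(n+1)/2\rfloor$ derivatives, so it must be the one measured in the top norm.
\end{itemize}

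Alternatively, you can interpolate for $k\ge 2$:
\[
\|F\|_{H^{k+1}}\|G\|_{H^{k+1}}\lesssim\|F\|_{H^k}\|G\|_{H^{2k}}+\|F\|_{H^{2k}}\|G\|_{H^k}.
\]
Or you can quote the Moser estimate $\|FG\|_{H^n}\lesssim\|F\|_{L^\infty}\|G\|_{H^n}+\|F\|_{H^n}\|G\|_{L^\infty}$ together with $H^2\subset L^\infty$.

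With this case added, the rest of your proof goes through as in the paper, including the passage to $L^2(H_u^v)$ and $L^2(\Hb_v^u)$.
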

\begin{proof}
    We recall the corresponding inequalities for the $L^2(\SS)$ norm. The normalized inequalities then follow from condition \eqref{cond:prop-sobolev}.
    For $n=2k$ and $k\geq 2$, we have 
    \begin{equation*}
        \begin{aligned}
            &\|f_1\cdots f_l\|_{H^n(\SS)}
            \lesssim
            \sum_{p\leq l}\left(\| f_p\|_{W^{n-2,\infty}(\SS)}\prod_{m\neq p}\| f_m\|_{H^k(\SS)}\right.\\
            &\qquad\qquad\qquad\left.+ \|f_p\|_{H^n(\SS)}\prod_{m\neq p}\|f_m\|_{L^\infty(\SS)}+\|f_p\|_{W^{n-1,4}(\SS)}\prod_{m\neq p}\|f_m\|_{W^{1,4}(\SS)}\right)\\
            \lesssim &
            \sum_{p\leq l}\left(\| f_p\|_{H^n(\SS)}\prod_{m\neq p}\| f_m\|_{H^k(\SS)}+ \|f_p\|_{H^n(\SS)}\prod_{m\neq p}\|f_m\|_{H^2(\SS)}+\|f_p\|_{H^n(\SS)}\prod_{m\neq p}\|f_m\|_{H^2(\SS)}\right),
        \end{aligned}
    \end{equation*}
    and for $n=2k-1$, $k\geq 2$,
    \begin{equation*}
        \begin{aligned}
            &\|f_1\cdots f_l\|_{H^n(\SS)}
            \lesssim 
            \sum_{p\leq l} \left(\|f_p\|_{H^n(\SS)}\prod_{m\neq p}\|f_m\|_{L^\infty(\SS)}+ \|f_p\|_{W^{n-1,4}(\SS)}\prod_{m\neq p}\|f_m\|_{W^{k-1,4}(\SS)} \right)\\
            \lesssim &
            \sum_{p\leq l}\left(\|f_p\|_{H^n(\SS)}\prod_{m\neq p}\|f_m\|_{H^k(\SS)}\right).
        \end{aligned}
    \end{equation*}
    We thus obtain \eqref{eq:prop-sobolev-1}, and \eqref{eq:prop-sobolev-2}, \eqref{eq:prop-sobolev-3} follow after integration.
\end{proof}
To deal with $\nabla^{i_1}(\eta+\etab)^{i_2}\nabla^{i_3}(\chi,\chib)\nabla^{i_4}\psi$ arising from the commutation formulae, we state a corollary of the preceding proposition.
\begin{corollary}\label{key_corollary}
    Under the conditions of the preceding proposition, the following multilinear estimate holds:
    \begin{equation*}
        \begin{aligned}
            &\sum_{i\leq 3}\sum_{i_1+i_2+i_3+i_4= i}\lnm (-u)^i\nabla^{i_1}\psi_1\nabla^{i_2}\psi_2^{i_3}\nabla^{i_4}\psi_4\rnm^2_{L^2(S_{u,v})}\\
            \lesssim & \frac{1}{(-u)^2}\sum_{i_1,i_2,i_3,i_4\leq 3}\lnm (-u)^{i_1}\nabla^{i_1}\psi_1\rnm^2_{L^2(S_{u,v})}\lnm (-u)^{i_2}\nabla^{i_2}\psi_2\rnm^{2i_3}_{L^2(S_{u,v})}\lnm (-u)^{i_4}\nabla^{i_4}\psi_4\rnm^2_{L^2(S_{u,v})}.
        \end{aligned}
    \end{equation*}
\end{corollary}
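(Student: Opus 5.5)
The plan is to reduce Corollary \ref{key_corollary} to the product estimate \eqref{eq:prop-sobolev-1} with $n=3$, applied term by term. For fixed $i\leq 3$ and a fixed split $i_1+i_2+i_3+i_4=i$, the expression $\nabla^{i_1}\psi_1\,\nabla^{i_2}\psi_2^{\,i_3}\,\nabla^{i_4}\psi_4$ is a product of $l=2+i_3$ tensors. Here I interpret the middle factor as $i_3$ copies of $\nabla^{i_2}\psi_2$, as in the commutator terms $\nabla^{i_1}(\eta+\etab)^{i_2}\nabla^{i_3}(\chi,\chib)\nabla^{i_4}\psi$. Since $i\leq 3$, the number of factors and the total number of derivatives are bounded. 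This means only finitely many configurations occur, and the implicit constant may depend on them.

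For each configuration I estimate the $L^2(S_{u,v})$ norm by putting the factor carrying the largest number of derivatives in $L^2$ and all remaining factors in $L^\infty$. Because the total derivative count is at most $3$, every other factor carries at most one derivative. Each $L^\infty$ factor is controlled by the Sobolev inequality \eqref{lemma:Sobolev_ineq}:
\[
\lnm f\rnm_{L^\infty(S_{u,v})}\lesssim (-u)^{-1}\sum_{k\leq 2}\lnm (-u)^k\nabla^k f\rnm_{L^2(S_{u,v})}.
\]
Applied to $f=\nabla^{m}\psi_j$ with $m\leq 1$, this uses at most three derivatives of $\psi_j$, which stays within the right-hand side's range $i_j\leq 3$.

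The remaining work is bookkeeping of the powers of $(-u)$. The left side carries $(-u)^i$, which I distribute as $(-u)^{m}$ onto each factor differentiated $m$ times. Each $L^\infty$ factor then produces an extra $(-u)^{-1}$ from \eqref{lemma:Sobolev_ineq}. With $l$ factors, this gives a total prefactor $(-u)^{-(l-1)}$. One of these powers is the $(-u)^{-1}$ (squared: $(-u)^{-2}$) written in the statement. The remaining $(l-2)$ powers are absorbed when the scale-invariant normalization is matched to the convention of the statement, in the same way as the weight $(-u)^{l-1}$ appears in \eqref{eq:prop-sobolev-1}.

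After squaring and using $(\sum a_k)^2\lesssim\sum a_k^2$ over the finitely many terms, every term is bounded by
\[
\lnm(-u)^{a}\nabla^a\psi_1\rnm^2\,\lnm(-u)^b\nabla^b\psi_2\rnm^{2i_3}\,\lnm(-u)^c\nabla^c\psi_4\rnm^2
\]
with $a,b,c\leq 3$. This is the claimed form.

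I expect the main obstacle to be purely the consistency of the weights, since no analytic difficulty arises. The condition \eqref{cond:prop-sobolev} is what makes $|\nabla^k h|_{g^{\SS}}\sim(-u)^{k+1}|\nabla^k f|_g$ uniform, and I will invoke it at that point.
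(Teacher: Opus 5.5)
Your mechanism is the right one: put the highest-derivative factor in $L^2$ and the others in $L^\infty$ via \eqref{lemma:Sobolev_ineq}, which is what \eqref{eq:prop-sobolev-1} encodes and all the paper invokes. Your derivative count is also fine.

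The gap is the step you deferred. The $l-2$ leftover factors of $(-u)^{-1}$ cannot be ``absorbed by matching normalizations''. Since $0<-u\leq1$ they are large, and the only thing that can cancel them is a surplus of the left-hand weight $(-u)^i$ over the actual number of derivatives.

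Under your reading ($i_3$ copies of $\nabla^{i_2}\psi_2$) there are $l=i_3+2$ factors and $i_1+i_2i_3+i_4$ derivatives. The surplus is therefore $(-u)^{i_2+i_3-i_2i_3}$, which falls short of the required $(-u)^{i_3}$ whenever $i_2\geq1$ and $i_3\geq2$. In that case the inequality is simply false. On the round sphere of radius $r=-u$, take $\psi_1=\psi_4=r^{-1}$, $\psi_2=r^{-1}\cos\theta$ and $(i_1,i_2,i_3,i_4)=(0,1,2,0)$. All norms on the right are $O(1)$, so the right side is $\sim r^{-2}$. But $(-u)^3\psi_1(\nabla\psi_2)^2\psi_4$ has size $r^{-3}$ pointwise, so the left side is $\sim r^{-4}$.

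The estimate does hold in the schematic convention used for the commutator terms in \eqref{Sample_Energy_est_commute}. There $\nabla^{i_2}\psi_2^{i_3}$ stands for products $\nabla^{a_1}\psi_2\cdots\nabla^{a_{i_3}}\psi_2$ with $a_1+\cdots+a_{i_3}=i_2$, so each extra factor of $\psi_2$ is counted in $i$ like a derivative.

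Then the derivative count is $d=i-i_3\leq 3$, and $(-u)^i=(-u)^{d}(-u)^{i_3}$. Give $(-u)^m$ to each factor carrying $m$ derivatives. The surplus $(-u)^{i_3}$ then pairs with $i_3$ of the $i_3+1$ factors placed in $L^\infty$, each via $(-u)\lnm f\rnm_{L^\infty}\lesssim\sum_{k\leq2}\lnm(-u)^k\nabla^kf\rnm_{L^2}$. The last $L^\infty$ factor leaves exactly the $(-u)^{-1}$ of the statement (squared: $(-u)^{-2}$). Equivalently, this is \eqref{eq:prop-sobolev-1} with weight $(-u)^{l-1+d}=(-u)^{i+1}$.

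Since the $a_k$ may differ, finish with $\prod_k x_{a_k}\leq\bigl(\sum_{a\leq3}x_a\bigr)^{i_3}\lesssim\sum_{a\leq3}x_a^{i_3}$, where $x_a=\lnm(-u)^a\nabla^a\psi_2\rnm^2_{L^2(S_{u,v})}$. This is where the exponent $2i_3$, summed over $i_2\leq3$, on the right-hand side comes from.
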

The next lemma provides a systematic way to estimate differences of products, which is particularly useful for handling scale-critical quantities in the iteration scheme.
\begin{lemma}\label{Estimate_Lemma_Calculation_of_Differences}
    Let $n\geq 3$ be a positive integer, and consider tensors $\psi_1,\cdots,\psi_m $ and $\ot{\psi}_1,\cdots,\ot\psi_m$ with $\wt{\psi}_i=\psi_i-\ot{\psi}_i$.
    Then the following estimates hold:
    \begin{equation}\label{eq:prop-diff-sobolev-1}
        \begin{aligned}
            &\lnm {{\psi_{1}\cdots{\psi_{m-1}}}\wt\psi_m}\rnm_{H^n(\SS)}+
            \lnm \wt{{\psi_{1}\cdots{\psi_m}}}\rnm_{H^n(\SS)}\\
            &\qquad\lesssim \sum_{i=1}^{m}\left(\lnm \wt\psi_i\rnm_{H^n(\SS)}\cdot\prod_{j\neq i}\left(\lnm\wt{\psi}_i\rnm_{H^{\lfloor (n+1)/2\rfloor}(\SS)}+\lnm \ot\psi_j\rnm_{H^n(\SS)}\right)\right).
        \end{aligned}
    \end{equation}
    If we further assume
    \begin{equation*} 
        \sup_{0<v}\left|(-u)^{-2}g(u,v)-{g^{\SS}}\right|_{g^{\SS}}\ll 1,
    \end{equation*}
    the inequalities above can be rewritten in the $\LS$ norm as
    \begin{equation*}
        \begin{aligned}
            &\sum_{j\leq n}\lnm \nabla^j\left(\wt\psi_1\psi_2\cdots\psi_m\right)\rnm_{\LS(u,v;a)}+
            \sum_{j\leq n}\lnm \nabla^j\left(\wt{\psi_1\cdots\psi_m}\right)\rnm_{\LS(u,v;a)}\\
            &\qquad\lesssim  \sum_{l=1}^m\prod_{i\neq l}\left(\sum_{j_i\leq n}\lnm \nabla^{j_i}\ot\psi_i\rnm_{\LS(u,v;a_i)}+1\right)\cdot\sum_{j_l\leq n}\lnm \nabla^{j_l}\wt\psi_l\rnm_{\LS(u,v;a_l)}.
        \end{aligned}
    \end{equation*}
\end{lemma}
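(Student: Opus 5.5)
The plan is to treat the two terms on the left of \eqref{eq:prop-diff-sobolev-1} separately, reducing the second to the first through a telescoping identity. For the product of $m$ tensors we write
\[
\wt{\psi_1\cdots\psi_m}=\psi_1\cdots\psi_m-\ot\psi_1\cdots\ot\psi_m=\sum_{l=1}^m \ot\psi_1\cdots\ot\psi_{l-1}\,\wt\psi_l\,\psi_{l+1}\cdots\psi_m .
\]
Each summand is a product in which exactly one factor is a difference $\wt\psi_l$ and every other factor is either $\ot\psi_j$ or $\psi_j=\ot\psi_j+\wt\psi_j$. Expanding each $\psi_j$ in the same way turns every summand into a finite sum of monomials. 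Each monomial contains at least one $\wt\cdot$ factor, and each of its remaining factors is either an $\ot\psi_j$ or a $\wt\psi_j$. The first term $\psi_1\cdots\psi_{m-1}\wt\psi_m$ is already of this form once its factors $\psi_j$ are expanded. So both terms reduce to estimating monomials $f_1\cdots f_m$ in which at least one factor is a difference.

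For a single monomial we would apply the unweighted $H^n(\SS)$ product estimates from the proof of the preceding proposition, treating the cases $n=2k$ and $n=2k-1$ as done there. The key point is the bound
\[
\|f_1\cdots f_m\|_{H^n}\lesssim\sum_p\|f_p\|_{H^n}\prod_{q\neq p}\|f_q\|_{H^{\lfloor(n+1)/2\rfloor}},
\]
which holds for $n\ge3$ via $H^2\subset L^\infty$ and $H^{k}\subset W^{k-1,4}$ on the sphere. We distribute the $H^n$ norm onto one factor. If that factor is a difference $\wt\psi_i$, the other factors are bounded by $\|\wt\psi_j\|_{H^{\lfloor(n+1)/2\rfloor}}+\|\ot\psi_j\|_{H^n}$. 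If instead the top norm lands on an $\ot\psi_j$, then some other factor is a difference, and we bound it at low order by its $H^n$ norm, which is still controlled by $\|\wt\psi_i\|_{H^n}$. Summing over the monomials gives the stated inequality (reading the product there as over $\|\wt\psi_j\|_{H^{\lfloor (n+1)/2\rfloor}}+\|\ot\psi_j\|_{H^n}$).

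For the $\LS$ version we rescale as in the Sobolev proposition. We set $h=(-u)^{r+1}f$ for each tensor of rank $r$, then use the almost-round condition to compare $|\nabla^i h|_{g^{\SS}}$ with $(-u)^{i+1}|\nabla^i f|_g$ and $dV_g$ with $(-u)^2dV_{g^{\SS}}$. The signature weights $(-u)^{-s}$ are additive under products, and so are the factors $((-u)/v)^{a}$, provided the exponents satisfy $a=\sum a_i$. Under these conditions the rescaled unweighted inequality turns into the weighted one. A residual power of $(-u)$ appears, coming from the $L^\infty$ normalization $(-u)\|f\|_{L^\infty}\lesssim\sum\|(-u)^i\nabla^i f\|_{L^2}$; it is absorbed into the $\LS$ normalization in the same way as in Corollary~\ref{key_corollary}.

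The main obstacle is the bookkeeping in this last step. We must check that the weights on the right-hand side match: the signatures must add, the $a_i$ must sum to $a$, and the factors of $(-u)$ produced by the Sobolev embeddings must cancel exactly against the $\LS$ normalization. Any mismatch would leave a factor $(-u)^{\pm1}$ that spoils the scale invariance. We will handle this by working entirely with the rescaled tensors $h$ on $(\SS,g^{\SS})$, where the estimate is purely unweighted, and converting back only at the end.
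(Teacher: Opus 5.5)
Your argument for \eqref{eq:prop-diff-sobolev-1} is the paper's. You expand both expressions into monomials that contain at least one $\wt{\cdot}$ factor, then apply the multilinear bound \eqref{eq:prop-sobolev-1} with the top norm placed on one factor. Your explicit handling of the case where the top norm lands on some $\ot\psi_s$, where you upgrade a low-order difference factor to $H^n$, fills in what the paper's display skips. Reading the product with $\wt\psi_j$ in place of $\wt\psi_i$ is the right correction of the typo.

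There is a gap in the passage to the $\LS$ form, and the paper gives no argument for it either, so you have to supply it. Your step ``under these conditions the rescaled unweighted inequality turns into the weighted one'' does not go through as written. The unweighted bound carries the factors $\|\wt\psi_j\|_{H^{\lfloor(n+1)/2\rfloor}}$ for $j\neq i$. The weighted right-hand side contains only $\sum_{j_i\le n}\|\nabla^{j_i}\ot\psi_i\|_{\LS(u,v;a_i)}+1$. Replacing the low-order difference norms by $1$ requires an extra hypothesis, namely $\sum_{j\le\lfloor(n+1)/2\rfloor}\|\nabla^j\wt\psi_i\|_{\LS(u,v;a_i)}\lesssim 1$. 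Without it the weighted inequality is false: take $m=2$, $\ot\psi_1=\ot\psi_2=0$ and $\wt\psi_1=\wt\psi_2=\lambda f$; the left side scales like $\lambda^2$ and the right side like $\lambda$. This hypothesis is exactly the ``all $\wt{\cdot}$ quantities are small'' assumption under which the lemma is used in Lemmas~\ref{Transport_Estimate_LS} and \ref{Energy_Estimate_LH_LHb}, so you should state it.

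Second, do not take signatures to be additive under products. The convention that makes the weights close is $s(f_1\cdots f_m)=\sum_i s(f_i)-(m-1)$. It matches the factor $(-u)^{l-1}$ in \eqref{eq:prop-sobolev-1} and the structure equations $\Omega\nabla_3\psi\sim\psi\psi+\Psi$. With this convention, each $L^\infty$ bound $(-u)^{1-s(f)}\|f\|_{L^\infty}\lesssim\sum_{i\le 2}\|\nabla^i f\|_{\LS(u,v;0)}$ supplies exactly the needed power, and the weights cancel with no leftover factor. With plain additivity, a stray negative power of $(-u)$ per extra factor remains and is not ``absorbed''; in Corollary~\ref{key_corollary} it survives as an explicit prefactor. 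Your condition $a=\sum a_i$, chosen monomial by monomial, is fine; in the applications this means weight $N$ on the difference factor and $0$ on the others.
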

\begin{proof}
    {Both} $\wt{[\psi_1\cdots\psi_m]}$ and $\psi_1\cdots\psi_{m-1}\wt{\psi}_m$ {can be written} schematically as
    \[\sum_{k=1}^m\sum_{\sigma\in S_m}\wt\psi_{\sigma(1)}\cdots\wt\psi_{\sigma(k)}\ot{\psi}_{\sigma(k+1)}\cdots\ot\psi_{\sigma(m)}.\]
    For example, we estimate $\wt\psi_1\cdots\wt\psi_k\ot\psi_{k+1}\cdots\ot\psi_{m}$. Applying \eqref{eq:prop-sobolev-1}, we find 
    \begin{equation*}
        \begin{aligned}
            &\lnm \wt\psi_1\cdots\wt\psi_k\ot\psi_{k+1}\cdots\ot\psi_{m}\rnm_{H^n(\SS)}\lesssim \sum_{s=1}^k\lnm \wt\psi_s\rnm_{H^{n}(\SS)}\prod_{j\leq k, j\neq s}\lnm \ot\psi_j\rnm_{H^{\lfloor (n+1)/2\rfloor}(\SS)}\prod_{j\geq k+1}\lnm \wt\psi_j\rnm_{H^{\lfloor (n+1)/2\rfloor}(\SS)}\\
            &\qquad\qquad+\sum_{s=k+1}^m\lnm \ot\psi_s\rnm_{H^{n}(\SS)}\prod_{j\leq k}\lnm \ot\psi_j\rnm_{H^{\lfloor (n+1)/2\rfloor}(\SS)}\prod_{j\geq k+1,j\neq s}\lnm \wt\psi_j\rnm_{H^{\lfloor (n+1)/2\rfloor}(\SS)}\\
            &\qquad \lesssim \sum_{i=1}^{m}\left(\lnm \wt\psi_i\rnm_{H^n(\SS)}\cdot\prod_{j\neq i}\left(\lnm\wt{\psi}_i\rnm_{H^{\lfloor (n+1)/2\rfloor}(\SS)}+\lnm \ot\psi_j\rnm_{H^n(\SS)}\right)\right),
        \end{aligned}
    \end{equation*}
    which {proves} \eqref{eq:prop-diff-sobolev-1}.
\end{proof}

To streamline the subsequent analysis, we recast the transport and energy estimates in the weighted norm framework.
First consider the transport equations
\begin{equation}\label{Transport_Est_Sample_Eq}
    \begin{aligned}
        &\Omega \nabla_3\psi_1+\left(\lambda_1\Omega\tr\chib+\mu_1\Omega\omegab\right)\psi_1=\psi_{11}\psi_{12}+\Psi_{1},\\ 
        &\Omega \nabla_4\psi_2+\left(\lambda_2\Omega\tr\chi+\mu_2\Omega\omega\right)\psi_2=\psi_{21}\psi_{22}+\Psi_{2}.
    \end{aligned}
\end{equation}
We assume analogous equations for $\ot{\psi}_i,\ot{\Psi}_i$ and require all quantities to coincide with their corresponding $\ot{\cdot}$ quantities along $v=0$; that is, $\wt{\cdot}|_{v=0}=0$. 
\begin{lemma}[Transport Estimate]\label{Transport_Estimate_LS} 
    Consider equations \eqref{Transport_Est_Sample_Eq} and assume $\lambda_1+\mu_1+\lambda_2+\mu_2+n\ll N$, $n\geq 3$. Suppose {that, for $j\leq n+1$, we have}
    $$\sum_{j\leq n+1}\lnm (-u)^j\nabla^j\left(\ot{\psi}_1,\ot{\psi}_{1i}\right)\rnm_{L^2(S_{u,v})}+(-u)^\delta\lnm (-u)^j\nabla^j\left(\ot{\psi}_2,\ot{\psi}_{2i}\right)\rnm_{L^2(S_{u,v})}\lesssim 1,$$
    and
     $$\sum_{j\leq n}\lnm (-u)^j\nabla^j\left(\Omega\chib\right)\rnm_{L^2(S_{u,v})}+ (-u)^\delta\lnm (-u)^j\nabla^j\left(\Omega\chi\right)\rnm_{L^2(S_{u,v})}\lesssim 1.$$ 
    Let $g(u,v)$ and $g(u,0)$ be close to each other, in the sense that
    
    $$\sup_{0<v^\prime<v}\left|\sqrt{\det\left(g(u,v^\prime)g^{-1}(u,v)\right)}-1\right| \ll 1.
    $$
    Moreover, assume that all $\wt{\cdot}$ quantities are small, meaning that their norms of interest are much less than $1$. Then we have the estimates
    \begin{equation*}
        \begin{aligned}
            &\sum_{j\leq n}\left(\lnm \nabla^j\wt{\psi}_1\rnm_{\LS(u,v;N)}^2-\lnm \nabla^j\wt{\psi}_1\rnm_{\LS(-1,v;N)}^2+N\lnm \nabla^j\wt{\psi_1}\rnm^2_{\LHb(u,v;N)}\right) \\
            &\qquad\lesssim \sum_{i\leq n+1,j\leq n}\lnm \nabla^i\wt{b},\nabla^j\left(\wt{\psi}_{11},\wt{\psi}_{12},\wt{\Omega\chib},\wt{\Psi}_1\right)\rnm_{\LHb(u,v;N)}^2,\\
            &\sum_{j\leq n}\lnm \nabla^j\wt{\psi}_2 \rnm_{\LS(u,v;N)}^2\lesssim \sum_{j\leq n}\frac{v}{(-u)^{1+C(n)\delta}}\lnm \nabla^j\left(\wt{\psi}_{21},\wt{\psi}_{22},\wt{\Omega\chi}\right)\rnm_{\LH(u,v;N)}^2\\
        &\qquad\qquad\qquad\qquad+\frac{v}{-u}\sum_{j\leq n}\lnm \nabla^j\wt{\Psi}_1\rnm_{\LH(u,v;N)}^2.
        \end{aligned}
    \end{equation*}

\end{lemma}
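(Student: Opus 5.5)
We need to prove the Transport Estimate (Lemma \ref{Transport_Estimate_LS}). The plan is to subtract the approximate equations from the exact ones, derive a weighted energy identity for $|\nabla^j\wt\psi_1|^2$ along $e_3$ and for $|\nabla^j\wt\psi_2|^2$ along $e_4$, and then integrate against the weights $(-u)^{-2s}(-u/v)^{2N}$.

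\emph{Step 1: difference equations.} Subtracting the $\ot{\cdot}$ version of \eqref{Transport_Est_Sample_Eq} from the exact one gives
\[
\Omega\nabla_3\wt\psi_1+(\lambda_1\Omega\tr\chib+\mu_1\Omega\omegab)\wt\psi_1=\wt{\psi_{11}\psi_{12}}+\wt\Psi_1-(\wt{\Omega\nabla_3})\ot\psi_1-(\lambda_1\wt{\Omega\tr\chib}+\mu_1\wt{\Omega\omegab})\ot\psi_1 .
\]
The term $(\wt{\Omega\nabla_3})\ot\psi_1$ is the difference of frames, which is $\wt b\cdot\nabla\ot\psi_1$ plus connection differences $\wt{\Omega\chib}\cdot\ot\psi_1$. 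This is why $\nabla^{n+1}\wt b$ and $\wt{\Omega\chib}$ appear on the right-hand side. For $\psi_2$ there is no shift, since $e_4=\Omega^{-1}\partial_v$ in both metrics, so only $\wt{\Omega\chi}$ enters, through the volume form and the commutator.

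Commuting with $\nabla^j$ using \eqref{eq:commutators} produces terms $\nabla^{i_1}(\Omega\chib)\nabla^{i_2}\wt\psi_1$, and similarly with $\Omega\chi$ for $\psi_2$. The product terms are bounded in $\LS$ by Lemma \ref{Estimate_Lemma_Calculation_of_Differences} and Corollary \ref{key_corollary}, using the assumed $L^2$ bounds on the $\ot\cdot$ quantities with $n+1$ derivatives. Smallness of the $\wt\cdot$ quantities lets us discard products of two differences.

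\emph{Step 2: the $e_3$ estimate.} Use $\partial_u\int_{S_{u,v}}f\,dV_g=\int_{S_{u,v}}(\Omega e_3 f+\Omega\tr\chib f)\,dV_g$, which holds modulo the $\mathcal{L}_b$ divergence, and $\Omega\tr\chib\approx -2/(-u)$ together with $\Omega\omegab=O(1/(-u))$. The weight $(-u)^{2N}$ then contributes the term
\[
-\frac{2N-C(\lambda_1,\mu_1,n)}{-u}\,(-u)^{\cdots}|\nabla^j\wt\psi_1|^2 .
\]
Since $\lambda_1+\mu_1+n\ll N$, this term has a favorable sign of size $N$. Integrating in $u$ from $-1$ yields $\|\cdot\|^2_{\LS(u)}-\|\cdot\|^2_{\LS(-1)}+N\|\cdot\|^2_{\LHb}$ on the left-hand side. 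The source cross terms are handled with Cauchy--Schwarz, $2ab\le \frac N2 a^2+\frac2N b^2$, so the $\wt\psi_1$ part is absorbed into the $N\LHb$ term and the rest gives the $\LHb$ norms of the sources. The $\LHb$ weight $(-u)^{-1}$ matches the $du'/(-u')$ measure.

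\emph{Step 3: the $e_4$ estimate.} Here $\wt\psi_2=0$ at $v=0$. We integrate $\partial_v\|\nabla^j\wt\psi_2\|^2_{L^2(S_{u,v})}$ in $v$. The volume change is controlled by the $\det$ closeness hypothesis, and the linear coefficients are $|\Omega\tr\chi|,|\Omega\omega|\lesssim(-u)^{-1-\delta}$. Over $v'\le v\ll(-u)^{1+\delta}$ these give a Gronwall factor $\exp(Cv/(-u)^{1+C\delta})\lesssim1$.

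The weight $(v')^{-2N}$ grows as $v'\to0$, so we work with the weighted quantity and note that it is favorable: $\partial_v(v^{-2N}\cdot)$ carries $-2N/v$. Alternatively, use Cauchy--Schwarz directly:
\[
\|f(v)\|^2\le v\int_0^v\|\partial_vf\|^2dv',
\]
with $(v/v')^{2N}\ge1$ for $v'\le v$. This produces the prefactor $v$, and the conversion of weights from $\LH$ to $\LS$ (a factor $(-u)^{-1}$) gives $v/(-u)$. The $\Omega\chi$ and $\ot\psi_{2i}$ bounds carry $(-u)^{-\delta}$ losses, which accumulate into $(-u)^{-C(n)\delta}$.

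\emph{Main obstacle.} The hardest step is Step 2: we must verify that the commutator and borderline terms $\Omega\chibh\sim \epsilon/(-u)$ and $\Omega\tr\chib+2/(-u)$ only perturb the coefficient by $O(n+\lambda_1+\mu_1)$, which is $\ll N$, rather than contributing a term that cannot be absorbed. We also need to check that the top-order shift term $\nabla^{n+1}\wt b\,\nabla\ot\psi_1$ does not lose derivatives beyond what the stated right-hand side permits.
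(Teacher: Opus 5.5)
Your proposal is correct and follows essentially the same route as the paper. You form the difference equations with the frame terms $\Lie_{\wt{b}}\ot{\psi}_1$ and $\wt{\Omega\chib}\,\ot{\psi}_1$, which is where $\nabla^{n+1}\wt{b}$ comes from. You use the weight $(-u)^{2N+2j}$ to produce a $-N/(-u')$ term that absorbs everything linear in $\nabla^j\wt{\psi}_1$, since $\lambda_1+\mu_1+n\ll N$. Along $e_4$ you use $\|f(v)\|^2\le v\int_0^v\|\partial_v f\|^2$ together with $(v/v')^{2N}\ge 1$, and absorb the $\Omega\chi$, $\Omega\tr\chi$, $\Omega\omega$ contributions once $v\ll(-u)^{1+C(n)\delta}$.

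The only difference is cosmetic. You treat the linear $\wt{\psi}_1$ terms as a perturbation of the coefficient before applying Cauchy--Schwarz. The paper instead squares the whole $\Omega\nabla_3\nabla^j\wt{\psi}_1$ and absorbs the resulting $C(\lambda_1,\mu_1,n)^2/(-u')$ term, which is equivalent for $N$ large.
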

\begin{proof}
    We first take the difference of the equations in \eqref{Transport_Est_Sample_Eq} and obtain
    \begin{equation*}
        \begin{aligned}
            \Omega\nabla_3\wt{\psi}_1=&\wt{\psi_{11}\psi_{12}}+\wt{\Psi_1}-\lambda_1\wt{\Omega\tr\chib\psi_1}-\mu_1\wt{\Omega\omegab\psi_1}-\Lie_{\wt{b}}\ot{\psi}_1+\wt{\Omega\chib}\ot{\psi}_1,\\
            \Omega\nabla_4\wt{\psi}_2=& \wt{\psi_{21}\psi_{22}}+\wt{\Psi_2}-\lambda_2\wt{\Omega\tr\chi\psi_2}-\mu_2\wt{\Omega\omega\psi_2}+\wt{\Omega\chi}\ot{\psi}_2.
        \end{aligned}
    \end{equation*}
    We note that {\small$$\Lie_{\wt{b}}\left(\ot{\psi}_1\right)_{A_1\cdots A_r}=\wt{b}\cdot\nabla\left(\ot{\psi}_1\right)_{A_1\cdots A_r}+\frac{1}{2}\sum_{i=1}^r\left(\nabla_{A_i}\wt{b}^B+\nabla^{B}\wt{b}_{A_i} \right)\left(\ot\psi_1\right)_{A_1\cdots B\cdots A_r}, $$}\\
    so the main contribution of $\nabla^j\left(\Lie_{\wt{b}}\ot{\psi_1}\right)$ {in} the scale-invariant norm is $\nabla^{j+1}\wt{b}$.
    Using the commutation formula, we have 
    \begin{equation*}
        \begin{aligned}
            \Omega\nabla_3\nabla^j\wt{\psi}_1\sim &\nabla^i(\Omega\chib\wt{\psi}_1)+\nabla^j\left(\Omega\nabla_3\wt{\psi}_1\right),\ 
            \Omega\nabla_4\nabla^j\wt{\psi}_2\sim&\nabla^j\left(\Omega\chi\wt{\psi}_2\right)+\nabla^j\left(\Omega \nabla_4\wt\psi_2\right).
        \end{aligned}
    \end{equation*}
    We first estimate $\wt{\psi}_1$. For $j\leq n$, we compute 
    \begin{equation*}
        \begin{aligned}
            &\int_{S_{u,v}}(-u)^{2j}\left(\frac{-u}{v}\right)^{2N}\left|\nabla^j\wt{\psi}_1\right|^2-\int_{S_{-1,v}}\left(\frac{1}{v}\right)^{2N}\left|\nabla^j\wt{\psi}_1\right|^2\\
            =&\int_{-1}^u\int_{S_{u^\prime,v}}\left(-\frac{2N+2j}{-u^\prime}+\Omega\tr\chib\right)(-u^\prime)^{2N+2j}v^{-2N}\left|\nabla^j\wt{\psi}_1\right|^2\\
        &\qquad\qquad\qquad+2(-u^\prime)^{2N+2j}v^{-2N}\nabla^j\wt{\psi}_1\cdot\Omega \nabla_3\left(\nabla^j\wt{\psi}_1\right)\\
            \lesssim & \int_{-1}^u\int_{S_{u^\prime,v}}-\frac{N}{-u^\prime}(-u^\prime)^{2N+2j}v^{-2N}\left|\nabla^j\wt{\psi}_1\right|^2+2(-u^\prime)^{2N+2j+1}v^{-2N}\left|\Omega \nabla_3\left(\nabla^j\wt{\psi}_1\right)\right|^2.
        \end{aligned}
    \end{equation*}
    Lemma \ref{Estimate_Lemma_Calculation_of_Differences} yields
    \begin{equation*}
        \begin{aligned}
            &\sum_{j\leq n}\int_{S_{u,v}}(-u)^{2j+1}\left|\Omega\nabla_3\left(\nabla^j\wt{\psi}_1\right)\right|^2\\
            \lesssim & \sum_{j\leq n,i\leq n+1}\int_{S_{u,v}}C(\lambda_1,\mu_1,n)(-u)\left(\left|\Omega\tr\chib\right|^2+\left|\Omega\omegab\right|^2\right)(-u)^{2j}\left|\nabla^j\wt{\psi}_1\right|^2+(-u)^{2j+1}\left|\nabla^j\wt{\Psi}_1\right|\\
            &\qquad\qquad+(-u)^{2j-1}\left|\nabla^j\left(\wt{\Omega\chib},\wt{\Omega\omegab},\wt\psi_{11},\wt\psi_{12}\right)\right|^2+(-u)^{2i-3}\left|\nabla^{i}\wt{b}\right|^2.
        \end{aligned}
    \end{equation*}
    Using $C(\lambda_1,\mu_1,n)\ll N$ and $|\Omega\tr\chib|+|\Omega\omegab|\lesssim\frac{1}{-u}$, we can then cancel $\nabla^j\wt{\psi}_1$ terms and obtain the desired estimate.
    {Next}, for $\nabla^j\wt{\psi}_2$, the transport estimate
    \[\|f(v)-f(0)\|_{L^2(\SS)}^2=\lnm \int_0^v\Lie_v f\rnm^2_{L^2(\SS)}\leq v\int_0^v\|\Lie_v f\|^2_{L^2(\SS)}\]
    {gives}
    \begin{equation*}
        \begin{aligned}
            &\int_{S_{u,v}}(-u)^{2j}\left(\frac{-u}{v}\right)^{2N}\left|\nabla^j\wt{\psi}_2\right|^2\\
            \lesssim & \frac{v}{-u}\left(\frac{-u}{v}\right)^{2N}\int_{0}^v\int_{S_{u,v^\prime}} \left((-u)^{2j+1}\left|\Omega\nabla_4\left(\nabla^j\wt\psi_2\right)\right|^2+(-u)^{2j+1}\left|\Omega\chi\right|^2\left|\nabla^j\wt{\psi}_2\right|^2\right).
        \end{aligned}
    \end{equation*}
    Employing Lemma \ref{Estimate_Lemma_Calculation_of_Differences} and 
    $$\sum_{j\leq n}\lnm \nabla^j\left(\ot{\psi}_2,\ot{\psi}_{2i},{\Omega\chi},{\Omega\omega},{\etab}\right)\rnm_{\LS(u,v;0)}+\sum_{j\leq n-1}\lnm\nabla^{j}{\Omega\beta}\rnm_{\LS(u,v;0)}\lesssim \frac{1}{(-u)^\delta},$$ 
    we obtain 
    \begin{equation*}
        \begin{aligned}
            &\sum_{j\leq n}\int_{S_{u,v}} (-u)^{2j+1}\left|\Omega\nabla_4\left(\nabla^j\wt\psi_2\right)\right|^2\\
            \lesssim & \frac{1}{-u}\sum_{j\leq n}\frac{1}{(-u)^{C(n)\delta}}\int_{S_{u,v}}(-u)^{2j}\left|\nabla^j\left(\wt{\psi}_2,\wt{\psi}_{2i},\wt{\Omega\chi},\wt{\Omega\omega}\right)\right|^2+\frac{1}{-u}\int_{S_{u,v}}(-u)^{2j+2}\left|\nabla^j\wt{\Psi}_2\right|^2.
        \end{aligned}
    \end{equation*}
    Thus, 
        
        \begin{equation*}
            \begin{aligned}
            &\sum_{j\leq n}\lnm \nabla^j\wt{\psi}_2 \rnm_{\LS(u,v;N)}^2\lesssim \frac{v}{(-u)^{1+C(n)\delta}}\lnm \nabla^j\left(\wt{\psi}_{21},\wt{\psi}_{22},\wt{\Omega\chi},\wt{\Omega\omega}\right)\rnm_{\LH(u,v;N)}^2\\
            &\qquad\qquad\qquad+\frac{v}{-u}\lnm \nabla^j\wt{\Psi}_1\rnm_{\LH(u,v;N)}^2+\sum_{j\leq n}\frac{v}{(-u)^{1+C(n)\delta}}\int_0^v \frac{1}{-u}\lnm \nabla^j\wt{\psi}_2\rnm_{\LS(u,v^\prime;N)}^2,
        \end{aligned}
        \end{equation*}
    and {if} $v\ll (-u)^{1+C(n)\delta}$, {the} desired estimate holds for $\wt{\psi}_2$.

\end{proof}

We proceed to establish the energy estimate. 
\begin{lemma}[Energy Estimate]\label{Energy_Estimate_LH_LHb}
    We now consider the equations 
    \begin{equation*}
        \begin{aligned}
\Omega\nabla_3\Psi_1-\Omega\mathcal{D}\Psi_2=&\psi\Psi+\psi^3+\psi_1\nabla(\Omega\chic)+\psi_1\nabla(\Omega e_4\phi),\\
        \Omega\nabla_4\Psi_2+{}^*\mathcal{D}(\Omega\Psi_1) =&\psi\Psi+\psi^3+\psi\nabla(\Omega\chic)+\psi\nabla(\Omega e_4\phi),
    \end{aligned}
    \end{equation*}
    for the pair $\Psi_1,\Psi_2$. Here ${}^*\mathcal{D}$ stands for the Hodge dual of the differential operator $\mathcal{D}$.
    Suppose that
    \begin{equation*}
        \sum_{j\leq n+1}\lnm \nabla^j\ot\psi_1\rnm_{\LS(u,v;0)}  +\sum_{j\leq n+1}(-u)^\delta\lnm \nabla^j\left(\ot{\Psi}_1,\ot{\Psi}_2,\ot{\psi}\right)\rnm_{\LS(u,v;0)}\lesssim 1,
    \end{equation*}
    \begin{equation*}
        \sum_{i\leq n-1}\lnm \nabla^{i}K\rnm_{\LS(u,v;0)}+\sum_{j\leq n}\lnm \nabla^j\left(\Omega\chib\right)\rnm_{\LS(u,v;0)}+\sum_{j\leq n}(-u)^\delta
        \lnm \nabla^j\left(\Omega\chi\right)\rnm_{\LS(u,v;0)}\lesssim 1,
    \end{equation*}
    and the $\wt{\cdot}$ quantities are small, that is{,} $\sum_{i=1,2}\sum_{j\leq n}\lnm \nabla^j\left(\wt{\Psi}_i,\wt{\psi}_i\right)\rnm_{\LH(u,v;0)}\ll 1.$
    Then, for $1\ll N$ and $\frac{v}{(-u)^{1+C(n)\delta}}\ll 1$, we have the estimate
    \begin{equation}\label{eq:Energy_Estimate_LH_LHb}
        \begin{aligned}
            &\sum_{j\leq n}\left(\lnm\nabla^j\wt{\Psi}_1 \rnm_{\LH(u,v;N)}^2-\lnm\nabla^j\wt{\Psi}_1 \rnm_{\LH(-1,v;N)}^2\right)+\lnm\nabla^j\wt{\Psi}_2 \rnm_{\LHb(u,v;N)}^2\\
            &\qquad\qquad+\lnm\nabla^j\wt{\Psi}_2 \rnm_{\LR(u,v;N)}^2+N\int_{-1}^u(-u^\prime)^{-1}\lnm \nabla^{n}\wt{\Psi_1}\rnm^2_{\LH(u^\prime,v;N)}\\
            \lesssim & \frac{1}{(-u)^{C(n)\delta}}\int_{0}^v\int_{-1}^u\frac{1}{(-u^\prime)^2}\left(\sum_{j\leq n+1,\ i\leq n}\lnm \nabla^j\left(\wt{g},\wt{b}\right),\nabla^i\wt{\Psi}, \nabla^i\wt{\psi}\rnm^2_{\LS(u^\prime,v^\prime;N)}\right)\\
            &+\int_{0}^v\int_{-1}^u\frac{1}{(-u^\prime)^2}\lnm \nabla^{n+1}\left(\wt{\Omega\chic},\wt{\Omega e_4\phi}\right)\rnm^2_{\LS(u^\prime,v^\prime;N)}.
        \end{aligned}
    \end{equation}

\end{lemma}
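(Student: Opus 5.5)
The plan is to run the standard weighted energy identity for the paired system, applied to the differences $\wt{\Psi}_1,\wt{\Psi}_2$.

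\textbf{Difference system and commutation.} Subtract the corresponding equations for $\ot{\Psi}_1,\ot{\Psi}_2$. This yields
\[
\Omega\nabla_3\wt{\Psi}_1-\Omega\mathcal{D}\wt{\Psi}_2=\wt{F}_1,\qquad \Omega\nabla_4\wt{\Psi}_2+{}^*\mathcal{D}(\Omega\wt{\Psi}_1)=\wt{F}_2 .
\]
The sources $\wt{F}_i$ collect four kinds of terms: the differences of the nonlinear right-hand sides, the frame-change terms $\Lie_{\wt b}\ot{\Psi}_1$ and $\wt{\Omega\chib}\,\ot{\Psi}_1$ (as in the proof of Lemma \ref{Transport_Estimate_LS}), the metric-difference terms $\wt{g}\cdot\nabla\ot{\Psi}$, and the residuals of the approximate solution. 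Each $\wt{\Psi}$ vanishes on $v=0$.

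Next commute with $\nabla^j$ for $j\le n$, using \eqref{eq:commutators}. The commutators produce terms of the form $\nabla^{j_1}(\Omega\chi,\Omega\chib)\nabla^{j_2}\wt{\Psi}$ together with lower-order products, and these are estimated by Corollary \ref{key_corollary} and Lemma \ref{Estimate_Lemma_Calculation_of_Differences}.

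\textbf{Energy identity.} Multiply the first equation by $w\,\nabla^j\wt{\Psi}_1$ and the second by $w\,\nabla^j\wt{\Psi}_2$, with weight $w=(-u)^{-1-2s+2j}\big(\tfrac{-u}{v}\big)^{2N}$ matching the $\LH$ and $\LHb$ norms. Add the two identities and integrate over $[-1,u]\times[0,v]\times S$.

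The angular terms $\Omega\mathcal{D}$ and ${}^*\mathcal{D}(\Omega\,\cdot)$ are formal adjoints, so they cancel after integration by parts on $S_{u,v}$, up to $\nabla\log\Omega$ and $\eta+\etab$ lower-order errors. The $\nabla_3$ divergence term gives the $\LH(u,v;N)$ flux minus the flux at $u=-1$. The $\nabla_4$ term gives the $\LHb(u,v;N)$ flux, with no boundary term at $v=0$.

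Differentiating the weight $(-u)^{2N}$ in $u$ produces a favorable bulk term $N(-u)^{-1}\|\nabla^j\wt{\Psi}_1\|^2$, which is the $N\int(-u')^{-1}\|\cdot\|^2_{\LH}$ term on the left. Differentiating $v^{-2N}$ in $v$ produces the favorable term $2N v^{-1}\|\nabla^j\wt\Psi_2\|^2$, which dominates the $\LR$ norm. The factors $\Omega\tr\chib\sim -2/(-u)$ and $\Omega\tr\chi$ arising from $\partial(dV_g)$ are lower order compared to $N/(-u)$ and $N/v$, since $N\gg n,C$.

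\textbf{Bulk sources.} Apply Cauchy--Schwarz with the weight $1/((-u')v')$, absorbing into the $N$-coercive bulk terms.
\begin{itemize}
\item Terms of the form $\ot{\psi}\,\wt{\Psi}$ and $\wt\psi\,\ot\Psi$ are bounded using the assumed $(-u)^{-\delta}$ bounds on the background quantities. This gives the factor $(-u)^{-C(n)\delta}$ on the right.
\item The frame-change terms give $\nabla^{n+1}(\wt g,\wt b)$.
\item The factor $v'$ relative to $(-u')$ is converted using $v\ll(-u)^{1+C\delta}$. This yields the $\int\!\!\int(-u')^{-2}\|\cdot\|^2_{\LS}$ form.
\end{itemize}

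\textbf{Main obstacle.} The hard step is the top-order terms $\psi_1\nabla^{n+1}(\Omega\chic)$ and $\psi_1\nabla^{n+1}(\Omega e_4\phi)$. They lose a derivative and cannot be absorbed into the left-hand side. Their coefficient $\ot\psi_1$ is only $O(1)$, with no $\delta$-gain, so they must be kept explicitly as the last term of \eqref{eq:Energy_Estimate_LH_LHb}. They will later be closed by the elliptic Hodge/Codazzi estimates for $\chic$ in terms of the paired quantities. I expect the careful handling of this derivative loss, together with the commutator terms involving $\nabla^{n}K$, to be the main difficulty.
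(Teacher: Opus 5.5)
Your proposal is correct and follows essentially the same route as the paper. You form the difference system, commute with $\nabla^j$, and use a single common weight $(-u)^{-1-2s+2j}\left(\frac{-u}{v}\right)^{2N}$, so that the $\Omega\mathcal{D}$ and ${}^*\mathcal{D}(\Omega\,\cdot)$ terms cancel after integration by parts on $S_{u,v}$. The coercive bulk terms $\frac{2N}{(-u')^2}\|\nabla^j\wt\Psi_1\|^2_{\LS}$ and $\frac{2N}{(-u')v'}\|\nabla^j\wt\Psi_2\|^2_{\LS}$ come from differentiating the weight. You bound the sources by Cauchy--Schwarz, and you keep the top-order $\nabla^{n+1}(\wt{\Omega\chic},\wt{\Omega e_4\phi})$ term without a $(-u)^{-C\delta}$ loss because its coefficient $\ot\psi_1$ is bounded without loss.

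One precision: the $\wt\Psi_1$-equation sources must be absorbed against the $N/(-u')$ coercivity, not the $1/((-u')v')$ weight, which only controls $\wt\Psi_2$. This is how the paper gets the $(-u')^{-2}$ weight on the $\nabla_3$-sources and the $v'/(-u')^3$ weight on the $\nabla_4$-sources, which you then convert using $v\ll(-u)^{1+C\delta}$.
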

\begin{remark}
    In the remainder of this section, we use $\Psi$ to represent curvature components and first-order derivatives of Ricci coefficients other than $\nabla\chic$, as well as second-order derivatives of $\phi$ other than $\nabla(e_4\phi)$; we use $\psi$ to represent the Ricci coefficients. More specifically, we require 
\[\Psi\in\{\Omega K, \nabla(\Omega\tr\chi),\nabla(\Omega\tr\chib),\nabla(\Omega\chibc),\Omega\nabla\eta,\Omega\nabla\etab,\nabla(\Omega e_3\phi),\nabla^2\phi\},\]
\[\psi\in\{\Omega\omega,\Omega\omegab,\eta,\etab,\zeta,\Omega\tr\chi,\Omega\chic,\Omega\tr\chib,\Omega\chibc,\Omega e_3\phi,\nabla\phi,\Omega e_4\phi\}.\]

We also highlight the role of 
$$N\int_{-1}^u(-u^\prime)^{-1}\lnm \nabla^{n}\wt{\Psi_1}\rnm^2_{\LH(u^\prime,v;N)}$$ on the left-hand side of the estimate. The top-order terms of $\chic$ and $e_4\phi$, specifically $\nabla^4\dv(\Omega\chic)$ and $\nabla^5 (\Omega e_4\phi)$, are controlled through the energy estimates and act as $\Psi_1$ in the model equations. Therefore, this integral can absorb the last term on the right-hand side.
\end{remark}
\begin{proof}
    We have the equations for the differences:
    \begin{equation*}
        \begin{aligned}
            &\Omega\nabla_3\wt\Psi_1-\Omega\mathcal{D}\left(\wt\Psi_2\right)=\wt{\psi\Psi}+\wt{\psi^3}+\wt{\psi_1\nabla(\Omega\chic)}+\wt{\psi_1\nabla(\Omega e_4\phi)}-\Lie_{\wt{b}}\ot\Psi_1+\wt{\Omega\chib}\ot\Psi_1+\wt{\Omega\mathcal{D}}\ot\Psi_2,\\
            &\Omega\nabla_4\wt\Psi_2+{}^*\mathcal{D}\left(\Omega\wt\Psi_1\right)=\wt{\psi\Psi}+\wt{\psi^3}+\wt{\psi\nabla(\Omega\chic)}+\wt{\psi\nabla(\Omega e_4\phi)}+\wt{\Omega\chi}\ot\Psi_2+\wt{{}^*\mathcal{D}\Omega}\ot{\Psi}_2.
        \end{aligned}
    \end{equation*}
    We can write the schematic identities
    $\wt{\mathcal{D}\Omega}\ot\Psi_2\sim \wt{\Omega}\ot{\nabla\Psi_2}+\Omega\left(\wt{g}\ot{\nabla\Psi}_2+\ot{\Psi}_2\nabla\wt{g}\right)$
    and $$ \wt{\mathcal{D}^*\Omega}\ot{\Psi}_2\sim \nabla\wt{\Omega}\ot{\Psi_2}+\wt{\Omega}\ot{\nabla\Psi_2}+\left(\wt{g}\nabla\ot\Omega+\ot\Omega\nabla\wt{g}\right)\ot{\Psi_2}+\Omega\left(\wt{g}\ot{\nabla\Psi}_2+\ot{\Psi}_2\nabla\wt{g}\right).$$
    Commutation formulae give
    \begin{equation}\label{Sample_Energy_est_commute}
        \begin{aligned}
            &\Omega\nabla_3\nabla^j\wt\Psi_1-\Omega\mathcal{D}\nabla^j\wt\Psi_2=\sum_{i_1+i_2+i_3=j}\Omega\nabla^{i_1}(\eta+\etab)^{i_2}\nabla^{i_3}\wt{\Psi_2}\\
            &\qquad\qquad+\sum_{i_1+\cdots+i_5=j-1}\Omega\nabla^{i_1}(\eta+\etab)^{i_2} \nabla^{i_3}K^{i_4+1}\nabla^{i_5}\wt{\Psi_2} +\nabla^j\left(\Omega\nabla_3\wt{\Psi}_1\right)+\nabla^{j}(\Omega\chib\wt{\Psi}_1),\\
            &\Omega\nabla_4\wt\Psi_2+{}^*\mathcal{D}\left(\Omega\nabla^j\wt\Psi_1\right)=\sum_{i_1+i_2+i_3=j}\Omega\nabla^{i_1}(\eta+\etab)^{i_2}\nabla^{i_3}\wt{\Psi_1}\\
            &\qquad\qquad
            + \sum_{i_1+\cdots+i_5=j-1}\Omega\nabla^{i_1}(\eta+\etab)^{i_2} \nabla^{i_3}K^{i_4+1}\nabla^{i_5}\wt{\Psi_1}+\nabla^j\left(\Omega\nabla_4\wt{\Psi}_2\right)+\nabla^j(\Omega\chi\wt{\Psi}_2).
        \end{aligned}
    \end{equation}
    We compute
    \begin{equation*}
        \begin{aligned}
            &\left(\lnm\nabla^j\wt{\Psi}_1 \rnm_{\LH(u,v;N)}^2-\lnm\nabla^j\wt{\Psi}_1 \rnm_{\LH(-1,v;N)}^2\right)+\lnm\nabla^j\wt{\Psi}_2 \rnm_{\LHb(u,v;N)}^2+\lnm\nabla^j\wt{\Psi}_2 \rnm_{\LR(u,v;N)}^2\\
            \leq & \int_0^v\int_{-1}^u -\frac{2N}{(-u^\prime)^2}\lnm \nabla^j\wt\Psi_1\rnm_{\LS(u^\prime,v^\prime;N)}^2-\frac{2N}{(-u)^\prime v^\prime}\lnm \nabla^j\wt\Psi_2\rnm_{\LS(u^\prime,v^\prime;N)}^2\\
            &\qquad+\frac{1}{(-u^\prime)^2}\lnm \nabla^j\left(\Omega \nabla_3\wt\Psi_1-\Omega \mathcal{D}\nabla^j\wt\Psi_2\right)\rnm_{\LS(u^\prime,v^\prime;N)}^2\\
            &\qquad+\frac{v^\prime}{(-u^\prime)^3}\lnm \nabla^j\left(\Omega \nabla_4\wt\Psi_2+ {}^*\mathcal{D}\left(\Omega\nabla^j\wt\Psi_2\right)\right)\rnm_{\LS(u^\prime,v^\prime;N)}^2\\
            &\qquad +2\frac{(-u^\prime)^{2j+2N}}{(v^\prime)^{2N}}\int_{S_{u^\prime,v^\prime}}\left(\left\langle \nabla^j\wt\Psi_1,\Omega\mathcal{D}\nabla^j\wt\Psi_2 \right\rangle-\left\langle \nabla^j\wt\Psi_2,{}^*\mathcal{D}\left(\Omega\nabla^j\wt\Psi_1\right) \right\rangle\right).
        \end{aligned}
    \end{equation*}
    Since the last line vanishes after integration by parts, it suffices to estimate the terms in \eqref{Sample_Energy_est_commute}.
Using Lemma \ref{Estimate_Lemma_Calculation_of_Differences}, we find 
\begin{equation*}
    \begin{aligned}
        &\sum_{j\leq n}\lnm \nabla^j\left(\Omega \nabla_3\wt\Psi_1-\Omega \mathcal{D}\nabla^j\wt\Psi_2\right)\rnm_{\LS(u^\prime,v^\prime;N)}^2\lesssim \lnm \nabla^{n+1}\left(\wt{\Omega\chic},\wt{\Omega e_4\phi}\right)\rnm_{\LS(u^\prime,v^\prime;N)}^2\\
        &\qquad\qquad
        +  \frac{1}{(-u)^{C(n)\delta}} \sum_{j\leq n}\lnm \nabla^j\left(\wt\Psi,\wt\psi,\wt{b},\wt{\log\Omega},\wt{g}\right),\nabla^{n+1}\left(\wt{g},\wt{b}\right)\rnm_{\LS(u^\prime,v^\prime;N)}^2,
    \end{aligned}
\end{equation*}
and 
\begin{equation*}
    \begin{aligned}
        & \frac{v}{-u}\sum_{j\leq n}\lnm \nabla^j\left(\Omega \nabla_4\wt\Psi_2+ {}^*\mathcal{D}\left(\Omega\nabla^j\wt\Psi_2\right)\right)\rnm_{\LS(u^\prime,v^\prime;N)}^2\\
        \lesssim & \frac{v}{(-u)^{1+C(n)\delta}} \sum_{i\leq n,j\leq n+1}\lnm \nabla^i(\wt\Psi,\wt{\psi}), \nabla^j\left(\wt{\log\Omega},\wt{g}\right),\nabla^{n+1}\left(\wt{\Omega\chic},\wt{\Omega e_4\phi}\right)\rnm_{\LS(u^\prime,v^\prime;N)}^2.
    \end{aligned}
\end{equation*}
After absorbing the $\wt\Psi_i$ terms using $-\frac{N}{-u}\lnm\nabla^j\wt\Psi_1\rnm^2-\frac{N}{v}\lnm\nabla^j\wt\Psi_2\rnm^2$, we obtain the desired estimate. Note that we retain $-\frac{N}{-u}\lnm\nabla^j\wt\Psi_1\rnm^2$ on the right-hand side for the last term in \eqref{Energy_Estimate_LH_LHb}.

\end{proof}
\begin{remark}
    The essential observation for the energy estimates is that $\chic$ and $e_4\phi$ are the only quantities without a top-order $\LHb$ estimate. It is important that the coefficients of their top-order terms contain neither $\chic$ nor $e_4\phi$. For example, $\chic\cdot\nabla^4(\dv\chic,\nabla \partial_v\phi)$ does not appear in the $\Omega \nabla_3\nabla^4\Psi_1$ equation, so {terms such as}
    \[\int_{0}^v\int_{-1}^u\frac{1}{(-u^\prime)^{2+C\delta}}\lnm \nabla^{n+1}\left(\wt{\Omega\chic},\wt{\Omega e_4\phi}\right)\rnm^2_{\LS(u^\prime,v^\prime;N)}\]
    {do not appear} on the right-hand side of the inequalities.
\end{remark}

We also need the following elliptic estimate on the spheres $S_{u,v}$.
\begin{proposition}[Perturbative Elliptic Estimate]\label{prop:perturbative-elliptic-estimate}
    Let $n\geq 3$, and let $g,\ot{g}$ be two metrics on a sphere. Let $K,\ot{K}$ be the respective Gauss curvatures of $g$ and $\ot{g}$, with
    \[
        \sum_{a=0}^{n-1}\|\nabla^a\ot K\|_{L^2(g)}\leq M,
    \]
    and set $\wt K=K-\ot K$. Suppose that a totally symmetric $(r+1)$-covariant tensor $\phi$ satisfies
    \[
        \dv\phi=f,\qquad \cl\phi=l,\qquad \tr\phi=h.
    \]
    Then
    \begin{equation}\label{Elliptic_Estimate}
        \begin{aligned}
            \|\nabla^n\phi\|_{L^2(g)}
            \lesssim_{n,r,M}{}& \sum_{j=0}^{n-1}\|\nabla^j(\phi,f,l,h)\|_{L^2(g)}
            +\mathcal E_n(\wt K;\phi,f,l,h),\\
            \mathcal E_n(\wt K;\phi,f,l,h)
            :={}&\sum_{\substack{a,c\geq0,\ b\geq1\\a+c+2b=n,\ c\leq n-1}}
            \|\nabla^a\wt{K^b}*\nabla^c(\phi,h)\|_{L^2(g)}\\
            &+\sum_{\substack{a,c\geq0,\ b\geq1\\a+c+2b=n-1}}
            \|\nabla^a\wt{K^b}*\nabla^c(f,l)\|_{L^2(g)}.
        \end{aligned}
    \end{equation} 
\end{proposition}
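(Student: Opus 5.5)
The plan is to start from the classical $L^2$ elliptic identity for Hodge systems on a $2$-sphere and to isolate where the Gauss curvature enters. For a totally symmetric $(r+1)$-tensor $\phi$, the Bochner-type identity
\[
\int_{S}|\nabla\phi|^2+(r+1)\int_S K|\phi|^2-\ldots=\int_S\bigl(|\dv\phi|^2+|\cl\phi|^2\bigr)+c_r\int_S K|\tr\phi|^2\ldots
\]
controls $\|\nabla\phi\|_{L^2}$ by $\|(\phi,f,l,h)\|_{L^2}$ plus $\int K|\phi|^2$-type terms; this is the standard Christodoulou--Klainerman computation. We take it as the case $n=1$. Then we split $K=\ot K+\wt K$. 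The $\ot K$ contribution is bounded by $M$ together with the Sobolev inequality \eqref{lemma:Sobolev_ineq}, while the $\wt K$ contribution is kept explicitly. This is exactly why the error $\mathcal E_n$ has the stated structure.

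For higher $n$ we argue by induction. Apply $\nabla^{n-1}$, symmetrized, to the system. The tensor $\nabla^{n-1}\phi$ is not totally symmetric, so we use the commutation $[\nabla_A,\nabla_B]T=K\,(g*T)$ on a surface. Commuting derivatives past $\dv$, $\cl$ and $\tr$ then shows that the symmetrization $\psi$ of $\nabla^{n-1}\phi$ satisfies a Hodge system with
\[
\dv\psi=\nabla^{n-1}f+\sum_{a+c+2b=n-1,\,b\ge1}\nabla^aK^b*\nabla^c\phi,
\]
together with analogous expressions for $\cl\psi$ and $\tr\psi$ (the latter involving $h$). The antisymmetric part of $\nabla^{n-1}\phi$ is itself a sum of terms $\nabla^aK^b*\nabla^c\phi$ of lower order. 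Counting derivatives gives $a+c+2b=n$ for terms hitting $(\phi,h)$ and $a+c+2b=n-1$ for terms hitting $(f,l)$. Applying the $n=1$ estimate to $\psi$ produces $\|\nabla^n\phi\|$ on the left and these commutator terms on the right.

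Next, each product $K^b$ is expanded as $(\ot K+\wt K)^b$. Terms consisting only of $\ot K$ factors are estimated with Lemma \ref{Estimate_Lemma_Calculation_of_Differences}, or directly with the product estimates \eqref{eq:prop-sobolev-1}, using $\sum_{a\le n-1}\|\nabla^a\ot K\|\le M$. Since $a\le n-2$ whenever $b\ge1$ and $c\le n-1$, these terms are bounded by $\sum_{j\le n-1}\|\nabla^j(\phi,f,l,h)\|$, with constants depending on $M$. Terms containing at least one $\wt K$ are exactly the $\nabla^a\wt{K^b}*\nabla^c(\cdot)$ entries of $\mathcal E_n$, with $\wt{K^b}=K^b-\ot K^b$. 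The constraint $c\le n-1$ holds automatically because $b\ge1$. Lower-order derivatives of $\phi$ are absorbed by induction.

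The main obstacle is the top-order term $c=n-1,\ b=1$ with $a=0$, and the terms where all derivatives fall on $K$ ($a=n-2$). In the first case we must avoid putting $\wt K$ in $L^\infty$ without control. In the second, $\nabla^{n-2}\ot K$ must be placed in $L^2$ against $\phi\in L^\infty$, which requires $n\ge3$ for the Sobolev embedding. I would handle both by keeping such terms inside $\mathcal E_n$ rather than estimating them, and by checking that the $\ot K$ analogues only need $\|\nabla^{\le n-1}\ot K\|_{L^2}$ and $H^2\subset L^\infty$.
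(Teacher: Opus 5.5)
Your route is the paper's route: the Christodoulou--Klainerman/Luk div--curl--trace identity, the Hodge systems for the symmetrizations $\psi=(\nabla^{n-1}\phi)^s$, the splitting $K=\ot K+\wt K$ with the $\ot K$-only terms controlled by $M$ and Sobolev (using $n-1\ge 2$, as you say), induction, and algebraic recovery of $\nabla^{n-1}\phi$ from $\psi$. One step does not go through as written: your treatment of the zeroth-order curvature term of the identity itself. Applied to $\psi$, which has rank $r+n$, the identity reads
\[
\int|\nabla\psi|^2+(r+n)\int K|\psi|^2=\int\bigl(|\dv\psi|^2+|\cl\psi|^2\bigr)+(r+n-1)\int K|\tr\psi|^2 .
\]
Its $\wt K$-part, $\int\wt K\,\nabla^{n-1}\phi*\nabla^{n-1}\phi$ (and the analogue with $\tr\psi$), is not of the form allowed in $\mathcal E_n$. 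Every term of $\mathcal E_n$ has $a+c+2b=n$, so $b=1$ forces $c\le n-2$. Your ``top-order term $c=n-1$, $b=1$, $a=0$'' violates exactly this count. Since $\mathcal E_n$ is fixed by the statement, such a term cannot simply be ``kept inside $\mathcal E_n$''. The natural bound $\|\wt K\nabla^{n-1}\phi\|_{L^2}\|\nabla^{n-1}\phi\|_{L^2}$ contains a factor with $a+c+2b=n+1$.

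Integrating by parts once only gets you halfway. It produces $\int\wt K\,\nabla^{n-2}\phi*\nabla^{n}\phi\le\|\wt K\nabla^{n-2}\phi\|\,\|\nabla^n\phi\|$, which is a genuine $\mathcal E_n$ term and can be absorbed. But it also produces $\int\nabla\wt K*\nabla^{n-2}\phi*\nabla^{n-1}\phi$, whose $\wt K$-factor again has weight $n+1$. To close, you need extra input. One option is a bound on $\|\wt K\|_{L^\infty}$ entering the constant, which is how the full $K$ is handled in the unsplit estimate. It is also available where the paper actually uses the estimate, since there $\wt K$ is small, so $\int\wt K|\psi|^2$ can be put into $\|\nabla^{n-1}\phi\|^2$. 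The other option is to enlarge the error term. The paper's proof cites Luk and asserts that all curvature terms have the $\mathcal E_n$ form, so it passes over the same term; this is a gap in the written argument rather than a divergence from it.

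Two smaller corrections. The curvature terms in $\dv\psi$ satisfy $a+c+2b=n$, not $n-1$ as in your display. The non-symmetric part of $\nabla^{n-1}\phi$ also contains $\nabla^{n-2}l$, not only curvature terms; for $n-1=1$ it is pure curl. This is harmless, because $\nabla^{n-1}l$ is already on the right-hand side.
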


\begin{proof}
    The basic div--curl--trace identity and the Hodge systems satisfied by the symmetrized covariant derivatives of $\phi$ are {given} in \cite[Propositions 15--17]{Luk2012}. Luk's computation shows that the curvature terms at order $n$ are precisely of the schematic forms
    \[
        \nabla^aK^b*\nabla^c(\phi,h),\qquad
        a+c+2b=n,\qquad c\leq n-1,
    \]
    and
    \[
        \nabla^aK^b*\nabla^c(f,l),\qquad
        a+c+2b=n-1.
    \]
    The case $b=0$ in the second family gives the principal source derivatives already displayed in \eqref{Elliptic_Estimate}. For $b\geq1$, substitute $K=\ot K+\wt K$. The terms containing only $\ot K$ are controlled by the assumed bound and the Sobolev inequalities for $g$, whereas the terms containing at least one $\wt K$ are exactly $\mathcal E_n$. Induction {on} the number of commutations, followed by the algebraic recovery of the full covariant derivatives from their symmetrized parts as in Luk's proof, proves \eqref{Elliptic_Estimate}.
\end{proof}
{Recasting the proposition in} the $\LS$ norm, we obtain the corollary below.
\begin{corollary}
    Assume that 
    \[\sup_{u,v}\lnm (-u)^{-2}g_{AB}(u,v)-{g^{\SS}}_{AB}, (-u)^{2}\ot{K}-1\rnm_{H^{n-1}(\SS)}\ll 1.\]
    Consider the Hodge system
    \[\dv\phi=f,\cl\phi=l,\tr\phi=h,\]
    with signatures $s(f)=s(l)=s(h)-1=s(\phi)-1$.
    Then 
    \begin{equation*}
        \|\nabla^n\phi\|_{\LS(u,v;a)}
        \lesssim 
        \sum_{j=0}^{n-1}\|\nabla^j(f,l,h,\phi,\wt{K})\|_{\LS(u,v;a)}.
    \end{equation*}
\end{corollary}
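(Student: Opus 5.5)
The plan is to reduce the corollary to Proposition~\ref{prop:perturbative-elliptic-estimate} by a pointwise-in-$(u,v)$ rescaling, and then to absorb the error term $\mathcal E_n$ using the smallness hypothesis together with the product estimates (Lemma~\ref{Estimate_Lemma_Calculation_of_Differences}). Fix $(u,v)$. On $S_{u,v}$ the weights $(-u)^{-2s}$ and $\left(\frac{-u}{v}\right)^{2a}$ are constants, so they factor out of every term. It therefore suffices to prove the unweighted statement with the scaling factors $(-u)^{j}$ attached to $\nabla^j$.

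\emph{Rescaling.} I will rescale by $\hat g=(-u)^{-2}g$ and $\hat{\ot g}=(-u)^{-2}\ot g$. Then $\hat K=(-u)^2K$, and the hypothesis gives $\|\hat g-g^{\SS}\|_{H^{n-1}}\ll1$ and $\|\hat{\ot K}-1\|_{H^{n-1}}\ll1$. Hence the constant $M$ in Proposition~\ref{prop:perturbative-elliptic-estimate} is uniform, and $\hat g$-Sobolev norms are equivalent to $g^{\SS}$-norms, exactly as in the proof of the Sobolev inequality \eqref{lemma:Sobolev_ineq}. The signature relation $s(f)=s(\phi)-1$, together with $s(\nabla\cdot)=s(\cdot)-1$, is precisely what makes the rescaled system $\widehat{\dv}\hat\phi=\hat f$ (and likewise for curl and trace) consistent. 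So the $\LS$ norms become plain $L^2(\hat g)$ norms of the rescaled quantities, and $\nabla^j\wt K$ measured in $\LS$ becomes $\|\hat\nabla^j\hat{\wt K}\|_{L^2}$ because $s(K)=-2$.

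\emph{Applying the proposition.} Proposition~\ref{prop:perturbative-elliptic-estimate} then yields the bound $\sum_{j<n}\|\nabla^j(\phi,f,l,h)\|+\mathcal E_n$. The main step is controlling $\mathcal E_n$, whose terms have the form $\nabla^a\wt{K^b}*\nabla^c(\phi,h)$ with $a+c+2b=n$ and $c\le n-1$. I will write $\wt{K^b}$ as a sum of terms each containing at least one $\wt K$ factor, with all other factors being $K$ or $\ot K$. Every $K$ factor is then replaced by $\ot K+\wt K$. The $\ot K$ factors have rescaled $H^{n-1}$ norm $\lesssim1$, and the $\wt K$ factors are at most of size $1$ as well: the estimate is applied in the bootstrap regime, and otherwise the multilinear bound still holds with constants depending on $\|\wt K\|_{H^{n-1}}$.

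\emph{Counting derivatives (main obstacle).} Because $b\ge1$, the derivative counts satisfy $a\le n-2$ and $c\le n-2$. So in each term at most $n-2$ derivatives fall on any single factor. Applying Lemma~\ref{Estimate_Lemma_Calculation_of_Differences} / \eqref{eq:prop-sobolev-1} at level $n-2$ (or with $L^\infty$ on the low factor via Sobolev, using $n\ge3$), I bound each term by
\[
\Bigl(\sum_{j\le n-1}\|\nabla^j\wt K\|\Bigr)\Bigl(\sum_{j\le n-1}\|\nabla^j(\phi,h)\|\Bigr)
+\sum_{j\le n-1}\|\nabla^j\wt K\|\cdot\sum_{j\le n-1}\|\nabla^j(\phi,h)\|_{L^\infty\text{-controlled}}.
\]
The $(f,l)$ family is handled identically, with one fewer derivative. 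The subtle case is when many derivatives land on $\wt K$ while $\phi$ is only controlled at low order: $\phi$ must then be placed in $L^\infty$, at a cost of $\|\phi\|_{H^2}\le\sum_{j\le n-1}\|\nabla^j\phi\|$ (using $n\ge3$).

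\emph{Linearization.} The product structure is then linearized to match the stated (linear) right-hand side. In the bootstrap setting, $\sum_{j<n}\|\nabla^j\phi\|_{\LS}$ is $\lesssim1$ in the unweighted-in-$v$ sense, so each product is bounded by $\|\nabla^{\le n-1}\wt K\|_{\LS(u,v;a)}$. Equivalently, the $\left(\frac{-u}{v}\right)^{a}$ weight is placed on the difference factor, as in Lemma~\ref{Estimate_Lemma_Calculation_of_Differences}, while the other factor is measured with $a=0$. Undoing the rescaling then gives the claimed inequality.
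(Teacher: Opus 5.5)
Your proposal is correct and follows the paper's route. The paper obtains the corollary simply by recasting Proposition~\ref{prop:perturbative-elliptic-estimate} in $\LS$ norms via the $(-u)^{-2}$ rescaling; your explicit bound on $\mathcal E_n$ (using $a,c\le n-2$) and your observation that the linear right-hand side needs an unweighted bound on one factor make explicit what the paper leaves implicit, and that bound comes from the bootstrap rather than from the stated hypotheses. One small slip: in the paper's convention $K$ is a curvature component with $s(K)=-1$, not $-2$, and it is exactly $s(K)=-1$ that gives $\|\nabla^j\wt K\|_{\LS(u,v;a)}=\left(\frac{-u}{v}\right)^a\|\hat\nabla^j\left((-u)^2\wt K\right)\|_{L^2(\hat g)}$.
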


\subsection{Non-top-order estimates}
We first estimate the top-order derivative of $\wt{b}$, which plays an important role in $\Omega\nabla_3\wt{\psi}$-type equations.
\begin{proposition}
    For the metric components $g,b$, we have
    \begin{equation*}
        \sum_{i\leq 5}\lnm \nabla^i(\wt{b},\wt{g})\rnm^2_{\LS(u,v;N)}\lesssim B\left(\frac{v}{-u}\right)^{5/2}.
    \end{equation*}
\end{proposition}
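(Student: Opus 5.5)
We integrate the outgoing transport equations for the metric differences along $H_u$ from $v=0$, where $\wt g=\wt b=0$ because the approximate and exact solutions share the data on $\Hb_0$. For the exact and approximate metrics we have
\[
\Lie_v g_{AB}=2\Omega\chi_{AB},\qquad \Lie_v b^A=-4\Omega^2\zeta^A ,
\]
and the same identities with $\ot{\cdot}$. Subtracting gives, schematically,
\[
\Lie_v\wt g=\wt{\Omega\chi}\,g+\ot{\Omega\chi}\,\wt g,\qquad \Lie_v\wt b=\wt{\Omega^2}\,\zeta+\ot\Omega^2\,\wt\zeta .
\]
Here $\wt\zeta=\wt\eta-\nabla\wt{\log\Omega}$, and $\wt{\log\Omega}$ is itself controlled by $\Lie_v\wt{\log\Omega}=-2\wt{\Omega\omega}$. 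Up to lower-order terms, we split $\wt{\Omega\chi}$ into $\wt{\Omega\tr\chi}$ and $\wt{\Omega\chic}$.

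We commute with $\nabla^i$ for $i\le5$ using \eqref{eq:commutators}. The commutators produce terms $\nabla^{i_1}(\Omega\chi)\nabla^{i_2}\wt g$, which are handled as in the $\psi_2$ part of Lemma~\ref{Transport_Estimate_LS}. Concretely, we apply
\[
\|f(v)\|^2_{L^2}\le v\int_0^v\|\Lie_v f\|^2_{L^2}
\]
together with Lemma~\ref{Estimate_Lemma_Calculation_of_Differences}, using that the $\ot\cdot$ coefficients are bounded by $(-u)^{-\delta}$ (the lemma on $\ot\cdot$ quantities). This yields
\[
\sum_{i\le5}\lnm\nabla^i(\wt b,\wt g)\rnm^2_{\LS(u,v;N)}\lesssim \frac{v}{(-u)^{1+C\delta}}\sum_{i\le5}\lnm \nabla^i\bigl(\wt{\Omega\chic},\wt{\Omega\tr\chi},\wt\eta,\wt{\log\Omega},\wt{\Omega\omega}\bigr)\rnm^2_{\LH(u,v;N)}+\frac{v}{(-u)^{1+C\delta}}\int_0^v\frac{1}{-u}\lnm\nabla^i(\wt b,\wt g)\rnm^2_{\LS(u,v';N)}dv'.
\]
The last term is absorbed, or removed by Grönwall, because $v\ll(-u)^{1+C\delta}$. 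The weight $(-u/v')^{2N}$ is harmless here: it is increasing in $1/v'$, so $(v')^{-2N}\ge v^{-2N}$ and the $\LH$ norm dominates.

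For the right-hand side, the bootstrap assumption \eqref{eq:Bootstrap_Assumption} gives $B(v/(-u))^{3/2}$ for $\nabla^j\wt{\chic}$ and $\nabla^j\wt\eta$ with $j\le5$. The quantity $\wt{\tr\chi}$ is bounded only in $\LS$; we convert this to $\LH$ by integrating in $v$, which costs at most a factor $v/(-u)$ and is acceptable. Multiplying $B(v/(-u))^{3/2}$ by the prefactor $v/(-u)$ gives the claimed $B(v/(-u))^{5/2}$. The factor $(-u)^{-C\delta}$ is absorbed by restricting to $v\le\epsilon_1(-u)^{1+\kappa/N}$, with $\kappa/N\gg C\delta$.

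The remaining terms $\wt{\log\Omega}$ and $\wt{\Omega\omega}$ appear in $\wt b$ through $\wt\zeta$ and the factor $\wt{\Omega^2}$. We need these at the same order, one $v$-integration lower than $\wt b$. We obtain them from $\wt{\Omega\omega}$ via the $\nabla_3\omega$ equation, or directly from $\Omega\omega=-\tfrac12\partial_v\log\Omega$ combined with the bootstrap.

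The main obstacle is the derivative count. Bounding $\nabla^5\wt b$ requires $\nabla^5\wt\eta$ and $\nabla^6\wt{\log\Omega}$. The $\log\Omega$ term loses one derivative through $\zeta=\eta-\nabla\log\Omega$. To avoid this loss, we would first rewrite
\[
\Lie_v b=-4\Omega^2(\eta-\nabla\log\Omega)
\]
using $\etab=-\eta+2\nabla\log\Omega$, so that
\[
\nabla\log\Omega=\tfrac12(\eta+\etab).
\]
Then $\wt\zeta=\tfrac12(\wt\eta-\wt\etab)$, and both terms are in the bootstrap at order $j\le5$, which closes the count. Similarly, for $\wt g$ at top order, we use the bootstrap on $\nabla^5\wt{\chic}$, which is available in $\LH$.
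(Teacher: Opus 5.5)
Your proposal is essentially the paper's proof. You integrate the $\Lie_v$ equations for $\wt g$ and $\wt b$ from $v=0$, and you rewrite $\zeta=\tfrac12(\eta-\etab)$ so that only $\nabla^{\le5}(\wt{\Omega\chi},\wt\eta,\wt\etab)$ enter, all of which are controlled in $\LH$ by the bootstrap. You then gain the factor $v/(-u)$ from the $v$-integration and absorb the self-interaction terms by Gr\"onwall using $v\ll(-u)^{1+C\delta}$.

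One minor correction: the restriction $v\le\epsilon_1(-u)^{1+\kappa/N}$ cannot remove a factor $(-u)^{-C\delta}$ multiplying $(v/(-u))^{5/2}$. You do not need it, however. No such factor arises on the source terms, because their coefficients ($g$ and $\ot{\Omega}^2$) are bounded. The $(-u)^{-\delta}$-singular coefficient $\ot{\Omega\chi}$ multiplies only $\wt g$ and $\wt b$ themselves, which is exactly how the paper arranges its inequality.
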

\begin{proof}
    We have the equations
    \begin{equation*}
        \Omega\Lie_4\nabla^j\wt{b}\sim\nabla^j\left(\Omega\chi\wt{b}\right)+\nabla^j\left(\wt{\Omega^2\etab}-\wt{\Omega^2\eta}\right),\ \Omega\nabla_4\nabla^j\wt{g} \sim \nabla^j\left(\Omega\chi\wt{g}\right)+\nabla^j(\Omega\chi).
    \end{equation*}
    The transport estimate yields
    \begin{equation*}
     \begin{aligned}
           &\sum_{j\leq 5}\lnm \nabla^j(\wt{b},\wt{g})\rnm^2_{\LS(u,v;N)}\lesssim \frac{v}{-u}\sum_{i\leq 5}\int_0^v\frac{1}{-u}\lnm \nabla^i\left(\wt{\Omega\chi},\wt{\Omega\eta},\wt{\Omega\etab}\right)\rnm^2_{\LS(u^\prime,v^\prime;N)}\\
           &\qquad\qquad+\frac{v}{-u}\sum_{i\leq 5}\int_0^v\frac{1}{(-u)^{1+C\delta}}\lnm \nabla^i(\wt{b},\wt{g})\rnm^2_{\LS(u,v^\prime;N)}
           \lesssim B\left(\frac{v}{-u}\right)^{5/2}.
     \end{aligned}
    \end{equation*}
\end{proof}
By the Theorema Egregium, we can write $K\sim g^{-2}\partial^2(g^2)$, which leads to the estimate for $\wt{K}$.
\begin{corollary}
    We have the following non-top-order estimates for the curvature components:
    \begin{equation}\label{Estimate_non_top_order_curvature}
        \sum_{i\leq 3}\lnm \nabla^{i}\wt{K}\rnm^2_{\LS(u,v;N)}\lesssim B\frac{v^{5/2}}{(-u)^{5/2}}\ll \frac{v^2}{(-u)^2}.
    \end{equation}
\end{corollary}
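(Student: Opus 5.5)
We will derive the bound on $\wt K$ from the metric estimate in the preceding proposition, rather than from the null structure equations. The reason is that a Gauss-equation route would require control of $\nabla^i\wt{\Omega K}$ in $\LS$ norms, whereas the bootstrap assumption \eqref{eq:Bootstrap_Assumption} only provides $\LH$, $\LHb$ and $\LR$ control of $\nabla^i\wt K$.

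In the coordinates $\theta^A$ on $S_{u,v}$, the Theorema Egregium expresses the Gauss curvature schematically as
\[
K = g^{-1}g^{-1}\partial^2 g + g^{-1}g^{-1}g^{-1}\partial g\,\partial g .
\]
We write the same identity for $\ot K$ in terms of $\ot g$ and subtract. This expresses $\wt K$ as a finite sum of products in which exactly one factor carries a tilde: $\partial^2\wt g$, $\partial\wt g$, $\wt g$, or $\wt{g^{-1}} = -g^{-1}\wt g\,\ot g^{-1}$. The remaining factors are $g^{\pm1}$, $\ot g^{\pm1}$, and up to two derivatives of $\ot g$.

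The key steps are as follows.
\begin{enumerate}
\item Rewrite the coordinate derivatives $\partial$ as covariant derivatives $\nabla$ with respect to $g$. The difference is Christoffel symbols, which are lower order and can be treated exactly like the $\partial g$ factors above.
\item Apply $\nabla^i$ with $i\le 3$. This places at most $5$ derivatives on $\wt g$, which is exactly the range $i\le 5$ controlled by the preceding proposition.
\item Estimate each product with Lemma \ref{Estimate_Lemma_Calculation_of_Differences} in the $\LS(u,v;N)$ norm. The untilded factors are bounded through the preceding lemma on the $\ot\cdot$ quantities, in particular $\|(-u)^{j-1}\nabla^j(\ot g-g|_{v=0})\|\lesssim\epsilon_1$, together with the self-similar data on $\Hb_0$. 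Under the normalization by $(-u)^{-2}$ and the signature weights, these factors are $O(1)$, possibly up to harmless factors $(-u)^{-C\delta}$.
\end{enumerate}

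These steps give
\[
\sum_{i\le3}\|\nabla^i\wt K\|^2_{\LS(u,v;N)} \lesssim (-u)^{-C\delta}\sum_{i\le5}\|\nabla^i\wt g\|^2_{\LS(u,v;N)} \lesssim B\,(-u)^{-C\delta}\left(\frac{v}{-u}\right)^{5/2}.
\]
The signature bookkeeping is consistent: $s(\wt K) = -1 = s(\nabla^2\wt g)$. The extra factor $(-u)^{-C\delta}$ can be absorbed, either by the convention $\delta+\delta\sim\delta$ or by spending a small power of $v/(-u)^{1+\delta}\ll1$.

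The final inequality $B v^{5/2}/(-u)^{5/2}\ll v^2/(-u)^2$ then follows from $B(v/(-u))^{1/2}\ll1$ in the region $v\le\epsilon_1(-u)^{1+\kappa/N}$.

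The main obstacle will be the weight accounting. We must check that every untilded coefficient is genuinely scale-invariantly bounded. This includes the top-order $\partial^2\ot g$ factor, which appears multiplied by $\wt{g^{-1}}$ and so needs $\nabla^{5}\ot g$, or at least $\ot K$ and its derivatives, to be bounded via the preceding lemma. We must also check that the Christoffel symbols of $g$ versus $\ot g$ contribute no additional loss of $(-u)$ beyond $(-u)^{-C\delta}$. If the $\nabla^5\ot g$ control is insufficient, I would instead distribute the derivatives using the Sobolev product estimate \eqref{eq:prop-sobolev-1} and place the high derivatives only on $\wt g$.
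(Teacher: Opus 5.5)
Your proposal follows the paper's argument: the paper proves this corollary in one line by invoking the Theorema Egregium, $K\sim g^{-2}\partial^2(g^2)$, which transfers the bound on $\sum_{i\leq 5}\|\nabla^i\wt g\|^2_{\LS(u,v;N)}$ from the preceding proposition to $\sum_{i\leq 3}\|\nabla^i\wt K\|^2_{\LS(u,v;N)}$, and that is exactly your steps 1--3. Your extra bookkeeping only makes explicit what the paper leaves implicit: the signature matching $s(\wt K)=s(\nabla^2\wt g)=-1$, the $\nabla^5\ot g$ control supplied by the lemma on $\ot g$ for $j\leq N/2$, and the possible $(-u)^{-C\delta}$ factor absorbed by the $\delta+\delta\sim\delta$ convention.
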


We then {establish the non-top-order estimates for the connection components and the scalar field derivatives}.
\begin{proposition}
    We have the following estimates for the Ricci coefficients and $D\phi$:
    \begin{equation}\label{Estimate_non_top_order_Ricci_coefficients}
        \begin{aligned}
            &\sum_{i\leq 4}\lnm \nabla^i\left(\wt{\Omega\chib},\wt{\Omega\tr\chi},\wt{\eta},\wt{\Omega\omegab},\wt{\nabla\phi},\wt{\Omega e_3\phi},\wt{\nabla\phi}\right)\rnm^2_{\LS(u,v;N)}\lesssim B\frac{v^{5/2}}{(-u)^{5/2+C\delta}}\ll\frac{v^2}{(-u)^2},\\
            &\sum_{i\leq 4}\lnm \nabla^i\left(\wt{\Omega\chic},\wt{\Omega\omega},\wt{\Omega e_4\phi}\right)\rnm^2_{\LS(u,v;N)}\lesssim B\frac{v^{3/2}}{(-u)^{3/2+C\delta}}{{,}}\\
            &\sum_{i\leq 4}\lnm \nabla^i\wt{\etab}\rnm^2_{\LS(u,v;N)}\lesssim  B\frac{v^{5/2}}{(-u)^{5/2+C\delta}}+\lnm \nabla^4\wt{\dv(\Omega\chibc)}\rnm^2_{\LHb(u,v;N)}.
        \end{aligned}
    \end{equation}
\end{proposition}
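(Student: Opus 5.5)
The plan is to obtain all three groups of bounds from the Transport Estimate (Lemma~\ref{Transport_Estimate_LS}). The sources will be controlled by the bootstrap assumption \eqref{eq:Bootstrap_Assumption}, the bounds on $\wt b,\wt g$ from the preceding proposition, and the curvature bound \eqref{Estimate_non_top_order_curvature}. The quantities split according to the direction in which they are naturally transported.

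For $\wt{\Omega\tr\chi}$, $\wt\eta$, $\wt{\Omega\omegab}$, $\wt{\nabla\phi}$, $\wt{\Omega e_3\phi}$ and $\wt{\Omega\chib}$ I will use $\Omega\nabla_4$-equations, for which the $\psi_2$ case of Lemma~\ref{Transport_Estimate_LS} gains a factor $v/(-u)^{1+C\delta}$.
\begin{itemize}
\item $\wt{\Omega\tr\chi}$: use the Raychaudhuri equation.
\item $\wt\eta$: use \eqref{eq:eta-etab}, whose source is $\dv\chic$ together with $\wt{\Rc}_4$; the latter is the residual of the $4N$-th iterate.
\item $\wt{\Omega\omegab}$: use the $\partial_v(\Omega\omegab)$ equation.
\item $\wt{\nabla\phi}$: use $\Lie_v\nabla\phi=\nabla\partial_v\phi$.
\item $\wt{\Omega e_3\phi}$: use the wave equation in the form $\Omega^{-1}e_4(\Omega e_3\phi)=\dots$.
\item $\wt{\Omega\chib}$: use the $\nabla_4\chibc$, $\nabla_4\tr\chib$ equations of \eqref{eq:chib}.
\end{itemize}
Every right-hand side is one of the following:
\begin{itemize}
\item a product handled by Lemma~\ref{Estimate_Lemma_Calculation_of_Differences};
\item a term $\nabla^{\le5}$ of a quantity already in $\LH(u,v;N)$ at size $B(v/(-u))^{3/2}$, such as $\nabla\chic$, $\nabla\eta$, $\wt K$ or $\nabla(e_4\phi)$;
\item the approximate residual $\wt{\Rc},\wt{\square\phi}$, which is $O(v^{4N}(-u)^{-4N-\delta-\k})$ and hence negligible after weighting by $((-u)/v)^{N}$.
\end{itemize}
Multiplying by the transport gain $v/(-u)^{1+C\delta}$ yields $B v^{5/2}/(-u)^{5/2+C\delta}$, which is $\ll v^2/(-u)^2$ once $v\le\epsilon_1(-u)^{1+\k/N}$ with $\delta\ll\k/N$.

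For $\wt{\Omega\chic}$, $\wt{\Omega\omega}$ and $\wt{\Omega e_4\phi}$ there is no gain from the $e_4$ direction, because these are the outgoing quantities with size $(-u)^{-\delta}$. I will therefore use their $\Omega\nabla_3$-equations, namely the third line of \eqref{eq:chi}, the $\nabla_3\omega$ relation, and the wave equation for $\Omega e_4\phi$, through the $\psi_1$ case of Lemma~\ref{Transport_Estimate_LS}. Their data on $H_{-1}$ match those of the approximate solution to order $v^{4N}$ by Proposition~\ref{prop:sec-3-H-initial-g-chi-estimate}, so the $\LS(-1,v;N)$ term is negligible. The sources are $\nabla\eta$, $K$, $\nabla^2\phi$ and $\wt{\Omega\chib}$, together with $\nabla^{\le5}\wt b$, and all of these are bounded in $\LHb(u,v;N)$ by the bootstrap. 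The lemma then gives $\lesssim B(v/(-u))^{3/2}$; the factor $(-u)^{-C\delta}$ comes from the $(-u)^{-\delta}$ size of the background coefficients.

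For $\wt\etab$ I will use $\etab=-\eta+2\nabla\log\Omega$, or equivalently the $\nabla_3\etab$ equation in \eqref{eq:eta-etab} in the $\psi_1$ framework. Its source contains $\dv\chibc$, and at the top order $i=4$ this becomes $\nabla^4\dv(\Omega\chibc)$. That term is not controlled by $\nabla^{\le5}$ of anything in the $\LH$ bootstrap, so I will keep it explicitly as $\lnm\nabla^4\wt{\dv(\Omega\chibc)}\rnm^2_{\LHb}$. The lower orders follow as before.

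The main obstacle is derivative counting at $i=4$. The $\nabla_4$-equations for $\wt\eta$ and $\wt{\Omega\chib}$ involve $\nabla^5\chic$ or $\nabla^5\etab$, and the $\nabla_3$-equation for $\chic$ involves $\nabla^5\eta$. I must check that each of these sits exactly at the bootstrapped order $j\le5$ in the correct null hypersurface norm, and that no $\nabla^5\chic$ is ever needed in $\LHb$. Where a direct transport loses a derivative, I will instead recover the top order from the div–curl systems via the Corollary of Proposition~\ref{prop:perturbative-elliptic-estimate}, using the Codazzi equations for $\chic$ and $\chibc$ and $d\eta$, $\dv\eta$ for $\eta$.
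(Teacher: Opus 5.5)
Your plan follows the paper's proof: $e_4$-transport with the $v/(-u)^{1+C\delta}$ gain for the first group, $e_3$-transport (with the $(-u)^{-C\delta}$ loss from the $(-u)^{-\delta}$-size backgrounds) for $\wt{\Omega\chic},\wt{\Omega\omega},\wt{\Omega e_4\phi}$, and the $\nabla_3\etab$ equation with $\nabla^4\wt{\dv(\Omega\chibc)}$ kept explicitly. The one deviation, transporting $\tr\chib$ along $e_4$ instead of $e_3$, also works, since $\nabla^5\wt{\etab}$ and $\nabla^4\wt{K}$ are in the $\LH$ bootstrap.

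Two corrections. First, $\etab=-\eta+2\nabla\log\Omega$ is not an equivalent route, because no bound on $\nabla^5\wt{\log\Omega}$ is available at this stage; the paper derives it afterwards from $\eta$ and $\etab$. Second, the improved lower-order bound for $\etab$ does not come from a transport gain. It comes from inserting the already-improved first-group bounds, integrated in $u$, into the $\LHb$ sources of the $\nabla_3\etab$ equation, so the first group must be estimated before $\etab$.
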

\begin{proof}

Using Lemma {\ref{Transport_Estimate_LS}}, we can estimate the quantities governed by $\nabla_4$ equations. First, for $\Omega\nabla_4\nabla^i(\Omega\chih)$, we note that the top-order term is $\nabla^{i}(\Omega^2\nabla\eta)$. We then compute
\begin{equation*}
    \begin{aligned}
        &\sum_{i\leq 4}\lnm \nabla^i\wt{\Omega\chibh}\rnm^2_{\LS(u,v;N)}
        \lesssim  \frac{v}{(-u)^{1+C\delta}}\sum_{i\leq 4}\lnm \nabla^i\left(\wt{\Omega\chib},\wt{\Omega\chi},\wt{\Omega\etab},\wt{\Omega\nabla\phi}\right)\rnm^2_{\LH(u,v;N)}\\
        &\qquad\qquad\qquad+\frac{v}{-u}\lnm \nabla^5\wt{\Omega^2\nabla\etab}\rnm^2_{\LH(u,v;N)}+\sum_{i\leq 4}\frac{v}{-u}\lnm \wt{\Omega^2\widehat{\Rc}}\rnm^2_{\LH(u,v;N)}
        \lesssim  B\frac{v^{5/2}}{(-u)^{5/2+C\delta}}.
    \end{aligned}
\end{equation*}
We will frequently use $\frac{v^{1/2}}{(-u)^{1/2+C\delta}}\ll 1$.
For $\Omega\tr\chi$, the estimate follows from its $\nabla_4$ equation
\begin{equation*}
    \begin{aligned}
        \sum_{i\leq 4}\lnm \nabla^i\wt{\Omega\tr\chi}\rnm^2_{\LS(u,v;N)}\lesssim &  \frac{v}{(-u)^{1+C\delta}}\sum_{i\leq 4}\lnm \nabla^i\left(\wt{\Omega\omega},\wt{\Omega\chi},\wt{\Omega\nabla\phi}\right)\rnm^2_{\LH(u,v;N)}
        \lesssim  B\frac{v^{5/2}}{(-u)^{5/2+C\delta}}.
    \end{aligned}
\end{equation*}
The top-order term in $\Omega\nabla_4\nabla^i\eta$ is $\nabla^i\dv(\Omega\chic)$. We then have
\begin{equation*}
    \begin{aligned}
        &\sum_{i\leq 4}\lnm \nabla^i\wt{\eta}\rnm^2_{\LS(u,v;N)}\lesssim   \frac{v}{(-u)^{1+C\delta}}\sum_{i\leq 4}\lnm \nabla^i\left(\wt{\Omega\chi},\wt{\etab},\wt{\nabla\phi},\wt{\Omega e_4\phi}\right)\rnm^2_{\LH(u,v;N)}\\
        &\qquad\qquad\qquad\qquad+\frac{v}{-u}\sum_{i\leq 4}\lnm \nabla^i\left(\wt{\dv(\Omega\chic)},\wt{\Omega\Rc}_{4A}\right)\rnm^2_{\LH(u,v;N)} 
        \lesssim  B\frac{v^{5/2}}{(-u)^{5/2+C\delta}}.
    \end{aligned}
\end{equation*}
For $\Omega\omegab,\nabla\phi,\Omega e_3\phi$, we compute
\begin{equation*}
    \begin{aligned}
        &\sum_{i\leq 4}\lnm \nabla^i\wt{\Omega\omegab}\rnm^2_{\LS(u,v;N)}
        \lesssim  \frac{v}{(-u)^{1+C\delta}}\sum_{i\leq 4}\lnm \nabla^i\left(\wt{\Omega\chib},\wt{\Omega\chi},\wt{\Omega\eta},\wt{\Omega\etab},\wt{\Omega e_3\phi},\wt{\Omega e_4\phi},\wt{\Omega\nabla\phi}\right)\rnm^2_{\LH(u,v;N)}\\
        &\qquad\qquad\qquad\qquad+\sum_{i\leq 4}\frac{v}{-u}\lnm \nabla^i\left(\wt{\Omega^2K},\wt{\Omega^2{\Rc}}_{34},\wt{\Omega^2 R}\right)\rnm^2_{\LH(u,v;N)} 
        \lesssim B\frac{v^{5/2}}{(-u)^{5/2+C\delta }},
    \end{aligned}
\end{equation*}

\begin{equation*}
    \begin{aligned}
        \sum_{i\leq 4}\lnm \nabla^i\wt{\nabla\phi}\rnm^2_{\LS(u,v;N)}
        \lesssim & \frac{v}{(-u)^{1+C\delta}}\sum_{i\leq 4}\lnm \nabla^i\left(\wt{\Omega\chi},\wt{\Omega e_4\phi}\right)\rnm^2_{\LH(u,v;N)} +\frac{v}{-u}\lnm \nabla^5\wt{\Omega e_4\phi}\rnm^2_{\LH(u,v;N)}\\
        \lesssim & B\frac{v^{5/2}}{(-u)^{5/2+C\delta}},
    \end{aligned}
\end{equation*}
\begin{equation*}
    \begin{aligned}
        & \sum_{i\leq 4}\lnm \nabla^i\wt{\Omega e_3\phi}\rnm^2_{\LS(u,v;N)}
        \lesssim \frac{v}{(-u)^{1+C\delta}}\sum_{i\leq 4}\lnm \nabla^i\left(\wt{\Omega\chi},\wt{\Omega\chib},\wt{\Omega e_4\phi},\wt{\Omega\eta},\wt{\Omega\nabla\phi}\right)\rnm^2_{\LH(u,v;N)} \\
        &\qquad\qquad\qquad\qquad+\frac{v}{-u}\lnm \nabla^5\wt{\Omega^2 \nabla\phi}\rnm^2_{\LH(u,v;N)}+\sum_{i\leq 4}\frac{v}{-u}\lnm \nabla^i\wt{\Omega^2\square\phi}\rnm^2_{\LH(u,v;N)} 
        \lesssim   B\frac{v^{5/2}}{(-u)^{5/2+C\delta }}.
    \end{aligned}
\end{equation*}
We now estimate the quantities governed by $\nabla_3$ equations. 
Since none of $\Omega\chih,\Omega\omega,\Omega e_4\phi$ appears on the right-hand sides of $\Omega\nabla_3(\Omega\tr\chib)$ and $\Omega\nabla_3\etab$, we obtain
\begin{equation*}
    \begin{aligned}
        & \sum_{i\leq 4}\lnm \nabla^i\wt{\Omega\tr\chib}\rnm^2_{\LS(u,v;N)}
        \lesssim \sum_{i\leq 4}\lnm \nabla^i\wt{\Omega\tr\chib}\rnm^2_{\LS(-1,v;N)}+\sum_{i\leq 4}\lnm \nabla^i\left(\wt{b},\wt{\Omega\omegab},\wt{\Omega\chibc},\wt{\Omega e_3\phi}\right)\rnm^2_{\LHb(u,v;N)}\\
        &\qquad\qquad\qquad+\sum_{i\leq 4}\lnm \nabla^i\wt{\Omega^2\Rc}_{33}\rnm^2_{\LHb(u,v;N)}+\lnm \nabla^5\wt{b}\rnm^2_{\LHb(u,v;N)}\lesssim B\frac{v^{5/2}}{(-u)^{5/2+C\delta}},
    \end{aligned}
\end{equation*}
\begin{equation*}
    \begin{aligned}
        &\sum_{i\leq 4}\lnm \nabla^i\wt{\etab}\rnm^2_{\LS(u,v;N)}
        \lesssim  \sum_{i\leq 4}\lnm \nabla^i\wt{\etab}\rnm^2_{\LS(-1,v;N)} +\sum_{i\leq 4}\lnm \nabla^i\left(\wt{b},\wt{\Omega\chib},\wt{\eta},\wt{\Omega e_3\phi},\wt{\nabla\phi}\right)\rnm^2_{\LHb(u,v;N)}\\
        &\qquad\qquad\qquad+\sum_{i\leq 4}\lnm \nabla^i\left(\wt{\dv(\Omega\chibc)},\wt{{\Rc}}_{3A}\right)\rnm^2_{\LHb(u,v;N)}+\lnm \nabla^5\wt{b}\rnm^2_{\LHb(u,v;N)} \\
        &\qquad\lesssim   B\frac{v^{5/2}}{(-u)^{5/2+C\delta}} + \lnm \nabla^4\wt{\dv(\Omega\chibc)}\rnm^2_{\LHb(u,v;N)}.
    \end{aligned}
\end{equation*}
This completes the estimates for $\psi_1$. We next estimate $\Omega e_4\phi,\Omega\omega,\Omega\chih$. Because of terms containing $\ot{\psi}_2$, such as $\wt{\Omega\chib}\ot\psi_2$ and $\nabla\wt{b}\ot{\psi_2}$, the non-top-order terms acquire an additional factor of $\frac{1}{(-u)^{C\delta}}$. The estimates therefore differ slightly from those in Lemma \ref{Transport_Estimate_LS}. We compute
\begin{equation*}
    \begin{aligned}
        &\sum_{i\leq 4}\lnm \nabla^i\wt{\Omega\chih}\rnm^2_{\LS(u,v;N)} 
        \lesssim  \sum_{i\leq 4}\lnm \nabla^i\wt{\Omega\chih}\rnm^2_{\LS(-1,v;N)} +\sum_{i\leq 4}\lnm \nabla^i\wt{\Omega^2\widehat{\Rc}}\rnm^2_{\LHb(u,v;N)}+\lnm \nabla^5\wt{\Omega^2\nabla\eta}\rnm^2_{\LHb(u,v;N)}\\
        &\qquad\qquad+\frac{1}{(-u)^{C\delta}}\lnm \nabla^5\wt{b}\rnm^2_{\LHb(u,v;N)}+\sum_{i\leq 4}\frac{1}{(-u)^{C\delta}}\lnm \nabla^i\left(\wt{b},\wt{\Omega\chib},\wt{\Omega \tr\chi},\wt{\Omega\eta},\wt{\Omega\nabla\phi}\right)\rnm^2_{\LHb(u,v;N)}\\
        &\qquad\lesssim  B\frac{v^{3/2}}{(-u)^{3/2+C\delta}},
    \end{aligned}
\end{equation*}

\begin{equation*}
    \begin{aligned}
        &\sum_{i\leq 4}\lnm \nabla^i\wt{\Omega\omega}\rnm^2_{\LS(u,v;N)}
        \lesssim \sum_{i\leq 4}\lnm \nabla^i\wt{\Omega\omega}\rnm^2_{\LS(-1,v;N)} +\sum_{i\leq 4}\lnm \nabla^i\left(\wt{\Omega^2K},\wt{\Omega^2{\Rc}}_{34},\wt{\Omega^2 R}\right)\rnm^2_{\LHb(u,v;N)}\\
        &\qquad\qquad+\sum_{i\leq 4}\frac{1}{(-u)^{C\delta}}\lnm \nabla^i\left(\wt{b},\wt{\Omega\chib},\wt{\Omega\chi},\wt{\Omega\eta},\wt{\Omega\etab},\wt{\Omega e_3\phi},\wt{\Omega e_4\phi},\wt{\Omega\nabla\phi}\right),\nabla^5\wt{b}\rnm^2_{\LHb(u,v;N)}\\
        &\qquad\lesssim  B\frac{v^{3/2}}{(-u)^{3/2+C\delta}},
    \end{aligned}
\end{equation*}
and 
\begin{equation*}
    \begin{aligned}
        &\sum_{i\leq 4}\lnm \nabla^i\wt{\Omega e_4\phi}\rnm^2_{\LS(u,v;N)}\lesssim \sum_{i\leq 4}\lnm \nabla^i\wt{\Omega e_4\phi}\rnm^2_{\LS(-1,v;N)}+\sum_{i\leq 4}\lnm \nabla^i\wt{\Omega^2\square\phi}\rnm^2_{\LHb(u,v;N)} \\
        &\qquad+\lnm \nabla^5\wt{\Omega^2\nabla\phi}\rnm^2_{\LHb(u,v;N)}+\sum_{i\leq 4}\frac{1}{(-u)^{C\delta}}\lnm \nabla^i\left(\wt{b},\wt{\Omega\chib},\wt{\Omega \tr\chi},\wt{\Omega\eta},\wt{\Omega\nabla\phi}\right)\rnm^2_{\LHb(u,v;N)}\\
        &\qquad+\frac{1}{(-u)^{C\delta}}\lnm \nabla^5\wt{b}\rnm^2_{\LHb(u,v;N)} 
        \lesssim   B\frac{v^{3/2}}{(-u)^{3/2+C\delta}}.
    \end{aligned}
\end{equation*}
\end{proof}

\subsection{Energy estimates for top-order terms}
In this subsection, we close the bootstrap {argument} based on \eqref{eq:Bootstrap_Assumption}. We first list the Bianchi pairs of interest:
\[(\dv(\Omega\chic),(\Omega d\eta, \Omega K)),\ ((\Omega d\etab,\Omega K), \dv(\Omega\chibc)),\ (\nabla(\Omega e_4\phi),\Omega\nabla^2\phi),\ (\Omega\nabla^2\phi,\nabla(\Omega e_3\phi)).\]
Because their equations satisfy the model equations in Lemma \ref{Energy_Estimate_LH_LHb}, we obtain
\begin{equation*}
    \begin{aligned}
        &\lnm\nabla^4\wt{\Psi}_1 \rnm_{\LH(u,v;N)}^2+\lnm\nabla^4\wt{\Psi}_2 \rnm_{\LHb(u,v;N)}^2+\lnm\nabla^4\wt{\Psi}_2 \rnm_{\LR(u,v;N)}^2\\
        \lesssim & \int_0^v\int_{-1}^u \frac{1}{(-u^\prime)^{2+C\delta}}\lnm \nabla^4\wt{\Psi}\rnm^2_{\LS(u^\prime,v;N)}+\frac{v^2}{(-u)^2}\\
        &\qquad+\frac{1}{(-u^\prime)^2}\lnm \nabla^5\left(\wt{\Omega\chic},\wt{\Omega e_4\phi}\right)\rnm^2_{\LS(u^\prime,v;N)}-\frac{N}{(-u^\prime)^2}\lnm \nabla^4\wt{\Psi_1}\rnm^2_{\LS(u^\prime,v;N)}.
    \end{aligned}
\end{equation*}
The elliptic estimate gives
\begin{equation*}
    \lnm \nabla^5\wt{\Omega\chic}\rnm^2_{\LS(u,v;N)}\lesssim \lnm \nabla^5\wt{\Omega\tr\chi},\nabla^4\wt{\dv\Omega\chic},\nabla^4 \wt{K}\rnm^2_{\LS(u,v;N)}+\frac{v^2}{(-u)^2}.
\end{equation*}
The estimate above can be rewritten as
\begin{equation*}
    \begin{aligned}
        &\lnm\nabla^4\wt{\Psi}_1 \rnm_{\LH(u,v;N)}^2+\lnm\nabla^4\wt{\Psi}_2 \rnm_{\LHb(u,v;N)}^2+\lnm\nabla^4\wt{\Psi}_2 \rnm_{\LR(u,v;N)}^2\\
        \lesssim & \int_0^v\int_{-1}^u \frac{1}{(-u^\prime)^{2+C\delta}}\lnm \nabla^4\wt{\Psi}\rnm^2_{\LS(u^\prime,v;N)}+\frac{v^2}{(-u)^2}\\
        &\qquad+\frac{1}{(-u^\prime)^2}\lnm \nabla^4\left(\wt{\dv\Omega\chic},\wt{\nabla(\Omega e_4\phi)}\right)\rnm^2_{\LS(u^\prime,v;N)}-\frac{N}{(-u^\prime)^2}\lnm \nabla^4\wt{\Psi_1}\rnm^2_{\LS(u^\prime,v;N)}\\
        \lesssim & \int_0^v\int_{-1}^u \frac{1}{(-u^\prime)^{2+C\delta}}\lnm \nabla^4\wt{\Psi}\rnm^2_{\LS(u^\prime,v;N)}+\frac{v^2}{(-u)^2}.
    \end{aligned}
\end{equation*}
Since $\Psi$ above denotes neither $\dv(\Omega\chic)$ nor $\nabla( \Omega e_4\phi)$, the bootstrap assumption gives
$$\lnm\nabla^4\wt\Psi\rnm_{\LHb(u,v;N)}\lesssim B\frac{v^{3/2}}{(-u)^{3/2}}.$$
The first term on the right-hand side is controlled by
\begin{equation*}
    \int_0^v\int_{-1}^u \frac{1}{(-u^\prime)^{2+C\delta}}\lnm \nabla^4\wt{\Psi}\rnm^2_{\LS(u^\prime,v;N)}\lesssim B\frac{v^{5/2}}{(-u)^{5/2+C\delta}}.
\end{equation*}
We therefore arrive at
\begin{equation}\label{eq:top-order-est-1}
    \lnm\nabla^4\wt{\Psi}_1 \rnm_{\LH(u,v;N)}^2+\lnm\nabla^4\wt{\Psi}_2 \rnm_{\LHb(u,v;N)}^2+\lnm\nabla^4\wt{\Psi}_2 \rnm_{\LR(u,v;N)}^2\lesssim\frac{v^2}{(-u)^2}
\end{equation}
for the Bianchi pairs
\[(\dv(\Omega\chic),(\Omega d\eta, \Omega K)),\ ((\Omega d\etab,\Omega K), \dv(\Omega\chibc)),\ (\nabla(\Omega e_4\phi),\Omega\nabla^2\phi),\ (\Omega\nabla^2\phi,\nabla(\Omega e_3\phi)).\]
{As a direct consequence, we obtain the remaining} non-top-order estimate for $\etab$ in \eqref{Estimate_non_top_order_Ricci_coefficients},
\begin{equation*}
    \sum_{i\leq 4}\lnm \nabla^i\wt\etab\rnm^2_{\LS^2(u,v;N)}\lesssim \left(\frac{v}{-u}\right)^2,
\end{equation*}
and 
\begin{equation*}
    \sum_{i\leq 4}\lnm \nabla^{i+1}\wt{\log\Omega}\rnm^2_{\LS^2(u,v;N)}\lesssim \sum_{i\leq 4}\lnm \nabla^i\wt\eta,\nabla^i\wt\etab\rnm^2_{\LS^2(u,v;N)}\lesssim\left(\frac{v}{-u}\right)^2.
\end{equation*}
Recall the $\nabla_4\tr\chi$ and $\nabla_4(\dv\etab-K)$ equations. We have the schematic equation
\begin{equation*}
    \Lie_v\wt{\Psi_2}\sim \wt{\Psi}+\wt{\psi\Psi}+\wt{\psi\nabla(\Omega\chic)}+\wt{\psi\nabla(\Omega e_4\phi)}+\wt{\Omega^2\Rc}
\end{equation*}
for $\Psi_2\in\{\Omega^2\nabla(\Omega^{-1}\tr\chi), \Omega \dv\eta-\Omega K\}$. We do not use $\nabla(\Omega\tr\chi)$, in order to avoid top-order derivatives of $\omega$. Therefore, the transport estimate gives
\begin{equation*}
    \sum_{i\leq 4}\lnm \nabla^i\wt{\Psi_2}\rnm^2_{\LS(u,v;N)}\lesssim B\frac{v^{5/2}}{(-u)^{5/2+C\delta}}\ll \frac{v^2}{(-u)^2}.
\end{equation*}

For $\tr\chib$ and $\dv\eta-K$, we need to treat their $\nabla_3$ equations carefully. Recall that for ESE, 
\begin{equation*}
    \begin{aligned}
        \Omega\nabla_3 (\Omega^2\nabla(\Omega^{-1}\tr\chib))\sim &\Omega\omegab \Omega^2\nabla(\Omega^{-1}\tr\chib)+(\eta+\etab)\left|\Omega\chibc\right|^2+(\eta+\etab)\left|\Omega e_3\phi\right|^2\\
        &+\Omega\chib \Omega^2\nabla(\Omega^{-1}\tr\chib)+\nabla\left(\Omega\chibc\right)\cdot\Omega\chibc+\nabla\left(\Omega e_3\phi\right)\cdot\Omega e_3\phi,
    \end{aligned}
\end{equation*}
\begin{equation*}
    \begin{aligned}
        \Omega\nabla_3(\Omega\dv\eta-\Omega K)\sim & (\Omega\omegab+\Omega\chib)(\Omega\dv\eta-\Omega K)+\Omega\nabla(\eta-\etab)\Omega\chibh+\Omega(\eta-\etab)\nabla(\Omega\chibh)\\
        &+\Omega\nabla^2\phi\Omega e_3\phi+\Omega\nabla\phi\nabla(\Omega e_3\phi).
    \end{aligned}
\end{equation*}
Because 
$$\nabla^4\left(\wt{[\Omega^2\nabla(\Omega^{-1}\tr\chib)]}-\wt{[\nabla(\Omega\tr\chib)]}\right)\sim \nabla^4\left(\wt{\eta\Omega\tr\chib}+\wt{\eta\Omega\tr\chib}\right)$$
contains no top-order terms, the $\LS(u,v;N)$ norms of $\nabla^4\wt{[\Omega^2\nabla(\Omega^{-1}\tr\chib)]}$ and $\nabla^4\wt{[\nabla(\Omega\tr\chib)]}$ differ at most $(v/u)^2.$
We can use the transport estimate in Lemma \ref{Transport_Estimate_LS} and the non-top-order results in \eqref{Estimate_non_top_order_Ricci_coefficients} to obtain
\begin{equation}\label{eq:top-trchib-LS}
    \begin{aligned}
        \lnm \nabla^4\wt{\left(\Omega^2\nabla(\Omega^{-1}\tr\chib)\right)}\rnm^2_{\LS(u,v;N)}\lesssim & \lnm \nabla^5\left(\wt{\Omega\chib},\wt{\Omega e_3\phi}\right)\rnm^2_{\LHb(u,v;N)}+\frac{v^2}{(-u)^2}\\
        &- N \lnm \nabla^4\wt{\left[\nabla(\Omega\tr\chib)\right]}\rnm^2_{\LHb(u,v;N)},
    \end{aligned}
\end{equation}
and 
\begin{equation}\label{eq:top-eta-K-LS}
    \begin{aligned}
        \lnm \nabla^4\left(\wt{\Omega\dv\eta}-\wt{\Omega K}\right)\rnm^2_{\LS(u,v;N)}\lesssim & \lnm \nabla^5\left(\wt{\Omega\chib},\wt{\Omega e_3\phi},\wt{\Omega\nabla\phi},\wt{\Omega\etab},\wt{\Omega \eta}\right)\rnm^2_{\LHb(u,v;N)}\\
        &+\frac{v^2}{(-u)^2}-N \lnm \nabla^4\left(\wt{\Omega\dv\eta}-\wt{\Omega K}\right)\rnm^2_{\LHb(u,v;N)}.
    \end{aligned}
\end{equation}
Combining {these estimates} with \eqref{eq:top-order-est-1} and the elliptic estimate
\begin{equation*}
    \lnm \nabla^5\wt{\Omega\chibc}\rnm^2_{\LS(u,v;N)}\lesssim \lnm \nabla^5\wt{\Omega\tr\chib},\nabla^4\wt{\dv\Omega\chibc},\nabla^4 \wt{K}\rnm^2_{\LS(u,v;N)}+\frac{v^2}{(-u)^2},
\end{equation*}
we find 
\begin{equation}\label{eq:top-order-est-2}
    \begin{aligned}
        \lnm \nabla^5\wt{\left(\Omega\tr\chib\right)}\rnm^2_{\LS(u,v;N)}+\lnm \nabla^4\left(\wt{\Omega\dv\eta}-\wt{\Omega K}\right)\rnm^2_{\LS(u,v;N)}\lesssim \frac{v^2}{(-u)^2},
    \end{aligned}
\end{equation}
and thus
\begin{equation}\label{eq:top-order-est-3}
    \lnm \nabla^5\wt{\left(\Omega\chib\right)}\rnm^2_{\LHb(u,v;N)}\lesssim\frac{v^2}{(-u)^2},\ \ \lnm \nabla^4\wt{\etab}\rnm^2_{\LS(u,v;N)}\lesssim\frac{v^2}{(-u)^2}.
\end{equation}
By the elliptic estimate, we find that 
\begin{equation*}
    \sum_{i\leq 5}\lnm \nabla^i\wt\eta\rnm_{\LS(u,v;N)}\lesssim \sum_{i\leq 4}\lnm \nabla^i\left(\wt\eta,d\wt{\eta},\dv\wt{\eta},\wt{K}\right)\rnm_{\LS(u,v;N)},
\end{equation*}
which implies 
 \begin{equation*}
    \begin{aligned}
        &\lnm \Omega\nabla^5\wt\eta\rnm_{\LH(u,v;N)}\lesssim \sum_{i\leq 4,j\leq 5}\lnm \nabla^i\left(\Omega\wt\eta,\wt{\Omega d\etab},\wt{\Omega\dv\eta}-\wt{\Omega K},\wt{\Omega K}\right),\nabla^j\left(\wt{g},\wt{\log\Omega}\right)\rnm_{\LH(u,v;N)},\\
        &\lnm \Omega\nabla^5\wt\eta\rnm_{\LHb(u,v;N)}\lesssim \sum_{i\leq 4,j\leq 5}\lnm \nabla^i\left(\Omega\wt\eta,\wt{\Omega d\eta},\wt{\Omega\dv\eta}-\wt{\Omega K},\wt{\Omega K}\right),\nabla^j\left(\wt{g},\wt{\log\Omega}\right)\rnm_{\LHb(u,v;N)}.
    \end{aligned}
\end{equation*}
{We remark that} $\tr d\etab=\cl\etab=-\cl\eta=-\tr d\eta$. Similarly, for $\etab$ we have
 \begin{equation*}
    \begin{aligned}
        &\lnm \Omega\nabla^5\wt\etab\rnm_{\LH(u,v;N)}\lesssim \sum_{i\leq 4,j\leq 5}\lnm \nabla^i\left(\Omega\wt\etab,\wt{\Omega d\etab},\wt{\Omega\dv\etab}-\wt{\Omega K},\wt{\Omega K}\right),\nabla^j\left(\wt{g},\wt{\log\Omega}\right)\rnm_{\LH(u,v;N)},\\
        &\lnm \Omega\nabla^5\wt\etab\rnm_{\LHb(u,v;N)}\lesssim \sum_{i\leq 4,j\leq 5}\lnm \nabla^i\left(\Omega\wt\etab,\wt{\Omega d\eta},\wt{\Omega\dv\etab}-\wt{\Omega K},\wt{\Omega K}\right),\nabla^j\left(\wt{g},\wt{\log\Omega}\right)\rnm_{\LHb(u,v;N)}.
    \end{aligned}
\end{equation*}
Employing {the} estimates \eqref{eq:top-order-est-1}, \eqref{eq:top-order-est-2}, \eqref{eq:top-order-est-3}, and \eqref{Estimate_non_top_order_Ricci_coefficients}, we obtain
\begin{equation*}
    \lnm \Omega\nabla^5(\wt\eta,\wt\etab)\rnm^2_{\LH(u,v;N)}+\lnm \Omega\nabla^5(\wt\eta,\wt\etab)\rnm^2_{\LHb(u,v;N)}\lesssim \frac{v^2}{(-u)^2}.
\end{equation*}
{This completes the a priori estimates.}
 \section{Trapped Surface Formation and Marginally Outer Trapped Surfaces}\label{Section_Trapped_surface_formation} 

\subsection{Isotropic trapped surface}

In this section, we demonstrate the formation of trapped surfaces near the singularity arising from isotropic perturbations with growth rate $v^{\frac{\k}{2}-o(1)}$ along $H_{-1}$.
We recall the geodesic coordinates $(u,v,\theta)\rightarrow (S(u,v,\theta),V,\Theta)$ defined by
\begin{equation*}
    (u,v,\theta)=\exp_{(-1,V,\Theta)}\left(S\cdot (\Omega^{-1}e_3)|_{(-1,V,\Theta)}\right),
\end{equation*}
where $(-1,V,\Theta)$ is the point on $H_{-1}$ expressed in double-null coordinates. Furthermore, we define
$$U(S,V,\Theta)=\int_0^S\Omega^{-2}(S^\prime,V,\Theta)dS^\prime.$$
We then consider the coordinates $(U,V,\Theta)$. We have $u=U$ and $\partial_U=\left(\frac{\partial U}{\partial S}\right)^{-1}\partial_S=\Omega e_3$.
We emphasize that the Ricci coefficients used below are defined in the coordinates $(u,v,\theta)$ but regarded as tensor-valued functions of $(U,V,\Theta)$.
We also note that the coordinate transformation from $(u,v,\theta)$ to $(U,V,\Theta)$ can be written as $(U,V,\Theta)=(u,v,\Theta(u,v,\theta))$. The difference $\psi-\psi_0$ is still defined in the $(u,v,\theta)$ coordinates as $\psi(u,v,\theta)-\psi(u,0,\theta)$.

\begin{proposition}
    Suppose that {the following assumption holds along $H_{-1}$ for every $\theta$}:
\begin{equation*}
    \left|\left|\Omega\chih\right|^2-\left|(\Omega\chih)_0\right|^2\right|+\left|\left|\Omega e_4\phi\right|^2-\left|(\Omega e_4\phi)_0\right|^2\right|>\frac{1}{C}v^{2\lambda},\ \left|\Omega\chih-(\Omega\chih)_0\right|+\left|\Omega e_4\phi-(\Omega e_4\phi)_0\right|\leq Cv^\lambda.
\end{equation*}
If $0<\lambda<\frac{\k(2N-3)}{2(2N+3\k)}$, then along $v_0(u)=\epsilon_1(-u)^{1+\frac{3\k}{2N}}$ with $\epsilon_1$ sufficiently small and $(-u)$ sufficiently small, we have $\tr\chi(u,v_0(u),\theta)<0$ for each $\theta$.
\end{proposition}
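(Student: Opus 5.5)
Write $\theta$ for the angular point tracked along the incoming generator through $\Theta$. The plan has three steps: propagate the lower bound for the perturbed shear and scalar energy from $H_{-1}$ inward along $\Hb_v$, feed it into the outgoing Raychaudhuri equation, and integrate from $v=0$ to $v=v_0(u)$.

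\emph{Step 1: transport along $\Hb_v$.} I will work in the geodesic coordinates $(U,V,\Theta)$ introduced above, where $\partial_U=\Omega e_3$. In the $\nabla_3$ equations for $\Omega\chih$ and $\Omega e_4\phi$ (cf. \eqref{Construction_Eq_chih}, \eqref{Construction_Eq_D4phi}), the leading linear term is $\frac12\Omega\tr\chib\approx -\frac{1}{-u}$, and the sources are of order $(-u)^{\k-2}$. Subtracting the background $(\cdot)_0$ and applying Lemma~\ref{lemma:sec3-transport-estimate} with $\lambda\approx1$ gives
\[
(-u)\bigl(\Omega\chih-(\Omega\chih)_0\bigr)(u,v,\theta)=\bigl(\Omega\chih-(\Omega\chih)_0\bigr)(-1,v,\Theta)+O\!\left(\frac{v}{(-u)^{1+\delta}}\right)+O(v^{\k}) .
\]
The same holds for $\Omega e_4\phi$. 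Here I use the lemma on the bounds for $\ot\psi$ together with the a priori estimates of Section~4, which are valid in the region $v\le\epsilon_1(-u)^{1+\k/N}$ and in particular on $v_0(u)=\epsilon_1(-u)^{1+3\k/(2N)}$, to control $\wt\psi$. Combining this with the hypothesis $|\cdot|\le Cv^\lambda$ yields
\[
(-u)^2\bigl(|\Omega\chih|^2-|(\Omega\chih)_0|^2+|\Omega e_4\phi|^2-|(\Omega e_4\phi)_0|^2\bigr)\ge \tfrac1C v^{2\lambda}-\text{error}.
\]
The cross terms with $(\Omega\chih)_0$ are handled by keeping the sign-definite hypothesis on the difference of squares.

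\emph{Step 2: Raychaudhuri.} I rewrite the $\nabla_4(\Omega^{-1}\tr\chi)$ equation as
\[
\partial_v(\Omega^{-1}\tr\chi)=-\tfrac12\Omega^2(\Omega^{-1}\tr\chi)^2-\Omega^{-2}\bigl(|\Omega\chih|^2+(\partial_v\phi)^2\bigr).
\]
I integrate it from $v=0$, where $(-u)\Omega^{-1}\tr\chi(u,0)$ is bounded by self-similarity, namely $\Omega^{-1}\tr\chi(u,0)=(-u)^{-1-\k}\cdot O(1)$. Using $\Omega^{-2}\sim(-u)^{-\k}$ and subtracting the background contribution, I obtain
\[
(-u)\Omega^{-1}\tr\chi(u,v_0,\theta)\le C-\frac1C\,\frac{v_0^{1+2\lambda}}{(-u)^{1+\k}}+\text{lower order terms}.
\]
The quadratic term has a favorable sign and can be dropped. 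Substituting $v_0(u)=\epsilon_1(-u)^{1+3\k/(2N)}$ makes the leading term equal to $-\epsilon_1^{1+2\lambda}(-u)^{-q}$ with $q>0$ exactly when $\lambda<\frac{\k(2N-3)}{2(2N+3\k)}$, which is the Remark's computation. This term therefore tends to $-\infty$ and dominates $C$ as $-u\to0$.

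\emph{Main obstacle: the error terms.} The hard step is showing that all the error terms are $o\bigl((-u)^{-q}\bigr)$ along $v_0$. These are the $O\bigl(v/(-u)^{1+\delta}\bigr)$ transport errors from Step~1, the background mismatch $|(\Omega\chih)_0|^2$ terms integrated over $[0,v_0]$, which are only $O\bigl(v_0(-u)^{-1-\k-2\delta}\bigr)$, and the $\wt\psi$ contributions. My plan is to use the gap between $q$ and the $\delta$-losses, choosing $\delta\ll\k/N$ so that these terms are bounded by $\epsilon_1^{1-}(-u)^{3\k/(2N)-C\delta}$, which stays bounded. The terms whose $v$-power is below $1+2\lambda$ need a separate check via the explicit form of $v_0$. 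Finally, $\tr\chi<0$ follows from the sign of $\Omega^{-1}\tr\chi$.
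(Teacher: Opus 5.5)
\paragraph{Verdict.} Your architecture is the paper's: you propagate the outgoing perturbation inward along $\Hb_v$, insert it into the Raychaudhuri equation for $\Omega^{-1}\tr\chi$, and integrate up to $v_0(u)$. But Step~1, as you set it up, gives no lower bound in the regime $-u\to0$. The errors you defer to the ``main obstacle'' paragraph are not lower order.

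\paragraph{Where Step~1 fails.} The problem is the feedback of $\tr\chi$ into the $\nabla_3$ equations. The main culprit is the term $-\frac12\Omega^2(\Omega^{-1}\tr\chi)\,\Omega e_3\phi$ in the equation for $\Omega e_4\phi$. Its coefficient $\Omega^2\,\Omega e_3\phi\sim(-u)^{\k-1}$ is not small; the term $-\frac12\Omega^2\tr\chi\,\chibh$ plays the same role for $\Omega\chih$.

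Suppose you control $\Omega^{-1}\tr\chi-(\Omega^{-1}\tr\chi)_0$ only through the lemma on $\ot{\psi}$ and the Section~4 estimates. That amounts to the crude bound $|\Omega\chih|+|\Omega e_4\phi|\lesssim(-u)^{-1-\delta}$.
\begin{itemize}
\item Raychaudhuri then gives $|\Omega^{-1}\tr\chi-(\Omega^{-1}\tr\chi)_0|\lesssim v(-u)^{-2-\k-\delta}$, because the $\Omega^{-2}$ there cancels the $\Omega^2$ gain.
\item The resulting transport error in $(-u)\bigl(\Omega e_4\phi-(\Omega e_4\phi)_0\bigr)$ is exactly your $O\bigl(v/(-u)^{1+\delta}\bigr)$.
\item Along $v_0$ this error equals $\epsilon_1(-u)^{3\k/(2N)-\delta}$.
\item The signal is only $|\psi_2-(\psi_2)_0|(-1,v_0)\sim v_0^\lambda=\epsilon_1^\lambda(-u)^{\lambda(1+3\k/(2N))}$.
\end{itemize}
For every $\lambda$ above roughly $3\k/(2N+3\k)$, which is essentially the whole admissible range, the error wins as $-u\to0$. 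Your lower bound on $|\Omega\chih|^2+|\Omega e_4\phi|^2$ is then vacuous, and Step~2 has no focusing term.

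This cannot be repaired afterwards by comparing errors with $(-u)^{-q}$ in the Raychaudhuri bound. The error does not enter there additively; it destroys the very lower bound it was supposed to perturb.

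\paragraph{The missing ingredient.} You need the coupled bootstrap that the paper runs.
\begin{enumerate}
\item Derive two-sided transport bounds, $(-u)^{1-\k/(2N)}|\psi_2-(\psi_2)_0|\ge\frac1C v^\lambda-\ldots$ and $(-u)^{1+\k/(2N)}|\psi_2-(\psi_2)_0|\le Cv^\lambda+\ldots$. Keep the factor $\Omega^2\sim(-u)^{\k}$ in the sources, which are $v(-u)^{\k-3}$ and $(-u)^{\k-1}|\Omega^{-1}\tr\chi-(\Omega^{-1}\tr\chi)_0|$.
\item Feed the \emph{upper} bound back into Raychaudhuri. This is the hypothesis $|\cdot|\le Cv^\lambda$ propagated inward. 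It lets you close a bootstrap for $f=\sup(-u)^{\k}|\Omega^{-1}\tr\chi-(\Omega^{-1}\tr\chi)_0|$, giving $f\lesssim v(-u)^{\k-2-\k/N}\bigl(1+v^\lambda(-u)^{-\k}\bigr)$ on $v\le\epsilon_1(-u)^{1+3\k/(2N)}$.
\end{enumerate}
The transport errors then have size $v(-u)^{\k-1-\k/N}\bigl(1+v^\lambda(-u)^{-\k}\bigr)$. Along $v_0$ this is $\epsilon_1(-u)^{\k+\k/(2N)}+\epsilon_1(-u)^{\k/(2N)}v_0^\lambda\ll v_0^\lambda$, precisely because $\lambda<\k$. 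This is what the choice of $v_0$ is for.

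\paragraph{A smaller slip.} Self-similarity gives $\Omega^{-1}\tr\chi(u,0)=(-u)^{-1}\,\Omega^{-1}\tr\chi(-1,0)$, not $(-u)^{-1-\k}O(1)$. With your version, $(-u)\Omega^{-1}\tr\chi(u,0)\sim(-u)^{-\k}$ would dominate the focusing term $(-u)^{-q}$, since $q<\k$.
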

From the equations for $\Omega\nabla_3(\Omega\chih)$ and $\Omega\nabla_3(\Omega e_4\phi)$, we obtain {the following estimate} for {either} $\psi_2=\Omega\chih$ {or} $\psi_2=\Omega e_4\phi$:
\begin{equation*}
    \left|\left(\nabla_{\partial_U}-\frac{1+o(1)}{-U}\right)\left|\psi_2-(\psi_2)_0\right|\right|\lesssim \frac{V}{(-U)^{3-\k}}+\frac{1}{(-U)^{1-\k}}\left(\Omega^{-1}\tr\chi-(\Omega^{-1}\tr\chi)_0\right).
\end{equation*}
Here $\left|o(1)\right|$ is bounded by $\frac{\k}{2N}$.
From {this estimate}, we can extract both lower and upper bounds for each $(U,V,\Theta)$:
\begin{equation*}
    \begin{aligned}
        &(-U)^{1-\frac{\k}{2N}}\left|\psi_2-(\psi_2)_0\right|\\ &\qquad\qquad \geq \frac{1}{C}V^\lambda -C\frac{V}{(-U)^{1-\k+\k/2N}}-\int_{-1}^U\frac{1}{(-U^\prime)^{-\k+\k/2N}}\left|\Omega^{-1}\tr\chi-(\Omega^{-1}\tr\chi)_0\right|,
    \end{aligned}
\end{equation*}
\begin{equation*}
    \begin{aligned}
        &(-U)^{1+\frac{\k}{2N}}\left|\psi_2-(\psi_2)_0\right| \\ &\qquad\qquad \leq CV^\lambda +C\frac{V}{(-U)^{1-\k-\k/2N}}+\int_{-1}^U\frac{1}{(-U^\prime)^{-\k-\k/2N}}\left|\Omega^{-1}\tr\chi-(\Omega^{-1}\tr\chi)_0\right|.
    \end{aligned}
\end{equation*}

Using the equation
$\left(\Omega\nabla_4+\frac{1}{2}\Omega\tr\chi\right)(\Omega^{-1}\tr\chi)=-\Omega^{-2}\left|\Omega\chih\right|^2-\Omega^{-2}(\Omega e_4\phi)^2$ and the estimate $|\Omega\chih|+|\Omega e_4\phi|\lesssim \frac{1}{(-u)^{1+\k/2N}}$, we have 
\begin{equation*}
 \begin{aligned}
       &\left|\Omega^{-1}\tr\chi-(\Omega^{-1}\tr\chi)_0\right|^2(u,v,\theta)\lesssim  v \int_0^v \frac{1}{(-u)^{2+2\k+\k/N}}\left(\left|\Omega\chih\right|^2+\left|\Omega e_4\phi\right|^2\right)(u,v^\prime,\theta)\\
       \lesssim & \frac{v^2}{(-u)^{4+2\k+2\k/N}}\left((-u)^{2\k}+v^{2\lambda}+\frac{v^2}{(-u)^{2-2\k}}\right)\\
       &+\frac{v}{(-u)^{2+2\k+2\k/N}}\int_0^v\left(\int_{-1}^u\frac{1}{(-u^\prime)^{-\k-\k/2N}}\left|\Omega^{-1}\tr\chi-(\Omega^{-1}\tr\chi)_0\right|(U(u),V(v^\prime),\Theta(u,v^\prime,\theta))\right)^2\\
       \lesssim & \frac{v^2}{(-u)^{4+2\k/N}}\left(1+\frac{v^{2\lambda}}{(-u)^{2\k}}+\frac{v^2}{(-u)^2}\right)\\
       &+\frac{v^2}{(-u)^{2+2\k+3\k/N}}\sup_{v^\prime\leq v}\left(\int_{-1}^u\frac{1}{(-u^\prime)^{-\k}}\left|\Omega^{-1}\tr\chi-(\Omega^{-1}\tr\chi)_0\right|(U(u^\prime),V(v^\prime),\Theta(u^\prime,v^\prime,\theta))\right)^2.
 \end{aligned}
\end{equation*}
If we denote 
\begin{equation*}
    \begin{aligned}
        f(u,v)=&\sup_{v^\prime\leq v}\sup_{(u,v^\prime,\theta)\in S_{u,v^\prime}}\frac{1}{(-u)^{-\k}}\left|\Omega^{-1}\tr\chi-(\Omega^{-1}\tr\chi)_0\right|(u,v,\theta)\\
        =&\sup_{v^\prime\leq v}\sup_{(U,V,\Theta)\in S_{u,v^\prime}}\frac{1}{(-U)^{-\k}}\left|\Omega^{-1}\tr\chi-(\Omega^{-1}\tr\chi)_0\right|(U,V,\Theta),
    \end{aligned}
\end{equation*}
the inequality above can be written as
\begin{equation*}
    f(u,v)\leq C \frac{v}{(-u)^{2-\k+\k/N}}\left(1+\frac{v^\lambda}{(-u)^\k}\right)+C\frac{v}{(-u)^{2+3\k/2N}}\int_{-1}^u f(u^\prime,v).
\end{equation*}
If we {impose} the following bootstrap assumption,
\begin{equation*}
    f(u)\leq D\frac{v}{(-u)^{2-\k+\k/N}}\left(1+\frac{v^\lambda}{(-u)^\k}\right),
\end{equation*}
with $D\geq C+2$,
then we have 
\begin{equation*}
    \begin{aligned}
        f(u)\leq & C \frac{v}{(-u)^{2-\k+\k/N}}\left(1+\frac{v^\lambda}{(-u)^\k}\right)+C\frac{v}{(-u)^{2+3\k/2N}}\cdot D\cdot C(\lambda,\k) \frac{v}{(-u)^{1-\k+\k/N}}\left(1+\frac{v^\lambda}{(-u)^\k}\right)\\
        \leq & \left(C+C\cdot D\cdot C(\lambda,\k)\frac{v}{(-u)^{1+3\k/2N}}\right) \frac{v}{(-u)^{2-\k+\k/N}}\left(1+\frac{v^\lambda}{(-u)^\k}\right)\\
        \leq & \left(C+1\right) \frac{v}{(-u)^{2-\k+\k/N}}\left(1+\frac{v^\lambda}{(-u)^\k}\right),
    \end{aligned}
\end{equation*}
where we {assume} $\frac{v}{(-u)^{1+3\k/2N}}\leq \epsilon_1\ll 1$.
It follows that
\begin{equation*}
    \lnm \Omega^{-1}\tr\chi-(\Omega^{-1}\tr\chi)_0\rnm_{L^\infty(S_{u,v})}\lesssim \frac{v}{(-u)^{2+\k/N}}\left(1+\frac{v^\lambda}{(-u)^\k}\right),
\end{equation*}
and for each $\theta$, 
\begin{equation*}
    \left|\psi_2-(\psi_2)_0\right|\geq \frac{1}{C}v^\lambda-C\frac{v}{(-u)^{1-\k}}-C\frac{v}{(-u)^{1-\k+\k/N}}\left(1+\frac{v^\lambda}{(-u)^\k}\right).
\end{equation*}
From the equation $ \Omega\nabla_4(\Omega^{-1}\tr\chi)=-\frac{1}{2}\Omega\tr\chi\Omega^{-1}\tr\chi-\Omega^{-2}\left|\Omega\chih\right|^2-\Omega^{-2}(\Omega e_4\phi)^2$, we obtain
\begin{equation*}
    \begin{aligned}
        \partial_v (\Omega^{-1}\tr\chi)(u,v,\theta)\leq &  \frac{C}{-u}\Omega^{-1}\tr\chi(u,v,\theta)- \frac{1}{C(-u)^\k}\left(\left|\Omega\chih\right|^2+\left|\Omega e_4\phi\right|^2\right)(u,v,\theta).
    \end{aligned}
\end{equation*}
{After integrating and evaluating} along $v_0(u)=\epsilon_1(-u)^{1+3\k/2N}$ with $\epsilon_1$ sufficiently small, we obtain
\begin{equation*}
    \begin{aligned}
        &\sup_{\theta}(-u)\Omega^{-1}\tr\chi(u,v_0(u),\theta)\\
        \leq & C-\frac{1}{C}\frac{v^{1+2\lambda}}{(-u)^{1+\k}}+C\frac{v^3}{(-u)^{3-\k}}+C\frac{v^3}{(-u)^{3-\k+2\k/N}}\left(1+\frac{v^{2\lambda}}{(-u)^{2\k}}\right)   \\
        \leq & C-\frac{1}{C}\frac{v^{1+2\lambda}}{(-u)^{1+\k}}+C\frac{v^3}{(-u)^{3-\k+2\k/N}}\frac{v^{2\lambda}}{(-u)^{2\k}}\\
        \leq & C-\frac{v^{1+2\lambda}}{(-u)^{1+\k}}\left(\frac{1}{C}-C\frac{v^2}{(-u)^{2+2\k/N}}\right)\leq  C-\frac{1}{C}\frac{v^{1+2\lambda}}{(-u)^{1+\k}}\\
        \leq & C-\frac{c(\epsilon_1)}{C}\frac{1}{(-u)^{\k-2\lambda-3\k/2N-3\lambda\k/N}}.
    \end{aligned}
\end{equation*}
For $0<\lambda<\frac{\k(2N-3)}{2(2N+3\k)}$, the exponent satisfies $\k-2\lambda-\frac{3\k}{2N}-\frac{3\lambda\k}{N}>0$, which implies $$\sup_{\theta}(-u)\Omega^{-1}\tr\chi<-1$$ for $-u$ sufficiently small. This establishes the existence of a trapped surface.

\subsection{Anisotropic trapped surface}
We continue to work in the coordinates $(U,V,\Theta)$ introduced in the previous section, with $U=u,V=v,\Theta=\Theta(u,v,\theta)$. We define the natural transport map $F_{u,v}:S_{-1,v}\to S_{u,v}$ by $F_{u,v}(x)=(U,V,\Theta)$ for $x=(-1,V,\Theta)\in S_{-1,v}$.

\begin{lemma}
    Let $\delta\gg \epsilon$. If $B_{S_{-1,v}}(x,r)\subset A$, then for all $u$ within the existence region, we have
    \begin{equation*}
        B_{S_{u,v}}\left(F_{u,v}(x),(-u)^{1+\delta}r\right)\subset A_{u,v}.
    \end{equation*}
    Conversely, if $B_{S_{u,v}}\left(F_{u,v}(x),(-u)^{1-\delta}r\right)\subset A_{u,v}$, then $B_{S_{-1,v}}(x,r)\subset A$.
\end{lemma}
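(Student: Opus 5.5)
The plan is a comparison of the metrics $g(u,v)$ and $g(-1,v)$ pulled back by the transport map $F_{u,v}$. In the coordinates $(U,V,\Theta)$, the map $F_{u,v}$ is the identity in $\Theta$: a point of $S_{u,v}$ with coordinate $\Theta$ corresponds to the point of $S_{-1,v}$ with the same $\Theta$. Since $\partial_U=\Omega e_3$ is tangent to the generators, the induced metrics $\gamma_U:=F_{U,v}^*g|_{S_{U,v}}$ on the fixed parameter sphere satisfy
\[
\partial_U(\gamma_U)_{AB}=2(\Omega\chib)_{AB}=\Omega\tr\chib\,(\gamma_U)_{AB}+2(\Omega\chibh)_{AB}.
\]
The first step is to bound these coefficients. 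The a priori estimates of Section~4, together with the lemma on the approximate spacetime $\ot{g}$, give $|(-u)\Omega\tr\chib+2|\lesssim\epsilon+\epsilon_1$ and $|(-u)\Omega\chibh|\lesssim\epsilon+\epsilon_1$ in $L^\infty$, using the Sobolev inequality \eqref{lemma:Sobolev_ineq}. These bounds hold because the incoming data are $(\epsilon,N)$-small $\kappa$-self-similar, so $\Omega\tr\chib=2/u+\dv b$ and $\Omega\chibh=\frac12\nabla\hat\otimes b$ on $v=0$, while the differences $\wt{\Omega\chib}$ are of size $v/(-u)$ in scale-invariant norms. Hence for any tangent vector $X$ on the parameter sphere,
\[
\Bigl|\partial_U\log|X|^2_{\gamma_U}-\frac{2}{U}\Bigr|\le \frac{C(\epsilon+\epsilon_1)}{-U}.
\]

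In the second step, integrating this from $-1$ to $u$ gives
\[
(-u)^{1+\delta}|X|_{\gamma_{-1}}\le |X|_{\gamma_u}\le(-u)^{1-\delta}|X|_{\gamma_{-1}}
\]
once $C(\epsilon+\epsilon_1)\le\delta$, which is where $\delta\gg\epsilon$ (and $\epsilon_1$ small) enters. Integrating along curves, distances then compare as
\[
(-u)^{1+\delta}d_{-1}(x,y)\le d_u(F x,F y)\le(-u)^{1-\delta}d_{-1}(x,y).
\]
Both sides are infima of lengths of curves, and $F_{u,v}$ is a diffeomorphism carrying curves to curves, so no geodesic structure is needed.

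The two inclusions then follow. Take $A_{u,v}:=F_{u,v}(A)$. If $d_u(F x,y')<(-u)^{1+\delta}r$ with $y'=F y$, then $d_{-1}(x,y)\le(-u)^{-1-\delta}d_u<r$, so $y\in A$ and $y'\in A_{u,v}$. Conversely, if $d_{-1}(x,y)<r$, then $d_u(Fx,Fy)<(-u)^{1-\delta}r$, so $Fy\in A_{u,v}$ and hence $y\in A$.

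The main obstacle is the first step: the pointwise $L^\infty$ control of $(-u)\Omega\chib+\gamma$ uniformly up to $u\to0$ along $\Hb_v$ for $v>0$, rather than only on $v=0$. This must be extracted from the bootstrap bounds of Section~4, which control $\wt{\Omega\chib}$ with weight $((-u)/v)^N$, combined with the approximate-spacetime lemma for $\ot{\Omega\chib}$. Since $\tr\chib$ enters logarithmically, any $O(1/(-u))$ error with small constant is enough. Consequently I would only need smallness of the constants, not decay, and I expect the existing estimates to suffice.
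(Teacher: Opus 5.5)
Your proposal is correct and follows essentially the paper's argument: both rest on $\partial_U\gamma_U=2\Omega\chib$, with the eigenvalues of $\Omega\chib$ relative to the induced metric lying in $\bigl[\frac{-1-\delta}{-u},\frac{-1+\delta}{-u}\bigr]$, and both deduce $(-u)^{1+\delta}d_{S_{-1,v}}(x,y)\le d_{S_{u,v}}(F_{u,v}x,F_{u,v}y)\le(-u)^{1-\delta}d_{S_{-1,v}}(x,y)$. The only difference is technical: the paper differentiates the distance along a minimizing geodesic, a first-variation argument run in both directions of $u$, whereas you integrate the metric ODE pointwise on tangent vectors and then take infima over curves, which avoids differentiating the merely Lipschitz distance function; your derivation of the $L^\infty$ bounds on $\Omega\tr\chib$ and $\Omega\chibh$ from Section~4 and the approximate-spacetime lemma is also more explicit than the paper's.
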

\begin{proof}
    For two points $x,y\in S_{-1,v}$, we estimate $\partial_u d_{S_{u,v}}(F_{u,v}(x), F_{u,v}(y))$. First, fix $(u,v)$. Suppose that $c_{u,v}:[0,1]\rightarrow S_{u,v}$ is a minimizing geodesic with $c_{u,v}(0)=F_{u,v}(x),c_{u,v}(1)=F_{u,v}(y)$, and $\nabla_{\partial_s}\partial_s c_{u,v}=0$. Then
    \begin{equation*}
        d_{S_{u,v}} (F_{u,v}(x), F_{u,v}(y))= \int_0^1 {\left| \partial_s c_{u,v}\right|_{g_{u,v}}}ds.
    \end{equation*}
    For $u^\prime\neq u$, we define $\Phi^{u^\prime}_{u}=F_{u^\prime,v}\circ (F_{u,v})^{-1}:S_{u,v}\rightarrow S_{u^\prime,v}$. Then 
    \begin{equation*}
        d_{S_{u^\prime,v}}(F_{u^\prime,v}(x), F_{u^\prime,v}(y))\leq \int_0^1 {\left|(\Phi^{u^\prime}_{u})_*\partial_s c_{u,v}\right|_{g_{u^\prime,v}}}ds.
    \end{equation*}
If we denote $c_{u,v,\epsilon}(s)=\Phi_u^{u+\epsilon}(c_{u,v}(s))$, then we have
\begin{equation*}
    \begin{aligned}
        &\partial_u d_{S_{u,v}}(F_{u,v}(x), F_{u,v}(y))\leq \int_0^1\partial_{\epsilon}|_{\epsilon=0}\left({\left|(\Phi^{u+\epsilon}_{u})_*\partial_s c_{u,v}\right|_{g_{u+\epsilon,v}}}\right) ds\\
        \leq &\int_0^1 \frac{\langle D_{\partial_\epsilon}{\partial_s} c_{u,v,\epsilon}(s),\partial_s c_{u,v}\rangle}{\left|\partial_s c_{u,v}\right|}ds=\int_0^1\frac{\langle D_{\partial_s} (\Omega e_3)|_{c_{u,v}(s)},\partial_s c_{u,v}\rangle}{\left|\partial_s c_{u,v}\right|} ds\\
        =& \int_0^1 \frac{1}{{\left|\partial_s c_{u,v}\right|}}\Omega\chib\left(\partial_s c_{u,v},\partial_s c_{u,v}\right)d\leq\int_0^1 \frac{1}{2}\Omega\tr\chib{\left|\partial_s c_{u,v}\right|}+\left|\Omega\chibh\right|{\left|\partial_s c_{u,v}\right|}\\
        \leq & \frac{-1+\delta}{-u}d_{S_{u,v}}(F_{u,v}(x), F_{u,v}(y)).
    \end{aligned}
\end{equation*}
Hence{,}
\begin{equation*}
    (-u)^{-1+\delta}d_{S_{u,v}}(F_{u,v}(x), F_{u,v}(y))\leq d_{S_{-1,v}}(x,y).
\end{equation*}
If we consider $\wt{c}_{u,v,\epsilon}=\Phi_{u}^{u-\epsilon}\left(c_{u,v}(s)\right)$, then 
$$d_{S_{u-\epsilon,v}}(F_{u-\epsilon,v}(x), F_{u-\epsilon,v}(y))\leq \int_0^1\left|\partial_s \wt{c}_{u,v,\epsilon}\right|ds,$$
and thus{,}
\begin{equation*}
    \partial_u d_{S_{u,v}}(F_{u,v}(x), F_{u,v}(y))\geq \int_0^1\partial_{\epsilon}|_{\epsilon=0}\left|\partial_s\wt{c}_{u,v,\epsilon}\right|ds\geq \frac{-1-\delta}{-u}d_{S_{u,v}}(F_{u,v}(x), F_{u,v}(y)).
\end{equation*}
We obtain
\begin{equation*}
    (-u)^{-1-\delta}d_{S_{u,v}}(F_{u,v}(x), F_{u,v}(y))\geq d_{S_{-1,v}}(x,y).
\end{equation*}
\end{proof}
By adapting the arguments used in the isotropic case, we establish the following result for anisotropic perturbations.
\begin{proposition}
    Suppose that there exist $\theta_0$ and $r>0$ such that, for all $v\in(0,\epsilon_1)$ and $\theta\in B_{S_{-1,v}}(\theta_0,r)$, the differences are bounded below:
    \begin{equation*}
        \left|\Omega\chih\right|^2-\left|(\Omega\chih)_0\right|^2+\left|\Omega e_4\phi\right|^2-\left|(\Omega e_4\phi)_0\right|^2 > \frac{1}{C}v^{2\lambda},
    \end{equation*}
    and for all $\theta\in\SS$,
    \begin{equation*}
        \left|\Omega\chih-(\Omega\chih)_0\right|+\left|\Omega e_4\phi-(\Omega e_4\phi)_0\right|\leq Cv^\lambda.
    \end{equation*}
    If $0<\lambda<\frac{\k(2N-3)}{2(2N+3\k)}-\delta_1$, then along $v_0(u)=\epsilon_1(-u)^{1+\frac{3\k}{2N}}$ with $\epsilon_1$ sufficiently small and $(-u)$ sufficiently small, we have $$(-u)\Omega^{-1}\tr\chi(u,v_0(u),\theta)<-\frac{1}{C(-u)^{2\delta_1}}$$ on $B_{S_{u,v_0(u)}}(F_{u,v_0(u)}(\theta_0), (-u)^{1+\delta} r).$
\end{proposition}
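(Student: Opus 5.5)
We follow the isotropic argument of the previous subsection, but run it along individual incoming null generators rather than taking suprema over the whole sphere. In the $(U,V,\Theta)$ coordinates, $\partial_U=\Omega e_3$, so every $e_3$-transport inequality becomes an ODE in $U$ at fixed $(V,\Theta)$. Fix $\Theta\in B_{S_{-1,V}}(\theta_0,r)$, viewed as a point of $H_{-1}$.

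\textbf{Step 1: lower bound on the perturbation along the generator.} Along the generator through $(-1,V,\Theta)$ we use the differential inequality
\[
\bigl|(\nabla_{\partial_U}-\tfrac{1+o(1)}{-U})|\psi_2-(\psi_2)_0|\bigr|\lesssim \tfrac{V}{(-U)^{3-\k}}+\tfrac{1}{(-U)^{1-\k}}|\Omega^{-1}\tr\chi-(\Omega^{-1}\tr\chi)_0|,
\]
together with the global bound $\|\Omega^{-1}\tr\chi-(\Omega^{-1}\tr\chi)_0\|_{L^\infty(S_{u,v})}\lesssim \frac{v}{(-u)^{2+\k/N}}(1+\frac{v^\lambda}{(-u)^\k})$. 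That global bound only used the upper bound $\le Cv^\lambda$, which here holds on all of $\SS$, so it carries over verbatim.

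Integrating, we obtain, at the point $F_{u,v}$ of the initial point,
\[
|\psi_2-(\psi_2)_0|\gtrsim (-u)^{-1+\k/2N}\Bigl(\tfrac1C v^\lambda-\text{errors}\Bigr).
\]
The anisotropic hypothesis is on $|\Omega\chih|^2-|(\Omega\chih)_0|^2+|\Omega e_4\phi|^2-|(\Omega e_4\phi)_0|^2$ rather than on differences. We therefore first convert it into a lower bound $|\Omega\chih-(\Omega\chih)_0|+|\Omega e_4\phi-(\Omega e_4\phi)_0|\gtrsim v^{\lambda}$ on the sector, using $|a|^2-|b|^2\le |a-b|(|a-b|+2|b|)$. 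This loses at most a factor controlled by the background size. At points where the background is large, we instead keep the quadratic quantity itself and propagate the bound on $|\psi_2|^2$, since only $|\Omega\chih|^2+|\Omega e_4\phi|^2$ enters Raychaudhuri.

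\textbf{Step 2: Raychaudhuri along outgoing generators.} Next we pass from $e_3$-generators to the $e_4$-direction. For each point $p$ of $B_{S_{u,v_0}}(F_{u,v_0}(\theta_0),(-u)^{1+\delta}r)$, we integrate
\[
\partial_v(\Omega^{-1}\tr\chi)\le \tfrac{C}{-u}\Omega^{-1}\tr\chi-\tfrac{1}{C(-u)^\k}(|\Omega\chih|^2+|\Omega e_4\phi|^2)
\]
in $v'\in[0,v_0]$ at fixed $(u,\theta)$. We need the focusing lower bound at every $(u,v',\theta)$ with $v'\le v_0$. By the preceding lemma applied with $A=B_{S_{-1,v'}}(\theta_0,r)$, the transported set $F_{u,v'}(A)$ contains the ball of radius $(-u)^{1+\delta}r$ about $F_{u,v'}(\theta_0)$. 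We must check that $p$, which lies in such a ball on $S_{u,v_0}$, stays in the corresponding ball on $S_{u,v'}$ when followed backwards along $e_4$ in fixed $\theta$. The centre $F_{u,v'}(\theta_0)$ moves in $\theta$ only by the shift difference $\int \Omega^2\zeta\,dv'\lesssim v_0/(-u)$, which is much smaller than the angular radius $(-u)^{\delta}r$ since $v_0\ll(-u)^{1+\delta}$. Shrinking $r$ to $r/2$ absorbs this drift.

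\textbf{Step 3: exponent count.} This part is identical to the isotropic case:
\[
(-u)\Omega^{-1}\tr\chi(u,v_0,\theta)\le C-\tfrac1C\epsilon_1^{1+2\lambda}(-u)^{-q},\qquad q=\k-2\lambda-\tfrac{3\k}{2N}-\tfrac{3\k\lambda}{N}=(2+\tfrac{3\k}{N})(\lambda_*(N)-\lambda).
\]
The hypothesis $\lambda<\lambda_*(N)-\delta_1$ gives $q>2\delta_1$. Hence for $-u$ small the right-hand side is $<-\frac1C(-u)^{-2\delta_1}$, and the extra slack absorbs $C$ and the $\epsilon_1$-dependence.

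\textbf{Main obstacle.} We expect the hardest step to be Step 2, namely controlling the geometry of the focused region uniformly in $v'\in[0,v_0]$. This requires that the $\Theta$-coordinate map and the balls behave uniformly in $v'$, so that the lower bound from Step 1 holds throughout the $v$-integration rather than only at $v=v_0$. We would first try to use the comparison estimates of Section 4 for $\wt b$ and $\wt g$ to show that $\Theta(u,v',\theta)$ differs from $\Theta(u,0,\theta)$ by $O(v'/(-u)^{1+C\delta})$ in normalized angle. That bound is enough because the normalized radius is $(-u)^{\delta}r$.
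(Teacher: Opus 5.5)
Your skeleton is the one the paper intends; its proof of this proposition is just ``adapt the isotropic argument''. You transport the perturbation along the $e_3$-generators, locate the focused set with the transported-ball lemma, integrate Raychaudhuri in $v$ at fixed $(u,\theta)$, and count exponents. Your Step~2 check that $(u,v',\theta)$ stays in the transported ball for all $v'\le v_0$ supplies something the paper leaves implicit, and $q=(2+\frac{3\k}{N})(\lambda_*(N)-\lambda)>2\delta_1$ is correct.

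\textbf{The gap is Step~1.} From $|a|^2-|b|^2\le|a-b|(|a-b|+2|b|)$ and $|a|^2-|b|^2>\frac1C v^{2\lambda}$ you only get $|a-b|\gtrsim\min\{v^{\lambda},\,v^{2\lambda}/|b|\}$. So wherever the background value $b=(\psi_2)_0$ on $S_{-1,v}$ is not $O(v^\lambda)$ (for small $v$, wherever it is nonzero), you lose a full power $v^\lambda$, not a constant. Concretely, $\psi_2-(\psi_2)_0=\frac{v^{2\lambda}}{C|b|^2}\,b$ satisfies both hypotheses.

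Your fallback of propagating the quadratic quantity does not repair this. The background $(\psi_2)_0$ is transported along $e_3$ with a source that is not small: for $\Omega e_4\phi$ it contains $-\frac12\Omega\tr\chi\,\Omega e_3\phi\sim(-u)^{\k-2}$, and for $\Omega\chih$ it is only $O(\epsilon(-u)^{\k-2})$. Hence $(-u)(\psi_2)_0$ moves by $O(1)$ (resp.\ $O(\epsilon)$) along a generator, while $(-u)(\psi_2-(\psi_2)_0)$ is essentially frozen. The cross term $2(-u)^2(\psi_2)_0\cdot(\psi_2-(\psi_2)_0)$ in $(-u)^2(|\psi_2|^2-|(\psi_2)_0|^2)$ therefore changes by an amount comparable to its initial size and of indefinite sign. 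No lower bound of order $v^{2\lambda}$ survives transport. With only $|\psi_2-(\psi_2)_0|\gtrsim v^{2\lambda}$ on $H_{-1}$, the count in Step~3 no longer closes on the whole stated range of $\lambda$.

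\textbf{What you need instead.} The isotropic argument you are adapting actually starts from the difference bound $|\psi_2-(\psi_2)_0|\ge\frac1C V^\lambda$ at $U=-1$; this is its displayed lower bound for $(-U)^{1-\k/2N}|\psi_2-(\psi_2)_0|$. That is the input you need on the sector, and it has to be assumed there, since the quadratic hypothesis gives it only where $(\psi_2)_0=O(v^\lambda)$.

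Relatedly, Step~3 is not literally ``identical''. Raychaudhuri needs a lower bound on $\int_0^{v_0}(|\Omega\chih|^2+|\Omega e_4\phi|^2)\,dv'$ at fixed $(u,\theta)$. A lower bound on $|\psi_2-(\psi_2)_0|$ does not bound $|\psi_2|$ pointwise, because the two pieces can cancel. One way to close this:
\begin{enumerate}
\item Note that $(\psi_2)_0(u,\theta)$ does not depend on $v'$.
\item The transported difference equals its value on $H_{-1}$ times a transport factor that, up to small errors, is the same for all $v'\le v_0$.
\item Then cancellation can only occur for $v'$ in a window $(v_-,Kv_-)$ with $K=K(\lambda,C)$.
\item If $v_-\ge v_0/(2K)$, use $[0,v_0/(2K)]$, where $|\psi_2|\ge\frac12|(\psi_2)_0|$. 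Otherwise use $[v_0/2,v_0]$, where $|\psi_2|\ge\frac12|\psi_2-(\psi_2)_0|$.
\end{enumerate}
Either choice recovers the $v_0^{1+2\lambda}$ focusing that Step~3 uses.
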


We write $e_3^{(g)}=\partial_U =\Omega e_3$, $e_A^{(g)}=\partial_{\Theta^A}$, $e_4^{(g)}=\Omega^{-1}e_4$, and denote the corresponding connection components by $\psi^{(g)}$. For a function $W$ on $\SS$, consider the sphere $S_{W,V}=\{(U,V,\Theta):U=W(\Theta)\}$. The outer null expansion of $S_{W,V}$ is
\begin{equation}\label{Equation_of_First_Variation}
        \begin{aligned}
            \ell^+(W)=& \dv_{S_{W,V}}\left(e_4^{(g)}+2\snab W{\color{red}+}\left|\snab W\right|^2e_3^{(g)}\right)\\
            =&\tr\chi^{(g)}+2\slap W+4\zeta^{(g)}_A\snab^A W-\left|\snab W\right|^2\left(\tr\chib^{(g)}+4\omegab^{(g)}\right).
        \end{aligned}
    \end{equation}
\begin{remark}
    With respect to the null frame $e_\mu^{(g)}$, we have
    \begin{equation*}
        \chib^{(g)}_{AB}=\Omega\chib_{AB},\ \chi^{(g)}_{AB}=\Omega^{-1}\chi_{AB},\ \omegab^{(g)}=2\Omega\omegab,\ \zeta^{(g)}_A=\zeta_A+\nabla_A\log\Omega.
    \end{equation*}
\end{remark}
\begin{remark}
    \begin{equation*}
        \begin{aligned}
            \slap W=&\sqrt{g}^{-1}\partial_A \left(\sqrt{g}g^{AB}(W(\Theta),\Theta)\partial_B W\right)
            =\Delta_g W|_{U=W(\Theta)}-2\chibh\left(\snab W,\snab W\right).
        \end{aligned}
    \end{equation*}
\end{remark}
We denote $\gamma_{AB}(U,\Theta)=(-U)^{-2}g_{AB}(U,\Theta)$ and{,} for a function $f=f(\Theta)${,} we write $$\Delta_{\gamma}f(U,\Theta)=\partial_A \left(\gamma^{AB}(U,\Theta)\partial_B f(\Theta)\right).$$
Let $\gamma_0$ be the standard sphere metric. The estimate $\lnm \wt{g}\rnm_{H^3(g)}=o(1)$ then implies that
\begin{equation*}
    \Delta_\gamma f-\Delta_{\gamma_0}f= o(1)\left|\partial f\right|+o(1)\left|\partial^2 f\right|.
\end{equation*}
Set $W=-Ve^{-\Phi(\Theta)}$. Then $\sqrt{g}(W,\Theta)=(-W)^2(1+o(1))$, and hence
\begin{equation*}
    \begin{aligned}
        \frac{1}{2}(-W)\ell^+= & \frac{1}{2}(-W)\Omega^{-1}\tr\chi -2\langle\zeta^{(g)},\snab \Phi\rangle_\gamma +\Delta_\gamma \Phi -\left|\snab \Phi\right|^2_{\gamma}-\left|\snab \Phi\right|^2_{\gamma }\left(-1+\k+o(1)\right)\\
        \leq & \frac{1}{2}(-W)\Omega^{-1}\tr\chi+o(1) +\Delta_{\gamma}\Phi-\frac{\k}{2}\left|\snab \Phi\right|_\gamma^2.
    \end{aligned}
\end{equation*}
First, we have $\frac{1}{2}(-U)\Omega^{-1}\tr\chi+o(1)\leq C_0$ because $\Omega^{-1}\tr\chi$ is decreasing in the direction of $e_4$.
We can choose $\Theta_1$ and $U_1,V_1=\epsilon_1\left(-U_1\right)^{1+3\k/2N}$ such that, whenever $|U-U_1|\leq (1-\frac{1}{C_0})\left|U_1\right|,\ V=V_1,\ (U_1,V_1,\Theta)\in B_{S_{U_1,V_1}}(\Theta_0,(-U_1)^{1+\delta})$, we have $(U,V,\Theta)\in B_{S_{U,V}}(\Theta_0,\frac{1}{C(C_0)}(-U)^{1+\delta})$ and $\frac{\epsilon_1}{C(C_0)}\leq\frac{V}{(-U)^{1+3\k/2N}}\leq C(C_0)\epsilon_1$. Consequently,
\begin{equation*}
    \frac{1}{2}(-U)\Omega^{-1}\tr\chi(U,V,\Theta)+o(1)<-\frac{1}{(-U)^{\delta_1}}.
\end{equation*}

To construct the trapped surface, we first recall a useful lemma.
\begin{lemma}
    For any $\epsilon_2>0$, there exists a function $h$ such that
    \begin{equation*}
        \Delta_{\gamma(h)}h-\frac{\k}{2}\left|\snab h\right|^2_{\gamma(h)}<-1/2,\ {\rm in}\ \SS\setminus B(\Theta_0,\epsilon_2),\ h=0\ {\rm on}\ \partial B(\Theta_0,\epsilon_2),
    \end{equation*}
    and
    \begin{equation*}
        \left|\nabla h\right|\leq C_0 \epsilon_2^{-1},\ \left|\nabla^2 h\right|\leq C_0 \epsilon_2^{-2}.
    \end{equation*}
\end{lemma}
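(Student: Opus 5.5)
The plan is to build $h$ explicitly as a radial function of the round geodesic distance to $\Theta_0$. The model for $h$ is minus a multiple of the Green's function of the round sphere with pole at $\Theta_0$. I then treat the passage from $\gamma_0$ to $\gamma(h)$ as a perturbation, absorbing the errors into the negative quadratic term $-\frac{\k}{2}|\snab h|^2$.

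\emph{Step 1: the model function.} Let $r$ be the $\gamma_0$-distance to $\Theta_0$ and set
\[
h(\Theta)=-A\log\frac{\sin(r/2)}{\sin(\epsilon_2/2)},
\]
with $A\geq 2$ to be fixed. This function is smooth on $\SS\setminus\{\Theta_0\}$, including at the antipodal point, where $\sin(r/2)=1$. It vanishes on $\partial B(\Theta_0,\epsilon_2)$. On the round sphere $\log\sin(r/2)$ is a Green's function, since $\Delta_{\gamma_0}\log\sin(r/2)=\frac12$ away from the pole. Hence $\Delta_{\gamma_0}h=-\frac A2$. We also have $|\nabla h|_{\gamma_0}=\frac A2\cot(r/2)$, and $|\nabla^2h|\lesssim A\,r^{-2}$ for $r\le\pi/2$ (bounded for $r\ge\pi/2$). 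On $\{r\ge\epsilon_2\}$ this gives $|\nabla h|\le C_0\epsilon_2^{-1}$ and $|\nabla^2h|\le C_0\epsilon_2^{-2}$. In the round metric we therefore get
\[
\Delta_{\gamma_0}h-\frac{\k}{2}|\nabla h|^2_{\gamma_0}=-\frac A2-\frac{\k A^2}{8}\cot^2(r/2)\le -1-\frac{\k A^2}{8}\cot^2(r/2).
\]
This has a margin of $-\frac12$ beyond what is required, together with an extra negative term of size $\sim\k A^2 r^{-2}$ for small $r$.

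\emph{Step 2: perturbation to $\gamma(h)$.} Here $\gamma(h)$ is the rescaled metric $(-U)^{-2}g$ evaluated on the graph $U=W=-Ve^{-h}$. By the comparison estimates of Section~4 (the bound $\|\wt g\|_{H^3}=o(1)$ and closeness of $g|_{v=0}$ to the round metric), its coefficients differ from $\gamma_0$ by $o(1)$ in $C^1$. The extra first derivatives that arise through the $\Theta$-dependence of $U=W(\Theta)$ carry a factor $(-U)\partial_U\gamma\cdot\partial h$, which is also $o(1)|\partial h|$, since the leading scaling has been factored out. As recorded just before the lemma, this gives
\[
\Delta_{\gamma(h)}h-\Delta_{\gamma_0}h=o(1)\bigl(|\partial h|+|\partial^2h|\bigr),\qquad |\snab h|^2_{\gamma(h)}=(1+o(1))|\nabla h|^2_{\gamma_0}.
\]

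\emph{Step 3: the main obstacle.} The difficulty is that $|\partial^2h|$ is of size $\epsilon_2^{-2}$ near the ball, so a bare $o(1)$ error is not bounded. The remedy is to compare the error pointwise with the good term. Since $|\partial h|+|\partial^2 h|\lesssim A(r^{-1}+r^{-2})+A$, while $\frac{\k A^2}{8}\cot^2(r/2)\gtrsim \k A^2 r^{-2}$ for $r\le\pi/2$, the error is bounded by
\[
o(1)\,A\,(1+r^{-2})\le \frac{\k A^2}{16}\cot^2(r/2)+\frac14
\]
as soon as the $o(1)$ is much smaller than $\k A$ (and smaller than a fixed constant for $r\ge\pi/2$). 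This smallness is where $\epsilon\ll\delta\ll\k$ and $\epsilon_1\ll1$ are used. The $(1+o(1))$ factor in the quadratic term only strengthens the negative sign. Combining with Step 1, we obtain $\Delta_{\gamma(h)}h-\frac{\k}{2}|\snab h|^2_{\gamma(h)}\le -1+\frac14<-\frac12$ on $\SS\setminus B(\Theta_0,\epsilon_2)$. The constant $C_0$ depends only on $A$, which we fix as $A=2$.
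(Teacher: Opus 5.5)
There is a sign error, and it breaks the construction. Away from the pole,
$\Delta_{\gamma_0}\log\sin(r/2)=\frac{1}{\sin r}\partial_r\bigl(\sin r\cdot\tfrac12\cot(r/2)\bigr)=\frac{1}{\sin r}\partial_r\cos^2(r/2)=-\tfrac12$, not $+\tfrac12$. Equivalently, in the distributional sense $\Delta_{\gamma_0}\log\sin(r/2)=2\pi\delta_{\Theta_0}-\tfrac12$, and the constant is forced by $\int_{\SS}\Delta f=0$.

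Your $h=-A\log\frac{\sin(r/2)}{\sin(\epsilon_2/2)}$ therefore satisfies $\Delta_{\gamma_0}h=+\frac A2$. At the antipode $r=\pi$ you have $\nabla h=0$, so the left-hand side there equals $\frac A2+o(1)>0$. The required inequality fails, however small the perturbation to $\gamma(h)$ is.

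Tuning $A$ cannot rescue this sign choice. Suppose $h$ satisfies $\Delta h-\frac{\kappa}{2}|\nabla h|^2<-\frac12$ on $\SS\setminus\overline{B(\Theta_0,\epsilon_2)}$. Then $h$ has no interior minimum there, since at such a point $\nabla h=0$ and $\Delta h\ge 0$. Because $h=0$ on $\partial B(\Theta_0,\epsilon_2)$, this forces $h\ge 0$. Your $h$ is $\le 0$ and attains its minimum at the antipode.

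The repair is to flip the sign and take $h=2\log\frac{\sin(r/2)}{\sin(\epsilon_2/2)}$, which is exactly the paper's $h_0$. Then:
\begin{itemize}
\item $\Delta_{\gamma_0}h=-1$;
\item $|\nabla h|_{\gamma_0}=\cot(r/2)\lesssim\epsilon_2^{-1}$;
\item $|\nabla^2 h|_{\gamma_0}\lesssim 1+\cot^2(r/2)\lesssim\epsilon_2^{-2}$ on $\{r\ge\epsilon_2\}$.
\end{itemize}
With this choice your Steps 2 and 3 go through essentially verbatim. The perturbation error $o(1)\bigl(|\nabla h|+|\nabla^2h|\bigr)\lesssim o(1)\bigl(1+\cot^2(r/2)\bigr)$ is absorbed by the $-1$ together with $-\frac{\kappa}{4}\cot^2(r/2)$, which gives $<-\frac12$. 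Your idea of comparing the $r^{-2}$-size error pointwise with the negative quadratic term is exactly the mechanism in the paper's proof.
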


\begin{proof}
    We consider the coordinates $(\theta,\phi)=(\sin(\theta)\cos(\phi),\sin(\theta)\sin(\phi),\cos(\theta))$ on the sphere and assume $\theta(\Theta_0)=0$.\\
    We consider the function
    \[h_0(\theta)=2\log\sin\left(\frac{\theta}{2}\right)-2\log\sin\left(\frac{\epsilon_2}{2}\right).\]
    We then have 
    \begin{equation*}
        \partial_\theta h_0=\frac{\cos(\theta/2)}{\sin(\theta/2)},
    \end{equation*}
    \begin{equation*}
        \nabla_{\gamma_0}^2h_0(\partial_\theta,\partial_\theta)=\partial^2_\theta h_0=-\frac{1}{2}-\frac{\cos^2(\theta/2)}{2\sin^2(\theta/2)},\ \nabla_{\gamma_0}^2h_0(\partial_\theta,\partial_\phi)=0,
    \end{equation*}
    \begin{equation*}
        \nabla_{\gamma_0}^2h_0(\partial_\phi,d\phi)=-g^{\phi\phi}\Gamma_{\phi\phi}^\theta\partial_\theta h_0=\sin^{-1}(\theta)\cos(\theta)\frac{\cos(\theta/2)}{\sin(\theta/2)}=\frac{\cos^2(\theta/2)}{2\sin^2(\theta/2)}-\frac{1}{2},\ \Delta_{\gamma_0}h_0=-1.
    \end{equation*}
    Therefore, we have 
    \begin{equation*}
        \begin{aligned}
            \Delta_{\gamma(h_0)}h_0-\frac{\k}{2}\left|\nabla h_0\right|^2_{\gamma(h)}\leq & \Delta_{\gamma_0}h_0 +o(1)\left(\left|\nabla h_0\right|+\left|\nabla^2 h_0\right|\right)-\frac{\k}{4}\left|\nabla h_0\right|^2_{\gamma_0}\\
            \leq & -1 +o(1)\left(1+\frac{\cos^2(\theta/2)}{\sin^2(\theta/2)}\right)-\frac{\k}{4}\frac{\cos^2(\theta/2)}{\sin^2(\theta/2)}< -\frac{1}{2}.
        \end{aligned}
    \end{equation*}
    
\end{proof}
Take $\epsilon_2=(-U_1)^{\delta}/10$. We obtain the corresponding $h_0$ on $\SS\setminus B_{\SS}(\Theta_0,(-U_1)^{\delta}/10)$ and set $h_0=0$ in $B_{\SS}(\Theta_0,(-U_1)^{\delta}/10)$. We then smooth the function to obtain $h_1$ such that $h_1=h_0$ on $\SS\setminus B_{\SS}(\Theta_0,(-U_1)^{\delta}/5)$, $h_1=0$ on $ B_{\SS}(\Theta_0,(-U_1)^{\delta}/100)$, and
\[\left|\nabla h_1\right|\leq C_0 (-U_1)^{-\delta},\ \left|\nabla^2 h_1\right|\leq C_0 (-U_1)^{-2\delta}.\]
Let $\Phi=2C_0 h_1+\log\left(\frac{V_1}{-U_1}\right)$, so that $W(\Theta_0)=U_1$. For $\Theta\in B_{\SS}(\Theta_0,(-U_1)^{\delta}/5)$, we have $1\geq e^{-2C_0h_1}\geq \left(\frac{\sin(\epsilon_2/2)}{\sin((-U_1)^\delta/10)}\right)^{4C_0}\geq \frac{1}{C_0}$ and thus 
\begin{equation*}
    \left|W(\Theta)-U_1\right|=(-U_1)\left|e^{-h_1(\Theta)}-1\right|\leq \left(1-\frac{1}{C_0}\right)(-U_1).
\end{equation*}
Therefore, for $\Theta\in B_{\SS}(\Theta_0,(-U_1)^{\delta}/5)$, we have
\begin{equation*}
    \frac{1}{2}(-W)\Omega^{-1}\tr\chi(W,V_1,\Theta)+o(1)<-\frac{1}{(-W)^{\delta_1}}.
\end{equation*}
When $\Theta\in \SS\setminus B_{\SS}(\Theta_0,(-U_1)^{\delta}/5)$, we have 
\begin{equation*}
    \frac{1}{2}(-W)\ell^+\leq C_0+2C_0\Delta_\gamma h_1-\frac{\k}{2}\cdot 4C_0^2\left|\nabla h\right|_{\gamma}^2\leq 0,
\end{equation*}
and in the remaining region, we have
\begin{equation*}
    \begin{aligned}
        \frac{1}{2}(-W)\ell^+\leq & \frac{1}{2}(-W)\Omega^{-1}\tr\chi+o(1) +\Delta_{\gamma}\Phi-\frac{\k}{2}\left|\snab \Phi\right|_\gamma^2\\
        \leq & -\frac{1}{(-U_1)^{\delta_1}}+\frac{C}{(-U_1)^{2\delta}}.
    \end{aligned}
\end{equation*}
Choosing $\delta_1>2\delta$ and $(-U_1)$ sufficiently small, we conclude that $\ell^+<0$ everywhere on $S_{W,V_1}$, establishing the existence of a trapped surface.

\begin{corollary}
    In our setting, along each incoming null cone $\Hb_v$ that contains a trapped surface, there exists a marginally outer trapped surface (MOTS).
\end{corollary}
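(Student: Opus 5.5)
The plan is to run a parabolic (null mean curvature) flow of graphs inside the fixed incoming null hypersurface $\Hb_{V_1}$, in the spirit of Roesch--Scheuer \cite{RoeschScheuer2021}. The trapped surface $S_{W,V_1}$ just constructed serves as the inner barrier, and a sphere close to the initial sphere $S_{U,V_1}$ with $U$ near $-1$ serves as the outer barrier.

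\textbf{Setting up the flow.} In the geodesic coordinates $(U,V,\Theta)$ on $\Hb_{V_1}$ we parametrize graph spheres $S_w=\{U=w(\Theta)\}$ and evolve
\[
\partial_t w=-\ell^+(w)=-\Bigl(\tr\chi^{(g)}+2\slap w+4\zeta^{(g)}\cdot\snab w-|\snab w|^2(\tr\chib^{(g)}+4\omegab^{(g)})\Bigr),
\]
using the first-variation formula \eqref{Equation_of_First_Variation}. This is a quasilinear, uniformly parabolic equation in $w$ as long as $w$ stays in a compact subinterval of $(-1,0)$ where the metric quantities are controlled by the a priori estimates of Section 4. We start from $w_0=W$, the trapped surface, so that $\ell^+(w_0)<0$. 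The sign of $\ell^+$ is preserved along the flow: differentiating in $t$ gives a linear parabolic equation for $\ell^+$, of the form \eqref{second variation formula in introduction} with $w_t=-\ell^+$. The maximum principle then keeps $\ell^+\le0$, so $w$ is monotone nondecreasing in $t$ and the spheres move toward the outer barrier.

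\textbf{Barriers and the matter-focusing condition.} For the outer barrier we use a sphere $U=U_*$ with $U_*$ near $-1$, on which $\tr\chi>0$; this holds by the estimates on $\Omega^{-1}\tr\chi$ near $H_{-1}$ for $V_1$ small. By the comparison principle, $w(t)\le U_*$ for all $t$. Uniform $C^1$ bounds then come from a gradient estimate. Here the main input is the matter-focusing condition
\[
|\Omega D_3\phi|^2\gg \epsilon(-u)^{-2}\gtrsim|\Omega\chib||\Omega\chibh|+|\Omega^2\alphab|,
\]
which follows from Proposition~\ref{prop:self-similar-initial-data} together with the comparison estimates of Section 4 in the region $v\le\epsilon_1(-u)^{1+\kappa/N}$. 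In the Raychaudhuri-type term $\tr_{S_w}\Ric=(\Omega e_3\phi)^2$ plus shear contributions that appears when the flow is differentiated along $e_3$, this condition makes the zeroth-order coefficient have a good sign. That sign controls $|\snab w|$ via the maximum principle applied to $|\snab w|^2$, after absorbing the $\zeta^{(g)}$ and $\chibh$ terms, which are $O(\epsilon)$ relative to $(-u)^{-2}$. Higher regularity then follows from standard Krylov--Safonov and Schauder estimates.

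\textbf{Convergence and the main obstacle.} Monotonicity and the two barriers give $w(t)\uparrow w_\infty$ with uniform $C^{2,\alpha}$ bounds. Since $\int_0^\infty\ell^+\,dt$ is bounded, $\ell^+(w(t))\to0$ along a subsequence, so $\ell^+(w_\infty)=0$. Moreover $\tr\chib<0$ on $S_{w_\infty}$ because $\Omega\tr\chib\approx2/u<0$ throughout the region. Hence $S_{w_\infty}$ is a MOTS. I expect the hardest step to be the gradient estimate. The extra term $-|\snab w|^2(\tr\chib^{(g)}+4\omegab^{(g)})$ is singular, of size $(-u)^{-1}$, and the angular dependence of $b$ produces $\chibh$ and $\zeta$ contributions that must be dominated. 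I would first try to verify the coercive inequality $(\Omega e_3\phi)^2-|\Omega\chib||\Omega\chibh|-|\Omega^2\alphab|\gtrsim(-u)^{-2}$ pointwise, using the self-similar value $\Omega e_3\phi\sim\sqrt{2\kappa}/(-u)$. I would then feed this into the Roesch--Scheuer gradient argument.
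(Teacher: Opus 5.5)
\textbf{Gap: the flow has the wrong sign.} This breaks the argument as written.

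\emph{The equation is backward parabolic.} In \eqref{Equation_of_First_Variation} the coordinate $U$ increases along the future-directed generator $\partial_U=\Omega e_3$, and $\ell^+(w)$ contains $+2\slap w$. So your flow $\partial_t w=-\ell^+(w)$ reads $\partial_t w=-2\slap w+\dots$. That is a backward heat equation, which is ill-posed rather than uniformly parabolic. The sign-preservation step fails for the same reason: putting $w_t=-\ell^+$ into \eqref{Second_Variation} gives principal part $-2\slap\ell^+$, so no maximum principle is available for $\ell^+$.

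\emph{The geometry is reversed.} The trapped graph is $W=U_1e^{-2C_0h_1}$ with $h_1\ge0$, so it lies in $[U_1,0)$, next to $u=0$, since $-U_1$ is small. Your section $U_*$ is near $-1$. Hence $w_0=W>U_*$, and your claim $w(t)\le U_*$ already fails at $t=0$. A nondecreasing $w$ moves toward $u=0$: away from the barrier and, at fixed $v=V_1$, out of the existence region $v<\epsilon_1(-u)^{1+\max\{\delta,\k/N\}}$, where nothing is controlled.

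\textbf{Fix and comparison with the paper.} Flip the sign to $\partial_t w=\tfrac12\ell^+(w)$. This is exactly the paper's $\partial_t\omega=-\tfrac12\ell^+(\omega)$, written there in a coordinate $\omega$ measured along the past-directed generator. With that sign, your plan is a valid mirror image of the paper's. Starting from $W$, $\ell^+<0$ is preserved and $w$ decreases. The section $S_{U_*}$, where $\ell^+>0$, is a stationary subsolution, giving $w\ge U_*$.

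The paper instead starts from the round untrapped section, preserves $\ell^+>0$, flows futureward, and uses the trapped graph as the barrier. Both versions confine the flow to the same compact piece of $\Hb_{V_1}$ and finish with the same monotone-convergence argument.

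The paper's choice has a practical advantage. Its initial surface has $\nabla\omega=0$ and sits in the least singular part of the region, so the gradient bound starts from zero. Your corrected version starts from the steep graph, where $|\snab W|$ is of order $(-U_1)^{-\delta}$ by the construction of $h_1$. A maximum-principle bound for $\log q+f(\omega)$ tolerates a large but finite initial value, but this costs extra bookkeeping.

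\textbf{Two smaller corrections.}

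First, the matter-focusing inequality does not act through the zeroth-order coefficient $2K'-\tr_{S_w}\Ric$. Note that $\tr_{S_w}\Ric=|\snab\phi|^2$ on $S_w$, not $(\Omega e_3\phi)^2$. The inequality controls the highest-power terms in the evolution of $q=\tfrac12|\snab w|^2$. These arise from differentiating the coefficients $\tr\chib^{(g)}$ and $\chibh$ of $\ell^+$ along the generator: the Raychaudhuri equation brings in $\Ric_{33}=(e_3\phi)^2$, and the $\nabla_3\chibh$ equation brings in $\alphab$.

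Second, Section 4 does not control $\alphab$, so citing it is not enough. The paper takes $\Omega^2\alphab(u,0)\sim\epsilon(-u)^{-2}$ from the self-similar data. It then propagates this off $v=0$ using the $\Omega\nabla_4(\Omega^2\alphab)$ Bianchi equation and the transport lemma.
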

\begin{proof}
    Based on our derived estimates, we first verify a matter-focusing condition. Recall the definition of the Weyl curvature component
    $\alphab=W_{3A3B}=R_{3A3B}-\frac{1}{2}\Rc_{33}g_{AB},$
    with initial value
    $$\Omega^2\alphab(u,0)=(-\Omega\nabla_3(\Omega\chibh)-\frac{1}{2}\Omega\tr\chib\Omega\chib-4\Omega\omegab\Omega\chibh)(u,0)\sim \epsilon/(-u)^2.$$
    Recall the equations
    \[
        \Omega\nabla_4(\Omega^2\alphab)=-\frac{1}{2}\Omega\tr\chi\Omega^2\alphab -\Omega^3\nabla\hat\otimes\betab-3\Omega^3(\chibh\rho-{}^*\chibh\sigma)+\Omega^3(\zeta-4\etab)\hat\otimes\betab,
    \]
    \[\rho
=-K+\frac12R+\Rc_{34}
+\frac12\chibc\cdot\chic
-\frac12\tr\chi\,\tr\chib,\ \sigma=\cl\zeta-\frac12\chibh\wedge\chih,\]
\[\betab_A
=\dv\chibc_A-(\zeta\cdot\chibc)_A-\frac12\Rc_{3A},\]
and we find 
\begin{equation*}
    \begin{aligned}
        &\sum_{i\leq 3}\lnm \nabla^i\wt{\Omega^2\alphab}\rnm^2_{\LS(u,v;N)}\\
        &\quad\lesssim \frac{v}{(-u)^{1+C\delta}} \sum_{i\leq 4,j\leq 3}\lnm \nabla^i\left(\wt{\Omega^2\eta},\wt{\Omega^2\etab},\wt{\Omega\chi},\wt{\Omega\chib},\wt{\dv(\Omega\chibc)}\right),\nabla^j\wt{\Omega^2 K}\rnm^2_{\LH(u,v;N)}\ll 1.
    \end{aligned}
\end{equation*}
Thus{,} for $v/(-u)^{1+C\delta}\leq \epsilon$,
    {we have} 
    \begin{equation}\label{eq:matter-focusing}
    \left|\Omega D_3\phi\right|^2\sim (-u)^{-2}\gg \epsilon(-u)^{-2}\gtrsim \left|\Omega\chib\right|\cdot\left|\Omega\chibh\right|+\left|\Omega^2\alphab\right|.
      \end{equation}
     {\color{black}
    This is the desired matter-focusing condition that to be used in below parabolic flow approach.
      
   \vspace{1mm}
    We then run the flow directly on the portion of $\Hb_v$ bounded by the trapped graph constructed above and a round outer un-trapped section lying to its timelike past, taking the latter as the initial surface. In the graph coordinate $\omega$ adapted to the past-directed generator, we then consider the below scalar quasilinear parabolic equation
    \begin{equation}
        \partial_t\omega=-\frac{1}{2}\ell^+(\omega).
    \end{equation}
    Here $S_{\omega(t)}\subset\Hb_v$ is the graph determined by $\omega(t,\cdot)$, and $\ell^+(\omega)$ denotes its outgoing null expansion. With the trapped surface constructed above as an inner barrier and the round outer untrapped section as the initial surface, the maximum principle confines the evolving graphs to a fixed compact portion of $\Hb_v$ and yields the required $C^0$ bound. The essential remaining step is to control $|\nabla\omega|^2$. Set $q=\frac12|\nabla\omega|^2$. On the set where $q>0$, one then applies the maximum principle to the auxiliary quantity $\log q+f(\omega)$, with $f$ chosen appropriately. The matter-focusing condition \eqref{eq:matter-focusing} provides precisely the favorable sign in its evolution inequality, yielding a uniform bound for $q$, and hence for $|\nabla\omega|$. Together, the $C^0$ and gradient bounds make the flow uniformly parabolic, so standard parabolic regularity gives uniform higher-order estimates and continuation for all time. Moreover, since the initial section is strictly outer untrapped, the maximum principle applied to the evolution equation for $\ell^+$ preserves $\ell^+(\omega(t))>0$. The flow equation then gives $\partial_t\omega<0$, so $\omega(t,\theta)$ is pointwise decreasing in $t$. The uniform estimates imply smooth convergence to a limiting section, and integration of the flow equation shows that this limit satisfies $\ell^+=0$. Since $\tr\chib<0$, the limiting section is a MOTS.
    }
    For more details of the parabolic arguments, see \cite[Theorem 1.2]{RoeschScheuer2021}.
\end{proof}

\subsection{More remarks on MOTS}
In this section, we consider a more general geometric setting for the study of marginally outer trapped surfaces (MOTS). Let $\Hb$ be an incoming null cone embedded in a Lorentzian manifold and let $\Lb$ denote the tangent null vector field satisfying $D_{\Lb}\Lb=0$.
For any spacelike sphere $S_0$ in $\Hb$, we say that $(s,\theta)$ is the \textit{geodesic foliation} based on $S_0$ if $\theta$ denotes coordinates on $S_0$ and each point on $\Hb$ is represented as $(s,\theta)=\exp_{\theta}(s\cdot \Lb|_{\theta})$. For any positive function $f(s,\theta)$, define $u(s,\theta)$ by 
$$u(s,\theta) =\int_0^s\frac{1}{f(\tau,\theta)}d\tau-1.$$
After the coordinate change $(s,\wt{\theta})\mapsto (u,\theta)=(u(s,\wt\theta),\wt{\theta})$, we have $\partial_u =f\Lb$. We then consider the null frame $e_3=\partial_u,e_A=\partial_{\theta^A}$ on $\Hb$, with $e_4$ defined as the null companion of $e_3$ on the level sets of $u$. For a function $w=w(\theta)$, write $S_w=\{(u,\theta)\in\Hb:u=w(\theta),\theta\in\SS\}$. The null frame on $S_w$ is $e_3^\prime=e_3$, $e_A^\prime=w_*(e_A)=e_A+\snab_A w e_3$, and $e_4^\prime=e_4+2\snab w-\left|\snab w\right|^2 e_3$. We note that $\snab$ denotes the connection on $S_w$.
Now we specialize to $f(s,\theta)=\sqrt{g(s,\theta)}^a\sqrt{g(0,\theta)^{-1}}^{a}$, which yields
\begin{equation*}
    D_{e_3}e_3=f\partial_s f \Lb= \partial_u(\log f) e_3=a\tr\chib e_3.
\end{equation*}
We recall the first variation formula \ref{Equation_of_First_Variation} for the area functional in the direction of $e_4^\prime$:
\begin{equation*}
    \begin{aligned}
        \ell^+(w)=\tr\chi+4\snab^Aw\zeta_A+2\slap w+(2a-1)\left|\snab w\right|^2\tr\chib,
    \end{aligned}
\end{equation*}
where all connection components are defined with respect to the null frame $e_\mu$.
We proceed to derive the second variation formula for the null expansion in the direction of $e_3$.
\begin{proposition}[Second {variation} formula]
    For any $w=w(t,\theta)$ with $w_t=\partial_t w$, we have 
\begin{equation}\label{Second_Variation}
    \begin{aligned}
        \frac{d}{dt}\left(\ell^{+}(w)\right)=&-(1+a)w_t\tr\chib\ell^++ 2\slap w_t- w_t\left(2K^\prime-\strc\Rc\right) \\
        &\qquad+2w_t\sdiv (\zeta^\prime)+4\snab w_t\cdot \zeta^\prime+2w_t\left|\zeta^\prime\right|^2,
    \end{aligned}
\end{equation}
where $\psi^\prime$ denotes the connection components with respect to the frame $e_\mu^\prime$, $K^\prime$ is the Gauss curvature of $S_w$, and $\strc,\sdiv,\slap$ are the operators associated with $g|_{S_w}$.
\end{proposition}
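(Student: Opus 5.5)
The plan is to reduce the computation at a fixed time $t_0$ to a first-variation computation for a genuine foliation. At $t=t_0$ I would introduce the shifted function $\tilde u=u-w(t_0,\theta)$ on $\Hb$. Its level set $\{\tilde u=0\}$ is $S_{w(t_0)}$, and $\partial_{\tilde u}=e_3$ still holds along generators. I would then write $w(t,\theta)=w(t_0,\theta)+(t-t_0)w_t(t_0,\theta)+O((t-t_0)^2)$. This way $\frac{d}{dt}\ell^+(w)|_{t_0}$ becomes the first variation of $\ell^+$ along $\Hb$ for the family of graphs $\{\tilde u=\varepsilon h\}$ over the reference sphere $S':=S_{w(t_0)}$, with $h=w_t$.

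Relative to $S'$ with its primed frame $(e_3',e_A',e_4')$, I would split the derivative into two contributions.

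\emph{Normal transport.} The displacement vector field is $h\,e_3'$. Transporting the sphere along it gives the contribution $h\,\nabla_{3}'\tr\chi'$. Here $\nabla_3'$ is computed with $D_{e_3}e_3=a\tr\chib\,e_3$, which is exactly the derivative defining $\ell^+$ through the foliation by $\tilde u$.

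\emph{Tilt of the null normal.} The outgoing null normal of $\{\tilde u=\varepsilon h\}$ is $e_4'+2\varepsilon\snab h+O(\varepsilon^2)$. Applying the first-variation formula \eqref{Equation_of_First_Variation} in the primed frame, the linear part is $2\slap h+4\zeta'\cdot\snab h$. The quadratic term $(2a-1)|\snab h|^2\tr\chib$ does not contribute at first order.

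For the normal-transport term I would use the transport equation in \eqref{eq:chi}, i.e.\ the $\nabla_3\tr\chi$ equation, written for the foliation $\tilde u$ with lapse normalized so that $e_3$ has acceleration $a\tr\chib\,e_3$. This yields
\[
\nabla_3'\tr\chi'=-\tfrac12\tr\chib'\tr\chi'-a\tr\chib\,\tr\chi'+2\sdiv\eta'+2|\eta'|^2-2K'+(R+\Rc_{34}).
\]
The $a$-term arises from the non-geodesic normalization, and it plays the role of the $\omegab$-term.

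I would then make three identifications.
\begin{enumerate}
\item For a sphere lying in a single null cone with this frame, $\eta'=-\zeta'$ up to the acceleration gauge. Hence $2\sdiv\eta'=-2\sdiv\zeta'$, but the sign must be matched to the paper's convention $\zeta'_A=\tfrac12 g(D_Ae_4',e_3')$.
\item Using $g^{34}=-\tfrac12$, rewrite $R+\Rc_{34}$ as $\strc\Rc$.
\item Express $\tr\chi'$ through $\ell^+$ and collect the products with $\tr\chib$ into $-(1+a)w_t\tr\chib\,\ell^+$.
\end{enumerate}

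Since $S'$ varies with $t_0$, nothing needs to be assumed about $S'$ being a MOTS, and no terms are discarded. The tilt contributions $2\slap w_t+4\snab w_t\cdot\zeta'$ supply the derivative terms in \eqref{Second_Variation}.

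I expect the main obstacle to be the careful sign and gauge bookkeeping. Two points need care: first, identifying $\eta'$ versus $\zeta'$ in the primed frame and producing the exact combination $2w_t\sdiv\zeta'+2w_t|\zeta'|^2$; second, checking that the acceleration $D_{e_3}e_3=a\tr\chib\,e_3$ contributes precisely the extra $-a\,w_t\tr\chib\,\ell^+$. To catch errors, I would first verify the formula in the flat null cone in Minkowski space with $w$ depending on $t$ only. In that case $\zeta'=0$, $K'=1/r^2$ and $\Rc=0$, and the right-hand side of \eqref{Second_Variation} can be checked directly against the explicit expansion $\ell^+=2/r$.
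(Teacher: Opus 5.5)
Your route is genuinely different from the paper's. The paper differentiates $\ell^+=g^{AB}\langle D_{e'_A}e'_4,e'_B\rangle$ directly in $t$. It computes $D_{\partial_t}e'_4=2(\snab w_t+w_t\zeta')^Ae'_A-a\tr\chib\,w_t\,e'_4$ and $D_{\partial_t}e'_B$ from their inner products with the frame. It then uses the Gauss equation to replace $\chib\cdot\chi'+g^{AB}R_{3A4B}$ by $2K'+\tr\chi'\tr\chib-\strc\Rc$. Your tilt/transport split corresponds to separating the parts of these frame derivatives that carry $\snab w_t$ from those that carry $w_t$.

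The advantage of your organization is that it shows the formula as $w_t$ times the $e_3$-structure equation for $\tr\chi'$, plus the linearized first-variation formula. That makes the origin of each term transparent. The cost is that you need the $\nabla_3\tr\chi$ equation in a gauge the paper never writes down: a single cone foliated by the translates $\{\tilde u=c\}$, with $e_3=\partial_{\tilde u}$ accelerating at rate $a\tr\chib$ and $e'_A=\partial_{\theta^A}$ commuting with $e_3$. Equation \eqref{eq:chi} is stated in the double-null frame, where $\eta=\zeta+\nabla\log\Omega$ and the shift enters, so it does not apply verbatim. You must derive the single-cone version yourself, and that derivation is essentially the paper's frame computation with $w_t$ constant.

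More importantly, your displayed transport equation is wrong in two places, so step (3) does not follow from it as written.

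\emph{The $\tr\chib\,\tr\chi'$ coefficient.} With the curvature written as $-2K'$, this coefficient is $-1$, not $-\tfrac12$. Since $\tr\chib'=\tr\chib$ (the restriction of $\chib$ to any cross-section of $\Hb$ is the same tensor), your version yields $-(\tfrac12+a)w_t\tr\chib\,\ell^+$ rather than $-(1+a)w_t\tr\chib\,\ell^+$. Your proposed Minkowski check does expose this. For $a=0$ and $w=w(t)$ the left side is $w_t e_3(2/r)=2w_t/r^2$, whereas your equation gives $w_t(2/r^2-2/r^2)=0$.

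\emph{The sign of $\eta'$.} The identification $\eta'=-\zeta'$ has the wrong sign, and no gauge correction is involved. Because $[e_3,e'_A]=0$,
\[
\eta'_A=-\tfrac12\langle D_{e_3}e'_A,e'_4\rangle=-\tfrac12\langle D_{e'_A}e_3,e'_4\rangle=\zeta'_A
\]
exactly. The Minkowski test cannot detect this error, since $\zeta'=0$ there.

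\emph{The corrected identity.} Using $D_{e_3}e'_4=-a\tr\chib\,e'_4+2(\zeta')^Ae'_A$, $D_{e'_A}e_3=\chib_A{}^Ce'_C+\zeta'_Ae_3$, and the Gauss equation, one gets
\[
e_3(\tr\chi')=-(1+a)\tr\chib\,\tr\chi'-2K'+2\sdiv\zeta'+2|\zeta'|^2+\strc\Rc .
\]
Multiply by $w_t$, evaluate on $\tilde u=0$ where $\tr\chi'=\ell^+(w)$, and add your tilt terms $2\slap w_t+4\snab w_t\cdot\zeta'$. This gives \eqref{Second_Variation} exactly, so with these two corrections your argument closes.
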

\begin{proof}
    Differentiating with respect to $t$ and using the definitions of the null frame, we obtain
\begin{equation}\label{1_1_1_1}
    \begin{aligned}
        &\frac{d}{dt}\ell^+(w)
        =-2w_t\chib^{AB}\chi^\prime_{AB}+g^{AB}\langle D_{e^\prime_A}e^\prime_4,D_{\partial_t} e_B^\prime \rangle
        +g^{AB}\left(\langle D_{e^\prime_A}D_{\partial_t}e^\prime_4, e_B^\prime \rangle- R(\partial_t,e_A^\prime,e_4^\prime,e_B^\prime)\right).
    \end{aligned}
\end{equation}
To express $D_{\partial_t}e_4^\prime$ and $D_{\partial_t} e_B^\prime$, we compute their inner products with the frame vectors. {We first note that}
\begin{equation*}
    \langle D_{\partial_t}e_4^\prime,e_3 \rangle=-\langle e_4^\prime,D_{w_t e_3}e_3 \rangle=2a\tr\chib w_t ,\ \langle D_{\partial_t}e_4^\prime,e_4^\prime \rangle=0,
\end{equation*}
\begin{equation*}
    \begin{aligned}
        \langle D_{\partial_t} e_4^\prime,e_A^\prime \rangle =&-\langle  e_4^\prime,D_{e_A^\prime}\partial_t \rangle=-\langle  e_4^\prime,e_3 \rangle \snab_{A} w_t-w_t\langle e_4^\prime, D_{e_A^\prime}e_3 \rangle
        =2\snab_A w_t+2w_t\zeta^\prime_A.
    \end{aligned}
\end{equation*}
For $D_{\partial_t}e_B^\prime$, we have 
\begin{equation*}
    \langle D_{\partial_t}e_B^\prime,e_3\rangle=-\langle e_B^\prime,D_{\partial_t}e_3\rangle =0,\ \langle D_{\partial_t}e_B^\prime,e_4^\prime\rangle=-\langle e_B^\prime,D_{\partial_t}e_4^\prime\rangle=-2\snab_B w_t-2w_t\zeta^\prime_B.
\end{equation*}
\begin{equation*}
    \langle D_{\partial_t}e_B^\prime,e_A^\prime\rangle=w_t \chib_{AB}.
\end{equation*}
Combining these, we arrive at
\begin{equation*}
    D_{\partial_t} e_4^\prime=2\left(\snab w_t+w_t\zeta^\prime\right)^Ae_A^\prime -a\tr\chib w_t e_4^\prime,\ D_{\partial_t}e_B^\prime=w_t\chib_{B}^Ce^\prime_C+\left(\snab_B w_t+w_t\zeta^\prime_B\right)e_3 .
\end{equation*}
        Substituting {these expressions} into the variation formula, we find
\begin{equation*}
    \begin{aligned}
        &g^{AB}\langle D_{e^\prime_A}e^\prime_4,D_{\partial_t} e_B^\prime \rangle
        =w_t\chib^{AB}\chi^\prime_{AB}+2\left(\snab_A w_t+w_t\zeta^\prime_A\right){\zeta^\prime}^A,\\
        &g^{AB}\left(\langle D_{e^\prime_A}D_{\partial_t}e^\prime_4, e_B^\prime \rangle- R(\partial_t,e_A^\prime,e_4^\prime,e_B^\prime)\right)\\ &\qquad\qquad\qquad
        = 2\slap w_t +2\sdiv (w_t\zeta^\prime)-aw_t\tr\chib \ell^+-w_t g^{AB}R(e_3,e_A^\prime,e_4^\prime,e_B^\prime).
    \end{aligned}
\end{equation*}
    Inserting these expressions into \ref{1_1_1_1}, we conclude that
\begin{equation*}
    \begin{aligned}
        &\frac{d}{dt}\ell^{+}(w)=-w_t\left(\chib^{AB}\chih^\prime_{AB}+g^{AB}R(e_3,e_A^\prime,e_4^\prime,e_B^\prime)\right)+2\slap w_t +2w_t\sdiv (\zeta^\prime)+4\nabla w_t\cdot \zeta^\prime \\ &\qquad\qquad\qquad +2w_t\left|\zeta^\prime\right|^2-aw_t\tr\chib \ell^+.
    \end{aligned}
\end{equation*}
Recall that for any spacelike sphere $S$ with null companions $e_3,e_4$, the Gauss--Codazzi equation implies that
\begin{equation*}
K=\frac{1}{2} g^{A B} R_{3 A 4 B}+\frac{1}{2} R+\frac{1}{2} \operatorname{Ric}_{34}+\frac{1}{2} \hat{\chi} \cdot \hat{\chi}-\frac{1}{4} \operatorname{tr} \chi \operatorname{tr} \underline{\chi},
\end{equation*}
which is equivalent to $g^{AB}R_{3A4B}+\chi\cdot\chib=2K+\tr\chi\tr\chib-g^{AB}\Rc_{AB}.$
Substituting this identity into the preceding expression completes the proof of the second variation formula.

\end{proof}
Several important consequences follow directly from the second variation formula.

\begin{proposition}[Uniqueness of MOTS]\label{Uniqueness_of_MOTS}
    Suppose that two surfaces $S_{w_0}$ and $S_{w_1}$ satisfy $\ell^+(w_0)=\ell^+(w_1)=0$. Let $w(t)=w_0+t(w_1-w_0)$ and assume that $2K_{w(t)}+\frac{1}{2}\tr\chib\ell^+(w(t))-\tr_{w(t)}\Rc-2\left|\zeta_{w(t)}\right|^2> 0$. Then $w_0=w_1$.
\end{proposition}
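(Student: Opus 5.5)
Set $h=w_1-w_0$ and $w(t)=w_0+th$ for $t\in[0,1]$, so that $w_t=h$ does not depend on $t$. We argue by contradiction and suppose $h\not\equiv 0$. Applying the second variation formula \eqref{Second_Variation} along this family and integrating from $t=0$ to $t=1$ gives
\[
0=\ell^+(w_1)-\ell^+(w_0)=\int_0^1\Big(2\slap_{w(t)} h+4\snab h\cdot\zeta'_{w(t)}+h\,c_t(\theta)\Big)\,dt,
\]
where
\[
c_t:=-(1+a)\tr\chib\,\ell^+(w(t))-2K'_{w(t)}+\strc\Rc+2\sdiv\zeta'+2|\zeta'|^2 .
\]
Viewed on $\SS$, the right-hand side is a linear second-order operator $\mathcal L h$. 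Its principal part is an average of uniformly elliptic operators, since each $\slap_{w(t)}$ is the Laplacian of a metric that depends smoothly on $(t,\theta)$. The task is then to show that $\mathcal L h=0$ forces $h\equiv 0$, and the natural route is a maximum-principle argument at an extremum of $h$.

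The divergence term $2h\,\sdiv\zeta'$ has no sign, so it has to be removed before the maximum principle applies. For this I would use a gauge or conformal change of the unknown: write $h=e^{\psi}\tilde h$, or better, absorb $\zeta'$ through the adjoint structure. The cleanest choice is to multiply by a positive test function and pass to the symmetrized operator. For $L=2\Delta+4\zeta\cdot\nabla+c$, the conjugation $h=e^{-\varphi}\tilde h$ changes the drift term and produces zeroth-order terms involving $\Delta\varphi$ and $|\nabla\varphi|^2$. If we do not conjugate but simply work at an extremum point $\theta^*$ of $h$, where $\nabla h(\theta^*)=0$, the drift $4\nabla h\cdot\zeta'$ vanishes. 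What remains at $\theta^*$ is
\[
\int_0^1\big(2\slap_{w(t)}h+h\,c_t\big)dt=0 .
\]
The term $2h\,\sdiv\zeta'$ still survives. The stated hypothesis, however, has $-2|\zeta|^2$ rather than a divergence. This suggests instead using the symmetric form $2\Delta h+4\zeta\cdot\nabla h+2h\,\dv\zeta+2h|\zeta|^2=2e^{-\Phi}\dv(e^{\Phi}\nabla h)+\dots$ together with a Bochner/integration-by-parts identity. Concretely, I would multiply the $t$-integrated identity by $h$ and integrate over the sphere. The integral of $h\,\sdiv\zeta'\,h$ equals $-\int 2h\nabla h\cdot\zeta'$, which combines with the drift term to give $\int(4-4)h\nabla h\cdot\zeta'$ up to averaging. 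The remaining cross term is then bounded by Cauchy--Schwarz, $|2h\nabla h\cdot\zeta'|\le |\nabla h|^2+h^2|\zeta'|^2$.

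This yields
\[
0\le \int_0^1\!\!\int_{S_{w(t)}}\Big(-|\nabla h|^2-h^2\big(2K'-\strc\Rc+(1+a)\tr\chib\,\ell^+-2|\zeta'|^2-\ldots\big)\Big)\,dt .
\]
Under the hypothesis (with the convention $a=-\tfrac12$, which matches the coefficient $\tfrac12$ in front of $\tr\chib\,\ell^+$), the bracket is positive. Every term is then nonpositive, so $h\equiv 0$.

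The main obstacle is bookkeeping. The measure and the metric depend on $t$, so the integration by parts must be done for each fixed $t$ before integrating in $t$, and $h$ has to be treated as a function on $\SS$ pulled back to each $S_{w(t)}$. One also needs the exact constants, in particular the relation between $a$ and the factor $\tfrac12$, so that the cross terms combine into exactly $-2|\zeta'|^2$. If this integral approach leaves an unsigned divergence term, the fallback is the pointwise maximum principle at a positive maximum (or negative minimum) of $h$. There, $\nabla h=0$ and $\slap h\le 0$, and one would use the hypothesis after exchanging $\sdiv\zeta'$ for $|\zeta'|^2$ through the conjugation $h=e^{\varphi}\tilde h$ with $\nabla\varphi=-\zeta'$, which is available whenever $\zeta'$ is a gradient up to a small error.
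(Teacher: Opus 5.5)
Your strategy (test the variation identity against $h$, integrate by parts so the drift and divergence terms cancel, conclude from a sign) is the paper's. As written, though, it has a genuine gap exactly at the step you call bookkeeping.

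The identity $0=\ell^+(w_1)-\ell^+(w_0)=\int_0^1\frac{d}{dt}\ell^+(w(t))\,dt$ holds pointwise on $\SS$. The integrations by parts you need, $\int h\,\slap h=-\int|\snab h|^2$ and $\int h^2\,\sdiv\zeta'=-2\int h\,\snab h\cdot\zeta'$, are valid only with respect to $dV_{w(t)}$. You cannot test a pointwise identity, already integrated in $t$, against a $t$-dependent measure. If you test it against a fixed measure instead, unsigned terms appear that involve derivatives of the density of $dV_{w(t)}$ relative to that measure, and they destroy the cancellation.

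The missing idea is to differentiate the weighted quantity $F(t)=\int_{S_{w(t)}}h\,\ell^+(w(t))\,dV_{w(t)}$, which vanishes at $t=0$ and $t=1$ because both surfaces are MOTS. Since $\partial_t\,dV_{w(t)}=h\tr\chib\,dV_{w(t)}$ (from $\partial_u g_{AB}=2\chib_{AB}$ with $e_3=\partial_u$), $F'(t)$ contains, besides $\int h\,\frac{d}{dt}\ell^+$, the extra term $\int_{S_{w(t)}}h^2\tr\chib\,\ell^+(w(t))$. This term does not vanish for $0<t<1$, and it shifts the coefficient of $h^2\tr\chib\,\ell^+$ from $-(1+a)$ to $-a$.

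This is why your normalization comes out wrong. The hypothesis, with $+\frac12\tr\chib\,\ell^+$, corresponds to $a=\frac12$ (the paper's choice), not $a=-\frac12$. The difference is not cosmetic: $\zeta'$ and the normalizations of $\tr\chib$ and $\ell^+$ depend on the foliation parameter $a$. At $a=-\frac12$ the correct computation would put $-\frac12\tr\chib\,\ell^+$ in the bracket, so the hypothesis would not apply.

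Once the measure is handled this way, $\int_{S_{w(t)}}\big(4h\,\snab h\cdot\zeta'+2h^2\sdiv\zeta'\big)=0$ exactly for each $t$, so no Cauchy--Schwarz step is needed. Using it would actually add an uncontrolled $h^2|\zeta'|^2$, which is where the unresolved ``$\ldots$'' in your bracket comes from. One obtains
\[
0=\int_0^1\!\!\int_{S_{w(t)}}\Big(-2|\snab h|^2-\big(2K'+\tfrac12\tr\chib\,\ell^+(w(t))-\strc\Rc-2|\zeta'|^2\big)h^2\Big)\,dt,
\]
which forces $h\equiv0$.

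Your maximum-principle fallback does not close the gap either. The conjugation $\nabla\varphi=-\zeta'$ requires $\zeta'$ to be (nearly) a gradient, and the statement does not assume this.
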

\begin{proof}
    Let $a=\frac{1}{2}$ and $h=w_1-w_0$. By assumption, we have 
    \begin{equation*}
        \begin{aligned}
            0=&\int_{S_{w_1}}h\ell^+(w_1)-\int_{S_{w_0}}h\ell^+(w_0)\\
            =&\int_0^1\int_{S_{w(t)}}\tr\chib h^2\ell^+(w(t))\\ &\qquad+h\left(-\frac{3}{2}h\tr\chib\ell^+(w(t))+2\slap h-h(2K^\prime-\strc\Rc)+2h\sdiv\zeta^\prime+4\snab h\cdot\zeta^\prime+2h\left|\zeta^\prime\right|^2\right)\\
            =&\int_0^1\int_{S_{w(t)}}-2\left|\snab h\right|^2-\left(2K^\prime+\frac{1}{2}\tr\chib\ell^+(w(t))-\strc\Rc-2\left|\zeta^\prime\right|^2\right)h^2.
        \end{aligned}
    \end{equation*}
    Since the integrand is non-positive, we conclude that $h\equiv 0$.
\end{proof}
\begin{remark}
    We can compute $2K^\prime+\frac{1}{2}\tr\chib\ell^+(w)-\strc\Rc-2\left|\zeta^\prime\right|^2$ more explicitly for a sphere $S_w$. By the definition of $\zeta^\prime$, we have
    \begin{equation*}
        \begin{aligned}
            \zeta^\prime_A=&-\frac{1}{2}\langle D_{e_A^\prime} e_3,e_4^\prime\rangle=\zeta_A-\frac{1}{2}\langle D_{e_A} e_3,2\nabla^B w e_B+\left|\snab w\right|^2e_3\rangle-\frac{1}{2}\snab_A w\langle D_{e_3} e_3,e_4\rangle\\
            =&\zeta_A -\chib_A^B\snab_B w.
        \end{aligned}
    \end{equation*}
    At a fixed point on $\SS^2$, we consider normal coordinates $\partial_1,\partial_2$ with respect to the metric $g_{AB}(w(\theta),\theta)$. Then for any tensor $\psi$, we have $\partial_A\psi=\snab_A\psi$ since ${\Gamma^\prime}^A_{BC}= 0$.
    Let $\Gamma^\prime,\Gamma$ be the Christoffel symbols of $g(w(\theta),\theta)$ and $g(\theta)$, respectively. Then
    \begin{equation*}
        \wt{\Gamma}^C_{AB}={\Gamma^\prime}^C_{AB}-\Gamma^C_{AB}=\snab_A \chib_{BC}+\snab_B w\chib_{AC}-\snab_Cw\chib_{AB}.
    \end{equation*}
    By Theorema Egregium, the {difference between the Gauss curvatures} satisfies
    \begin{equation*}
        \begin{aligned}
            &2K^\prime-2K\\
            =&R_{BABA}(g\circ w)-R_{BABA}(g)\\
            =&-\partial_A\wt{\Gamma}^B_{BA}+\partial_B\wt{\Gamma}^B_{AA}-\wt{\Gamma}^B_{AE}\wt{\Gamma}^E_{AB}+\wt{\Gamma}^B_{BE}\wt{\Gamma}^E_{AA}\\
            =&-\snab_A\left(\snab_Aw\tr\chib\right)+\snab_B\left(2\snab_Aw\chib_{AB}+\snab_B w\tr\chib\right)-(\snab_Aw\chib_{BE})^2\\
            &+\left(\snab_Ew\chib_{AB}+\snab_B w\chib_{AE}\right)^2+\snab_Ew\tr\chib\left(2\snab_A w\chib_{AE}-\snab_E w\tr\chib\right)\\
            =&-\sdiv (\tr\chib \snab w)+2\sdiv(\snab w\cdot\chibh)+\left|\snab w\right|^2\left(\frac{1}{2}(\tr\chib)^2+\left|\chibh\right|^2\right)-2\chib_{AB}\chib^A_{E}\snab^B w\snab^E w\\
            &+2\tr\chib\chibh(\snab w,\snab w)\\
            =&-\sdiv (\tr\chib \snab w)+2\sdiv(\snab w\cdot\chibh)+\left|\snab w\right|^2\left|\chibh\right|^2-2\left|\chibh\cdot\snab w\right|^2.
        \end{aligned}
    \end{equation*}
    Thus,
    \begin{equation*}
        \begin{aligned}
            &2K^\prime+\frac{1}{2}\tr\chib\ell^+(w)-\strc\Rc-2\left|\zeta^\prime\right|^2\\
            =&2K-\sdiv (\tr\chib \snab w)+2\sdiv(\snab w\cdot\chibh)+\left|\snab w\right|^2\left|\chibh\right|^2-2\left|\chibh\cdot\snab w\right|^2+\frac{1}{2}\tr\chib\ell^+(w)\\
            &-\tr\Rc-2\snab^A w\Rc_{3A}-\Rc_{33}\left|\snab w\right|^2-2\left|\zeta^2\right|-4\chib_{AB}\zeta^A\snab^B w-2\left|\chib\cdot\snab w\right|^2.
        \end{aligned}
    \end{equation*}
    Using $\snab\tr\chib=\nabla\tr\chib+\snab w\left((a-\frac{1}{2})(\tr\chib)^2-\left|\chibh\right|^2-\Rc_{33}\right)$, we have 
    \begin{equation*}
        \begin{aligned}
            &2K^\prime+\frac{1}{2}\tr\chib\ell^+(w)-\strc\Rc-2\left|\zeta^\prime\right|^2\\
            =& \frac{1}{2}\tr\chib\tr\chi+2\tr\chib\zeta\cdot\snab w+\tr\chib\slap w\\
            &+2K+\frac{1}{2}\left|\snab w\right|^2(\tr\chib)^2-\frac{1}{2}\nabla\tr\chib\cdot\nabla w+\left|\chibh\right|^2\left|\snab w\right|^2+\Rc_{33}\left|\snab w\right|^2-\tr\chib\slap w\\
            &+2\sdiv(\snab w\cdot\chibh)+\left|\snab w\right|^2\left|\chibh\right|^2-2\left|\chibh\cdot\snab w\right|^2\\
            &-\tr\Rc-2\snab^A w\Rc_{3A}-\Rc_{33}\left|\snab w\right|^2-2\left|\zeta\right|^2-4\chib_{AB}\zeta^A\snab^B w-2\left|\chib\cdot\snab w\right|^2\\
            =&2K+\frac{1}{2}\tr\chib\tr\chi -\tr\Rc-2\left|\zeta\right|^2-\frac{1}{2}\nabla\tr\chib\cdot\nabla w+2\left|\chibh\right|^2\left|\snab w\right|^2\\
            &+2\sdiv\left(\chibh\cdot\snab w\right)-4\left|\chibh\cdot\snab w\right|^2-2\snab^Aw\Rc_{3A}-4\chibh_{AB}\zeta^A\snab^B w-2\tr\chib\chibh_{AB}\snab^A w\snab^B w.
        \end{aligned}
    \end{equation*}
\end{remark}
\begin{remark}\label{Remark_of_MOTS_uniqueness}
    For the MOTS and trapped surface considered in \cite{An2025}, the smallness of $\zeta,\chibh$, $\nabla\tr\chib$, $\nabla\phi$, $\zeta$, together with the uniform bound for $\left|\snab^2\log w\right|+\left|\snab\log w\right|$, leads to the estimate
    \begin{equation*}
        2K^\prime+\frac{1}{2}\tr\chib\ell^+(w)-\strc\Rc-2\left|\zeta^\prime\right|^2\geq 2K+\frac{1}{2}\tr\chi\tr\chib+o(1)R^{-2}(w),
    \end{equation*}
    where $R$ denotes a radius function with $K= \frac{1+o(1)}{R^2}$, $\Omega\tr\chib= -\frac{2+o(1)}{R}$, and $\Omega^{-1}\tr\chi\leq \frac{2}{(1+\k)R}$. Therefore, the MOTS is unique.
\end{remark}

\begin{proposition}[Null Comparison Principle]\label{Null_Comparison_Principle}
    Suppose that there is a MOTS $S_{w_0}$ and a surface $S_{w_1}$ with $\ell^+(w_1)\leq 0$. Let $w(t)=w_0+t(w_1-w_0)$. Suppose that $\int_0^1g^{AB}(w(t),\theta)dt$ is a pointwise positive definite matrix and that $(1+a)\tr\chib\ell^+(w(t))+2K^\prime_{w(t)}-\strc \Rc-2\sdiv\zeta^\prime_{w(t)}-2\left|\zeta^\prime_{w(t)}\right|^2>0$ for some $a\in\mathbb{R}$ and all $t\in(0,1)$. Then $w_1\geq w_0$.
\end{proposition}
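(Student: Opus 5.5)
We argue by the maximum principle applied to the difference $h=w_1-w_0$, using the second variation formula \eqref{Second_Variation} integrated along the linear path $w(t)=w_0+th$, $t\in[0,1]$. Since $\partial_t w=h$ is independent of $t$, \eqref{Second_Variation} with $w_t=h$ gives, after integrating in $t$,
\begin{equation*}
    \ell^+(w_1)-\ell^+(w_0)=\int_0^1\Big(2\slap_{w(t)} h+4\snab h\cdot\zeta'_{w(t)}-c_t(\theta)\,h\Big)\,dt,
\end{equation*}
where
\begin{equation*}
    c_t=(1+a)\tr\chib\,\ell^+(w(t))+2K'_{w(t)}-\strc\Rc-2\sdiv\zeta'_{w(t)}-2|\zeta'_{w(t)}|^2 .
\end{equation*}
All quantities here are pulled back to the fixed parameter sphere $\SS$ via $\theta$. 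The left-hand side is $\ell^+(w_1)\le 0$, since $\ell^+(w_0)=0$. So $\mathcal{L}h\le 0$ on $\SS$, where
\begin{equation*}
    \mathcal{L}h=A^{AB}\partial_A\partial_B h+B^A\partial_A h-c\,h,\qquad c=\int_0^1c_t\,dt>0 .
\end{equation*}

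First, we write $\slap_{w(t)}h$ in the coordinates $\theta$. Its principal symbol is $g^{AB}(w(t),\theta)$, up to the tangential correction coming from $e'_A=e_A+\snab_Aw\,e_3$. That correction vanishes because $e_3$ is null and orthogonal to the $e_A$, so the induced metric on $S_w$ in $\theta$-coordinates is just $g_{AB}(w(\theta),\theta)$. Hence $A^{AB}=2\int_0^1 g^{AB}(w(t),\theta)\,dt$, which is positive definite by hypothesis, and $\mathcal{L}$ is elliptic. The first-order coefficients $B^A$ collect the Christoffel terms of $\slap_{w(t)}$ and the $4\zeta'$ terms, integrated in $t$. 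The zeroth-order coefficient is exactly $-c$, and $c>0$ pointwise by the assumed positivity of $c_t$ for all $t\in(0,1)$.

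Next, we suppose $h$ attains a negative minimum at some $\theta_*\in\SS$. There $\partial h=0$ and $\partial^2h\ge 0$, so $A^{AB}\partial_A\partial_Bh\ge 0$, while $-c\,h>0$. Thus $\mathcal{L}h(\theta_*)>0$, contradicting $\mathcal{L}h\le 0$. Hence $\min h\ge 0$, that is, $w_1\ge w_0$.

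The main obstacle is making the first step rigorous: justifying that the $t$-integrated operator is a genuine second-order elliptic operator in the fixed coordinates $\theta$. Each $\slap_{w(t)}$ is taken with respect to a different induced metric, and the Christoffel symbols also depend on $\snab w(t)$. We handle this by writing every term explicitly in $\theta$-coordinates before integrating in $t$. Measurability and continuity in $t$ follow from the smoothness of $w_0,w_1$. Strictness of the contradiction needs only $c>0$ at $\theta_*$, which holds because $c_t>0$ for every $t$.
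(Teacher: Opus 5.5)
Your proposal is correct and follows the paper's proof essentially verbatim. You integrate \eqref{Second_Variation} along $w(t)=w_0+th$ to get $0\ge \ell^+(w_1)-\ell^+(w_0)=\mathcal{L}h$, with principal part $2\int_0^1 g^{AB}(w(t),\theta)\,dt\,\partial^2_{AB}h$ and zeroth-order coefficient $-\int_0^1 c_t\,dt<0$, and then a negative minimum of $h$ forces $\mathcal{L}h>0$, a contradiction. Your observation that the induced metric on $S_w$ in $\theta$-coordinates is just $g_{AB}(w(\theta),\theta)$, because $e_3$ is null and orthogonal to the $e_A$, is precisely what justifies the coefficient formula the paper writes down.
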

\begin{proof}
    Write $h=w_1-w_0$. We have 
    \begin{equation*}
        \begin{aligned}
            0\geq & \ell^+(w_1)-\ell^+(w_0)\\
            =&2\left(\int_0^1 g^{AB}(w(t),\theta)\right)\partial^2_{AB}h+\left(\int_0^1 2\sqrt{g}^{-1}\partial_A(\sqrt{g}g^{AB})+4{\zeta^\prime}^B\right)\partial_B h\\
            &-2\int_0^1\left(\frac{1+a}{2}\tr\chib\ell^+(w(t))+K^\prime_{w(t)}-\frac{1}{2}\strc \Rc-\sdiv\zeta^\prime_{w(t)}-\left|\zeta^\prime_{w(t)}\right|^2\right)h.
        \end{aligned}
    \end{equation*}
    Suppose for contradiction that $h$ attains a negative minimum. At such a point, the right-hand side is bounded below by
    \begin{equation*}
        -2\int_0^1\left(\frac{1+a}{2}\tr\chib\ell^+(w(t))+K^\prime_{w(t)}-\frac{1}{2}\strc \Rc-\sdiv\zeta^\prime_{w(t)}-\left|\zeta^\prime_{w(t)}\right|^2\right)h,
    \end{equation*}
    which is strictly positive and leads to a contradiction.
\end{proof}

\begin{remark}\label{Remark_An2025_Null_Comparison}
{\color{black}
In the setting of \cite{An2025}, the same asymptotic estimates as in Remark~\ref{Remark_of_MOTS_uniqueness} can be applied to verify the positivity hypothesis in Proposition~\ref{Null_Comparison_Principle} along the interpolation $w(t)$. The null comparison principle therefore yields the desired comparison between the corresponding graph spheres. In particular, this provides an alternative proof of the uniqueness of the MOTS and the achronality of the apparent horizon in that setting.
}
\end{remark}

\begin{remark}
    If we consider the null foliation $x(t,\theta)=(w(t,\theta),\theta)$ and denote the corresponding quantities by $\psi^\prime$, then the strict positivity condition in Proposition~\ref{Null_Comparison_Principle} is equivalent to $\sqrt{g}^{-a-1}\nabla_{e_3^\prime}\left(\sqrt{g}^{a+1}\tr\chi^\prime\right)< 0,$ where $\sqrt{g}$ satisfies $dV_{g(t,\theta)}=\sqrt{g}(t,\theta)dV_{g(0,\theta)}$.
\end{remark}



 
\bibliographystyle{plain}

\end{document}